\definecolor{darkgreen}{rgb}{0,0.2,0}
\definecolor{darkblue}{rgb}{0,0,0.2}
\definecolor{darkred}{rgb}{0.2,0,0}
\definecolor{lightgrey}{gray}{0.8}
\definecolor{darkgrey}{gray}{0.15}
\definecolor{mygreen}{rgb}{0,0.4,0}
\algrenewcommand\algorithmicindent{1em}
\newcommand{\LineIf}[2]{%
    \State\algorithmicif\ {#1}\ \algorithmicthen\ {#2}}
\tikzset{initial text={}}
\tikzstyle{every edge}=[->,draw]
\tikzstyle{every node}=[minimum size=5mm, inner sep=0pt, font=\scriptsize]
\newtheorem{theorem}{Theorem}
\newproof{proof}{Proof}
\newtheorem{lemma}[theorem]{Lemma}
\newdefinition{definition}[theorem]{Definition}
\newdefinition{example}[theorem]{Example}
\newenvironment{exa}{
\refstepcounter{theorem}
\textbf{Example \thetheorem.}
}{}
\newenvironment{exaq}{
\refstepcounter{theorem}
\textbf{Example \thetheorem.}
}{\qed}
\newcommand{\murl}[1]{\url{#1}}
\newcommand{\mypara}[1]{\textbf{#1}}
\newcommand{\icthree}{\textsf{IC3}\xspace}
\newcommand{\aiger}{\href{http://fmv.jku.at/aiger/}{\textsf{AIGER}}\xspace}
\newcommand{\buchi}{B\"uchi\xspace}
\newcommand{\syntcomp}{\href{http://www.syntcomp.org/}{\textsf{SyntComp}}\xspace}
\newcommand{\syntcompy}{\href{http://www.syntcomp.org/}{\textsf{SyntComp} 2014}\xspace}
\newcommand{\bloqqer}{\textsf{Bloqqer}\xspace}
\newcommand{\ratsy}{\textsf{Ratsy}\xspace}
\newcommand{\grasp}{\textsf{GRASP}\xspace}
\newcommand{\depqbf}{\textsf{DepQBF}\xspace}
\newcommand{\quantor}{\textsf{Quantor}\xspace}
\newcommand{\rareqs}{\textsf{RAReQS}\xspace}
\newcommand{\qube}{\textsf{QuBE}\xspace}
\newcommand{\nenofex}{\textsf{Nenofex}\xspace}
\newcommand{\cudd}{\textsf{CUDD}\xspace}
\newcommand{\iprover}{\textsf{iProver}\xspace}
\newcommand{\dimacs}{\textsf{DIMACS}\xspace}
\newcommand{\qdimacs}{\textsf{QDIMACS}\xspace}
\newcommand{\abc}{\textsf{ABC}\xspace}
\newcommand{\minisat}{\textsf{MiniSat}\xspace}
\newcommand{\lingeling}{\textsf{Lingeling}\xspace}
\newcommand{\picosat}{\textsf{PicoSAT}\xspace}
\newcommand{\qbfcert}{\textsf{QBFCert}\xspace}
\newcommand{\demiurge}{\href{https://www.iaik.tugraz.at/content/research/opensource/demiurge}{\textsf{Demiurge}}\xspace}
\newcommand{\lily}{\textsf{Lily}\xspace}
\newcommand{\acaciap}{\textsf{Acacia+}\xspace}
\newcommand{\unbeast}{\textsf{Unbeast}\xspace}
\newcommand{\mathsat}{\textsf{MathSAT}\xspace}
\newcommand{\abssynthe}{\textsf{AbsSynthe}\xspace}
\newcommand{\propsat}{\textsc{PropSat}}
\newcommand{\propsatmodel}{\textsc{PropSatModel}}
\newcommand{\propsatcore}{\textsc{PropUnsatCore}}
\newcommand{\propsatmincore}{\textsc{PropMinUnsatCore}}
\newcommand{\qbfsat}{\textsc{QbfSat}}
\newcommand{\qbfsatmodel}{\textsc{QbfSatModel}}
\newcommand{\interpol}{\textsc{Interpol}}
\newcommand{\EQ}{\mathsf{EQ}}
\newcommand{\SU}{\mathsf{SUB}}
\newcommand{\mrk}[1]{{\color{blue}#1}}
\newcommand{\tm}[1]{{\bf\color{blue}#1}}
\newcommand{\B}{\mathbb{B}}
\newcommand{\D}{\mathbb{D}}
\newcommand{\FE}{\mathsf{Force}^e_1}
\newcommand{\FS}{\mathsf{Force}^s_1}
\newcommand{\reach}{\mathsf{Reach}_1}
\newcommand{\uneq}{=\!\!\!\!\!\!\textcolor{red}{\lightning}\;}
\DeclareMathOperator{\ld}{ld}
\newcommand{\add}{{\textsf{add}}}
\newcommand{\mult}{{\textsf{mult}}}
\newcommand{\cnt}{{\textsf{cnt}}}
\newcommand{\mv}{{\textsf{mv}}}
\newcommand{\bs}{{\textsf{bs}}}
\newcommand{\stay}{{\textsf{stay}}}
\newcommand{\genbuf}{{\textsf{genbuf}}}
\newcommand{\amba}{{\textsf{amba}}}
\newcommand{\demo}{{\textsf{demo}}}
\newcommand{\fa}{{\textsf{fact}}}
\newcommand{\gb}{{\textsf{gb}}}
\newcommand{\load}{{\textsf{load}}}
\newcommand{\ltltodba}{{\textsf{ltl2dba}}}
\newcommand{\ltltodpa}{{\textsf{ltl2dpa}}}
\newcommand{\mo}{{\textsf{mov}}}
\newcommand{\driver}{{\textsf{driver}}}
\newcommand{\wbdd}{\textsf{BDD}\xspace}
\newcommand{\wifm}{\textsf{IFM}\xspace}  
\newcommand{\wabs}{\textsf{ABS}\xspace}
\newcommand{\wq}{\textsf{Q}\xspace}   
\newcommand{\wqb}{\textsf{QB}\xspace}   
\newcommand{\wqgb}{\textsf{QGB}\xspace}
\newcommand{\wqgab}{\textsf{QGAB}\xspace}   
\newcommand{\wqgcb}{\textsf{QGCB}\xspace}   
\newcommand{\wqi}{\textsf{QI}\xspace}   
\newcommand{\ws}{\textsf{S}\xspace}   
\newcommand{\wsg}{\textsf{SG}\xspace}   
\newcommand{\wsgc}{\textsf{SGC}\xspace}   
\newcommand{\wse}{\textsf{SE}\xspace}   
\newcommand{\wsge}{\textsf{SGE}\xspace}   
\newcommand{\wtqc}{\textsf{TQC}\xspace}   
\newcommand{\wtbc}{\textsf{TBC}\xspace}   
\newcommand{\wtsc}{\textsf{TSC}\xspace}   
\newcommand{\wepr}{\textsf{EPR}\xspace}  
\newcommand{\wptwo}{\textsf{P2}\xspace}  
\newcommand{\wpthree}{\textsf{P3}\xspace}  
\newcommand{\ebdd}{\textsf{BDD}\xspace}
\newcommand{\eabs}{\textsf{ABS}\xspace}
\newcommand{\ecert}{\textsf{QC}\xspace}
\newcommand{\ecertn}{\textsf{QCN}\xspace}
\newcommand{\eq}{\textsf{QL}\xspace}
\newcommand{\eqb}{\textsf{QLB}\xspace}
\newcommand{\eqi}{\textsf{QLI}\xspace}
\newcommand{\ei}{\textsf{ID}\xspace}
\newcommand{\es}{\textsf{SL}\xspace}
\newcommand{\esd}{\textsf{SLD}\xspace}
\newcommand{\esdm}{\textsf{SLDM}\xspace}
\newcommand{\eptwo}{\textsf{P2}\xspace}  
\newcommand{\epthree}{\textsf{P3}\xspace} 
\newcommand{\ts}{\textsuperscript}
\journal{Journal of Computer and System Sciences (JCSS)}
\begin{document}

  \acrodef{AIG}{And-Inverter Graph}
  \acrodef{API}{Application Programming Interface}
  \acrodef{AST}{Abstract Syntax Tree}
  \acrodef{BDD}{Binary Decision Diagram}
  \acrodef{BMC}{Bounded Model Checking}
  \acrodef{BV}{Bitvector Arithmetic}
  \acrodef{CDCL}{Conflict-Driven Clause Learning}
  \acrodef{CEGIS}{Counterexample-Guided Inductive Synthesis}
  \acrodef{CNF}{Conjunctive Normal Form}
  \acrodef{DNF}{Disjunctive Normal Form}
  \acrodef{DQBF}{Dependency Quantified Boolean Formulas}
  \acrodef{EPR}{Effectively Propositional Logic}
  \acrodef{FOL}{First-Order Logic}
  \acrodef{GUI}{Graphical User Interface}
  \acrodef{GR(1)}{Generalized Reactivity of Rank~1}
  \acrodef{IDE}{Integrated Development Environment}  
  \acrodef{LIA}{Linear Integer Arithmetic}
  \acrodef{LTL}{Linear Temporal Logic}
  \acrodef{MAX-SMT}{Maximum Satisfiability Modulo Theories}
  \acrodef{MBD}{Model-Based Diagnosis}
  \acrodef{PCNF}{Prenex Conjunctive Normal Form}
  \acrodef{PDR}{Property Directed Reachability}
  \acrodef{QBF}{Quantified Boolean Formula}
  \acrodef{RHS}{Right-Hand Side (of an assignment)}
  \acrodef{S1S}{Monadic Second Order Logic of One Successor}
  \acrodef{SAT}{Satisfiability}
  \acrodef{SFD}{Single-Fault Diagnosis}
  \acrodef{SMT}{Satisfiability Modulo Theories}
  \acrodef{SSA}{Static Single Assignment}
  \acrodef{TCAS}{Traffic Collision Avoidance System}

\begin{frontmatter}

\title{Satisfiability-Based Methods for Reactive Synthesis from 
Safety Specifications\tnoteref{funding_label}}
\tnotetext[funding_label]{This work was supported in part by the Austrian Science Fund (FWF)
through the projects RiSE (S11406-N23, S11408-N23, S11409-N23) and QUAINT 
(I774-N23), and
by the European Commission through the projects STANCE (317753) and IMMORTAL 
(644905).}

\author[label1]{Roderick Bloem}
\author[label2]{Uwe Egly}
\author[label1]{Patrick Klampfl}
\author[label1]{Robert K\"onighofer\corref{cor1}}
\author[label2]{Florian Lonsing}
\author[label3]{Martina Seidl}
\cortext[cor1]{Corresponding author. 
               Email: \url{robert.koenighofer@gmail.com},
               Tel.: \texttt{+43 664 1112277},
               Fax: \texttt{+43 316 873 5520}}

\address[label1]{Institute of Applied Information Processing and Communications,
                 Graz University of Technology, 
                 Inffeldgasse 16a, 8010 Graz, Austria}
\address[label2]{Institute of Information Systems 184/3, 
                 Vienna University of Technology,
                 Favoritenstraße 9-11, 1040 Vienna, Austria}
\address[label3]{Institute for Formal Models and Verification,
                 Johannes Kepler University, 
                 Altenbergerstr. 69, 4040 Linz, Austria}

\begin{abstract}
Existing approaches to synthesize reactive systems from declarative 
specifications mostly rely on Binary Decision Diagrams (BDDs), inheriting their 
scalability issues.  We present novel algorithms for safety specifications that 
use decision procedures for propositional formulas (SAT solvers), Quantified 
Boolean Formulas (QBF solvers), or Effectively Propositional Logic (EPR).  Our 
algorithms are based on query learning, templates, reduction to EPR, QBF 
certification, and interpolation.  A parallelization combines multiple 
algorithms.  Our optimizations expand quantifiers and utilize unreachable 
states and variable independencies.  Our approach outperforms a simple 
BDD-based tool and is competitive with a highly optimized one.  It won two 
medals in the \syntcomp competition.
\end{abstract}

\begin{keyword}
Reactive Synthesis \sep
Decision Procedures \sep
SAT Solving \sep
QBF \sep
EPR \sep
Craig Interpolation
\end{keyword}

\end{frontmatter}

\section{Introduction}
\label{sec:intro}

A common criticism of formal verification techniques such as model 
checking~\cite{ClarkeE81,QueilleS82} is that they are only applied after the 
implementation is completed.  Synthesis~\cite{PnueliR89} is more ambitious: it 
constructs an implementation from a declarative specification automatically.  
The specification may only express \emph{what} the system shall do, but not 
\emph{how}.  Hence, writing a specification can be significantly easier than 
implementing it. Another advantage is that synthesized implementations are 
\emph{correct-by-construction}, i.e., guaranteed to satisfy the specification 
from which they have been constructed.  Assuming that the specification 
expresses the design intent correctly and completely, this eliminates the need 
for verification and debugging of the implementation.  
This effort reduction is illustrated in Figure~\ref{fig:sy}.

\mypara{Applications of synthesis.}  Synthesis is particularly well suited for 
\emph{rapid prototyping}, where a working implementation needs to be available 
quickly.  A synthesized prototype can later be exchanged by a (manual) 
implementation that is more optimized.  Another interesting application is 
\emph{program sketching}~\cite{Solar-Lezama13, lezamaphd}, where the programmer 
can leave ``holes'' in the code.  A synthesizing compiler then fills the holes 
such that a given specification is satisfied.  This mix of imperative and 
declarative programming is appealing because some aspects of the program may be 
easy to implement, while others may be easier to specify.  In \emph{controller 
synthesis}, a plant needs to be controlled such that some specification is 
satisfied.  Synthesizing such a controller is similar to program sketching in 
that a given part (the plant) is combined with a synthesized part (the 
controller).  Another related application is automatic program 
repair~\cite{JobstmannSGB12,KonighoferB11}, where potentially faulty program 
parts (identified by some error localization algorithm) are replaced by 
synthesized corrections.  In all these applications, automatic synthesis  
contributes to keeping the manual development effort low.

\mypara{Systems.}  
This article is concerned with synthesis algorithms for \emph{reactive 
systems}~\cite{HarelP89}, which interact with their environment in a synchronous 
way: in every time step, the environment provides input values and the system 
responds with output values.  This is repeated ad infinitum, i.e., reactive 
systems conceptually never terminate.  Thus, reactive systems can directly model 
(synchronous) hardware designs, but also other non-terminating systems such as 
an operating system, a server implementing some protocol, etc.  In contrast, 
\emph{transformational systems} terminate after processing their input.  
They are thus suited to model procedures of a 
software program, e.g., a sorting algorithm.

\mypara{Specifications.}  
We focus on synthesis of reactive systems from \emph{safety specifications}, 
which express that certain ``bad things'' never happen.  This stands in 
contrast to liveness properties, which stipulate that certain ``good things'' 
must happen eventually.  Synthesis algorithms for safety specifications can be 
useful even for specifications that contain liveness properties.  First, bounded 
synthesis approaches~\cite{Ehlers12,FiliotJR11} can reduce synthesis from 
richer specifications, such as Linear Temporal Logic (LTL)~\cite{Pnueli77}, to 
safety synthesis problems by setting a bound on the reaction time.
For instance, instead of requiring that some event happens eventually, one may 
require that it happens within at most $k$ steps.  Clearly, a realization of the 
latter is also a realization of the former.  By choosing $k$ as low as possible 
(such that a solution still exists), we may even get systems that react 
faster.
A second reason why safety specifications are important is that safety 
properties often make up the bulk of a specification and they 
can be handled in a compositional manner: the safety synthesis problem can be 
solved before the other properties are handled~\cite{SohailS13}. 

\begin{figure}
\centering
\subfloat%
  {\includegraphics[width=0.41\textwidth]{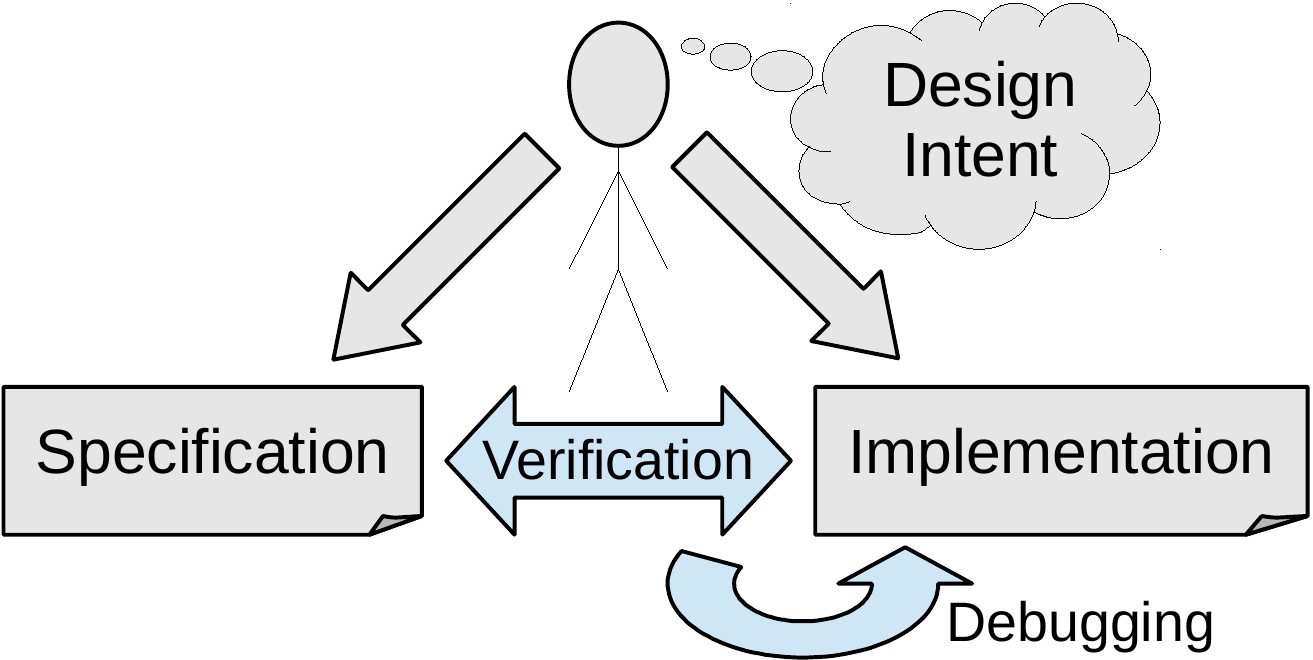}}
\hspace{11mm}
\subfloat%
  {\includegraphics[width=0.41\textwidth]{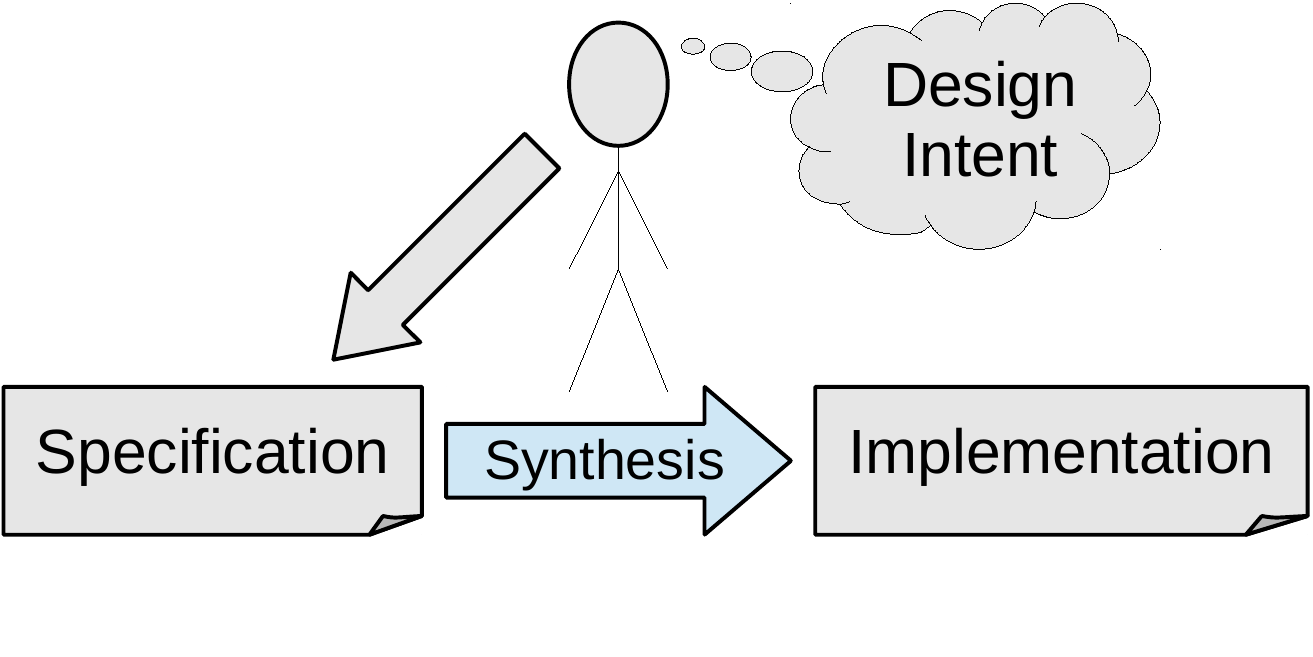}}
\caption{Reduction of the development effort due to synthesis}
\label{fig:sy}
\end{figure}

\mypara{Synthesis is a game.}  
Model checking can be understood as (exhaustive) search for inputs under which a 
(model of the) system violates its specification.  That is, the inputs are the 
only source of  non-determinism. Synthesis, on the other hand, needs to handle 
two sources of non-determinism: the unknown inputs and the (yet) unknown system 
implementation. Synthesis can thus be seen as a game between two players: The 
environment player controls the inputs of 
\begin{wrapfigure}[5]{r}{0.44\textwidth}
\centering
\vspace{-3mm}
\includegraphics[width=0.43\textwidth]{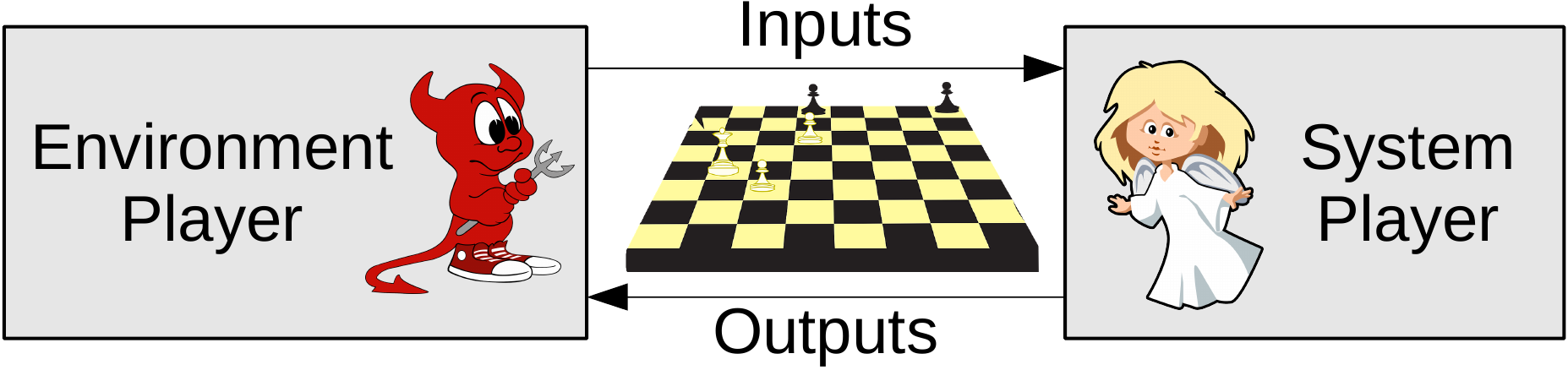}
\label{fig:game}
\end{wrapfigure}
the system to be synthesized.  The 
system player controls the outputs and attempts to satisfy the specification for 
\emph{every} environment behavior. 
The environment player has the role of 
the antagonist, trying to violate the specification.  
The game-based approach to synthesis computes a \emph{strategy} for the system 
player to win the game (i.e., to satisfy the specification) against every 
environment player.  An implementation of such a winning strategy forms the 
solution.  Computing a winning strategy involves dealing with 
alternating quantifiers because for every input (or environment behavior) there 
must exist some output (or system behavior) satisfying the specification. This 
stands in contrast to model checking, where existential quantification suffices.

\mypara{Scalability.}
Synthesis is computationally hard.  For safety specifications, the worst-case 
time complexity is exponential~\cite{BrenguierPRS14,PapadimitriouY86} in the 
size of the specification. For \acs{LTL}, it is even doubly 
exponential~\cite{Rosner92}.   Measures to improve the performance in practice 
include limiting the expressiveness of the 
specification~\cite{BloemJPPS12, AlurT04}, 
limiting the size of systems to construct~\cite{FinkbeinerS13}, and applying 
symbolic algorithms~\cite{BurchCMDH90}, which use formulas as a compact 
representation of state sets instead of enumerating states explicitly.  These 
formulas can in turn be represented using Binary Decision Diagrams 
(BDDs)~\cite{Bryant86}, a graph-based representation for propositional formulas. 
However, for certain structures, BDDs are known to explode in size and thus 
scale insufficiently~\cite{Bryant86}. This is one reason why BDDs have largely 
been displaced by SAT solvers in model checking.  Yet, in reactive synthesis, 
BDDs are still the predominant symbolic reasoning engine.  This is witnessed by 
the fact that all submissions to the reactive synthesis competition \syntcomp in 
2014~\cite{sttt_syntcomp} and 2015~\cite{JacobsBBKPRRSST16}, except for our own, 
were BDD-based.  One reason is that synthesis inherently deals with alternating 
quantifiers (see above).  BDDs provide universal and existential quantifier 
elimination to deal with that.

\subsection*{Contributions and Outline}  

To offer additional alternatives to BDDs in reactive synthesis, we present novel 
synthesis algorithms for safety specifications using decision procedures for the 
satisfiability of propositional formulas (SAT solvers), Quantified Boolean 
Formulas (QBF solvers), or Effectively Propositional Logic (EPR), which is a 
subset of first-order logic.  Our algorithms exploit solver features such as 
incremental solving and unsatisfiable cores by design.  Similar to existing 
solutions, our approach consists of two steps: computing a strategy and building 
a circuit that implements this strategy.

\mypara{Preliminaries.}
Before we present our algorithms, Chapter~\ref{sec:prelim} introduces background 
and notation.  It starts by defining logics and decision procedures.  Readers 
who are familiar with SAT and QBF can focus on Section~\ref{sec:sat_not} 
and \ref{sec:qbf_not} to understand our notation.  In Section~\ref{sec:prel:hw}, 
we define the addressed synthesis problem and give a textbook solution.  
Synthesis experts can focus on Definition~\ref{def:safety}.  Finally, we
introduce query learning~\cite{Angluin87} and 
\acf{CEGIS}~\cite{Solar-Lezama13} as algorithmic principles underlying many of 
our algorithms.

\mypara{Strategy computation.}
Chapter~\ref{sec:hw:win} presents our algorithms and optimizations for computing 
a strategy to satisfy the specification.  Section~\ref{sec:qbf_learn} starts 
with a learning algorithm that uses a QBF solver.  In 
Section~\ref{sec:sat_learn}, we modify this algorithm to use a plain SAT solver 
while exploiting incremental solving and unsatisfiable cores.  This 
turns out to be significantly faster.  Both these sections contain correctness 
proofs and discuss possible variations and an efficient implementation.  In 
Section~\ref{sec:hw_exp}, we reduce the number of iterations (and thereby the 
execution time) of the SAT solver based solution by partially expanding 
quantifiers.  Section~\ref{sec:hw_reach} continues with optimizations that 
exploit unreachable states based on concepts from the model checking algorithm 
\icthree~\cite{Bradley11}.  Both optimizations give a speedup of more than 
one order of magnitude each.  In Section~\ref{sec:hw:templ}, we describe a 
completely different approach, which fixes the structure of the solution using a 
template.  We compute solutions either with a single call to a QBF solver or by 
calling a SAT solver repeatedly using (an extension of) 
\acs{CEGIS}~\cite{Solar-Lezama13}.  
Section~\ref{sec:red_to_epr} is similar in spirit but avoids the template by 
formulating the problem in \acs{EPR}.  Since different algorithms perform well 
in different cases, we finally present a parallelization that combines various 
methods and configurations in multiple threads while exchanging fine-grained 
information.

\mypara{Circuit computation.}
Chapter~\ref{sec:hw:circ} is devoted to computing an implementation in the form 
of a circuit from a given strategy.  The goal is to obtain \emph{small} circuits 
\emph{efficiently}.  To this end, implementation freedom available in the 
strategy needs to be exploited wisely.  We present a number of 
satisfiability-based methods that not only work for safety specifications but 
also for strategies to satisfy other objectives.  For each method, we thus 
present the general solution as well as an efficient realization for the special 
case of safety synthesis problems.  We start with an approach based on QBF 
certification~\cite{NiemetzPLSB12} in Section~\ref{sec:hw:circ_qbfcert}.  In 
Section~\ref{sec:hw:extr:qbflearn}, we use a QBF solver in a learning algorithm. 
This performs better, especially when using incremental QBF solving.  
Section~\ref{sec:hw:int} adopts the interpolation-based approach by Jiang et 
al.~\cite{JiangLH09} and extends it with an optimization to exploit variable 
(in)dependencies.  In Section~\ref{sec:hw:extr:satlearn}, we combine the 
approach by Jiang et al.~\cite{JiangLH09} with query learning as a special 
interpolation procedure.  This improves the speed and the resulting circuit size 
by around two orders of magnitude.  Finally, we present a parallelization that 
combines multiple methods in different threads with the aim to inherit their 
strengths and to compensate their weaknesses. 

\mypara{Tool.}  We implemented our methods in an open-source tool 
named \demiurge.  It supports the input format of 
the reactive synthesis competition \syntcomp~\cite{sttt_syntcomp} and won two 
medals in this competition.   \demiurge is extendable and highly configurable 
regarding solvers, methods and optimizations to use.  We describe \demiurge
in~Section~\ref{sec:dem_impl}.

\mypara{Experiments.}  In Chapter~\ref{sec:hw:exp}, we evaluate our approach on 
the \syntcomp benchmarks.  We compare our different methods and evaluate the 
effect of optimizations.  We also investigate the performance of different 
methods on different classes of benchmarks.  Our parallelization turns out to be 
faster than a BDD-based tool by one order of magnitude and produces circuits 
that are smaller by two orders of magnitude.  Our tool is even competitive with 
\abssynthe~\cite{BrenguierPRS14}, a BDD-based tool that implements advanced 
concepts such as abstraction/refinement.  

\mypara{Conclusion.}
Since our approach is particularly superior for certain benchmark classes, we 
conclude that it forms a valuable complement to existing approaches.  Moreover, 
decision procedures for satisfiability are an active field of 
research, and enormous scalability improvements are witnessed by various 
competitions over the years.  Since our algorithms use such decision procedures 
as a black box, they directly benefit from future developments in this field.

\mypara{Relation to previous work.}  This is the manuscript of an article that
has been submitted to the Journal of Computer and System Sciences (JCSS).  It is 
based on earlier work by the authors~\cite{BloemKS14, BloemEKKL14}, which has 
been extended with additional optimizations and variations of algorithms, as 
well as a more elaborate experimental evaluation.  This entire work forms the 
basis of a dissertation~\cite{phd}.

\newpage
\section{Preliminaries and Notation} \label{sec:prelim}

We will use upper case letters for sets, lower case letters for set 
elements, and calligraphic fonts for tuples defining more complex structures.
We denote the Boolean domain by $\B=\{\true, \false\}$ and write \emph{iff} 
for ``if and only if''.

\subsection{Logics} \label{sec:prelim:lo}

We will use various kinds of logics to solve synthesis problems.  This section 
introduces these logics.  Decision procedures and reasoning engines for these 
logics will then be introduced in Section~\ref{sec:prelim:dp}.

\mypara{Variables and formulas.}
We will use lower case letters for variables and capital letters to denote 
formulas.  Recall that capital letters are also used to denote sets, but this 
is 
no coincidence since we will later use formulas to represent sets (see 
Section~\ref{sec:prelim:syenc}).  Vectors of variables will be written with an 
overline.  For clarity, we will often write the variables that occur freely in 
a 
formula in brackets.  For instance, $F(\overline{x})$ denotes a formula over 
the 
variables $\overline{x} = (x_1, x_2, \ldots, x_n)$.  If the variables are clear 
from the context, we will sometimes omit the brackets, i.e., write only $F$ 
instead of $F(\overline{x})$.  Furthermore, we will use the brackets to denote 
variable substitutions: if $F(\ldots,x,\ldots)$ is a formula, we denote by 
$F(\ldots,y,\ldots)$ the same formula but with all occurrences of $x$ replaced 
by $y$.  With a slight abuse of notation, we will also treat vectors of 
variables like sets if the order of the elements is irrelevant.  For instance, 
$\overline{x} \cup \overline{y}$ denotes a concatenation of two variable 
vectors, and $\overline{x}\setminus \{x_i\}$ denotes the variable 
vector~$\overline{x}$ but with element $x_i$ removed.

\mypara{Operator precedence.}  Save for cases where too many brackets hamper 
readability, we will avoid ambiguities in operator precedence.  However, for 
the 
avoidance of doubt, will will use the following precedence order (from stronger 
to weaker binding) for operators in formulas: $\neg, \wedge, \vee, \rightarrow, 
\leftrightarrow, \forall, \exists$.

\subsubsection{Propositional Logic}  \label{sec:prelim:prop}

All variables in propositional logic are Boolean, i.e., take values from 
$\B=\{\true, \false\}$.  We will use the Boolean connectives 
$\neg$, $\wedge$, $\vee$, $\rightarrow$, $\leftrightarrow$, encoding 
negation, conjunction, disjunction, implication, and equivalence, respectively.

\mypara{\acfp{CNF}.}
A \emph{literal}  is a Boolean variable or its negation.  A 
\emph{clause} is a disjunction of literals. A \emph{cube} 
is a conjunction of literals.  We will sometimes treat clauses and 
cubes as sets of literals.  For instance, given that $l$ is a literal and $c_1, 
c_2$ are clauses, we write $l \in c_1$ to denote that $l$ occurs as a disjunct 
in clause $c_1$, and we write $c_1 \subseteq c_2$ to denote that all literals 
of 
clause $c_1$ also occur in clause $c_2$. A propositional formula is in 
\emph{\acf{CNF}} if it is written as a conjunction  of clauses.  There are two 
reasons why \ac{CNF} representations are important.  First, decision procedures 
for satisfiability usually require the input formula to be in \ac{CNF}.  
Second, 
every formula can be transformed into an equisatisfiable formula in \ac{CNF} 
by introducing at most a linear amount of auxiliary variables.  
This is called \emph{Tseitin 
transformation}~\cite{Tseitin83}. An improvement 
by exploiting the polarity (even or odd number of negations) of subformulas to 
obtain smaller \acs{CNF} encodings has been proposed by Plaisted and 
Greenbaum~\cite{PlaistedG86}.

\mypara{Variable assignments.}
We use cubes to describe (potentially partial) truth assignments to 
variables: unnegated variables of the cube are set to $\true$, negated ones are 
$\false$.  We use bold letters to denote cubes.  For instance, 
$\mathbf{x}$ 
denotes a cube over the variables $\overline{x}$.  An 
\emph{$\overline{x}$-minterm} is a cube that contains all 
variables of $\overline{x}$ either negated or unnegated (but not both).  Thus, 
minterms describe \emph{complete} assignments to Boolean variables.  We write 
$\mathbf{x} \models F(\overline{x})$ to denote that the $\overline{x}$-minterm 
$\mathbf{x}$ satisfies the formula $F(\overline{x})$.  Given a formula 
$F(\ldots,\overline{x},\ldots)$ and an $\overline{x}$-minterm $\mathbf{x}$, we 
write $F(\ldots,\mathbf{x},\ldots)$ to denote the formula $F$ but with all 
occurrences of the variables  $\overline{x}$ replaced by their respective truth 
value defined by~$\mathbf{x}$.

\mypara{Unsatisfiable cores.}  Let $F$ be an 
unsatisfiable formula in \acs{CNF}.  A \emph{clause-level unsatisfiable core} 
is 
a subset of the clauses of $F$ that is still unsatisfiable.  While this 
definition is widely used, many applications require the minimization of 
``interesting'' constraints while the remaining constraints remain fixed.  For 
such problems, Nadel~\cite{Nadel10} coined the term \emph{high-level 
unsatisfiable core}.  To support such high-level unsatisfiable cores, we use 
the 
following definition. Let $\mathbf{x}$ be a cube and let 
$F(\overline{x},\overline{y})$ be a formula such that $\mathbf{x} \wedge F$ is 
unsatisfiable.  An \emph{unsatisfiable core} of $\mathbf{x}$ with respect to 
$F$ 
is a subset $\mathbf{x}'\subseteq \mathbf{x}$ of the literals in $\mathbf{x}$ 
such that $\mathbf{x}' \wedge F$ is still unsatisfiable.  An unsatisfiable core 
$\mathbf{x}'$ is \emph{minimal} if no proper subset $\mathbf{x}''$ of 
$\mathbf{x}'$ 
makes $\mathbf{x}'' \wedge F$ unsatisfiable.  With this definition, high-level 
unsatisfiable cores can be computed by adding conjuncts of the form $x_i 
\rightarrow G(\overline{y})$ for $x_i\in \overline{x}$ to 
$F(\overline{x},\overline{y})$.  This way, the constraint $G(\overline{y})$ can 
be enabled or disabled via the truth value of $x_i$.  Moreover, this notion of 
unsatisfiable cores is directly supported by many solver.

\mypara{Interpolants.}  
Let $A(\overline{x},\overline{y})$ and $B(\overline{x},\overline{z})$ be two 
propositional formulas such that $A\wedge B$ is unsatisfiable, and 
$\overline{y}$ and $\overline{z}$ are disjoint.  A \emph{Craig 
interpolant}~\cite{Craig57a} is a formula $I(\overline{x})$ 
such that $A \rightarrow I \rightarrow \neg B$.  Intuitively, the interpolant 
is 
a formula that is weaker than $A$, but still strong enough to make $I \wedge B$ 
unsatisfiable.  In addition to that, the interpolant references only the 
variables $\overline{x}$ that occur both in $A$ and in $B$.

\mypara{Cofactors.} 
Let $F(\ldots,x,\ldots)$ be a propositional formula.  The \emph{positive 
cofactor} of $F$ regarding $x$ is the formula $F(\ldots,\true,\ldots)$, where 
all occurrences of $x$ have been replaced by $\true$.  Analogously, the 
\emph{negative cofactor} of $F$ regarding $x$ is the formula 
$F(\ldots,\false,\ldots)$.

\subsubsection{Quantified Boolean Formulas} \label{sec:prelim:qbf}

\acfp{QBF}~\cite{BuningB09}  
extend propositional logic with universal (denoted $\forall$) and existential 
(denoted $\exists$) quantification of variables.  The quantifiers have their 
expected semantics: Since propositional variables can only be either $\true$ or 
$\false$, $\exists x_i \scope F(\ldots, x_i, \ldots)$ can be seen as a 
shorthand 
for $F(\ldots, \true, \ldots) \vee F(\ldots, \false, \ldots)$.  Likewise, 
$\forall x_i \scope F(\ldots, x_i, \ldots)$ is short for $F(\ldots, \true, 
\ldots) \wedge F(\ldots, \false, \ldots)$.  Using these rules, a \ac{QBF} can 
always be transformed into a purely propositional formula.  However, this 
usually causes a significant blow-up in formula size.

\mypara{\acsp{PCNF}.} 
A \ac{QBF} is in \emph{\acf{PCNF}} if it is written in the form 
\[Q_1\overline{x}_1 \scope Q_2\overline{x}_2 \scope \ldots Q_k\overline{x}_k 
\scope
  F(\overline{x}_1,\overline{x}_2, \ldots,\overline{x}_k),\]
where $Q_i \in \{\forall, \exists\}$ and $F$ is a propositional formula in 
\ac{CNF}.  In this formulation, we use $Q_i\overline{x}_i$ as a shorthand for 
$Q_i x_{i,1} \scope \ldots Q_i x_{i,n}$ 
with $\overline{x}_i = (x_{i,1}, \ldots, x_{i,n})$.  We refer to 
$Q_1\overline{x}_1 \scope Q_2\overline{x}_2 \scope \ldots Q_k\overline{x}_k$ as 
the \emph{quantifier prefix} 
 and call 
$F(\overline{x}_1,\overline{x}_2, \ldots,\overline{x}_k)$ the 
\emph{matrix}  of the \ac{PCNF}. We require 
every 
\ac{PCNF} to be \emph{closed}  in the sense that all 
variables occurring in the matrix must be quantified either existentially or 
universally. Hence, a \ac{QBF} in \ac{PCNF} can only be valid (equivalent to 
$\true$) or unsatisfiable (equivalent to $\false$).

\mypara{Skolem functions.}  Let 
$
 \exists \overline{a}_1 \scope 
 \forall \overline{b}_1 \scope
 \ldots
 \exists \overline{a}_k \scope 
 \forall \overline{b}_k \scope
 \exists \overline{c} \scope
 Q_1 \overline{d}_1 \scope
 \ldots
 Q_l \overline{d}_l \scope
 F(\overline{a}_1,\overline{b}_1,
   \ldots,
   \overline{a}_k,\overline{b}_k,
   \overline{c},
   \overline{d}_1,
   \ldots,
   \overline{d}_l
   )
$
with $Q_i \in \{\forall, \exists\}$ be a \ac{QBF} in \ac{PCNF} that is valid.  
A \emph{Skolem function} for the existentially quantified variables 
$\overline{c}$ is a function $f: 2^{|\overline{b}_1|} \times \ldots \times
2^{|\overline{b}_k|} \rightarrow 2^{|\overline{c}|}$ that defines 
the values of the variables $\overline{c}$ based on the universally quantified 
variables $\overline{b}_1,\ldots,\overline{b}_k$ occurring before 
$\overline{c}$ in the quantifier prefix such that 
\[
 \exists \overline{a}_1 \scope 
 \forall \overline{b}_1 \scope
 \ldots
 \exists \overline{a}_k \scope 
 \forall \overline{b}_k \scope
 Q_1 \overline{d}_1 \scope
 \ldots
 Q_l \overline{d}_l \scope
 F\bigl(\overline{a}_1,\overline{b}_1, 
   \ldots,
   \overline{a}_k,\overline{b}_k,
   f(\overline{b}_1,\ldots,\overline{b}_k),
   \overline{d}_1,
   \ldots,
   \overline{d}_l
   \bigr)
\]
is still valid.  The function $f$ can be seen as a \emph{certificate} 
 to show that values for the variables 
$\overline{c}$ making the \acs{QBF} $\true$ exist (for any values of the 
variables $\overline{b}_1,\ldots,\overline{b}_k$).  Note that $f$ cannot depend 
on the variables $\overline{d}_1, \ldots, \overline{d}_l$ occurring after 
$\overline{c}$ in the quantifier prefix, independent of whether some 
$\overline{d}_i$ is quantified universally or existentially.

\mypara{Herbrand functions.}  
A Herbrand function is the dual of a Skolem 
function for a \ac{QBF} that is unsatisfiable. Let 
$
 \exists \overline{a}_1 \scope 
 \forall \overline{b}_1 \scope
 \ldots
 \exists \overline{a}_k \scope 
 \forall \overline{b}_k \scope
 \forall \overline{c} \scope
 Q_1 \overline{d}_1 \scope
 \ldots
 Q_l \overline{d}_l \scope
 F(\overline{a}_1,\overline{b}_1,
   \ldots,
   \overline{a}_k,\overline{b}_k,
   \overline{c},
   \overline{d}_1,
   \ldots,
   \overline{d}_l
   )
$
be an unsatisfiable \ac{QBF}. A \emph{Herbrand function} for the 
universally quantified variables 
$\overline{c}$ is a function $f: 2^{|\overline{a}_1|} \times \ldots \times
2^{|\overline{a}_k|} \rightarrow 2^{|\overline{c}|}$ that defines 
the values of the variables $\overline{c}$ based on the existentially 
quantified variables $\overline{a}_1,\ldots,\overline{a}_k$ occurring before 
$\overline{c}$ in the quantifier prefix such that 
$
 \exists \overline{a}_1 \scope 
 \forall \overline{b}_1 \scope
 \ldots
 \exists \overline{a}_k \scope 
 \forall \overline{b}_k \scope
 Q_1 \overline{d}_1 \scope
 \ldots
 Q_l \overline{d}_l \scope
 F\bigl(\overline{a}_1,\overline{b}_1, 
   \ldots,
   \overline{a}_k,\overline{b}_k,
   f(\overline{b}_1,\ldots,\overline{b}_k),
   \overline{d}_1,
   \ldots,
   \overline{d}_l
   \bigr)
$
is still unsatisfiable.

\mypara{Universal expansion.} 
Let $G=Q_1\overline{x}_1 \scope \ldots Q_k\overline{x}_k \scope \forall y 
\scope 
\exists \overline{z} \scope F(\overline{x}_1,\ldots,\overline{x}_k, y, 
\overline{z})$ be a 
\ac{QBF} in \ac{PCNF}. The \emph{universal expansion}~\cite{BubeckB07} of 
variable $y$ in $G$ is the formula 
$G'= 
Q_1\overline{x}_1 \scope \ldots
Q_k\overline{x}_k \scope \exists \overline{z}, \overline{z}' \scope 
F(\overline{x}_1, \ldots, \overline{x}_k, \true, \overline{z}) \wedge 
F(\overline{x}_1, \ldots, \overline{x}_k, \false, \overline{z}'),$ 
where $\overline{z}'$ is a fresh copy of the variables 
$\overline{z}$. This transformation is equivalence preserving~\cite{BubeckB07}. 
In our formulation, the universally quantified variable $y$ to expand must only 
be followed by existential quantifications in the prefix.  The variables 
$\overline{z}$ may depend on $y$ in $G$, i.e., may take different values for 
different truth values of $y$.  Hence, they need to be renamed in one copy of 
the matrix when turning the universal quantification into a conjunction.  Note 
that $G'$ is in \ac{PCNF} again because the conjunction of two \acp{CNF} is 
again a \ac{CNF}.

\mypara{One-point rule.}  
Let $\mathbf{x}$ be an $\overline{x}$-minterm.  We 
have that
\begin{equation}
\Bigl(\forall \overline{x} \scope \mathbf{x} \rightarrow F(\overline{x}, 
\overline{y})\Bigr)
\leftrightarrow
\Bigl(F(\mathbf{x}, \overline{y})\Bigr)
\leftrightarrow
\Bigl(
\exists \overline{x} \scope \mathbf{x} \wedge F(\overline{x}, \overline{y})
\Bigr)
\label{eq:op0}
\end{equation}
holds true because, in all three formulations, $F$ has to hold for a given
$\overline{y}$-assignment if and only if the variables $\overline{x}$ 
have the specific truth values defined by $\mathbf{x}$.  A slightly more 
complicated 
instance of this rule can be formulated as follows.  Let $T(\overline{z}, 
\overline{x})$ be a formula that defines the variables $\overline{x}$ uniquely 
based on the values of some other variables $\overline{z}$.  Formally, we 
assume that $\forall \overline{z} \scope \exists \overline{x} \scope 
T(\overline{z}, \overline{x})$ and $\forall \overline{z},\overline{x}_1, 
\overline{x}_2 \scope \bigl(T(\overline{z}, \overline{x}_1) \wedge 
T(\overline{z}, \overline{x}_2) \bigr) \rightarrow (\overline{x}_1 = 
\overline{x}_2)$.  We have that
\begin{equation}
\Bigl(\forall \overline{x} \scope 
T(\overline{z}, \overline{x}) \rightarrow 
F(\overline{x}, \overline{y})\Bigr)
\leftrightarrow
\Bigl(
\exists \overline{x} \scope 
T(\overline{z}, \overline{x}) \wedge 
F(\overline{x}, \overline{y})
\Bigr)
\label{eq:op1}
\end{equation}
holds true because for a given $\overline{z}$-assignment $\mathbf{z}$ and a 
given 
$\overline{y}$-assignment $\mathbf{y}$, $F$ needs to hold only for the 
$\overline{x}$-assignment $\mathbf{x}$ that is uniquely defined by $T$ in both 
formulations.  We will use these dualities in various proofs and transformations.

\subsubsection{First-Order Logic} \label{sec:prel:fol}

\acf{FOL}~\cite{0017977} \index{First order logic@\acf{FOL}} is a more 
expressive logic, which enables reasoning about elements from arbitrary 
domains. 
Let $\D$ be a (potentially infinite) domain and let $\overline{x} = (x_1, x_2, 
\ldots, x_k)$ be variables ranging over this domain. Furthermore, let 
$\overline{y} = (y_1, y_2, \ldots, y_l)$ be Boolean variables ranging over 
$\B$, 
let $f_1, f_2, \ldots, f_m$ be function symbols and let $p_1, p_2, \ldots, p_n$ 
be predicate symbols.  Each function symbol and each predicate symbol has a 
certain \emph{arity}, i.e., number of arguments to which it can be applied.  A 
\emph{term} \index{term (of a \acs{FOL} formula)} in first-order logic is 
either a domain variable $x_i$ (with $1\leq i\leq k$) or a function application 
$f_i(t_1, \ldots, t_a)$, where $f_i$ is a function symbol with arity $a$, 
and all $t_i$ (with $1\leq i\leq a$) are terms.  Intuitively, a term evaluates 
to an element of $\D$.  An \emph{atom} \index{atom (of a \acs{FOL} 
formula)} is either a propositional variable $y_i$ (with $1\leq i\leq l$) or a 
predicate application $p_i(t_1, \ldots, t_a)$ where $p_i$ is a predicate symbol 
with arity $a$, and all $t_i$ (with $1\leq i\leq a$) are terms. Thus, 
intuitively, an atom evaluates to a truth value from $\B$.  Finally, a 
\emph{\acf{FOL}} formula is one of 
\[
a, \quad
\neg F_1, \quad
F_1\vee F_2, \quad
F_1\wedge F_2,  \quad
F_1\rightarrow F_2, \quad
F_1\leftrightarrow F_2, \quad
\exists x_i \scope F_1, \text{ or } \quad
\forall x_i \scope F_1,
\]
where $F_1$ and $F_2$ are \acl{FOL} formulas themselves and $a$ is an atom. The 
semantics of the Boolean connectives and the quantifiers are as expected. A 
\emph{model} of a \ac{FOL} formula \index{model (of a \acs{FOL} formula)} is a 
structure that satisfies the formula. It consists of concrete values for all 
variables that are not explicitly quantified, as well as concrete realizations 
of all functions $f_i$ and predicates $p_i$.  Similar to propositional logic, 
we 
refer to an atom or the negation of an atom as a \index{first-order 
literal} \emph{first-order literal}. A \emph{first-order clause} 
\index{first-order 
clause} is a disjunction of first-order literals.  A \emph{first-order 
\ac{CNF}} 
is a conjunction of first-order clauses.  A \ac{FOL} formula is 
\emph{quantifier-free} if it contains no occurrences of $\exists$ and 
$\forall$. 

\subsubsection{Effectively Propositional Logic} \label{sec:prelim:epr}

\emph{\acf{EPR}}~\cite{Lewis80}, \index{Effectively Propositional 
Logic@\acf{EPR}} also known as Bernays-Sch\"onfinkel class, is a subset of 
first-order logic that contains formulas of the form $\exists \overline{x}\scope 
\forall \overline{y} \scope F$, where $\overline{x}$ and $\overline{y}$ are 
disjoint vectors of variables ranging over domain $\D$, and $F$ is a 
function-free first-order \ac{CNF}. The formula $F$ can contain predicates over 
$\overline{x}$ and $\overline{y}$, though.

\subsection{Decision Procedures and Reasoning Engines} \label{sec:prelim:dp}

In the following, we will discuss decision procedures and reasoning engines for 
the logics introduced in the previous section from a user's perspective.  

\subsubsection{\aclp{BDD}} \label{sec:prelim:bdds}

{
\setlength{\fboxsep}{0.5mm}
\acfp{BDD}~\cite{Bryant86}  are a graph-based representation 
for formulas in propositional logic.  The graphs are rooted and acyclic.  There 
are two terminal nodes, which we denote by \fbox{0} and \fbox{1}.  Non-terminal 
nodes are labeled by a variable, have exactly two outgoing edges, and act as 
decisions: when traversing the graph from the root node, depending on the truth 
value of the variable labelling a node, one of the outgoing edges is taken.  If 
the terminal node \fbox{0} is reached during such a traversal, then this means 
that the formula evaluates to $\false$ for this assignment.  If \fbox{1} is 
reached, the formula evaluates to $\true$.

\begin{wrapfigure}[8]{r}{0.2\textwidth}
\centering
\vspace{-0.8cm}
\scalebox{1}{\begin{tikzpicture}[node distance=5cm,auto]
\tikzstyle{every edge}=[->,draw]
\node[circle,draw,initial,initial text={$f$}]    at  (0,-1)  (z) {$z$};
\node[circle,draw]            at  (0,-2)    (x) {$x$};
\node[circle,draw]            at  (0,-3)  (y) {$y$};
\node[rectangle,draw]         at  (-0.75,-4)   (t0){$0$};
\node[rectangle,draw]         at  (0.75,-4)    (t1){$1$};

\path
(z) edge[bend right]
    node[rotate=50,xshift=8mm,yshift=6.3mm] {$\true$} 
    (t0)
(z) edge[dashed]                 
    node[rotate=90,xshift=-2.5mm,yshift=1.5mm] {$\false$} 
    (x)
(x) edge[dashed]
    node[rotate=90,xshift=-2.5mm,yshift=1.5mm] {$\false$} 
    (y)
(x) edge[bend left]                 
    node[rotate=-50,xshift=-9mm,yshift=-0.5mm] {$\true$} 
    (t1)
(y) edge[dashed]
    node[rotate=55,xshift=-2mm,yshift=4mm] {$\false$} 
    (t0)
(y) edge                 
    node[rotate=-55,xshift=-3mm,yshift=-1mm] {$\true$} 
    (t1)

;
\end{tikzpicture}}
\end{wrapfigure}
\begin{exa} \label{ex:bdd}
A \ac{BDD} for the formula $f = (x\vee y) \wedge \neg z$ is shown on the right. 
The root node, representing $f$, is 
marked with an incoming arrow.  Non-terminal nodes are drawn as circles. The 
solid outgoing edge is taken if the variable written in the node is $\true$, the 
dashed edge is taken if the variable written in the node is $\false$.  The two 
terminal nodes are drawn as boxes.  The graph can be read as follows:  If 
$z=\true$, the entire formula $f$ is $\false$. Otherwise, $x$ is considered.  If 
$x$ is $\true$ (and $z=\false$), the formula is $\true$.  Otherwise, $y$ is 
considered. If $y=\true$ (and $z=x=\false$), then $f$ is $\true$.  If $y=\false$ 
(and $z=x=\false$), then $f$ is $\false$.\qed

\end{exa}

\mypara{Orderdness and Reducedness.} \acp{BDD} are \emph{ordered} 
 in the sense that for all paths from the root to the terminal 
nodes, decisions on the variables are always taken in the same order.  We will 
refer to this order as the \emph{variable order} of the \ac{BDD}. 
For instance, the variable order in 
Example~\ref{ex:bdd} is $z,x,y$.  Furthermore, \acp{BDD} are \emph{reduced} 
 in the sense that redundant vertices (where the 
$\true$- and the $\false$-successor are the same node) and isomorphic subgraphs 
have been eliminated.  This reduction serves two purposes.  First, it reduces 
the size of the \acp{BDD}.  Second, for a fixed variable order, it makes 
\acp{BDD} a \emph{canonical} representation of a propositional formula.  

\mypara{Canonicity.} 
A \ac{BDD} is a \emph{canonical} representation of a propositional formula in 
the sense that for a fixed variable order, the same formula will always be 
represented by isomorphic graphs.  This property makes equivalence checks 
between propositional formulas simple: once the \acp{BDD} have been built, all 
that needs to be done is to compare the graphs.  In particular, a 
satisfiability 
check can be performed by comparing the \ac{BDD} with that for $\false$ (which 
has the terminal node \fbox{0} as its root).  \ac{BDD} libraries are usually 
implemented in such a way that multiple formulas are represented by a single 
graph with several root nodes~\cite{MinatoIY90}.  If two formulas are 
equivalent, they are represented by the same graph node.  This saves 
memory (because common subgraphs are stored only once) and allows for 
equivalence checks between formulas in constant time: all that needs to be done 
is to check if the root nodes are identical.

\mypara{Variable (re)ordering.}  
In practice, the size of a \ac{BDD} crucially depends on the variable ordering 
that is imposed.  For example, a certain sum-of-products 
formula~\cite{Bryant86} 
can be represented with a linear number of nodes in the best ordering, and with 
an exponential number of nodes in the worst ordering.  Unfortunately, it can be 
shown~\cite{Sieling02} that the problem of computing a variable ordering that 
results in at most $k$ times the \ac{BDD} nodes of the optimal ordering is 
NP-complete.  That is, finding a good variable ordering is a computationally 
hard problem.  As a consequence, \ac{BDD} libraries mostly rely on heuristics.  
Particularly important are dynamic reordering heuristics~\cite{Rudell93}, which 
try to reduce the \ac{BDD} size automatically while constructing and 
manipulating \acp{BDD}.  Additionally (or alternatively), the user of a 
\ac{BDD} 
library can also trigger reorderings with specified heuristics manually.  

Variable reordering heuristics are certainly effective in improving the 
scalability of \acp{BDD}, especially in industrial applications such as formal 
verification of hardware circuits~\cite{Rudell93}.  However, there exist 
formulas for which no variable ordering yields a small \ac{BDD}.  Even worse, 
such characteristics cannot only be observed on artificial examples, but also 
on 
structures that occur frequently in industrial applications.  For instance, for 
an $n$-bit multiplier, it can be shown~\cite{Bryant86} that at least one of the 
output functions requires at least $2^{n/8}$ \ac{BDD} nodes for any variable 
ordering.  Together with the recent progress in efficient \acs{SAT} solving 
(see 
below), these scalability issues are among the reasons why \acp{BDD} are 
increasingly displaced in applications like model checking.

\mypara{Operations on \acp{BDD}.}
\ac{BDD} libraries like \cudd~\cite{cudd} provide a rich set of operations.  
Besides the basic Boolean connectives $\neg$, $\vee$, $\wedge$, etc., they 
offer 
universal and existential quantification of variables.  Hence, \acp{BDD} can 
also be used to reason about \acfp{QBF}.  Other useful operations are the 
computation of positive and negative cofactors, as well as swapping of 
variables 
in the formula.  Satisfying assignments can be computed by traversing some path 
from the root to the terminal node \fbox{1}.  \ac{BDD} libraries often also 
provide combined operations that can be computed more efficiently than 
performing the operations in isolation.  One example of such a combined 
operation is $\exists \overline{x} \scope F_1(\overline{x}, \overline{y}) 
\wedge 
F_2(\overline{x}, \overline{z})$, i.e., conjunction followed by existential 
quantification of some variables. Because of this rich set of operations, it is 
often not difficult to realize symbolic algorithms (we will introduce this 
term in Section~\ref{sec:prelim:syenc}) using \acp{BDD} as the underlying 
reasoning engine.
}

\subsubsection{\acs{SAT} solvers}  

A \acs{SAT} solver decides whether a 
given propositional formula in \ac{CNF} is satisfiable.  This problem is 
NP-complete, i.e., given solutions can be checked in polynomial 
time, 
but no polynomial algorithms to compute solutions are known\footnote{Even more, 
if P$\neq$NP, which is widely believed but not proven, no polynomial algorithm 
exists.}.  Despite this relatively high complexity\footnote{Well, in comparison 
to the complexities that have to be dealt with in synthesis it is actually not 
so high.} there have been enormous scalability improvements over the last 
decades.  Today, modern \acs{SAT} solvers can handle industrial problems 
with millions of variables and clauses~\cite{KatebiSS11}.  

\mypara{Working principle.}
Modern \acs{SAT} solvers~\cite{KatebiSS11} are based on the concept of 
\emph{\ac{CDCL}}, where 
partial assignments that falsify the 
formula are eliminated by adding a blocking clause to forbid the partial 
assignment.  The current assignment in the search is not just negated to obtain 
the clause.  Instead, a conflict graph is analyzed with the goal of eliminating 
irrelevant variables and thus learning smaller blocking clauses.  This idea is 
combined with aggressive (so-called \emph{non-chronological}) backtracking to 
continue the search.  This general principle was introduced in 1996 with 
the \acs{SAT} solvers \grasp~\cite{SilvaS96}.  Modern solvers still follow the 
same principle~\cite{KatebiSS11}, but extended with clever data structures for 
constraint propagation, heuristics to choose variable assignments, restarts of 
the search, and other improvements.  We refer to~\cite{hos} for more details on 
these techniques.

\mypara{SAT competition.}
One driving force for research in efficient \acs{SAT} solving is the annual 
\href{http://www.satcompetition.org/}{\acs{SAT} competition}~\cite{JarvisaloBRS12}, held since 2002.  It 
also defines a simple textual format for \acsp{CNF}, which is called 
\dimacs~\cite{Prestwich09}  and supported by virtually 
all 
\acs{SAT} solvers.  A comparison~\cite{JarvisaloBRS12} of the best solvers from 
2002 to 2011 shows that the number of benchmark instances (of the 2009 
benchmark 
set) solved within 1200 seconds increased from around $50$ to more than $170$ 
during this time span.  Conversely, the maximum solving time for the $50$ 
simplest benchmarks dropped from around $1100$ seconds to around $10$ seconds.  
The plot in~\cite{JarvisaloBRS12} summarizing this data does not show any signs 
of saturation over the years. Hence, further performance improvements can also 
be expected for the coming years.  Our \acs{SAT} solver based synthesis methods 
will directly benefit from such improvements.

\paragraph{Solver Features and Notation} \label{sec:sat_not} ~\newline

In the algorithms presented in this article, we will denote a call to a 
\acs{SAT} solver by 
  $\textsf{sat} := \propsat\bigl(F(\overline{x})\bigr),$ 
where $F(\overline{x})$ is a propositional formula in \ac{CNF}.  The variable 
$\textsf{sat}$ is assigned $\true$ if $F(\overline{x})$ is satisfiable, and 
$\false$ otherwise.  

\mypara{Satisfying assignments.}
Modern \acs{SAT} solvers do not only decide satisfiability, but 
can also compute a satisfying assignment for the variables in the formula.  We 
will write 
  $(\textsf{sat},\mathbf{x}, \mathbf{y}, \ldots) := 
\propsatmodel\bigl(F(\overline{x}, \overline{y}, \ldots)\bigr)$ 
to denote a call to the solver where we also extract a satisfying assignment in 
the form of cubes $\mathbf{x}, \mathbf{y}, \ldots$ over the variables
$\overline{x}, \overline{y}, \ldots$ occurring in the formula $F$.
The cubes may be incomplete if the value of the missing 
variables is irrelevant for $F$ to be $\true$.  The returned cubes are 
meaningless if $\textsf{sat}$ is $\false$.  

\mypara{Unsatisfiable cores.}
Another feature of modern \acs{SAT} solvers is the efficient computation of 
unsatisfiable cores, as defined in Section~\ref{sec:prelim:prop}.  Given that
$\mathbf{x} \wedge F(\overline{x},\overline{y})$ is unsatisfiable, we will write
  $\mathbf{x}' := \propsatcore\bigl(\mathbf{x}, 
    F(\overline{x},\overline{y})\bigr)$
to denote the extraction of an unsatisfiable core $\mathbf{x}'\subseteq 
\mathbf{x}$ such that $\mathbf{x}' \wedge F(\overline{x},\overline{y})$ is 
still 
unsatisfiable.  Natively, \acs{SAT} solvers usually compute unsatisfiable 
cores that are not necessarily minimal.  However, a computed core can easily be 
minimized by trying to drop literals of $\mathbf{x}'$ one by one and checking 
if unsatisfiability is still preserved. We will denote the computation of a 
minimal unsatisfiable core by
  $\mathbf{x}' := \propsatmincore\bigl(\mathbf{x}, 
    F(\overline{x},\overline{y})\bigr).$
In our algorithms, we use unsatisfiable core computations to 
generalize discovered facts.  In our experience, good generalizations (in the 
form of small cores) are usually more beneficial than fast ones.  Thus, we will 
usually compute minimal unsatisfiable cores.

\mypara{Interpolation.}
Given two \acp{CNF} $A(\overline{x},\overline{y})$ and 
$B(\overline{x},\overline{z})$ with $A\wedge B=\false$, we denote the 
computation of a Craig interpolant $I(\overline{x})$ (such that $A\rightarrow I 
\rightarrow \neg B$; cf.\ Section~\ref{sec:prelim:prop}) by 
  $I := \interpol(A,B).$
While \acs{SAT} solvers usually cannot compute interpolants natively, many of 
them can output unsatisfiability proofs.  An interpolant can then be computed 
from such an unsatisfiability proof for $A\wedge B$ using different 
methods~\cite{DSilvaKPW10}.

\mypara{Incremental solving.}
Modern \ac{CDCL}-based \acs{SAT} solvers can solve sequences of similar 
\ac{CNF} 
queries more efficiently than by processing the queries in isolation.  For 
instance, if clauses are only added but not removed between satisfiability 
checks, all the 
clauses learned so far can be retained and do not have to be rediscovered again 
and again.  Removing clauses is more problematic. Certain learned clauses may 
become invalid and need to be removed as well. Clause removals are supported by 
different solvers in different ways (or not at all). One wide-spread approach 
is 
to provide an interface for pushing the current state of the solver onto a 
stack 
and restoring it later.  A related feature that is supported by many \acs{SAT} 
solvers is \emph{assumption literals},  which can be 
asserted temporarily.  In the algorithms presented in this article, we will 
mostly avoid removing clauses from incremental \acs{SAT} sessions and use 
assumption literals to enable or disable parts of a formula instead.  In this 
context, will also refer to variables that are introduced for the purpose of 
enabling or disabling formula parts as \emph{activation variables}.

In general, we will present our synthesis algorithms in a non-incremental way 
and discuss the use of incremental solving separately.  This way, we do not have 
to introduce notation for adding clauses, resetting the state of a solver, etc., 
which improves the readability of the algorithms.

\subsubsection{\acs{QBF} Solvers}

A \acs{QBF} solver decides whether a given \acl{QBF} in \ac{PCNF} is 
satisfiable.  This problem is PSPACE-complete~\cite{BuningB09}, i.e., solving it 
requires a polynomial amount of memory.  No NP-time algorithms are 
known\footnote{And it is widely believed, but not proven, that no such 
algorithms exist.}, so from a complexity point of view, \acs{QBF} problems are 
(likely to be) strictly harder than \acs{SAT} problems.  

\mypara{Working principle.}
While most modern \acs{SAT} solvers follow the concept of \ac{CDCL}, the set of 
techniques applied for \acs{QBF} solving is more diverse.  For instance, the 
solver \depqbf~\cite{LonsingB10} uses a search-based algorithm (called QDPLL) 
with conflict-driven clause learning (similar to \ac{CDCL} \acs{SAT} solvers) 
and solution-driven cube learning.  The solver \quantor~\cite{Biere04} uses 
variable elimination in order to transform the problem into a purely 
propositional formula.  The solver \rareqs~\cite{JanotaKMC12} follows the idea 
of counterexample-guided refinement of solution candidates, where plain 
\acs{SAT} solvers are used to compute solution candidates as well as to refute 
and refine them.  None of these techniques is clearly superior --- different 
techniques appear to work well on different benchmarks.

\mypara{Preprocessing.} 
An important topic in \acs{QBF} solving is preprocessing.  A \acs{QBF} 
preprocessor simplifies a \acs{QBF} before the actual solver is called.  It is 
also possible that the preprocessor solves a \acs{QBF} problem directly, or 
reduces it to a propositional formula, for which a \acs{SAT} solver can be 
used. 
\bloqqer~\cite{BiereLS11} is an example of a modern \acs{QBF} preprocessor 
implementing many techniques.  It has been shown to have a very positive impact 
on the performance of various solvers~\cite{BiereLS11}: when using 
\bloqqer, the  \acs{QBF} solvers \depqbf~\cite{LonsingB10}, 
\quantor~\cite{Biere04}, \qube~\cite{GiunchigliaNT01} and 
\nenofex~\cite{LonsingB08} can solve between 20\,\% and 40\,\% more benchmarks 
(of the benchmark set from the QBFEVAL 2010 competition within 900 seconds).  
The median execution time decreases by up to a factor of $50$ (achieved for 
\qube) due to \bloqqer~\cite{BiereLS11}.

\mypara{Competitions.}
Similar to \acs{SAT} solving, there are also competitions in \acs{QBF} solving 
(\href{http://www.qbflib.org/index_eval.php}{QBFEVAL} and the 
\href{http://qbf.satisfiability.org/gallery/}{QBF Gallery}) with the aim 
of 
collecting benchmarks as well as assessing and advancing the state of the art 
in 
\acs{QBF} research and tool development.  The input format for these 
competitions is called \qdimacs,  and is essentially 
just an extension of the \dimacs format with a quantifier prefix.  While the 
\ac{QBF} competitions definitely witness solid progress in scalability over the 
years, it seems that \acs{QBF} has not yet reached the maturity of \acs{SAT}, 
especially when it comes to industrial applications such as formal 
verification, 
where the scalability is often insufficient~\cite{BenedettiM08}.  However, 
because \acs{QBF} is a much younger research field than \acs{SAT}, 
future scalability improvements may be even more significant.  The 
\acs{QBF}-based synthesis algorithms presented in this article would directly 
benefit from such developments.

\paragraph{Solver Features and Notation} \label{sec:qbf_not} ~\newline

Similar to our notation for \acs{SAT} solvers, we will write
 $\textsf{sat} := \qbfsat\bigl(Q_1\overline{x}
   \scope Q_2\overline{y} \scope \ldots F(\overline{x},\overline{y}, \ldots)
   \bigr)$ 
to denote a call to a \acs{QBF} solver, where $F$ is a propositional formula in
\ac{CNF}, and $Q_i\in\{\exists,\forall\}$.  As before, \textsf{sat} will be
assigned $\true$ if the \acs{QBF} is satisfiable and $\false$ otherwise.

\mypara{Satisfying assignments.} 
Many existing \acs{QBF} solvers cannot only decide the satisfiability of 
formulas, but also compute satisfying assignments for variables that are 
quantified existentially on the outermost level.  We will write
  $(\textsf{sat}, \mathbf{a}, \mathbf{b} \ldots) :=
     \qbfsatmodel\bigl(\exists\overline{a} \scope 
     \exists\overline{b} \scope \ldots
     Q_1 \overline{x} \scope Q_2 \overline{y} \scope \ldots
     F(\overline{a},\overline{b}, \ldots,
     \overline{x},\overline{y}, \ldots)\bigr)$
to denote the extraction of such a satisfying assignment in the form of cubes 
$\mathbf{a}, \mathbf{b}, \ldots$ over the variable vectors $\overline{a}, 
\overline{b}, \ldots$ quantified existentially on the outside.  In general, 
satisfying assignments cannot be extracted when applying \ac{QBF} preprocessing, 
because preprocessing techniques are often not model preserving.  However, 
recently, an extension of the popular \ac{QBF} preprocessors \bloqqer to 
preserve satisfying assignments has been proposed~\cite{SeidlK14}.  This 
extension enables using \ac{QBF} preprocessing in synthesis algorithms that 
require satisfying assignments. 

\mypara{Unsatisfiable cores.}
Certain \acs{QBF} solvers, such as \depqbf~\cite{LonsingE14}, can compute 
unsatisfiable cores natively.  However, this feature cannot be used with 
preprocessing straightforwardly.  Furthermore, we did not encounter significant 
performance improvements in our experiments compared to minimizing the core in 
an explicit loop.  Hence, we do not introduce notation for unsatisfiable 
\ac{QBF} cores and use explicit minimization loops in our algorithms instead.

\mypara{Incremental solving.}
Comprehensive approaches for incremental \acs{QBF} solving have only been 
proposed recently~\cite{LonsingE14}.  However, incremental solving cannot 
yet be used in combination with \acs{QBF} preprocessing, because existing 
preprocessors are inherently non-incremental.  We experimented with incremental 
solving in our synthesis algorithms.  For many cases, preprocessing 
turned out to more beneficial than incremental solving.  We will thus 
refrain from introducing notation for incremental \ac{QBF} solving, and discuss 
possibilities for incremental solving separately.  

\subsubsection{First-Order Theorem Provers}

First-order logic is undecidable~\cite{0017977}, that is, an algorithm to decide 
the satisfiability (or validity) of every possible first-order logic formula 
cannot exist. Yet, incomplete algorithms and tools \emph{do} exist, and they 
perform well on many practical problems.  Similar to \acs{SAT} and \acs{QBF}, 
there is also a competition for automatic theorem provers \index{theorem prover} 
to solve problems in first-order logic and subsets thereof.  It is called 
\href{http://www.cs.miami.edu/~tptp/CASC/}{CASC}~\cite{SutcliffeS06} 
and exists since 1996.  Benchmarks for the competition are taken from 
the TPTP library~\cite{SutcliffeS98}, which defines a common format for 
first-order logic problems.

In this work, we are particularly interested in the subset called \acf{EPR}.  In 
contrast to full first-order logic, 
\ac{EPR} is actually decidable~\cite{Lewis80} (the problem is 
NEXPTIME-complete).  The CASC competition also features a track for \ac{EPR}. 
From 2008 to 2014, this track was always won by \iprover~\cite{Korovin08}. 
\iprover is an instantiation-based solver and can thus not only decide the 
satisfiability of \ac{EPR} formulas, but also compute models in form of concrete 
realizations for the predicates.  This feature makes \iprover particularly 
suitable for synthesis.

\subsection{Symbolic Encoding and Symbolic Computations} 
\label{sec:prelim:syenc}

Formal methods for verification or synthesis must be able to deal with large 
sets of states or large sets of possible inputs efficiently.  \emph{Symbolic 
encoding}~\cite[page 383]{0017977} is a way to 
represent large sets of elements compactly using formulas. 
 Set elements are represented by assignments to variables.  Formulas 
over these variables characterize which elements are contained in a set: if the 
formula evaluates to $\true$ for a particular variable assignment, then the 
corresponding element is part of the set, otherwise not. Such a formula is 
called the 
\emph{characteristic formula} of the set.

\begin{exaq}\label{ex:syenc}
Consider the set $A$ of all integers from $0$ to $65535$.  We can use $16$ 
Boolean variables $\overline{x} = (x_0,\ldots, x_{15})$ to encode subsets of 
$A$ 
symbolically.  The variables represent the bits of the binary encoding of a 
number, with $x_0$ being the least significant bit.  An explicit representation 
of the set $A_0=\{0,2,4,\ldots,65534\}$ of all even numbers would have to 
enumerate $32768$ elements.  In a symbolic representation, the set of even 
numbers can be represented by the propositional formula $F_0(\overline{x}) = 
\neg x_0$, requiring that the least significant bit is $\false$ and all other 
bits are arbitrary.  The set $A_1=\{49152,49153,\ldots,65535\}$ of all numbers 
greater or equal to $49152$ can be represented symbolically using the formula 
$F_1(\overline{x}) = x_{15} \wedge x_{14}$, stating that the two most 
significant bits must be set.
\end{exaq}

Characteristic formulas cannot only be used to \emph{represent} sets.  We can 
also perform set operations directly on the formulas.  A set union $A_0 \cup 
A_1$ can be realized as disjunction of the corresponding characteristic 
formulas 
$F_0$ and $F_1$, intersection corresponds to conjunction, and a complement to 
the negation of the characteristic formula. The formula $\false$ represents the 
empty set, the formula $\true$ represents the set of all elements in the domain.

\begin{exaq}\label{ex:sycomp}
Continuing Example~\ref{ex:syenc}, the set $A_0 \cap A_1$ of even numbers 
greater 
or equal to $49152$ can be computed symbolically as $F_0(\overline{x}) \wedge 
F_1(\overline{x}) = \neg x_0 \wedge x_{15} \wedge x_{14}$.  The set $A_1 
\setminus A_0$ of odd numbers greater or equal to $49152$ can be computed 
symbolically as $F_1(\overline{x}) \wedge \neg F_0(\overline{x}) = x_{15} 
\wedge 
x_{14} \wedge x_0$.
\end{exaq}

In this article, we will often handle sets and their symbolic representations 
interchangeably.  For instance, we may say ``the set of states 
$F(\overline{x})$'' although $F$ is a formula over state variables 
$\overline{x}$, representing the set symbolically.

\subsection{Reactive Synthesis from Safety Specifications} \label{sec:prel:hw}

This section defines the reactive synthesis problem from safety 
specifications and the relevant concepts from game theory.  We also 
present a standard textbook solution.  It will serve as 
baseline for our satisfiability-based methods.  

\subsubsection{Safety Specifications} 

\begin{wrapfigure}[9]{r}{0.55\textwidth}
\vspace{-4mm}
\centering
\includegraphics[width=0.54\textwidth]{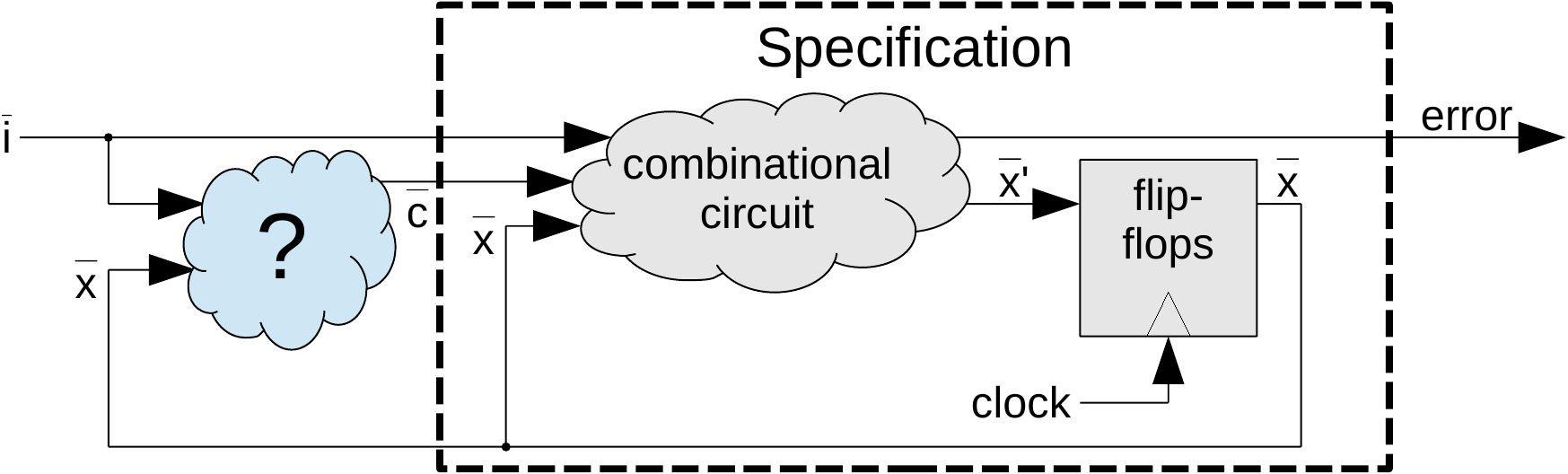}
\caption{Circuit representation of a safety specification.}
\label{fig:syntcomp_spec}
\end{wrapfigure}
A safety specification expresses that certain ``bad things'' 
never happen in a system.  We follow the framework of the 
\syntcomp~\cite{sttt_syntcomp} 
synthesis competition, which defines safety specification benchmarks as hardware 
circuits in \aiger format, as 
illustrated in Figure~\ref{fig:syntcomp_spec}.  The circuits have uncontrollable 
inputs 
 $\overline{i}$, controllable inputs 
 $\overline{c}$, flip-flops to store a number of 
state bits $\overline{x}$, and one output ``error'' signaling specification 
violations.  The corresponding synthesis problem is to construct a circuit that 
defines the controllable inputs $\overline{c}$ based on the uncontrollable 
inputs $\overline{i}$ and the state $\overline{x}$ in such a way that the error 
output can never become $\true$. This unknown circuit to be 
constructed 
is denoted with a question mark in Figure~\ref{fig:syntcomp_spec}. We will also 
refer to the controllable inputs as \emph{control signals} to emphasize that 
these signals are not intended to be inputs of the 
final system. 

The specification illustrated in Figure~\ref{fig:syntcomp_spec} can be seen as 
a runtime monitor, declaratively encoding the design intent for the system to be 
synthesized.  Another view is that the specification is a plant which needs to 
be controlled, or a sketch of a hardware circuit where the implementation for 
certain signals is still missing.  Hence, this format flexibly fits various 
applications of synthesis. Formally, we define a safety specification as 
follows.

\begin{definition}[Safety Specification] \label{def:safety}
A safety specification is a tuple
$\mathcal{S} = (\overline{x}, \overline{i}, \overline{c}, I, T, P)$, where 
\begin{compactitem}
\item $\overline{x}$ is a vector of Boolean state variables, 
\item $\overline{i}$ is a vector of uncontrollable, Boolean input variables, 
\item $\overline{c}$ is a vector of controllable, Boolean input variables,
\item $I(\overline{x})$ is an initial condition, expressed as a propositional 
formula over the state variables,
\item $T(\overline{x},\overline{i},\overline{c},\overline{x}')$ is a transition
relation, expressed as a propositional formula over 
the variables 
$\overline{x}$, $\overline{i}$, $\overline{c}$, and $\overline{x}'$, where 
$\overline{x}'$ denotes the next-state copy of $\overline{x}$,
\item the transition relation 
$T(\overline{x},\overline{i},\overline{c},\overline{x}')$ is complete in the 
sense that $\forall \overline{x}, \overline{i}, \overline{c} \scope \exists 
\overline{x}' \scope T(\overline{x},\overline{i},\overline{c},\overline{x}')$,
\item $T$ is deterministic, meaning that $\forall 
\overline{x}, \overline{i}, 
\overline{c},
\overline{x}_1', \overline{x}_2' \scope
\bigl(T(\overline{x},\overline{i},\overline{c},\overline{x}_1') \wedge
T(\overline{x},\overline{i},\overline{c},\overline{x}_2')\bigr)
\rightarrow (\overline{x}_1' = \overline{x}_2')
$, and
\item $P(\overline{x})$ is a propositional formula representing the set of safe 
states in $\mathcal{S}$.
\end{compactitem}
\end{definition}

\noindent
A \emph{state} of $\mathcal{S}$ is an 
assignment to all state variables 
$\overline{x}$.  We represent such assignments (and thus states ) as 
$\overline{x}$-minterms $\mathbf{x}$.  In the spirit of 
symbolic encoding as introduced Section~\ref{sec:prelim:syenc}, a formula 
$F(\overline{x})$ over the state variables $\overline{x}$ represents the set of 
all states $\mathbf{x}$ for which $\mathbf{x} \models F(\overline{x})$ holds. 
In 
this way, the formula $I(\overline{x})$ defines a set of initial states, 
and $P(\overline{x})$ defines the safe states. Similarly, the 
formula 
$T$ defines allowed state transitions: a transition from the current state 
$\mathbf{x}$ to the next state $\mathbf{x}'$ is allowed with input $\mathbf{i}$ 
and $\mathbf{c}$ iff $\mathbf{x} \wedge \mathbf{i} \wedge \mathbf{c} 
\wedge \mathbf{x}' \models 
T(\overline{x},\overline{i},\overline{c},\overline{x}')$.  
Definition~\ref{def:safety} requires that the transition relation $T$ is both 
deterministic and complete.  That is, for any state $\mathbf{x}$ and input 
$\mathbf{i},\mathbf{c}$, the next state $\mathbf{x}'$ is uniquely 
defined.

\subsubsection{Safety Games}

A specification $\mathcal{S} = (\overline{x}, \overline{i}, \overline{c}, I, T, 
P)$ can be seen as a game between two players: the \emph{environment} and the 
\emph{system} we wish to synthesize.  Depending on the context, we will thus 
refer to $\mathcal{S}$ either as a specification or as a game.

\mypara{Plays.} 
The game starts in one of the initial states (chosen by the environment), and 
is 
played in rounds.  In every round $j$, the environment first chooses an 
assignment $\mathbf{i}_j$ to the uncontrollable inputs~$\overline{i}$.  Next, 
the system picks an assignment $\mathbf{c}_j$ to the controllable inputs 
$\overline{c}$.  The transition relation $T$ then computes the next state 
$\mathbf{x}_{j+1}$.  This is repeated indefinitely.  The resulting sequence 
$\mathbf{x}_0, \mathbf{x}_1 \ldots $ of states is called a \emph{play}.  
Formally, we have that $\mathbf{x}_0\models I(\overline{x})$ and 
$\mathbf{x}_j \wedge \mathbf{x}_{j+1}' \wedge  
T(\overline{x},\overline{i},\overline{c},\overline{x}')$ is satisfiable (with 
some $\mathbf{i}_j$ and $\mathbf{c}_j$ chosen by the players) for all $j\geq0$. 
A play $\mathbf{x}_0, \mathbf{x}_1 \ldots $ is \emph{won} by the system and 
lost 
by the environment if $\forall j\scope \mathbf{x}_j \models P(\overline{x})$, 
i.e., if only safe states are visited.  Otherwise, the play is \emph{lost} by 
the system and won by the environment.

\mypara{Preimages.} 
Let $F(\overline{x})$ be a formula representing a certain 
set of states. The mixed preimage 
$\FS\bigl(F(\overline{x})\bigr) = \forall \overline{i} \scope 
                                   \exists \overline{c},\overline{x}' \scope 
  T(\overline{x},\overline{i},\overline{c},\overline{x}') \wedge 
  F(\overline{x}')
$
represents all states from which the system can enforce  
that some state of $F$ is reached in exactly one step. Analogously, 
$\FE\bigl(F(\overline{x})\bigr) = \exists \overline{i} \scope 
                        \forall \overline{c}\scope \exists \overline{x}'\scope 
  T(\overline{x},\overline{i},\overline{c},\overline{x}') \wedge 
  F(\overline{x}')
$ 
gives all states from which the environment can enforce that $F$ is visited in 
one step.  We also define the cooperative preimage 
$ \reach\bigl(F(\overline{x})\bigr) = 
   \exists \overline{i},\overline{c},\overline{x}' \scope 
  T(\overline{x},\overline{i},\overline{c},\overline{x}') \wedge 
  F(\overline{x}')$
denoting the set of all states from which $F$ can be reached cooperatively by 
the two players.  The following dualities can easily be shown:
\begin{compactitem}
\item $\neg\FS(F) = \FE(\neg F)$ holds because, intuitively, the states 
from which the system cannot enforce that $F$ is reached must be the states 
from which the environment can enforce that $\neg F$ is reached.
\item $\neg\FE(F) = \FS(\neg F)$ holds because, dually, the states from 
which the environment cannot enforce that $F$ is reached must be the states 
from which the system can enforce that $\neg F$ is reached.
\end{compactitem}
Furthermore, we have that $\reach(F_1) \vee \reach(F_2) = \reach(F_1 \vee 
F_2)$. 
Yet, the following equivalence does \textbf{not} hold in general: $\FS(F_1) 
\vee 
\FS(F_2) \uneq \FS(F_1 \vee F_2)$.  The reason is that there may be 
states from which the environment controls whether $F_1$ or $F_2$ is visited 
next, and the system can only ensure that one of the two regions is reached.  
Such states falsify $\FS(F_1 \vee F_2) \rightarrow \FS(F_1) \vee 
\FS(F_2)$. This difference in compositionality between $\reach$ and $\FS$ 
explains why some ideas from verification cannot be ported to synthesis 
straightforwardly.

\mypara{Strategies.} 
We focus on memoryless strategies because these strategies are 
sufficient\footnote{``Sufficient'' means: If a strategy to win a given safety 
game exists, then there also exists a memoryless strategy 
to win the safety game.} for safety games~\cite{Thomas95}.  A (memoryless) 
\emph{strategy} 
for the system player in the game $\mathcal{S}$ is a formula 
$S(\overline{x},\overline{i},\overline{c},\overline{x}')$ that specializes $T$ 
in the sense that
\begin{compactitem}
\item $S(\overline{x},\overline{i},\overline{c},\overline{x}') \rightarrow
       T(\overline{x},\overline{i},\overline{c},\overline{x}')$ and
\item $\forall \overline{x}, \overline{i} \scope
       \exists \overline{c}, \overline{x}' \scope
      S(\overline{x},\overline{i},\overline{c},\overline{x}').$
\end{compactitem}
The first bullet requires that the strategy may only allow 
state transitions that are also allowed by the transition relation.  The second 
bullet requires the strategy to be complete with respect to the 
current 
state and uncontrollable input: for every state $\mathbf{x}$ and input 
$\mathbf{i}$, the strategy must contain some way to choose 
$\mathbf{c}$ (and some next state, but the next state is uniquely defined by 
$T$ 
already).  For a particular situation, the strategy 
can allow many possibilities to choose $\mathbf{c}$, though.  A strategy for 
the 
system is \emph{winning} if all plays that can be 
constructed by following $S$ 
instead of $T$ are won by the system.  The \emph{winning region} 
$W(\overline{x})$ is the set of all states from which a winning strategy 
exists. 
That is, if the play would start in some arbitrary state of the winning region, 
the system player would have a strategy to win the~game.

\mypara{System implementations.} 
A \emph{system implementation} is a function $f: 2^{\overline{x}} \times 
2^{\overline{i}} \rightarrow 2^{\overline{c}}$ to uniquely define the control 
signals $\overline{c}$ based on the current state and the uncontrollable inputs 
$\overline{i}$.  A system implementation $f$ \emph{implements} a strategy $S$ 
if 
$\forall \overline{x}, \overline{i} \scope
       \exists \overline{x}' \scope
      S\bigl(\overline{x}, 
\overline{i}, f(\overline{x},\overline{i}), \overline{x}'\bigr)$, 
that is, if for every state $\mathbf{x}$ and input $\mathbf{i}$, the control 
value $\mathbf{c} = f(\mathbf{x}, \mathbf{i})$ computed by $f$ is allowed 
by the strategy $S$.  A system implementation $f$ \emph{realizes} a safety 
specification $\mathcal{S} = \bigl(\overline{x}, \overline{i}, \overline{c}, 
I(\overline{x}), T(\overline{x},\overline{i},\overline{c},\overline{x}'), 
P(\overline{x})\bigr)$ if all plays of $\mathcal{S}' = \bigl(\overline{x}, 
\overline{i}, 
\emptyset, I(\overline{x}), T(\overline{x},\overline{i},f(\overline{x}, 
\overline{i}),\overline{x}'), P(\overline{x})\bigr)$ are won by the system 
player, 
i.e., visit only safe states.  Here, $\mathcal{S}'$ is a simplified version of 
the game $\mathcal{S}$ where the moves of the system player are defined 
by $f$, i.e., the system has no choices left.  A safety specification is 
\emph{realizable} if a system 
implementation that realizes it exists.  Given a 
winning strategy $S$ for a safety specification $\mathcal{S}$, every 
implementation $f$ of the winning strategy $S$ realizes the 
specification~$\mathcal{S}$.  This follows from the definition of the 
winning strategy.  Hence, a system implementation for a safety 
specification $\mathcal{S}$ can be constructed by computing a winning strategy 
$S$ for $\mathcal{S}$ and then computing an implementation $f$ of~$S$.

\subsubsection{Synthesis Algorithms for Safety Specifications}
\label{sec:prelim:sasyalg}

Given an explicit representation of the safety specification $\mathcal{S}$ as a 
game graph (with vertices representing states and edges representing state 
transition) the problem of deciding the realizability of a safety specification 
is solvable in linear time~\cite{Thomas95}.  When starting from our 
symbolic representation $\mathcal{S}$, the problem is EXP-time 
complete~\cite{PapadimitriouY86}.

A \emph{synthesis algorithm}  for safety 
specifications takes as input a safety specification $\mathcal{S}$ and computes 
a system implementation realizing this specification if such an implementation 
exists.  If no such implementation exists, the algorithm reports 
unrealizability.  
Wolfgang 
Thomas~\cite{Thomas95} sketches the standard textbook algorithm for solving 
this 
problem.  It proceeds in two steps.  First, a winning strategy is computed.  
Second, the winning strategy is implemented in a circuit.  This process is 
elaborated in the following two subsections.

\subsubsection{Computing a Winning Strategy} \label{sec:prelim:winstrat}

The computation of a winning strategy $S(\overline{x}, \overline{i}, 
\overline{c}, 
\overline{x}')$ for the game $\mathcal{S}\! =\! \bigl(\overline{x}, 
\overline{i}, 
\overline{c}, I(\overline{x}), 
T(\overline{x}, \overline{i}, \overline{c}, \overline{x}'), 
P(\overline{x})\bigr)$ is achieved by computing the winning region 
$W(\overline{x})$ of the game $\mathcal{S}$ using the procedure 
\textsc{SafeWin}, shown in Algorithm~\ref{alg:SafeWin}.  The winning region $W$ 
is built up in the variable $F$.  Initially, $F$ represents the set of all safe 
states~$P$.  Line~\ref{alg:SafeWin:4} retains only those states of $F$ from 
which the system player can enforce that the play stays in a state of $F$ 
also in the next 
step.  This operation is repeated as long as the state set $F$ changes. If the 
set of initial states $I$ is not contained in $F$ any more, the procedure 
aborts, returning $\false$ to signal unrealizability of the specification. 
Otherwise, the final version of $F$ is returned as the winning region.  All 
operations that are performed in this algorithm can easily be realized using 
\acp{BDD}.

\begin{figure}
\begin{minipage}{0.36\textwidth}
\begin{algorithm}[H]
\caption{\textsc{SafeWin}: Computes a 
winning region in a safety game.}
\label{alg:SafeWin}
\begin{algorithmic}[1]
\ProcedureRetL{SafeWin}
             {(\overline{x}, \overline{i}, \overline{c}, I, T, P)}
             {The winning region or $\false$}
             {1mm}
  \State $F := P$
  \While{$F$ changes}             \label{alg:SafeWin:3}
    \State $F := F \wedge \FS(F)$ \label{alg:SafeWin:4}
    \If{$I \not \rightarrow F$}   \label{alg:SafeWin:5}
      \State \textbf{return} $\false$
    \EndIf
  \EndWhile
  \State \textbf{return} $F$
\EndProcedure  
\end{algorithmic}
\end{algorithm}
\vspace*{-1mm}
\end{minipage}
\hspace{1mm}
\begin{minipage}{0.62\textwidth}
\begin{algorithm}[H]
\caption[\textsc{CofSynt}: A cofactor-based algorithm for computing an 
implementation of a strategy]
{\textsc{CofSynt}: A cofactor-based algorithm for computing an 
implementation of a strategy.}
\label{alg:CofSynt}
\begin{algorithmic}[1]
\ProcedureRet{CofSynt}
             {S(\overline{x},\overline{i},\overline{c},\overline{x}')}
             {$f_1,\ldots,f_n:2^{\overline{x}} \times 2^{\overline{i}}
              \rightarrow \B$}
  \For{$c_j \in \overline{c}$}
    \State $C_1(\overline{x}, \overline{i}) := \exists \overline{x}', 
                \overline{c} \scope
      S\bigl(\overline{x}, 
       \overline{i},
       (c_0,\ldots,c_{j-1},\true,c_{j+1},\ldots,c_n),
       \overline{x}'
       \bigr)$\label{alg:CofSynt:c1}
     \State $C_0(\overline{x}, \overline{i}) := \exists \overline{x}', 
                 \overline{c}\scope
      S\bigl(\overline{x}, 
       \overline{i},
       (c_0,\ldots,c_{j-1},\false,c_{j+1},\ldots,c_n),
       \overline{x}'
       \bigr)$\label{alg:CofSynt:c0}
     \State $C(\overline{x}, \overline{i}) := 
            \neg C_1(\overline{x}, \overline{i})  \vee
            \neg C_0(\overline{x}, \overline{i})$\label{alg:CofSynt:c}
     \State $F_j(\overline{x}, \overline{i}) := \textsf{simplify}(C_1, C)$
            \label{alg:CofSynt:simpl}
     \State $S(\overline{x},\overline{i},\overline{c},\overline{x}') := 
             S(\overline{x},\overline{i},\overline{c},\overline{x}') \wedge 
            \bigl(c_j \leftrightarrow F_j(\overline{x}, \overline{i})\bigr)$
            \label{alg:CofSynt:resub}
  \EndFor
  \State \textbf{return} $F_1,\ldots, F_n$
\EndProcedure  
\end{algorithmic}
\end{algorithm}
\end{minipage}
\end{figure}

If the specification is realizable, i.e., \textsc{SafeWin} did not return 
$\false$, a winning strategy $S$ is computed from the winning region $W$. For 
safety specifications, $S$ can be defined as 
$S(\overline{x},\overline{i},\overline{c},\overline{x}') =   
T(\overline{x},\overline{i},\overline{c},\overline{x}') \wedge \bigl(   
W(\overline{x}) \rightarrow W(\overline{x}')\bigr).
$ 
That is, the transition relation must always be respected.  Furthermore, if the 
current state is in the winning region, then the next state must be contained 
in 
the winning region as well.  This strategy will enforce the specification 
because $I \rightarrow W$, i.e., all initial states are contained in the 
winning 
region (otherwise \textsc{SafeWin} would have signaled unrealizability).  When 
starting from a state of the winning region, the strategy ensures that the next 
state will be in the winning region again.  Finally, the winning region $W$ can 
only contain safe states, i.e., $W\rightarrow P$.  Hence, only safe states can 
be visited when following the~strategy.

\subsubsection{Computing a System Implementation from a Winning Strategy}

The second step is to compute a system implementation that implements 
the strategy, and to realize this implementation in form of a circuit.  This 
can be done by computing a Skolem function
for the variables $\overline{c}$ in the formula
$\forall \overline{x}, \overline{i} \scope
  \exists \overline{c}, \overline{x}' \scope 
  S(\overline{x},\overline{i},\overline{c},\overline{x}'),
$
i.e., a function $f: 2^{\overline{x}} \times 2^{\overline{i}} \rightarrow 
2^{\overline{c}}$ such that
$\forall \overline{x}, \overline{i} \scope
  \exists \overline{x}' \scope 
  S\bigl(\overline{x},\overline{i},f(\overline{x}, 
  \overline{i}),\overline{x}'\bigr)
$
holds.  Usually, we prefer simple functions that can be implemented in small 
circuits.  A survey of existing methods to solve this problem can be found in 
the work by Ehlers et al.~\cite{EhlersKH12}.  One widely used method is 
presented in the following.

\mypara{The cofactor-based method.} The cofactor-based method 
presented by Bloem et al.~\cite{BloemJPPS12} can be considered as the
``standard 
method'' for computing an implementation from a strategy.  It is outlined in 
Algorithm~\ref{alg:CofSynt}.  The input is a strategy $S$, the output is a set 
of functions $f_1,\ldots,f_n:2^{\overline{x}} \times 2^{\overline{i}} 
\rightarrow \B$, each one defining one control signal of $\overline{c} = 
(c_1,\ldots, c_n)$.  Together, these functions define $f: 2^{\overline{x}} 
\times 2^{\overline{i}} \rightarrow 2^{\overline{c}}$.  \textsc{CofSynt} 
computes one $f_j$ after the other.  In Line~\ref{alg:CofSynt:c1}, a 
formula $C_1(\overline{x}, \overline{i})$ is constructed.  It represents the 
set 
of all valuations of $\overline{x}$ and $\overline{i}$ in which $c_j=\true$ is 
allowed by the strategy.  It is computed as the positive cofactor of $S$ with 
respect to $c_j$, while all signals that are currently not relevant are 
quantified existentially.  Similarly, Line~\ref{alg:CofSynt:c0} computes all 
situations where $c_j=\false$ is allowed by the strategy.  Our definition of a 
strategy implies that $C_1(\overline{x}, \overline{i}) \vee C_0(\overline{x}, 
\overline{i}) = \true$, i.e., one of the two values is always allowed (but 
sometimes both are allowed). Next, Line~\ref{alg:CofSynt:c} computes the 
\emph{care set} $C$, i.e., the set of all situations in which 
the output 
matters.  Outside of this care set, the value of $c_j$ can be set arbitrarily.  
Line~\ref{alg:CofSynt:simpl} uses this information to simplify $C_1$:  The 
procedure \textsf{simplify} returns some $F_j$ which is equal to $C_1$ wherever 
$C$ is $\true$, and arbitrary where $C$ is $\false$.  When using \acp{BDD} as 
reasoning engine, this simplification can be implemented with 
the 
operation \textsf{Restrict}~\cite{CoudertM90}.  However, this is an 
optional optimization to obtain smaller circuits.  Setting $F_j=C_1$ would work 
as well.  Finally, Line~\ref{alg:CofSynt:resub} refines the strategy $S$ with 
the computed implementation for the control signal $c_j$.  This step is 
necessary because some control signals may depend on others, so fixing the 
implementation of one control signal may restrict other control signals.  

\begin{wrapfigure}[8]{r}{0.35\textwidth}
\vspace{-3mm}
\centering
\includegraphics[width=0.34\textwidth]{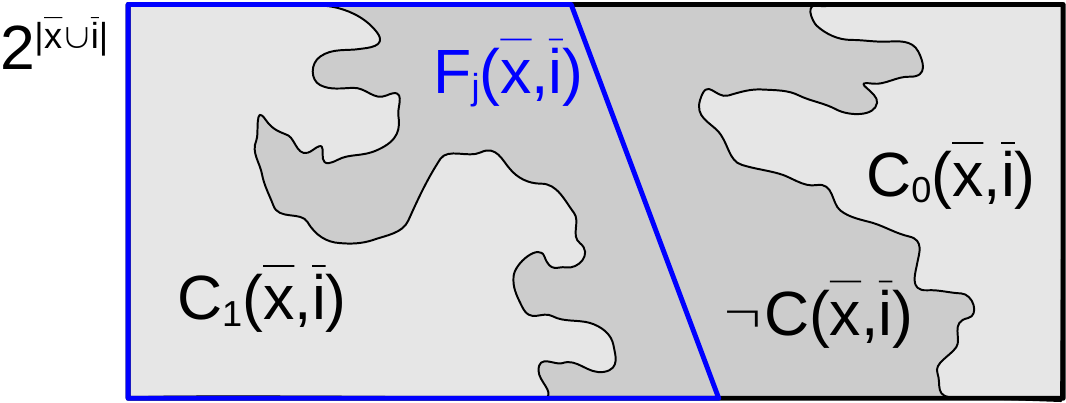}
\caption{Working principle of \textsc{CofSynt}.}
\label{fig:cofsynt}
\end{wrapfigure}
\mypara{Illustration.}
Figure~\ref{fig:cofsynt} illustrates one iteration of the \textsc{CofSynt} 
procedure graphically.  The box represents the set 
of all possible assignments 
to the variables $\overline{x}$ and $\overline{i}$.  The region $C_1$ contains 
all situations where $c_j=\true$ is allowed.   Similarly, $C_0$ contains all 
situations where $c_j=\false$ is allowed.  The overlap of the two regions is 
colored in dark gray.  Hence, the dark gray region is the set of situations 
where both $c_j=\true$ and $c_j=\false$ is allowed.  It corresponds to the 
negation $\neg C$ of the care set $C$.  Note that each point in the box is 
either contained in $C_1$ or in $C_0$ (or in both).  The function $F_j$ 
defining 
$c_j$ is shown in blue.  Outside of the dark gray don't-care area $\neg C$ it 
matches $C_1$ precisely.  In the don't-care area it can be different, though. 
These properties are enforced by the procedure \textsf{simplify}, called in 
Line~\ref{alg:CofSynt:simpl} of \textsc{CofSynt}.  Exploiting the freedom in 
the 
don't-care region can result in simpler formulas and thus in smaller circuits.  
In Figure~\ref{fig:cofsynt}, this is indicated by $F_j$ being much more 
regular than $C_1$.

\mypara{Computing circuits.}
In order to obtain an implementation $f$ in form of a hardware circuit, the 
individual functions $f_j$, defined as formulas $F_j$, need to be transformed 
into a network of gates.  In principle, this is not difficult:  each $F_j$ is a 
propositional formula (if quantifiers are left, they can be expanded) and the 
structure of the formula can directly be translated into gates.  If \acp{BDD} 
are used, each \acs{BDD} node can be translated into a multiplexer.

\subsection{Learning by Queries}\label{sec:prelim:ql}

In this section, we discuss concepts for learning propositional formulas based 
on queries, as introduced by Angluin~\cite{Angluin87}.  We refer to 
Crama and Hammer~\cite[Chapter 7]{Crama:2010:BMM:1855106} for a more elaborate 
discussion.

\subsubsection{Basic Concept}

\begin{wrapfigure}[7]{r}{0.33\textwidth}
\vspace{-6mm}
\centering
\includegraphics[width=0.32\textwidth]{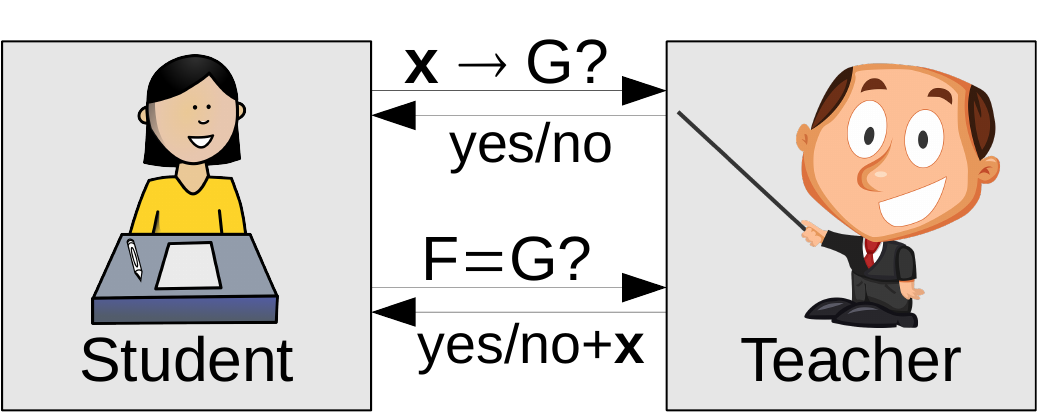}
\caption{Student and teacher in query learning.}
\label{fig:learn}
\end{wrapfigure}
The goal of query learning is to compute a small representation $F$ of a 
propositional formula $G(\overline{x})$ over a given set $\overline{x}$ of 
Boolean variables.  As illustrated in Figure~\ref{fig:learn}, this is achieved 
by two parties in interaction: the \emph{student} (or learner) and the 
\emph{teacher} (or oracle).  The student can ask two kinds of questions:
\begin{compactitem}
\item A \emph{subset query} asks if a given (potentially incomplete) cube 
$\mathbf{x}$ is fully contained in $G(\overline{x})$, i.e., if the implication 
$\mathbf{x} \rightarrow G$ holds.  The answer to this question is either 
\textsf{yes} or \textsf{no}.  In algorithms, we will denote such queries by 
$\SU(\mathbf{x}, G)$.
\end{compactitem}

\begin{compactitem}
\item An \emph{equivalence query} asks if a given 
candidate formula 
$F(\overline{x})$ is equivalent to $G(\overline{x})$.  The answer is again 
either 
\textsf{yes} or \textsf{no}.  However, in the \textsf{no}-case, the teacher 
also 
returns a \emph{counterexample} $\mathbf{x}$ in form of an 
$\overline{x}$-minterm witnessing the difference.  A counterexample is either a 
\emph{false-positive} with $\mathbf{x}\models F$ and $\mathbf{x}\not\models G$ 
or a \emph{false-negative} with $\mathbf{x}\not\models F$ and 
$\mathbf{x}\models G$.  In algorithms, we will denote equivalence queries by 
$\EQ(F, G)$.
\end{compactitem}
A \emph{membership query} is a special form of a 
subset query where 
$\mathbf{x}$ 
is an $\overline{x}$-minterm, i.e., a complete cube.

\subsubsection{Learning Algorithms}

The general pattern for query learning algorithms is that they start with some 
initial ``guess'' of the target function.  In a loop, they then perform 
equivalence queries.  If counterexamples are returned, the guess of the target 
function is refined to eliminate the counterexample.  The refinement may 
involve 
membership- and subset queries, and distinguishes the algorithms.  
Concrete algorithms are presented in the following.

\begin{figure}
\begin{minipage}{0.49\textwidth}
\begin{algorithm}[H]
\caption[\textsc{DnfLearn}: A DNF learning algorithm]
{\textsc{DnfLearn}: A DNF learning algorithm.}
\label{alg:DnfLearn}
\begin{algorithmic}[1]
\ProcedureRetL{DnfLearn}
             {G(\overline{x})}
             {A DNF representation $F(\overline{x})$ of $G(\overline{x})$}
             {3cm}
  \State $F := \false$
  \While{$\EQ(F,G)$ returns a counterexample $\mathbf{x}$} 
         \label{alg:DnfLearn:3}
    \State $\mathbf{x}_g := \mathbf{x}$
    \For{each literal $l$ in $\mathbf{x}$}
      \If{$\SU(\mathbf{x}_g\setminus \{l\}, G)$}
        \State $\mathbf{x}_g := \mathbf{x}_g\setminus \{l\}$
      \EndIf
    \EndFor
    \State $F := F \vee \mathbf{x}_g$
  \EndWhile
  \State \textbf{return} $F$
\EndProcedure  
\end{algorithmic}
\end{algorithm}
\end{minipage}
\hspace{1mm}
\begin{minipage}{0.49\textwidth}
\begin{algorithm}[H]
\caption[\textsc{CnfLearn}: A CNF learning algorithm]
{\textsc{CnfLearn}: A CNF learning algorithm.}
\label{alg:CnfLearn}
\begin{algorithmic}[1]
\ProcedureRetL{CnfLearn}
             {G(\overline{x})}
             {A CNF representation $F(\overline{x})$ of $G(\overline{x})$}
             {3cm}
  \State $F := \true$
  \While{$\EQ(F,G)$ returns a counterexample $\mathbf{x}$} 
         \label{alg:CnfLearn:3}
    \State $\mathbf{x}_g := \mathbf{x}$
    \For{each literal $l$ in $\mathbf{x}$}
      \If{$\SU(\mathbf{x}_g\setminus \{l\}, \neg G)$}
        \State $\mathbf{x}_g := \mathbf{x}_g\setminus \{l\}$
      \EndIf
    \EndFor
    \State $F := F \wedge \neg \mathbf{x}_g$
  \EndWhile
  \State \textbf{return} $F$
\EndProcedure  
\end{algorithmic}
\end{algorithm}
\end{minipage}
\end{figure}

\mypara{Learning a \acs{DNF}.}
\textsc{DnfLearn}~\cite[Chapter 7]{Crama:2010:BMM:1855106} in 
Algorithm~\ref{alg:DnfLearn} computes a \acs{DNF} representation 
of a given formula $G(\overline{x})$ using equivalence- and subset 
queries. It starts with the initial guess $F = \false$.  This guess is then 
refined based on the counterexamples returned by the equivalence 
queries in Line~\ref{alg:DnfLearn:3}.  The algorithm maintains the invariant 
$F\rightarrow G$.  Hence, a counterexample $\mathbf{x}$ can only be a 
false-negative, i.e., $\mathbf{x}\not\models F$ but $\mathbf{x}\models G$.  In 
principle, the counterexample $\mathbf{x}$ can be eliminated by updating $F$ to 
$F \vee \mathbf{x}$ without executing the inner \textbf{for}-loop.  However, in 
order to (potentially) reduce the number of iterations and also the size of 
$F$, 
the counterexamples are generalized:  The inner loop drops literals from the 
cube $\mathbf{x}$ as long as the reduced cube $\mathbf{x}_g$ still implies $G$, 
i.e., represents only variable assignments that must be mapped to $\true$ in 
the 
end. Thus, the subsequent update $F := F \vee \mathbf{x}_g$ does not only 
eliminate the original counterexample $\mathbf{x}$, but may also eliminate many 
other counterexamples that have not been encountered yet.  Note that this inner 
loop actually computes an unsatisfiable core $\mathbf{x}_g := 
\propsatmincore(\mathbf{x}, \neg G)$.  If no more counterexamples are left, the 
algorithm terminates and returns $F$, which is a disjunction of cubes, i.e., a 
\acs{DNF} that is equivalent to $G$.

\mypara{Learning a \acs{CNF}.}
A \ac{CNF} representation of a given formula $G(\overline{x})$ can be computed 
with $F = \neg\textsc{DnfLearn}(\neg G)$, i.e., by computing a \acs{DNF} for 
$\neg G$ and negating the result.  Alternatively, the procedure 
\textsc{DnfLearn} can easily be rewritten to compute \acp{CNF} directly.  This 
is shown in Algorithm~\ref{alg:CnfLearn}.  The working principle remains the 
same, but $F$ is initialized to $\true$ and refined with clauses that are 
computed from the false-positives returned by the equivalence queries.

More query learning algorithms can be found in the literature.  For instance, an 
algorithm to learn formulas in form of a conjunction of \acsp{DNF} can be 
defined using Bshouty's \emph{monotone} theory~\cite{Bshouty95}.  Ehlers et 
al.~\cite{EhlersKH12} show how various learning algorithms can be used 
effectively in circuit synthesis using \acp{BDD}.  In this article we focus on 
satisfiability-based synthesis methods.  \acs{SAT}- and \acs{QBF} solvers 
operate on 
\acs{CNF} representations of a formula.  Hence, our algorithms will mostly rely 
on the \acs{CNF} learning approach.  We therefore refrain from introducing more 
complicated learning methods here in detail, and refer the interested reader to 
the book by Crama and Hammer~\cite[Chapter 7]{Crama:2010:BMM:1855106}.

\subsection{\acf{CEGIS}}\label{sec:prelim:cegis}

The basic principle of query learning, namely refining an initial ``guess'' 
of 
the solution iteratively based on counterexamples, has also been applied to 
other synthesis-related problems.  One example is 
\acf{CEGIS}~\cite{Solar-Lezama13, lezamaphd}, which 
was introduced in the 
context of program sketching as a method to compute satisfying assignments for 
quantified formulas of the form
$\exists \overline{e} \scope \forall \overline{u} \scope 
  F(\overline{e}, \overline{u}).$
The goal is to compute concrete values $\mathbf{e}$ for the variables 
$\overline{e}$ such that $\forall \overline{u} \scope F(\mathbf{e}, 
\overline{u})$ holds. While the general principle is independent of the logic, 
we will assume that $F$ is a propositional formula.  Hence, $\overline{e}$ and 
$\overline{u}$ are vectors of Boolean variables, and we can use a \acs{SAT} 
solver to reason about $F$ (without the quantifiers).  

\begin{wrapfigure}[9]{r}{0.56\textwidth}
\vspace{-4mm}
\centering
\includegraphics[width=0.55\textwidth]{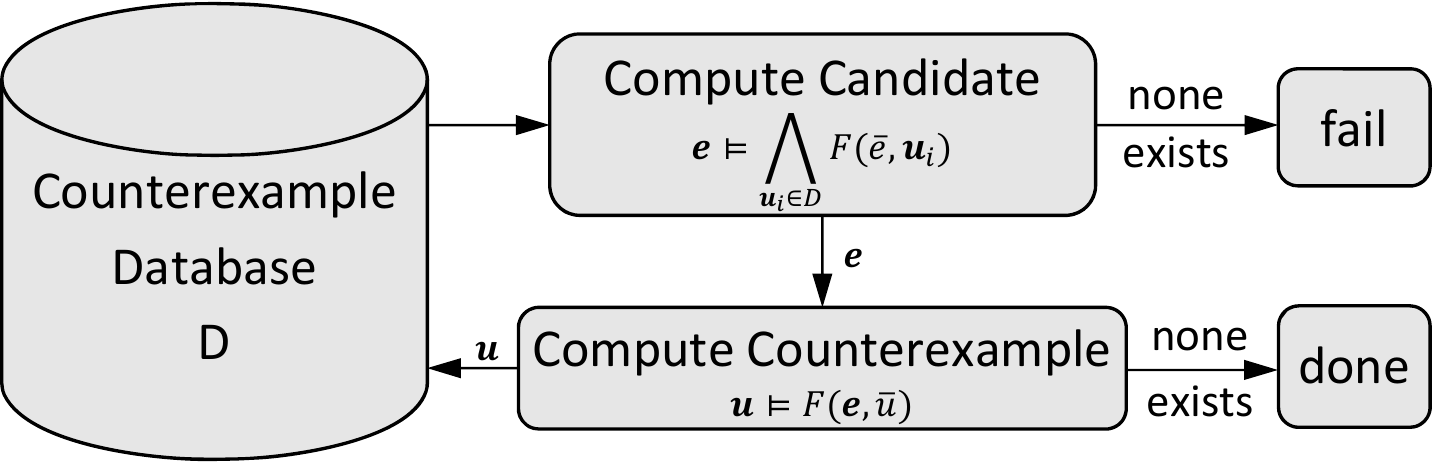}
\caption{Working principle of \acs{CEGIS}.}
\label{fig:cegis}
\end{wrapfigure}
\mypara{Working principle.}  Similar to query learning, a candidate $\mathbf{e}$ 
for a solution is iteratively refined based on counterexamples, which are 
concrete assignments to the variables $\overline{u}$ witnessing that $\forall 
\overline{u} \scope F(\mathbf{e}, \overline{u})$ does not yet hold.  This 
refinement loop is illustrated in Figure~\ref{fig:cegis}. There is a database 
$D$ of counterexamples $\mathbf{u}_i$, which is initially empty.  The first step 
of the loop is to compute a candidate assignment $\mathbf{e}\models 
\bigwedge_{\mathbf{u}_i\in D} F(\overline{e}, \mathbf{u}_i)$ that satisfies $F$ 
for all counterexamples that have been encountered previously.  This is a 
necessary but not a sufficient condition for $\forall \overline{u} \scope 
F(\mathbf{e}, \overline{u})$.  Hence, if no such candidate $\mathbf{e}$ exists, 
this means that $\exists \overline{e} \scope \forall \overline{u} \scope 
F(\overline{e}, \overline{u})$ is unsatisfiable, so the algorithm aborts.  If a 
candidate $\mathbf{e}$ was found, the next step is to check if $F(\mathbf{e}, 
\overline{u})$ holds \emph{for all} $\overline{u}$ and not just for the concrete 
$\overline{u}$-values stored in $D$.  This check is performed by searching for a 
counterexample $\mathbf{u}\models \neg F(\mathbf{e}, \overline{u})$ for which 
$F$ does not (yet) hold with the given~$\mathbf{e}$.  If no such counterexample 
exists, then $\mathbf{e}$ must be a solution, and the algorithm terminates.
Otherwise, the counterexample $\mathbf{u}$ is added to $D$ and another iteration 
is performed.  The candidate that is computed in the next iteration is already 
``better'' in the sense that it satisfies $F$ also for the counterexample from 
the previous iteration (and all iterations before).  For a propositional formula 
$F$ over finite vectors $\overline{e}$ and $\overline{u}$ of Boolean variables,
the \acs{CEGIS} algorithm must terminate eventually.  The reason is that every 
iteration excludes (at least) one candidate.  Moreover, there is only a finite 
set of counterexamples to encounter.

\begin{wrapfigure}[11]{r}{0.59\textwidth}
\vspace{-8mm}
\centering
\begin{minipage}{0.58\textwidth}
\begin{algorithm}[H]
\caption{\textsc{CegisSat}: \acs{CEGIS} implemented using a \acs{SAT} solver.}
\label{alg:CegisSmt}
\begin{algorithmic}[1]
\ProcedureRetL{CegisSat}
             {F(\overline{e}, \overline{u})}
             {An assignment $\mathbf{e}$ for $\overline{e}$ such 
              that $\forall \overline{u} \scope 
              F(\mathbf{e}, \overline{u})$ or~``\textsf{fail}''}
             {4.5cm}
  \State $G(\overline{e}) := \true$ \label{alg:CegisSmt:init}
  \While{$\true$}
    \If{$\textsf{sat} = \false$ in
        $(\textsf{sat},\mathbf{e}) := 
            \propsatmodel\bigl(G(\overline{e}) \bigr)$
       }\label{alg:CegisSmt:comp}
      \State \textbf{return} ``\textsf{fail}''
    \EndIf
    \If{$\textsf{sat} = \false$ in
        $(\textsf{sat},\mathbf{u}) := 
           \propsatmodel\bigl(\neg F(\mathbf{e}, \overline{u}) \bigr)$
       }\label{alg:CegisSmt:check}
      \State \textbf{return} $\mathbf{e}$
    \EndIf
    \State $G(\overline{e}) := G(\overline{e}) \wedge 
                               F(\overline{e}, \mathbf{u})$
                               \label{alg:CegisSmt:refine}
  \EndWhile
\EndProcedure  
\end{algorithmic}
\end{algorithm}
\end{minipage}
\end{wrapfigure}
\mypara{Algorithm.}  Algorithm~\ref{alg:CegisSmt} implements \acs{CEGIS} using a 
\acs{SAT} solver.  Line~\ref{alg:CegisSmt:comp} computes candidates 
and Line~\ref{alg:CegisSmt:check} performs the candidate check as 
well as the counterexample computation in the straightforward way.  Instead of 
storing a database of counterexamples, the algorithm directly refines the 
constraints for a candidate in Line~\ref{alg:CegisSmt:refine}.  Note that 
constraints are only added to $G$, so the algorithm is well suited for 
incremental solving.

\section{From Safety Specifications to Strategies}
\label{sec:hw:win}

As discussed in Section~\ref{sec:prelim:sasyalg}, a strategy $S$ for 
realizing a safety specification $\mathcal{S} = \bigl(\overline{x}, 
\overline{i}, \overline{c}, I(\overline{x}), $ $ 
T(\overline{x},\overline{i},\overline{c},\overline{x}'), P(\overline{x})\bigr)$ 
can be constructed by computing the winning region $W(\overline{x})$ in the game 
defined by~$\mathcal{S}$.  Recall that the winning region is the set of all 
states from which the system player can enforce that only safe states are 
visited. Once 
the winning region is available, the corresponding strategy can be defined as 
$S(\overline{x},\overline{i},\overline{c},\overline{x}') =   
T(\overline{x},\overline{i},\overline{c},\overline{x}') \wedge \bigl(   
W(\overline{x}) \rightarrow W(\overline{x}')\bigr).
$
However, a winning strategy can also be computed by different means.  One 
option is to use a winning area, defined as follows.
\begin{definition}[Winning Area]\label{def:winset}
A \emph{winning area} \index{winning area} for a safety specification 
$\mathcal{S} = 
\bigl(\overline{x}, \overline{i}, \overline{c}, I(\overline{x}),  
T(\overline{x},\overline{i},\overline{c},\overline{x}'), $ $ 
P(\overline{x})\bigr)$ is a state set $F$, represented symbolically as a formula 
$F(\overline{x})$, with the following three properties:
\begin{compactitem}
\item Every initial state is contained in $F$, i.e., $I(\overline{x}) 
\rightarrow F(\overline{x})$.
\item $F$ contains only safe states, i.e., $F(\overline{x}) \rightarrow 
P(\overline{x})$.
\item The system player can enforce that the play stays in $F$, i.e.,
$F(\overline{x}) \rightarrow \FS\bigl(F(\overline{x})\bigr)$.
\end{compactitem}
\end{definition}
These properties are sufficient to ensure that 
$T(\overline{x},\overline{i},\overline{c},\overline{x}') \wedge \bigl(   
F(\overline{x}) \rightarrow F(\overline{x}')\bigr)$ is a winning strategy.  The 
reason is the same as for the winning region 
(Section~\ref{sec:prelim:winstrat}): the control signals can always be set 
such that the next state is in~$F$ again, and $F$ contains only safe 
states.  In fact, the winning region is just a special winning area, namely the 
largest~one.  

The following sections will present different methods for computing the winning 
region or a winning area using decision procedures for the satisfiability of 
formulas.  We will use the terms ``\emph{satisfiability-based}'' or 
``\emph{SAT-based}'' to indicate the use of any such decision procedures, 
including SAT-, QBF- and EPR solvers. We will write ``\emph{SAT solver based}'' 
to specifically indicate the use of propositional SAT solvers.

\subsection{\acs{QBF}-Based Learning} \label{sec:qbf_learn}

The \textsc{SafeWin} procedure presented in Algorithm~\ref{alg:SafeWin} can be 
implemented with \acp{BDD} using their capability of quantifier elimination in a 
rather straightforward manner.  However, a realization with plain \acs{SAT} 
solvers is not easily possible because the preimage operation $\FS$ in 
Line~\ref{alg:SafeWin:4} contains a universal quantification.  Therefore, a 
natural option is to use a \acs{QBF} solver, which can handle universal 
quantifications without expanding the formula.

\subsubsection{A Straightforward \acs{QBF} Realization of \textsc{SafeWin}}
\label{sec:hw:qbf:sf}

A direct realization of \textsc{SafeWin} with \acs{QBF} solving was presented by 
Staber and Bloem~\cite{StaberB07}.  We briefly review this existing method 
and its drawbacks before presenting our learning-based algorithms.  For this 
discussion, we will refer to the different values of the variable $F$ in 
Algorithm~\ref{alg:SafeWin} with indices.  That is, $F_0=P$ denotes the initial 
value of $F$ and $F_j = F_{j-1} \wedge \FS(F_{j-1})$ is the value after the 
$j$\ts{th} iteration.  The termination check in Line~\ref{alg:SafeWin:3} is 
performed by checking two subsequent values $F_j$ and $F_{j-1}$ for 
equivalence.  Since $F_j \rightarrow F_{j-1}$, i.e., the set $F$ of states 
can only get smaller from iteration to iteration, it is sufficient to check if
$F_{j-1} \rightarrow F_{j}$.  Thus, the first check of ``F changes'' can be
realized with the \ac{QBF} query
$ \neg \qbfsat\bigl(
\forall \overline{x},\overline{i} \scope
\exists \overline{c},\overline{x}' \scope
 P(\overline{x}) \rightarrow 
 \bigl(
 T(\overline{x},\overline{i}, \overline{c},\overline{x}') \wedge 
 P(\overline{x}')
 \bigr)
\bigr).$
The second check if $F$ changes translates to
$ \neg \qbfsat\bigl(
\forall \overline{x},\overline{i} \scope
\exists \overline{c},\overline{x}' \scope
\forall \overline{i}' \scope
\exists \overline{c}',\overline{x}'' \scope
 \bigl(
 P(\overline{x}) \wedge 
 T(\overline{x},\overline{i}, \overline{c},\overline{x}') \wedge 
 P(\overline{x}')
 \bigr)
 \rightarrow 
 \bigl(
 T(\overline{x}',\overline{i}', \overline{c}',\overline{x}'') \wedge 
 P(\overline{x}'')
 \bigr)
\bigr),$ and so on.
In general, the check if $F$ changed in iteration $j$ requires solving a 
\ac{QBF} with $2\cdot j-1$ quantifier alternations and $j$ copies of the 
transition relation $T$.  The checks if $I\rightarrow F_j$ in 
Line~\ref{alg:SafeWin:5} of Algorithm~\ref{alg:SafeWin} work in a similar way, 
also requiring $2\cdot j-1$ quantifier alternations and $j$ copies of the 
transition relation.  We consider this steep increase in formula size and 
complexity as suboptimal.  In the following, we will therefore present 
algorithms that require only one copy of the transition relation and a constant 
number of quantifier alternations in the queries to the \ac{QBF} solver.

\subsubsection{A \acs{QBF}-Based CNF Learning Algorithm}

\begin{algorithm}[tb]
\caption[\textsc{QbfWin}: Basic \acs{QBF}-based \acs{CNF} learning algorithm for
the winning region]
{\textsc{QbfWin}: Basic \acs{QBF}-based \acs{CNF} learning algorithm for
the winning region.}
\label{alg:QbfWin}
\begin{algorithmic}[1]
\ProcedureRet{QbfWin}
             {(\overline{x}, \overline{i}, \overline{c}, I, T, P)}
             {The winning region $W(\overline{x})$ in CNF or $\false$}
  \LineIf{$\propsat\bigl(I(\overline{x}) \wedge \neg P(\overline{x})\bigr)$}
    {\textbf{return} $\false$}
    \label{alg:QbfWin:init0}
  \State $F(\overline{x}) := P(\overline{x})$ \label{alg:QbfWin:f0}
  \While{$\mathsf{sat}=\true$ in 
    $(\mathsf{sat},\mathbf{x}) := \qbfsatmodel\bigl(
      \exists \overline{x},\overline{i} \scope
      \forall \overline{c} \scope
      \exists \overline{x}' \scope
      F(\overline{x}) \wedge 
      T(\overline{x}, \overline{i}, \overline{c}, \overline{x}') \wedge 
      \neg F(\overline{x}')\bigr)$}
      \label{alg:QbfWin:check}
    \State $\mathbf{x}_g := \mathbf{x}$\label{alg:QbfWin:loop0}
    \For{each literal $l$ in $\mathbf{x}$}\label{alg:QbfWin:loops}
      \State $\mathbf{x}_t := \mathbf{x}_g \setminus \{l\}$
      \If{$\neg\qbfsat\bigl(
           \exists \overline{x} \scope
           \forall \overline{i} \scope
           \exists \overline{c},\overline{x}' \scope
           \mathbf{x}_t \wedge 
           F(\overline{x}) \wedge 
           T(\overline{x}, \overline{i}, \overline{c}, \overline{x}') \wedge 
           F(\overline{x}')
           \bigr)$}\label{alg:QbfWin:gen}
        \State $\mathbf{x}_g := \mathbf{x}_t$
      \EndIf
    \EndFor\label{alg:QbfWin:loop1}
    \LineIf{$\propsat\bigl(\mathbf{x}_g \wedge I(\overline{x}) \bigr)$}
    {\textbf{return} $\false$}
    \label{alg:QbfWin:init1}
    \State $F(\overline{x}) := F(\overline{x}) \wedge \neg \mathbf{x}_g$
    \label{alg:QbfWin:ref}
  \EndWhile
  \State \textbf{return} $F(\overline{x})$
\EndProcedure  
\end{algorithmic}
\end{algorithm}

Algorithm~\ref{alg:QbfWin} shows the procedure \textsc{QbfWin}, which computes a 
\ac{CNF} representation of the winning region $W(\overline{x})$ using \acs{CNF} 
learning with a \acs{QBF} solver.  Since \textsc{QbfWin} will also be the basis 
for our algorithms that use plain \acs{SAT} solving, we discuss it here in 
detail.  Just like \textsc{SafeWin} in Algorithm~\ref{alg:SafeWin}, 
\textsc{QbfWin} takes a specification as input.  It returns either the winning 
region $W(\overline{x})$ or $\false$ in case of unrealizability.  The basic 
structure is that of the \acs{CNF} learning procedure \textsc{CnfLearn} in 
Algorithm~\ref{alg:CnfLearn}.  However, in Line~\ref{alg:QbfWin:f0}, $F$ is 
initialized to $P$ instead of $\true$ because the winning region can only be a 
subset of the safe states $P$.  Differences in counterexample computation and 
generalization are discussed in the following.

\mypara{Counterexample computation.} The equivalence query in 
Line~\ref{alg:CnfLearn:3} of the original \acs{CNF} learning procedure 
\textsc{CnfLearn} asks if the current approximation of the solution is correct. 
The corresponding line (Line~\ref{alg:QbfWin:check}) in \textsc{QbfWin} now 
checks if $F \rightarrow \FS(F)$ is valid, i.e., if another visit of $F$ can be 
enforced by the system from any state of $F$.  The \ac{QBF} query in 
Line~\ref{alg:QbfWin:check} of \textsc{QbfWin} actually asks the opposite 
question, namely if there exists a state $\mathbf{x}$ in $F$ from which the 
environment can enforce leaving $F$, i.e., if $F\wedge \FE(\neg F)$ is 
satisfiable. This is the case if there exists some state $\mathbf{x}$ in $F$ and 
some input $\mathbf{i}$ such that for all control values $\mathbf{c}$ the next 
state will be in $\neg F$.  If such a state $\mathbf{x}$ exists, 
$\qbfsatmodel$ will return it as a counterexample witnessing that $F$ is not
equal to the winning region $W$.  More specifically, this 
state $\mathbf{x}$ cannot be part of $W$, and thus needs to be 
removed from $F$.  This removal is performed in Line~\ref{alg:QbfWin:ref}.  
However, in order to reduce the number of iterations, the counterexample is 
generalized beforehand.  This is explained in the next paragraph.  
If, on the other hand, $\qbfsatmodel$ sets \textsf{sat} to $\false$ in 
Line~\ref{alg:QbfWin:check}, then this means that the implication $F 
\rightarrow \FS(F)$ holds.  In this case,  \textsc{QbfWin} terminates, 
returning $F$ as the winning region.

\mypara{Counterexample generalization.} Just like in \textsc{CnfLearn}, 
counterexample generalization is done by eliminating literals of $\mathbf{x}$ in 
the inner loop of the algorithm.  In \textsc{CnfLearn} (see 
Algorithm~\ref{alg:CnfLearn}), the final 
cube $\mathbf{x}_g \subseteq \mathbf{x}$ must not intersect with $G$ in order 
not to shrink $F$ beyond $G$.  Similarly, in \textsc{QbfWin}, 
$\mathbf{x}_g \wedge F$ must not intersect with $\FS(F)$ in order not to remove 
any states from the winning region where the system could enforce that the 
play stays in the winning region.  The reason is that the subsequent update $F 
:= F \wedge \neg 
\mathbf{x}_g$ in Line~\ref{alg:QbfWin:ref} removes exactly the states 
$\mathbf{x}_g\wedge F$.  The \ac{QBF} query in Line~\ref{alg:QbfWin:gen} is 
satisfiable if $\mathbf{x}_t\wedge F$ contains any states of $\FS(F)$, and thus 
prevents unjust state removals.  Also note that the inner loop essentially 
computes an unsatisfiable core of $\mathbf{x}$ with respect to $F \wedge 
\FS(F)$.

\mypara{Detecting unrealizability.}  Detecting unrealizability is simple.  The 
specification is unrealizable if and only if some initial state is outside of 
the winning region, i.e., if $I\not\rightarrow W$.  The reason is that no system 
implementation can prevent the environment from visiting an unsafe state from an 
initial state that is not winning.  \textsc{QbfWin} returns $\false$ as soon as 
$I\not\rightarrow F$.  Since $F=W$ eventually, this ensures that $\false$ is 
returned if $I\not\rightarrow W$.  Line~\ref{alg:QbfWin:init0} checks if 
$I\not\rightarrow F$ would hold initially.  In every iteration, 
Line~\ref{alg:QbfWin:init1} then checks if the states $\mathbf{x}_g$ that are 
going to be removed from $F$ contain an initial state.  This is potentially 
more efficient than than checking $I\not\rightarrow F$ again.

\begin{wrapfigure}[9]{r}{0.49\textwidth}
\vspace{-7mm}
\centering 
\subfloat[Counterexample computation.\label{fig:qbf_learn1}]
  {\includegraphics[width=0.16\textwidth]{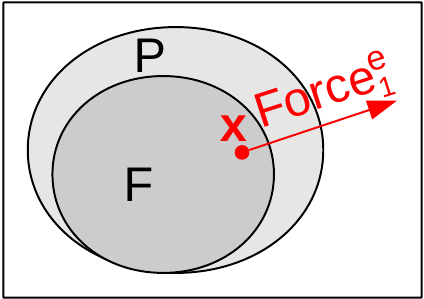}}
\hspace{0mm}
\subfloat[Generalization.\label{fig:qbf_learn2}]
  {\includegraphics[width=0.16\textwidth]{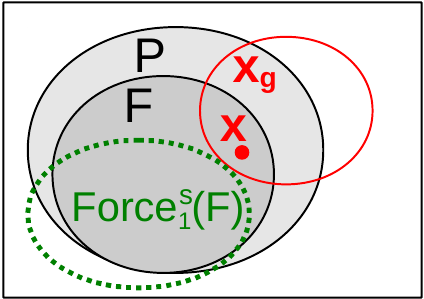}}
\hspace{0mm}
\subfloat[Update of $F$.\label{fig:qbf_learn3}]
  {\includegraphics[width=0.16\textwidth]{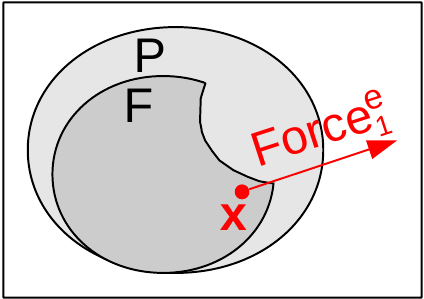}}
\caption{Working principle of \textsc{QbfWin}.}
\label{fig:qbf_learn}
\end{wrapfigure}
\mypara{Illustration.}
Figure~\ref{fig:qbf_learn} illustrates the working principle of \textsc{QbfWin} 
graphically.  A box represents the set of all states.  $F$ is always a subset of 
$P$.  In Figure~\ref{fig:qbf_learn1}, a counterexample $\mathbf{x}\models 
F\wedge \FE(\neg F)$ is computed.  It represents a state from which the 
environment can enforce that $F$ is left.  Next, the counterexample 
$\mathbf{x}$ is 
generalized into a larger region $\mathbf{x}_g$ by eliminating literals, as 
illustrated in Figure~\ref{fig:qbf_learn2}.  Every literal that can be 
eliminating from $\mathbf{x}$ doubles the size of the state region 
that is represented by $\mathbf{x}_g$.  Literals are dropped as long as 
$\mathbf{x}_g\wedge F$ does not intersect with $\FS(F)$.  Finally, as 
illustrated in Figure~\ref{fig:qbf_learn3}, the generalized counterexample 
$\mathbf{x}_g$ is removed from $F$ and the next counterexample is computed.  
This is repeated until no more counterexamples exist, or one of the initial 
states is removed.

\noindent
The following theorem summarizes these explanations into a formal correctness 
argument.

\begin{theorem}\label{thm:QbfWin_correct}
The \textsc{QbfWin} procedure in Algorithm~\ref{alg:QbfWin} returns the winning 
region $W(\overline{x})$ of a given safety specification $\mathcal{S}$, or 
$\false$ if the specification is unrealizable.
\end{theorem}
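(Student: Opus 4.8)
The plan is to verify three things: that the \textbf{while} loop terminates, that the two inclusions $W(\overline{x}) \rightarrow F(\overline{x})$ and $F(\overline{x}) \rightarrow P(\overline{x})$ are invariants of the loop (where $W$ denotes the winning region, i.e.\ the largest state set with $W \rightarrow P$ and $W \rightarrow \FS(W)$), and that on exit the residual set $F$ coincides with $W$. I will use two facts established earlier: the duality $\neg\FS(F) = \FE(\neg F)$, and the monotonicity of $\FS$, i.e.\ $F_1 \rightarrow F_2$ implies $\FS(F_1) \rightarrow \FS(F_2)$ (immediate from the definition, since $F$ occurs only positively under the quantifiers). First I would read off what the two \acs{QBF} queries compute: the query in the loop head is satisfiable exactly when $F \wedge \FE(\neg F)$ is nonempty, so a returned minterm $\mathbf{x}$ is a state of $F$ from which the environment can force leaving $F$ in one step; and the inner query is satisfiable exactly when $\mathbf{x}_t \wedge F$ intersects $\FS(F)$, so a literal is dropped only while $\mathbf{x}_g \wedge F$ stays disjoint from $\FS(F)$. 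Consequently the region $\mathbf{x}_g \wedge F$ removed in the update $F := F \wedge \neg\mathbf{x}_g$ is disjoint from $\FS(F)$.

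The core step is to show that this update never discards a winning state, i.e.\ that it preserves $W \rightarrow F$. Initially $F = P$ and $W \rightarrow P$, so the inclusion holds. Suppose it holds before an update and, for contradiction, that some $s \in W$ lies in the removed region $\mathbf{x}_g \wedge F$. Since $s \in W$ we have $s \models \FS(W)$; by the induction hypothesis $W \rightarrow F$, monotonicity of $\FS$ gives $\FS(W) \rightarrow \FS(F)$, so $s \models \FS(F)$. But $\mathbf{x}_g \wedge F$ is disjoint from $\FS(F)$ by the generalization property above, contradicting $s \in \mathbf{x}_g \wedge F$. Hence no winning state is removed and $W \rightarrow F$ is maintained; the inclusion $F \rightarrow P$ is trivial since $F$ only shrinks from its initial value $P$. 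Termination follows because $\mathbf{x}_g \subseteq \mathbf{x}$ as literal sets and $\mathbf{x} \models F$, so the returned counterexample $\mathbf{x}$ itself is removed in every iteration; thus $F$ loses at least one minterm each time, and the state space is finite.

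It remains to identify the return value. If the loop exits normally, the loop-head query was unsatisfiable, so $F \wedge \FE(\neg F) = \false$, i.e.\ $F \rightarrow \neg\FE(\neg F) = \FS(F)$ by duality. Together with $F \rightarrow P$, the exit set $F$ satisfies the system-closure and safety conditions of a winning area, so from every state of $F$ the system can keep the play inside $F \subseteq P$ forever; hence every state of $F$ is winning and $F \rightarrow W$. Combined with the invariant $W \rightarrow F$ this yields $F = W$. For the unrealizability case I would maintain the additional invariant $I \rightarrow F$, which holds as long as the procedure does not return $\false$: the initial guard on $I \wedge \neg P$ rules out an initial unsafe state, and the guard checking $\mathbf{x}_g \wedge I$ returns $\false$ precisely when an initial state is about to be removed. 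Since removed states are non-winning (established in the core step) and initial unsafe states are non-winning because $W \rightarrow P$, returning $\false$ implies some initial state is not in $W$, i.e.\ $I \not\rightarrow W$; conversely, if $I \not\rightarrow W$ the procedure cannot exit normally, because normal exit forces $I \rightarrow F = W$, so it must return $\false$. This establishes that \textsc{QbfWin} returns $W$ exactly when $\mathcal{S}$ is realizable and $\false$ otherwise.

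The main obstacle I anticipate is the core step: pinning down that the generalization loop really keeps $\mathbf{x}_g \wedge F$ disjoint from $\FS(F)$ requires correctly translating the nested-quantifier \acs{QBF} of the inner loop into the set-level statement ``$\mathbf{x}_t \wedge F$ meets $\FS(F)$'', and one must be careful that $\FS$ — unlike $\reach$ — is monotone but not distributive, so only the monotone direction of the inclusion argument is available.
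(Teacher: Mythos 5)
Your proposal is correct and follows essentially the same route as the paper's proof: establish the invariants $F\rightarrow P$, $I\rightarrow F$, and $W\rightarrow F$, argue termination by strict shrinking of $F$, and conclude $F=W$ on normal exit from $F\rightarrow\FS(F)$. The only cosmetic difference is that you justify preservation of $W\rightarrow F$ via monotonicity of $\FS$ applied to a hypothetical winning state in the removed region, whereas the paper argues dually that the removed region lies in $\FE(\neg F)\subseteq\FE(\neg W)$; these are the same argument.
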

\begin{proof}
\textsc{QbfWin} enforces the invariants $F\rightarrow P$ (through 
Lines~\ref{alg:QbfWin:f0} and~\ref{alg:QbfWin:ref}) and $I\rightarrow F$ 
(through Lines~\ref{alg:QbfWin:init0} and~\ref{alg:QbfWin:init1}).  The loop 
terminates normally if $F\rightarrow \FS(F)$.  Hence, upon normal termination, 
$F$ is certainly a winning area according to Definition~\ref{def:winset}.  $F$ 
is 
also the largest possible winning area, and thereby the winning region, because 
\textsc{QbfWin} also enforces the invariant $W\rightarrow F$.  This invariant 
can be proven by induction: Initially $F=P$, so $W\rightarrow F$ holds because 
$W\rightarrow P$. Under the hypothesis that $W\rightarrow F$ holds before an 
update of $F$ in Line~\ref{alg:QbfWin:ref}, it will also hold after the update 
because Line~\ref{alg:QbfWin:ref} only removes states $\mathbf{x}_g\wedge F$ for 
which $\mathbf{x}_g\wedge F \rightarrow \FE(\neg F)$ holds.  Given that 
$W\rightarrow F$, we have that $\neg F \rightarrow \neg W$.  This means that 
$\mathbf{x}_g\wedge F \rightarrow \FE(\neg W)$, so only states that cannot be 
part of $W$ are removed.   \textsc{QbfWin} will always terminate because in 
every iteration, at least one state is removed from $F$, and when $F$ reaches 
$\false$ (or earlier) the loop necessarily terminates.  What remains to be shown 
is that \textsc{QbfWin} aborts in Line~\ref{alg:QbfWin:init0} 
or~\ref{alg:QbfWin:init1} iff $\mathcal{S}$ is unrealizable, i.e., iff 
$I\not\rightarrow W$. (Direction $\Rightarrow$:) Since $W\rightarrow F$, and 
Line~\ref{alg:QbfWin:init0} or~\ref{alg:QbfWin:init1} abort iff 
($F$ is about to be updated in such a way that) $I\not\rightarrow F$, it 
follows that \textsc{QbfWin} can only abort if $I\not\rightarrow W$.  
(Direction $\Leftarrow$:) Since $F=W$ eventually, Line~\ref{alg:QbfWin:init0} 
or~\ref{alg:QbfWin:init1} will definitely abort eventually if $I\not\rightarrow 
W$. 
\qed
\end{proof}

\mypara{Discussion.}  In contrast to the approach from 
Section~\ref{sec:hw:qbf:sf}, all \acs{QBF} queries in \textsc{QbfWin} contain 
only one copy of the transition relation and only two quantifier alternations. 
This potentially increases the scalability with respect to the size of the 
specifications.  The disadvantage is that the number of calls to the \acs{QBF} 
solver can be significantly higher.

\subsubsection{Variants and Improvements}\label{sec:qbf_var}

\noindent
In this section, we now discuss a few variants and optimizations of 
\textsc{QbfWin} as presented in Algorithm~\ref{alg:QbfWin}.

\mypara{Better generalization.} At any point in the inner loop of 
\textsc{QbfWin}, $\mathbf{x}_g$ represents states that will definitely be 
removed from $F$.  This information can be exploited already during the 
generalization loop by modifying the \acs{QBF} query in 
Line~\ref{alg:QbfWin:gen} to
$\neg\qbfsat\bigl(
           \exists \overline{x} \scope
           \forall \overline{i} \scope
           \exists \overline{c},\overline{x}' \scope
           \mathbf{x}_t \wedge 
           F(\overline{x}) \wedge \neg \mathbf{x}_g \wedge
           T(\overline{x}, \overline{i}, \overline{c}, \overline{x}') \wedge 
           F(\overline{x}') \wedge \neg \mathbf{x}_g'
           \bigr).$
This way, the generalization loop behaves as if $F$ would have been refined to 
$F(\overline{x}) \wedge \neg \mathbf{x}_g$ already (with the current version of 
$\mathbf{x}_g$).  The \acs{QBF} query becomes stricter, which can have the 
effect that more literals can be eliminated.  This can reduce the total number 
of counterexamples that have to be resolved.  In the illustration of 
Figure~\ref{fig:qbf_learn2}, this optimization shrinks $\FS(F)$ to $\FS(F\wedge 
\mathbf{x}_g)$, which allows $\mathbf{x}_g$ to grow even larger.  Since this 
optimization does not increase the number or complexity of the 
\acs{QBF} queries, we always apply it.

\mypara{Generalization until fixpoint.}  With the generalization optimization 
from the previous paragraph, the generalization check becomes non-monotonic in 
the sense that, even if a literal could not be eliminated initially, it may be 
eliminable after eliminating other literals.  Hence, it can be beneficial to 
repeat the generalization loop until a fixpoint is reached.  However, in our 
experiments, this did not result in noticeable performance improvements on 
the average over our benchmarks, so this is not done by default.

\begin{wrapfigure}[9]{r}{0.17\textwidth}
\vspace{-4mm}
\centering 
  \includegraphics[width=0.16\textwidth]{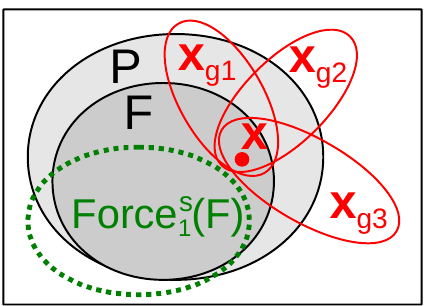}
\caption{Computing all counterexample generalizations in \textsc{QbfWin}.}
\label{fig:qbf_allgen}
\end{wrapfigure}
\mypara{Computing all counterexample generalizations.}  In our experiments we 
observed that 
counterexample computation often takes much more time than counterexample 
generalization.  Moreover, depending on the order in which the literals $l \in 
\mathbf{x}$ are processed in Line~\ref{alg:QbfWin:loops} of \textsc{QbfWin}, we 
can get different generalizations $\mathbf{x}_g$.  Motivated by these 
observations, we propose a variant that computes \emph{all} minimal 
generalizations for each counterexample.  A naive solution would just run the 
generalization loop of Line~\ref{alg:QbfWin:loops} repeatedly using all 
$|\mathbf{x}|!$ different orders of the literals in $\mathbf{x}$.  However, 
since many orderings can result in the same generalization $\mathbf{x}_g$, this 
is potentially inefficient.  Instead, we thus apply an adaption of the hitting 
set tree algorithm presented by Reiter~\cite{Reiter87}.  For the sake of 
readability, we refrain from presenting this algorithm in detail.  The 
high-level intuition is visualized in Figure~\ref{fig:qbf_allgen}.  All 
generalizations $\mathbf{x}_{g1}$, $\mathbf{x}_{g2}$ and $\mathbf{x}_{g3}$ will 
contain the original counterexample $\mathbf{x}$, and none of them may intersect 
with $\FS(F)$ inside of $F$.  Although there may be a significant overlap 
between the 
generalizations, removing all of them prunes $F$ more than removing just one of 
them.  In our experiments, we observed that the number of different 
counterexample generalizations is usually low.  Not infrequently, there is only 
exactly one minimal generalization.   Of course, computing all generalizations 
costs additional computation time.  In our experiments, it gives a solid speedup 
for some benchmarks, but slows down the computation for others.  Hence, we do 
not apply this optimization by default.
Instead of computing all generalizations, one could also compute and apply at 
most $k$ different generalizations for some value of $k$. Another option is to 
compute all generalizations but refine $F$ only with the $k$ shortest ones.  
However, in preliminary experiments, these variants did not result in 
significant performance increases either.

\subsubsection{Efficient Implementation} \label{sec:qbf_impl}

\noindent
In this section, we give a few remarks on implementing \textsc{QbfWin} 
efficiently.

\mypara{\ac{CNF} encoding.} The transition relation $T$, the characterization of 
the safe states $P$ and the formula for the initial states $I$ are transformed 
into \ac{CNF} initially. Furthermore, a \ac{CNF} representation of $\neg F$ 
needs to be computed in each iteration.  All these transformations can be done 
using the method of Plaisted and Greenbaum~\cite{PlaistedG86}.  This may 
introduce additional auxiliary variables, which are quantified existentially on 
the innermost level of the \ac{QBF} queries.  Once $T$, $P$, $I$ and $\neg F$ 
are available in \ac{CNF}, the matrices of the \ac{QBF} queries in 
Algorithm~\ref{alg:QbfWin} can be constructed by building the union of the 
respective clause sets, because the individual formula parts are all connected 
by conjunctions.

\mypara{\ac{CNF} compression.}  After some iterations, the \ac{CNF} formula $F$ 
in \textsc{QbfWin} can contain redundant clauses and literals.  First, a clause 
discovered in some later iteration can be a proper subset of some earlier 
discovered clause.  This can be checked syntactically at low costs.  Thus, 
whenever a clauses is added to $F$, we always remove all of its supersets.  
Second, a set of clauses may together imply clauses that have been added 
earlier.  The implied clauses can be eliminated without changing $F$ 
semantically.  Third, it may be possible to drop literals from clauses of $F$ in 
an equivalence-preserving manner.  The procedure \textsc{CompressCnf} in 
Algorithm~\ref{alg:CompressCnf} performs these simplifications and is explained 
in the next paragraph.  We call this procedure to simplify $F$ after every 
modification of $F$, but with literal dropping disabled (we will later use 
\textsc{CompressCnf} with literal dropping enabled in other contexts). 
\textsc{CompressCnf} is very fast compared to the \acs{QBF} solver calls in 
\textsc{QbfWin}.  Furthermore, a smaller \acs{CNF} representation of $F$ is 
particularly important for computing a compact representation of $\neg F$ using 
the method of Plaisted and Greenbaum~\cite{PlaistedG86}.  Ultimately, the more 
compact \acs{CNF} representations reduce the \acs{QBF} solving time quite 
significantly.

\begin{wrapfigure}[15]{r}{0.54\textwidth}
\vspace{-8mm}
\centering
\begin{minipage}{0.53\textwidth}
\begin{algorithm}[H]
\caption[\textsc{CompressCnf}: Removing redundant literals and clauses from a 
\acs{CNF}]
{\textsc{CompressCnf}: Removing redundant literals and clauses from a 
\acs{CNF}.}
\label{alg:CompressCnf}
\begin{algorithmic}[1]
\ProcedureRetL{CompressCnf}
             {F(\overline{x})}
             {An equivalent but potentially smaller \acs{CNF}~$G(\overline{x})$}
             {3cm}
  \If{dropping literals enabled}
    \State $G := \true$
    \For{each clause $c$ in $F$}
      \State $G := G \wedge \neg \propsatmincore(\neg c, F)$
    \EndFor
    \State $F := G$
  \EndIf
  \State $G := \true$
  \For{each clause $c$ in $F$ with increasing size}
    \If{$\propsat(G \wedge \neg c)$}
      \State $G := G \wedge c$
    \EndIf
  \EndFor
  \State \textbf{return} $G$
\EndProcedure  
\end{algorithmic}
\end{algorithm}
\end{minipage}
\end{wrapfigure}
\mypara{An algorithm for \ac{CNF} compression.}
Algorithm~\ref{alg:CompressCnf} uses a \acs{SAT} solver to remove redundant 
literals and clauses from a \ac{CNF} formula $F(\overline{x})$.  The first loop 
(if enabled) drops literals from each clause $c$ as long as the reduced clause 
$c_2\subseteq c$ is still implied by $F$.  This ensures that the reduced formula 
$G$ is implied by $F$.  Dropping literals can only make the formula stronger, 
i.e., $F$ is necessarily implied by $G$.  Hence, $G$ and $F$ are equivalent. 
Note that $F \rightarrow c_2$ iff $F  \wedge \neg c_2$ is unsatisfiable.  Hence, 
dropping the literals can be realized by computing a (minimal) unsatisfiable 
core of the cube $\neg c_2$ with respect to $F$.  Since $F$ does not change in 
this loop, all cores can be computed with incremental \acs{SAT} 
solving.

The second loop removes redundant clauses.  Non-redundant clauses are 
copied into $G$.  A clause $c$ is redundant if it is implied by $G$ already, 
i.e., if $G \wedge \neg c$ is unsatisfiable.  Clauses are processed in the order 
of increasing size because smaller clauses have a higher tendency to imply 
larger clauses than the other way around.  This second loop can also be 
accomplished with incremental solving, since clauses are only added to $G$.  
Dropping literals before eliminating clauses potentially yields better results 
than performing the operations in the reverse order. The reason is that the 
shorter clauses produced in the first loop have a higher potential for 
implying other clauses in the second loop.  Since none of the \acs{SAT} solver 
calls involves the transition relation, Algorithm~\ref{alg:CompressCnf} is 
usually very fast. It will not only be used in \textsc{QbfWin}, but also in 
other contexts.

\mypara{\ac{QBF} preprocessing.}  Using an extension~\cite{SeidlK14} of the 
popular \ac{QBF} preprocessor \bloqqer~\cite{BiereLS11} to preserve satisfying 
assignments, \ac{QBF} preprocessing can not only be applied in $\qbfsat$ but 
also in $\qbfsatmodel$ queries.  We thus perform \ac{QBF} preprocessing in every 
single \ac{QBF} query (separately).  The experimental results in 
Chapter~\ref{sec:hw:exp} will show that this is crucial for the performance.  
In 
a sense, running \textsc{CompressCnf} to simplify $F$, as explained in the 
previous paragraphs, can also be seen as \ac{QBF} preprocessing, but using 
knowledge about the structure of the final \ac{QBF}. \bloqqer~\cite{BiereLS11} 
implements way more simplification techniques, from heuristics for universal 
expansion to variable elimination, and is thus clearly not subsumed by running 
\textsc{CompressCnf}.  On the other hand, our experiments indicate that running 
\textsc{CompressCnf} in addition to \bloqqer is beneficial as well.  A possible 
reason is that we compress $F$ \emph{before} computing its negation.  This has 
advantages over applying simplifications on the final \acs{QBF}, where the 
structure is already lost.

\mypara{Incremental \ac{QBF} solving.}  We experimented with incremental 
\ac{QBF} solving using \depqbf~\cite{LonsingE14}.  We use two incremental solver 
instances, one for the queries in Line~\ref{alg:QbfWin:check} and one for 
Line~\ref{alg:QbfWin:gen} of \textsc{QbfWin}.  The queries in 
Line~\ref{alg:QbfWin:gen} are well suited for incremental solving because 
clauses are only added to $F$. The conjunction with $\mathbf{x}_t$ can be 
achieved with assumption literals, which are temporarily asserted.  In 
fact, we first let \depqbf compute an unsatisfiable core of $\mathbf{x}$ and 
minimize this core then further using a loop that attempts to eliminate more 
literals.

The check in Line~\ref{alg:QbfWin:check} of \textsc{QbfWin} is more difficult 
because it also contains the negation of the $F$, i.e., cannot be realized 
incrementally just by adding additional clauses.  We implemented three variants 
to handle $\neg F(\overline{x}')$ incrementally.  Since neither of these three 
variants performs particularly well in our experiments (see 
Chapter~\ref{sec:hw:exp}), we only sketch them briefly. The first variant uses 
the \textsf{push}/\textsf{pop} interface of \depqbf to replace the parts in the 
\ac{CNF} encoding of $\neg F(\overline{x}')$ that change from iteration to 
iteration.  The second variant updates $\neg F(\overline{x}')$ only lazily, 
namely when the check in Line~\ref{alg:QbfWin:check} becomes 
unsatisfiable.\footnote{This is similar to the procedure \textsc{SatWin1} that 
will be presented in Algorithm~\ref{alg:SatWin1} later.  We thus refer to 
Section~\ref{hw:sat1} for more details.}  In this event, a new incremental 
session of the solver is started with the latest version of $\neg 
F(\overline{x}')$.  The third variant uses a pool of variables to encode negated 
clauses in \ac{CNF}.  If a variable of this pool is not yet used, it is set to 
$\false$ using assumption literals.  Thereby, the variable essentially 
represents the negation of a tautological clause.  As clauses are added to $F$, 
the variables of the pool are equipped with constraints that make them represent 
the negation of the added clauses.  If there are no more unused variables in the 
pool, a new incremental session with a fresh pool of variables is started.  
Unfortunately, neither of these three variants performs particularly 
well in our experiments.  One reason is that incremental \ac{QBF} solving cannot 
be combined with preprocessing at the moment.  However, this may change in the 
future, which could make these approaches interesting again.

\subsection{Learning Based on \acs{SAT} Solving} \label{sec:sat_learn}

In this section, we present a learning algorithm that computes the winning
region of a safety specification $\mathcal{S} = (\overline{x}, \overline{i}, 
\overline{c}, I, T, P)$ using a plain \acs{SAT} solver.  To simplify
the presentation, this is done in two steps:  Section~\ref{hw:sat0} 
presents a basic algorithm.  Section~\ref{hw:sat1} will then discuss
a more efficient variant with better support for incremental solving.

\subsubsection{Basic Algorithm} \label{hw:sat0}

A basic solution is shown in Algorithm~\ref{alg:SatWin0}.  The working principle 
is the same as for the procedure \textsc{QbfWin} from 
Algorithm~\ref{alg:QbfWin}: starting with the initial over-approximation $F = P$ 
of the winning region $W$, counterexample-states $\mathbf{x} \models F\wedge 
\FE(\neg F)$ witnessing that $F\neq W$ are computed, generalized into a larger 
region $\mathbf{x}_g$ of states that cannot be part of the final winning region 
$W$, and finally removed from $F$.  Detecting unrealizability by checking if 
$I\not\rightarrow F$ is also done in exactly the same way as in \textsc{QbfWin}. 
Only the counterexample computation and generalization is different, and will be 
discussed in the following paragraphs.

\begin{algorithm}[tb]
\caption[\textsc{SatWin0}: Basic \acs{SAT} solver based \acs{CNF} learning 
algorithm for the winning region]
{\textsc{SatWin0}: Basic \acs{SAT} solver based \acs{CNF} learning 
algorithm for computing the winning region.}
\label{alg:SatWin0}
\begin{algorithmic}[1]
\ProcedureRet{SatWin0}
             {(\overline{x}, \overline{i}, \overline{c}, I, T, P)}
             {The winning region $W(\overline{x})$ in CNF or $\false$}
  \LineIf{$\propsat\bigl(I(\overline{x}) \wedge \neg P(\overline{x})\bigr)$}
    {\textbf{return} $\false$}
  \State $F(\overline{x}) := P(\overline{x})$,\quad
         $U(\overline{x}, \overline{i}) := \true$
  \While{$\true$}  
    \State $(\mathsf{sat},\mathbf{x},\mathbf{i}) := \propsatmodel\bigl(
            F(\overline{x}) \wedge 
            U(\overline{x}, \overline{i}) \wedge 
            T(\overline{x}, \overline{i}, \overline{c}, \overline{x}') \wedge 
            \neg F(\overline{x}')\bigr)$\label{alg:SatWin0:check}
    \If{$\neg \mathsf{sat}$}
      \State \textbf{return} $F(\overline{x})$
    \Else
      \State $(\mathsf{sat},\mathbf{c}) := \propsatmodel\bigl(
               F(\overline{x}) \wedge 
               \mathbf{x} \wedge \mathbf{i} \wedge 
               T(\overline{x},\overline{i},\overline{c},\overline{x}') \wedge 
               F(\overline{x}')\bigr)$ \label{alg:SatWin0:check1}
      \If{$\neg \mathsf{sat}$}
        \State $\mathbf{x}_g := \propsatmincore\bigl(\mathbf{x},
                F(\overline{x}) \wedge \mathbf{i} \wedge 
                T(\overline{x},\overline{i},\overline{c},\overline{x}') \wedge 
                F(\overline{x}')\bigr)$ \label{alg:SatWin0:gen}      
        \LineIf{$\propsat\bigl(\mathbf{x}_g \wedge I(\overline{x})\bigr)$}
          {\textbf{return} $\false$}
        \State $F(\overline{x}) := F(\overline{x}) \wedge \neg \mathbf{x}_g$, 
        \quad
               $U(\overline{x}, \overline{i}) := \true$\label{alg:SatWin0:ru}
      \Else
        \State $U := U
               \wedge \neg \propsatmincore\bigl(\mathbf{x} \wedge \mathbf{i},
               \mathbf{c} \wedge 
               F(\overline{x}) \wedge 
               U(\overline{x}, \overline{i}) \wedge 
               T(\overline{x},\overline{i},\overline{c},\overline{x}') \wedge
               \neg F(\overline{x}')\bigr)$\label{alg:SatWin0:ref}
      \EndIf         
    \EndIf
  \EndWhile
\EndProcedure  
\end{algorithmic}
\end{algorithm}

\mypara{Counterexample computation.}  We need to find a state $\mathbf{x}$ from 
which the environment can enforce that $F$ is left.  That is, from state 
$\mathbf{x} 
\models F$, there must exist some input $\mathbf{i}$ such that for all control 
values $\mathbf{c}$, the next state will satisfy $\neg F$.  \textsc{SatWin0} 
avoid this implicit quantifier alternation by computing such a state in several 
steps.  First, Line~\ref{alg:SatWin0:check} computes a state $\mathbf{x}$ and 
input $\mathbf{i}$ for which \emph{some} $\mathbf{c}$ would make the system 
leave $F$.  This is a necessary but not a sufficient condition for $\mathbf{x}$ 
to be a counterexample.  Hence, if the query in Line~\ref{alg:SatWin0:check} is 
unsatisfiable, no counterexample can exist, so $F$ must be the final winning 
region and the algorithm terminates.  The formula $U$ in 
Line~\ref{alg:SatWin0:check} excludes state-input combinations which cannot be 
used by the environment to enforce that $F$ is left.\footnote{Formally, $U$ 
satisfies the invariant $
\forall \overline{x}, \overline{i} \scope
\bigl(F(\overline{x}) \wedge \neg U(\overline{x},\overline{i})\bigr) 
\rightarrow
\bigl(
\exists \overline{c},\overline{x}' \scope
T(\overline{x},\overline{i},\overline{c},\overline{x}') \wedge F(\overline{x}')
\bigr)
$.
}
Initially, $U$ is $\true$, i.e., no restrictions are imposed.  The refinement of 
$U$ will be discussed further below.  For now, $U$ can be ignored. 

If the query in Line~\ref{alg:SatWin0:check} is satisfiable, the next step is to 
check if the candidate $\mathbf{x}$ is indeed a counterexample for the given 
$\mathbf{i}$.  This is investigated in Line~\ref{alg:SatWin0:check1} by 
computing some $\mathbf{c}$ for which $F$ is \emph{not} left, i.e., the next 
state is in $F$ again.  If such a $\mathbf{c}$ exists, then the environment 
cannot enforce that $F$ is left from state $\mathbf{x}$ with input $\mathbf{i}$. 
In order to prevent the same $(\mathbf{x}, \mathbf{i})$-pair from being returned 
by Line~\ref{alg:SatWin0:check} again, $U$ could be refined to $U \wedge \neg 
(\mathbf{x} \wedge \mathbf{i})$.  However, by computing the unsatisfiable core 
of $(\mathbf{x} \wedge \mathbf{i})$ in Line~\ref{alg:SatWin0:ref}, the algorithm 
may also exclude other $(\mathbf{x}, \mathbf{i})$-pairs for which $\mathbf{c}$ 
can be used by the system to prevent that $F$ is left.  Such $(\mathbf{x}, 
\mathbf{i})$-pairs are not helpful for the environment in order to enforce 
that $F$ is left.  They can thus safely be removed from $U$.  Note that the 
formula in the core computation is essentially that of 
Line~\ref{alg:SatWin0:check}. 

The remaining case is that where the formula in Line~\ref{alg:SatWin0:check1} is 
unsatisfiable.  In this case, $\mathbf{x}$ is indeed a counterexample because if 
the environment picks input $\mathbf{i}$, no system action can reach a state of 
$F$, so the next state is bound to be in $\neg F$.  As for \textsc{QbfWin}, 
$\mathbf{x}$ cannot be part of the final winning region, so it must be excluded 
from $F$.  However, before doing so, it is generalized into a larger region 
$\mathbf{x}_g$ of states that need to be excluded.  This will be explained in 
the next paragraph. As soon as $F$ changes, $U$ becomes invalid and is thus set 
to $\true$ again in Line~\ref{alg:SatWin0:ru}.  The intuitive reason is as 
follows: even if a certain state-input pair $(\mathbf{x}, \mathbf{i})$ cannot be 
used by the environment to enforce that $F$ is left, $(\mathbf{x}, \mathbf{i})$ 
may still be usable for leaving a smaller $F$ because the target region $\neg 
F(\overline{x}')$ becomes bigger.

\mypara{Counterexample generalization.}  \textsc{QbfWin} in 
Algorithm~\ref{alg:QbfWin} eliminates literals from the counterexample 
$\mathbf{x}$ as long as the reduced cube $\mathbf{x}_g\subseteq \mathbf{x}$ 
satisfies $\mathbf{x}_g\wedge F \rightarrow \FE(\neg F)$, i.e., as long as
$         \exists \overline{x} \scope
           \forall \overline{i} \scope
           \exists \overline{c},\overline{x}' \scope
           \mathbf{x}_g \wedge 
           F(\overline{x}) \wedge 
           T(\overline{x}, \overline{i}, \overline{c}, \overline{x}') \wedge 
           F(\overline{x}')
$
is unsatisfiable.  Due to the universal quantification over the inputs, a 
\acs{SAT} solver cannot be used for these checks. \textsc{SatWin0} solves this 
issue by considering only one input vector, namely the input $\mathbf{i}$ with 
which the environment can enforce that $F$ is left from $\mathbf{x}$.  For this 
input 
$\mathbf{i}$, the formula is certainly unsatisfiable for the full minterm 
$\mathbf{x}$, because this was checked in Line~\ref{alg:SatWin0:check1}. Hence, 
eliminating literals from 
$\mathbf{x}$ while 
$\mathbf{x}_g \wedge \mathbf{i} \wedge
           F(\overline{x}) \wedge 
           T(\overline{x}, \overline{i}, \overline{c}, \overline{x}') \wedge 
           F(\overline{x}')
$
is unsatisfiable is implemented in Line~\ref{alg:SatWin0:gen} by computing an 
unsatisfiable core of $\mathbf{x}$.  Considering only one input vector instead 
of all makes the formula weaker, which means that less literals may be 
eliminated.  However, the purely propositional satisfiability checks are also 
potentially faster. 

\begin{wrapfigure}[9]{r}{0.49\textwidth}
\vspace{-7mm}
\centering 
\subfloat[Counterexample candidate.\label{fig:sat_learn1}]
  {\includegraphics[width=0.16\textwidth]{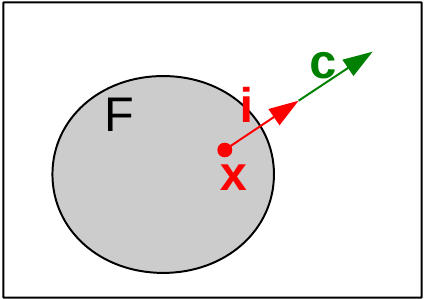}}
\hspace{0mm}
\subfloat[Check.\label{fig:sat_learn2}]
  {\includegraphics[width=0.16\textwidth]{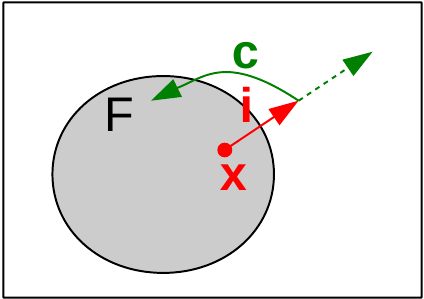}}
\hspace{0mm}
\subfloat[Generalization.\label{fig:sat_learn3}]
  {\includegraphics[width=0.16\textwidth]{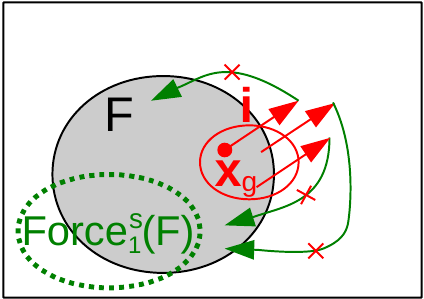}}
\caption{Working principle of \textsc{SatWin0}.}
\label{fig:sat_learn}
\end{wrapfigure}
\mypara{Illustration.}
Figure~\ref{fig:sat_learn} illustrates the working principle of \textsc{SatWin0} 
graphically.  As before, a box represents the set of all states.  In 
Figure~\ref{fig:sat_learn1}, a counterexample candidate is computed in form of 
a 
state $\mathbf{x}$ from which some input $\mathbf{i}$ and some control value 
$\mathbf{c}$ lead from $F$ to $\neg F$. This corresponds to the \acs{SAT} 
solver call in Line~\ref{alg:SatWin0:check}. In case of satisfiability, the next 
step is to check if some alternative $\mathbf{c}$ leads back to $F$ (for the 
same $\mathbf{x}$ and $\mathbf{i}$).  This is illustrated in 
Figure~\ref{fig:sat_learn2} and corresponds to the \acs{SAT} solver call in 
Line~\ref{alg:SatWin0:check1}.  In case of satisfiability, $U$ is refined in 
order not to get the same counterexample candidate again 
(Line~\ref{alg:SatWin0:ref}), and the algorithm proceeds by computing the next 
counterexample candidate as shown in Figure~\ref{fig:sat_learn1}.  In case of 
unsatisfiability, $\mathbf{x}$ is indeed a counterexample.  
Figure~\ref{fig:sat_learn3} illustrates how it is generalized into a larger 
region $\mathbf{x}_g$ for which input $\mathbf{i}$ enforces that the next state 
is in $\neg F$: it is ensured that, from any state of $\mathbf{x}_g$, with 
input 
$\mathbf{i}$, no $\mathbf{c}$ can exist such that the next state is in $F$ 
again.  This is a sufficient but not a necessary condition for $F\wedge 
\mathbf{x}_g$ not to intersect with $\FS(F)$. This generalization corresponds to 
the computation of the unsatisfiable core in Line~\ref{alg:SatWin0:gen} of 
\textsc{SatWin0}.  Finally, $\mathbf{x}_g$ is removed from $F$ and the procedure 
continues with Figure~\ref{fig:sat_learn1}.

\mypara{Discussion.}  In contrast to \textsc{QbfWin} 
(Algorithm~\ref{alg:QbfWin}), \textsc{SatWin0} potentially requires far more 
solver calls.  This has two reasons.  First, many refinements of $U$ may be 
necessary until a genuine counterexample is found.  In contrast, \textsc{QbfWin} 
computes a counterexample with one single solver call.  Second, the 
counterexample generalization in \textsc{SatWin0} is weaker and may thus drop 
fewer literals.  This can increase the number of counterexamples that needs to 
be computed.  The advantage of \textsc{SatWin0} is that all satisfiability 
checks are propositional and, thus, potentially less expensive.

The main purpose of discussing \textsc{SatWin0} from Algorithm~\ref{alg:SatWin0} 
was to prepare for a more advanced version, which 
will be presented in the next section.  Hence, we will not elaborate on 
implementation aspects or formal correctness arguments for 
Algorithm~\ref{alg:SatWin0}, but only do this for the advanced version, which is 
presented in the next section.

\subsubsection{Advanced Algorithm} \label{hw:sat1}

The basic algorithm from the previous section has two main weaknesses.  First, a 
reset of $U$ needs to be done upon \emph{every} update of $F$.  After such a 
reset, a lot of iterations may be necessary until $U$ is again restrictive 
enough for Line~\ref{alg:SatWin0:check} to produce a counterexample. Second, 
incremental solving is difficult in Line~\ref{alg:SatWin0:check} due to the 
negation of $F$: clauses are added to $F$, but this makes $\neg F$ weaker, which 
can only be expressed by (also) removing clauses from the \acs{CNF} 
representation of $\neg F$.
The procedure \textsc{SatWin1} in Algorithm~\ref{alg:SatWin1} resolves these 
weaknesses.  The differences to \textsc{SatWin0} are marked in blue.

\begin{algorithm}[tb]
\caption{\textsc{SatWin1}: Advanced \acs{SAT} solver based \acs{CNF} learning 
algorithm for computing the winning region.}
\label{alg:SatWin1}
\begin{algorithmic}[1]
\ProcedureRet{SatWin1}
             {(\overline{x}, \overline{i}, \overline{c}, I, T, P)}
             {The winning region $W(\overline{x})$ in CNF or $\false$}
  \LineIf{$\propsat\bigl(I(\overline{x}) \wedge \neg P(\overline{x})\bigr)$}
    {\textbf{return} $\false$}
    \label{alg:SatWin1:cunreal0}
    \label{alg:SatWin1:unreal0}
  \State $F(\overline{x}) := P(\overline{x})$, \label{alg:SatWin1:f0}
         $U(\overline{x}, \overline{i}) := \true$,
         \mrk{
         $G(\overline{x}) := F(\overline{x})$,
         $\textsf{precise} := \true$}
  \While{$\true$}  
    \State $(\mathsf{sat},\mathbf{x},\mathbf{i}) := \propsatmodel\bigl(
            F(\overline{x}) \wedge 
            U(\overline{x}, \overline{i}) \wedge 
            T(\overline{x}, \overline{i}, \overline{c}, \overline{x}') \wedge 
            \neg \mrk{G}(\overline{x}')\bigr)$ \label{alg:SatWin1:check}
    \If{$\neg \mathsf{sat}$}
      \LineIf{\mrk{$\textsf{precise}$}}
         {\textbf{return} $F(\overline{x})$} \label{alg:SatWin1:t0}
        \State \mrk{$U(\overline{x}, \overline{i}) := \true$,
                    $G(\overline{x}) := F(\overline{x})$,
                    $\textsf{precise} := \true$} \label{alg:SatWin1:restart}
    \Else
      \State $(\mathsf{sat},\mathbf{c}) := \propsatmodel\bigl(
               F(\overline{x}) \wedge 
               \mathbf{x} \wedge \mathbf{i} \wedge 
               T(\overline{x}, \overline{i}, \overline{c}, \overline{x}') \wedge 
               F(\overline{x}')\bigr)$\label{alg:SatWin1:check2}
      \If{$\neg \mathsf{sat}$}
        \State $\mathbf{x}_g := \propsatmincore\bigl(\mathbf{x},
                F(\overline{x}) \wedge \mathbf{i} \wedge 
                T(\overline{x},\overline{i},\overline{c},\overline{x}') \wedge 
                F(\overline{x}')\bigr)$ \label{alg:SatWin1:core}      
        \LineIf{$\propsat\bigl(\mathbf{x}_g \wedge I(\overline{x})\bigr)$} 
          {\textbf{return} $\false$}
          \label{alg:SatWin1:unreal1}
          \label{alg:SatWin1:t1}
        \State $F(\overline{x}) := F(\overline{x}) \wedge \neg \mathbf{x}_g$,
               \label{alg:SatWin1:ref}
               \mrk{$\textsf{precise}:= \false$}
      \Else
        \State $U := U
               \wedge \neg \propsatmincore\bigl(\mathbf{x} \wedge \mathbf{i},
               \mathbf{c} \wedge 
               F(\overline{x}) \wedge 
               U(\overline{x}, \overline{i}) \wedge 
               T(\overline{x},\overline{i},\overline{c},\overline{x}') \wedge
               \neg \mrk{G}(\overline{x}')\bigr)$\label{alg:SatWin1:refu}
      \EndIf         
    \EndIf
  \EndWhile
\EndProcedure  
\end{algorithmic}
\end{algorithm}

\begin{wrapfigure}[8]{r}{0.18\textwidth}
\vspace{-4mm}
\centering
  \includegraphics[width=0.17\textwidth]{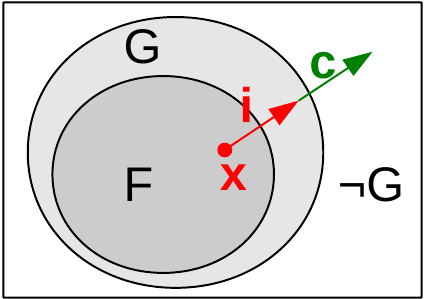}
\caption{Counterexample candidate in \textsc{SatWin1}.}
\label{fig:sat_learn_g}
\end{wrapfigure}
\mypara{Lazy updates of $F$.}
The formula $G(\overline{x})$ is a copy of $F(\overline{x})$ that is updated 
only lazily with newly discovered clauses.  Consequently, $F \rightarrow G$ 
holds at any time, i.e., $G$ always represents a superset of the states in $F$.  
The Boolean flag $\textsf{precise}$ is $\true$ whenever $G = F$.  While 
\textsc{SatWin0} computed a transition from $F$ to $\neg F$ in 
Line~\ref{alg:SatWin0:check}, \textsc{SatWin1} computes a transition from $F$ to 
$\neg G$.  This is illustrated in Figure~\ref{fig:sat_learn_g}.  A transition 
from $F$ to $\neg G$ is also a transition from $F$ to $\neg F$.  Thus, in case 
of satisfiability, nothing changes.  However, if no such transition exists, this 
does not automatically mean that no transition from $F$ to $\neg F$ exists.  
Therefore, if $G\neq F$, Line~\ref{alg:SatWin1:restart} sets $G:=F$ and the 
check 
is repeated. Only if $G=F$ (indicated by $\textsf{precise} = \true$), the 
algorithm can conclude that no more counterexample exists and returns $F$ as 
result.

\mypara{Updates of $U$.}
New clauses are only added to $F$ but not to $G$ in Line~\ref{alg:SatWin1:ref}. 
Thus, after any update of $F$, $\textsf{precise}$ must be set to $\false$.  
However, $U$ can be kept as it is.  The intuitive reason is as follows. If a 
certain $(\mathbf{x},\mathbf{i})$-pair is not helpful for the environment to 
enforce a transition from $F$ to $\neg G$, then it will definitely not be 
helpful to enforce a transition from some smaller set $F\wedge H$ of states into 
the same region $\neg G$. More formally, we have that
\[
\Bigl(
(\mathbf{x}, \mathbf{i}) \not\models 
\forall \overline{c} \scope
\exists \overline{x}'\scope
F(\overline{x}) \wedge 
T(\overline{x},\overline{i},\overline{c},\overline{x}') \wedge 
\neg G(\overline{x}')
\Bigr)
\text{ implies }
\Bigl(
(\mathbf{x}, \mathbf{i}) \not\models 
\forall \overline{c} \scope
\exists \overline{x}'\scope
F(\overline{x}) \wedge H(\overline{x}) \wedge 
T(\overline{x},\overline{i},\overline{c},\overline{x}') \wedge 
\neg G(\overline{x}')
\Bigr)
\text{ because}
\]
\[\Bigl(
\forall \overline{c} \scope
\exists \overline{x}'\scope
F(\overline{x}) \wedge H(\overline{x}) \wedge 
T(\overline{x},\overline{i},\overline{c},\overline{x}') \wedge 
\neg G(\overline{x}')
\Bigr)
\rightarrow
\Bigl(
\forall \overline{c} \scope
\exists \overline{x}'\scope
F(\overline{x}) \wedge 
T(\overline{x},\overline{i},\overline{c},\overline{x}') \wedge 
\neg G(\overline{x}')
\Bigr).
\hspace*{32mm}
\]
Only when $G$ changes in Line~\ref{alg:SatWin1:restart}, $U$ also becomes 
invalid and needs to be reset to $\true$.

\subsubsection{Correctness of the Advanced Algorithm \textsc{SatWin1}}

\noindent
We now work out a formal correctness argument for \textsc{SatWin1}, split into
several lemmas to increase readability.

\begin{lemma}\label{alg:SatWin1_term}
The \textsc{SatWin1} procedure in Algorithm~\ref{alg:SatWin1} always terminates.
\end{lemma}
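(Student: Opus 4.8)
The plan is to exhibit a measure that strictly decreases on every iteration of the \textbf{while}-loop that does not return, and that ranges over a well-founded order; termination then follows at once. A single scalar measure based on $F$ or $U$ alone will not do: although $F$ only shrinks (clauses are added in line~\ref{alg:SatWin1:ref}), the auxiliary formula $U$ is reset to $\true$ in line~\ref{alg:SatWin1:restart} whenever $G$ is refreshed, so $|U|$ is not monotone. I would therefore use the lexicographic measure $\mu = (|G|,\,|F|,\,|U|)$, where $|H|$ denotes the number of (state, respectively state/input) assignments satisfying $H$. Each component is a natural number, so the lexicographic order on $\mathbb{N}^3$ is well-founded, and it suffices to show that $\mu$ strictly decreases whenever the loop body runs without returning.

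\textbf{Invariants and case analysis.} First I would record two easy invariants. One is $F(\overline{x}) \rightarrow G(\overline{x})$, i.e.\ $|F| \le |G|$: it holds initially since $G:=F$, is preserved by line~\ref{alg:SatWin1:ref} (which only strengthens $F$ while leaving $G$ fixed), and is trivially re-established by the reset $G:=F$ in line~\ref{alg:SatWin1:restart}. The other is that $\textsf{precise}$ is $\true$ exactly when $G=F$. There are then three kinds of non-returning iterations, and I would check the strict decrease of $\mu$ in each. First, a \emph{restart} (line~\ref{alg:SatWin1:restart}, taken when the query of line~\ref{alg:SatWin1:check} is unsatisfiable but $\textsf{precise}$ is $\false$): here $\textsf{precise}=\false$ gives $G\neq F$, which with $F\rightarrow G$ yields $F\subsetneq G$ strictly, so setting $G:=F$ makes $|G|$ strictly smaller. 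Second, an \emph{$F$-update} (line~\ref{alg:SatWin1:ref}): $|G|$ is unchanged and $|F|$ strictly decreases, because the witness $\mathbf{x}$ returned in line~\ref{alg:SatWin1:check} satisfies $F$ and the core $\mathbf{x}_g$ of line~\ref{alg:SatWin1:core} satisfies $\mathbf{x}_g\subseteq\mathbf{x}$, so $F\wedge\mathbf{x}_g$ is nonempty and $F:=F\wedge\neg\mathbf{x}_g$ strictly shrinks $F$. Third, a \emph{$U$-refinement} (line~\ref{alg:SatWin1:refu}): $|G|$ and $|F|$ are unchanged while $|U|$ strictly decreases, since the pair $(\mathbf{x},\mathbf{i})$ satisfied $U$ in line~\ref{alg:SatWin1:check} and satisfies the computed core (a subset of $\mathbf{x}\wedge\mathbf{i}$), hence is excluded by the clause added to $U$.

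\textbf{Main obstacle.} The delicate point is precisely the restart case, where $U$ jumps back up to $\true$ and only $G$ decreases; this is exactly what the lexicographic ordering is for, as the drop in the dominant component $|G|$ absorbs any increase in the subordinate component $|U|$. What makes this sound is that a restart is reached only when $G\neq F$, at which moment $|G|$ genuinely drops by the invariant above. The iterations that \emph{do} return (lines~\ref{alg:SatWin1:t0} and~\ref{alg:SatWin1:t1}) require no measure, so once the three decreasing cases are verified, well-foundedness of the lexicographic order on $\mathbb{N}^3$ forces the loop to halt after finitely many iterations.
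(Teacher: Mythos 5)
Your proof is correct and rests on exactly the same observations as the paper's own argument: $F$ strictly shrinks at Line~\ref{alg:SatWin1:ref} (since the witness $\mathbf{x}$ from Line~\ref{alg:SatWin1:check} lies in $F\wedge\mathbf{x}_g$), a restart at Line~\ref{alg:SatWin1:restart} presupposes $\textsf{precise}=\false$ and hence a strict gap $F\subsetneq G$, and each $U$-refinement at Line~\ref{alg:SatWin1:refu} excludes at least the witnessing state--input pair. You merely repackage the paper's informal chain of ``this event cannot recur infinitely often without $F$ shrinking'' claims as an explicit well-founded lexicographic measure $(|G|,|F|,|U|)$, which is a cleaner formalization but not a substantively different argument.
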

\begin{proof}
Every loop iteration must end with one of the following five events: (1) the 
loop terminates in Line~\ref{alg:SatWin1:t0}, (2) $U$ is set to $\true$ in 
Line~\ref{alg:SatWin1:restart}, (3) the loop terminates in 
Line~\ref{alg:SatWin1:t1}, (4) $F$ shrinks in Line~\ref{alg:SatWin1:ref}, or (5) 
$U$ shrinks in Line~\ref{alg:SatWin1:refu}. We show that all these events lead 
to termination or eventual shrinking of $F$:  Item (2) cannot happen twice in a 
row without shrinking $F$ in between: this is prevented by having 
$\textsf{precise}=\true$.  Item (5) cannot happen infinitely often without 
shrinking $F$ in between because at some point $U$ would reach $\false$, which 
makes Line~\ref{alg:SatWin1:check} return $\textsf{sat}=\false$.  In this case, 
the algorithm either terminates in Line~\ref{alg:SatWin1:t0}, or item (2) 
occurs, and item (2) cannot occur twice without shrinking $F$ in between.  
Hence, the loop either terminates or makes some progress towards shrinking $F$.  
Before $F$ can shrink below $I$, the loop definitely terminates in 
Line~\ref{alg:SatWin1:t1}.
\qed
\end{proof}

\begin{lemma}\label{thm:SatWin1_ifp}
\textsc{SatWin1} enforces the invariant $I(\overline{x}) \rightarrow 
F(\overline{x}) \rightarrow P(\overline{x})$.
\end{lemma}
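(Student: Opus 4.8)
The plan is to establish the chained invariant by induction on the iterations of the \textbf{while} loop, observing that $F$ is the only one of the four state variables whose value actually enters the assertion. First I would dispose of the base case. After the initialization in Line~\ref{alg:SatWin1:f0} we have $F = P$, so $F \rightarrow P$ holds immediately. For $I \rightarrow F$, note that control reaches Line~\ref{alg:SatWin1:f0} only if the test in Line~\ref{alg:SatWin1:unreal0} failed, i.e.\ $I(\overline{x}) \wedge \neg P(\overline{x})$ is unsatisfiable; this is exactly $I \rightarrow P$, and since $F = P$ we obtain $I \rightarrow F$. Hence $I \rightarrow F \rightarrow P$ holds upon entering the loop.

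For the inductive step, the key observation is that $F$ is written at only one place in the loop body, namely the update $F := F \wedge \neg \mathbf{x}_g$ in Line~\ref{alg:SatWin1:ref}. The lazy copy $G$, the restriction $U$, and the flag $\textsf{precise}$ never occur in the invariant, and none of the remaining lines touch $F$. Assuming $I \rightarrow F \rightarrow P$ before such an update, the implication $F \rightarrow P$ is preserved because the new value entails the old one: conjoining $\neg \mathbf{x}_g$ can only shrink the represented state set. For $I \rightarrow F$, I would invoke the guard immediately preceding the update: Line~\ref{alg:SatWin1:t1} returns $\false$ whenever $\mathbf{x}_g \wedge I(\overline{x})$ is satisfiable, so if execution actually reaches Line~\ref{alg:SatWin1:ref}, then $\mathbf{x}_g \wedge I$ is unsatisfiable, i.e.\ $I \rightarrow \neg \mathbf{x}_g$. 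Combined with the induction hypothesis $I \rightarrow F$, this yields $I \rightarrow F \wedge \neg \mathbf{x}_g$, which is precisely the new value of $F$.

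Since both conjuncts of the invariant are re-established after the sole modification of $F$, and no other statement alters $F$, the invariant holds at every point of the execution. The argument is essentially bookkeeping; the only thing to verify carefully is that Line~\ref{alg:SatWin1:ref} is genuinely the unique write to $F$ inside the loop and that the unrealizability guard in Line~\ref{alg:SatWin1:t1} is evaluated \emph{before}, not after, that write. I therefore expect no real obstacle here, in contrast to the companion lemma establishing $W \rightarrow F$, where the counterexample generalization and the lazy $G$-approximation carry the substantive content.
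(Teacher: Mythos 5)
Your proof is correct and follows essentially the same route as the paper's: the paper's (much terser) argument likewise attributes $I \rightarrow F$ to the guards in Lines~\ref{alg:SatWin1:cunreal0} and~\ref{alg:SatWin1:unreal1}, and $F \rightarrow P$ to the initialization in Line~\ref{alg:SatWin1:f0} together with the monotone update in Line~\ref{alg:SatWin1:ref}. Your expansion into an explicit induction, including the check that the unrealizability guard precedes the sole write to $F$, is just a more detailed rendering of the same bookkeeping.
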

\begin{proof}
As for \textsc{QbfWin} (see Theorem~\ref{thm:QbfWin_correct}), $I 
\rightarrow F$ is enforced by Line~\ref{alg:SatWin1:cunreal0} 
and~\ref{alg:SatWin1:unreal1}; $F \rightarrow P$ is enforced by 
Line~\ref{alg:SatWin1:f0} and~\ref{alg:SatWin1:ref}.
\qed
\end{proof}

\begin{lemma}\label{thm:SatWin1_wf}
\textsc{SatWin1} enforces the invariant $W(\overline{x}) \rightarrow 
F(\overline{x})$.
\end{lemma}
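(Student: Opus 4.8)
The plan is to prove the invariant by induction over the loop iterations, mirroring the corresponding argument for \textsc{QbfWin} in the proof of Theorem~\ref{thm:QbfWin_correct}. First I would observe that $F$ is modified in exactly one place, namely Line~\ref{alg:SatWin1:ref}, where $F := F \wedge \neg\mathbf{x}_g$; the auxiliary variables $G$ and $U$ never enter the computation of the removed cube $\mathbf{x}_g$ (which is computed with respect to $F$ in Line~\ref{alg:SatWin1:core}), so they can be ignored for this invariant. For the base case, Line~\ref{alg:SatWin1:f0} sets $F := P$, and since the winning region contains only safe states, i.e.\ $W(\overline{x}) \rightarrow P(\overline{x})$, the invariant $W \rightarrow F$ holds initially.

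For the inductive step, I would assume $W \rightarrow F$ holds before an update in Line~\ref{alg:SatWin1:ref}. The update removes exactly the states $\mathbf{x}_g \wedge F$, so it suffices to show $\mathbf{x}_g \wedge F \rightarrow \neg W$, i.e.\ that none of the removed states belongs to $W$. The cube $\mathbf{x}_g$ is computed in Line~\ref{alg:SatWin1:core} only after Line~\ref{alg:SatWin1:check2} reported that $F(\overline{x}) \wedge \mathbf{x} \wedge \mathbf{i} \wedge T(\overline{x},\overline{i},\overline{c},\overline{x}') \wedge F(\overline{x}')$ is unsatisfiable; by the defining property of $\propsatmincore$, the formula $\mathbf{x}_g \wedge F(\overline{x}) \wedge \mathbf{i} \wedge T(\overline{x},\overline{i},\overline{c},\overline{x}') \wedge F(\overline{x}')$ remains unsatisfiable.

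The key step is to read off from this unsatisfiability, using that $T$ is deterministic and complete (Definition~\ref{def:safety}), the game-theoretic statement $\mathbf{x}_g \wedge F \rightarrow \FE(\neg F)$: for the fixed input $\mathbf{i}$, every state of $\mathbf{x}_g \wedge F$ is forced into $\neg F$ by \emph{every} choice of $\overline{c}$ (since completeness guarantees a next state exists and determinism makes it unique), which is exactly the witness needed for membership in $\FE(\neg F)$. From here I would chain the facts established earlier in the excerpt: monotonicity of $\FE$ together with the induction hypothesis (which gives $\neg F \rightarrow \neg W$) yields $\FE(\neg F) \rightarrow \FE(\neg W)$; the duality $\FE(\neg W) = \neg\FS(W)$; and the winning-area property $W \rightarrow \FS(W)$ of the winning region (third bullet of Definition~\ref{def:winset}), whose contrapositive is $\FE(\neg W) \rightarrow \neg W$. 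Composing these gives $\mathbf{x}_g \wedge F \rightarrow \neg W$, so the update cannot remove any state of $W$, completing the induction.

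I expect the main obstacle to be precisely the translation of the propositional unsatisfiability in Line~\ref{alg:SatWin1:core} into the statement $\mathbf{x}_g \wedge F \rightarrow \FE(\neg F)$, where the determinism and completeness of $T$ must be invoked carefully: they are what makes ``no control keeps the play in $F$ from this state and input $\mathbf{i}$'' equivalent to ``input $\mathbf{i}$ forces a transition into $\neg F$ under all controls,'' i.e.\ to membership in $\FE(\neg F)$. The remaining algebra with $\FS$, $\FE$ and the stated dualities is routine.
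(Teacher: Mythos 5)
Your proposal is correct and follows essentially the same route as the paper's proof: induction with base case $F=P$ and $W\rightarrow P$, then using the unsatisfiability established by Line~\ref{alg:SatWin1:check2}/\ref{alg:SatWin1:core} together with the determinism and completeness of $T$ (via the one-point rules) to conclude $\mathbf{x}_g\wedge F \rightarrow \FE(\neg F)$, and finally the induction hypothesis to push this to $\FE(\neg W)$. The only difference is cosmetic: you spell out the concluding step $\FE(\neg W)\rightarrow\neg W$ via the duality with $\FS$ and the closure property of $W$, which the paper leaves implicit in the phrase ``only states that cannot be part of $W$ are removed.''
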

\begin{proof}
Similar to Theorem~\ref{thm:QbfWin_correct}, this can be proven induction: 
Initially $F=P$, so $W\rightarrow F$ holds because $W\rightarrow P$. Given that 
$W\rightarrow F$ holds before an update of $F$ in 
Line~\ref{alg:SatWin1:ref}, it will also hold after the update because 
Line~\ref{alg:SatWin1:core} ensures that
$\mathbf{x}_g\wedge F(\overline{x})\wedge \mathbf{i} \wedge
  T(\overline{x},\overline{i},\overline{c},\overline{x}') \wedge
  F(\overline{x}')$ 
is unsatisfiable.  Consequently, we have that
$\forall \overline{x},\overline{i},\overline{c},\overline{x}'\scope
  \bigl(\mathbf{x}_g\wedge F(\overline{x})\wedge \mathbf{i} \bigr)
  \rightarrow
  \bigl(\neg T(\overline{x},\overline{i},\overline{c},\overline{x}') \vee
  \neg F(\overline{x}')\bigr).$
Because $T$ is both deterministic and complete ($\overline{x}'$ is always 
uniquely defined by $T$; see Definition~\ref{def:safety}) we can apply the 
one-point rule (\ref{eq:op1})
in order to rewrite the implication $\forall \overline{x}' \scope 
T(\overline{x},\overline{i},\overline{c},\overline{x}') \rightarrow \neg 
F(\overline{x}')$ to $\exists \overline{x}' \scope 
T(\overline{x},\overline{i},\overline{c},\overline{x}') \wedge \neg 
F(\overline{x}')$.  This gives
$\forall \overline{x},\overline{i},\overline{c} \scope
  \exists \overline{x}'\scope
  \bigl(\mathbf{x}_g\wedge F(\overline{x})\wedge \mathbf{i} \bigr)
  \rightarrow
  \bigl(T(\overline{x},\overline{i},\overline{c},\overline{x}') \wedge
  \neg F(\overline{x}')\bigr).$
Using the one point rule (\ref{eq:op0}) on $\overline{i}$, this 
formula is equivalent to
$\forall \overline{x} \scope
  \exists \overline{i} \scope
  \forall \overline{c} \scope
  \exists \overline{x}'\scope
  \mathbf{i} \wedge \bigl(
    \bigl(\mathbf{x}_g\wedge F(\overline{x})\bigr)
    \rightarrow
    \bigl(T(\overline{x},\overline{i},\overline{c},\overline{x}') \wedge
    \neg F(\overline{x}')\bigr)
  \bigr).$
This implies
$\forall \overline{x} \scope
  \exists \overline{i} \scope
  \forall \overline{c} \scope
  \exists \overline{x}'\scope
  \bigl(
    \bigl(\mathbf{x}_g\wedge F(\overline{x})\bigr)
    \rightarrow
    \bigl(T(\overline{x},\overline{i},\overline{c},\overline{x}') \wedge
    \neg F(\overline{x}')\bigr)
  \bigr)$, which can be written as
$\mathbf{x}_g\wedge F(\overline{x}) \rightarrow 
  \exists \overline{i} \scope
  \forall \overline{c} \scope
  \exists \overline{x}' \scope
  T(\overline{x},\overline{i},\overline{c},\overline{x}') \wedge
  \neg F(\overline{x}').$
By substituting the definition of $\FE$, we get $\mathbf{x}_g\wedge F 
\rightarrow \FE(\neg F)$.  Using the induction hypothesis $W\rightarrow F$, 
which can be written as $\neg F\rightarrow \neg W$, this 
means that $\mathbf{x}_g\wedge F \rightarrow \FE(\neg W)$ holds.  Thus, only 
states that cannot be part of $W$ are removed in Line~\ref{alg:SatWin1:ref}. In 
other words, $F$ cannot shrink below $W$.
\qed
\end{proof}

\noindent
The following lemma states that the formula $F(\overline{x}) \wedge \neg 
U(\overline{x}, \overline{i})$ can only represent state-input pairs for which 
the system player can reach $G$ and thus avoid ending up in $\neg G$.  In other 
words, the conjunction with $U$ in the \acs{SAT} solver call of 
Line~\ref{alg:SatWin1:check} excludes only state-input pairs for which the 
environment cannot enforce a transition from $F$ to $\neg G$.

\begin{lemma}\label{thm:SatWin1_u}
\textsc{SatWin1} enforces the invariant 
$ \forall \overline{x}, \overline{i} \scope
 \bigl(F(\overline{x}) \wedge \neg U(\overline{x}, \overline{i})\bigr)
  \rightarrow 
  \bigl(
  \exists \overline{c} , \overline{x}' \scope 
  T(\overline{x},\overline{i},\overline{c},\overline{x}') \wedge 
  G(\overline{x}')
  \bigr)
$.
\end{lemma}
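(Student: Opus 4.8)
The plan is to establish the stated formula as a loop invariant of \textsc{SatWin1} and to prove it by induction over the loop iterations. The quantities it constrains, namely $F$, $U$ and $G$, are touched in only four places: the initialization in Line~\ref{alg:SatWin1:f0}, the reset in Line~\ref{alg:SatWin1:restart}, the shrinking of $F$ in Line~\ref{alg:SatWin1:ref}, and the shrinking of $U$ in Line~\ref{alg:SatWin1:refu}. First I would dispose of the three easy cases. After Line~\ref{alg:SatWin1:f0} and after Line~\ref{alg:SatWin1:restart} we have $U = \true$, hence $F \wedge \neg U = \false$ and the implication holds vacuously; this also serves as the base case. When $F$ is replaced by $F \wedge \neg \mathbf{x}_g$ in Line~\ref{alg:SatWin1:ref}, the formulas $U$ and $G$ are unchanged, so the antecedent $F(\overline{x}) \wedge \neg U(\overline{x},\overline{i})$ can only denote a smaller set of state-input pairs while the consequent is unchanged; hence the invariant is preserved.

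The interesting case is the update of $U$ in Line~\ref{alg:SatWin1:refu}, reached when the query in Line~\ref{alg:SatWin1:check2} is satisfiable with witness control $\mathbf{c}$. My first step here is to argue that the formula passed to $\propsatmincore$, namely $(\mathbf{x} \wedge \mathbf{i}) \wedge \mathbf{c} \wedge F(\overline{x}) \wedge U(\overline{x},\overline{i}) \wedge T(\overline{x},\overline{i},\overline{c},\overline{x}') \wedge \neg G(\overline{x}')$, is genuinely unsatisfiable, so that the core computation is well-defined. This uses two earlier facts: that $F \rightarrow G$ holds at every point of the execution (the lazy-update property of $G$), and that $T$ is deterministic and complete (Definition~\ref{def:safety}). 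Indeed, Line~\ref{alg:SatWin1:check2} guarantees that the unique next state reached from $\mathbf{x}$ under $\mathbf{i}$ and $\mathbf{c}$ lies in $F$, hence in $G$, and therefore cannot satisfy $\neg G(\overline{x}')$. Writing $\mathbf{m} := \propsatmincore(\mathbf{x} \wedge \mathbf{i}, \ldots) \subseteq \mathbf{x} \wedge \mathbf{i}$ for the returned core, I then have that $\mathbf{m} \wedge \mathbf{c} \wedge F \wedge U \wedge T \rightarrow G(\overline{x}')$ is valid.

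It remains to verify the invariant for the updated $U' = U \wedge \neg \mathbf{m}$. Since $\neg U' = \neg U \vee \mathbf{m}$, I would split any pair satisfying $F \wedge \neg U'$ into cases. Pairs satisfying $F \wedge \neg U$ are handled directly by the induction hypothesis, because $G$ is unchanged in this step; pairs satisfying $F \wedge \mathbf{m}$ but \emph{not} $U$ fall back into this same case. The only genuinely new obligation concerns pairs $(\mathbf{x}_*, \mathbf{i}_*)$ with $\mathbf{x}_* \models F$ and $(\mathbf{x}_*, \mathbf{i}_*) \models \mathbf{m} \wedge U$: for these I pick a complete control $\mathbf{c}_* \models \mathbf{c}$, use completeness of $T$ to obtain the unique next state $\mathbf{x}'_*$, and read off from the validity above that $\mathbf{x}'_* \models G$, so that $\exists \overline{c}, \overline{x}' \scope T \wedge G(\overline{x}')$ holds at $(\mathbf{x}_*, \mathbf{i}_*)$. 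I expect the delicate point to be exactly this last step: one must be careful that the core $\mathbf{m}$ is computed with both $U$ and the specific control $\mathbf{c}$ present in the formula, so that the guaranteed transition into $G$ is available either when $U$ still holds (the new core argument) or when $U$ is already violated (the induction hypothesis) — the two cases must be combined, and the determinism of $T$ is what lets a single witness control $\mathbf{c}$ certify the existential in the consequent.
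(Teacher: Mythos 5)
Your proof is correct and follows essentially the same route as the paper's: induction over the loop, with the initialization/restart and $F$-shrinking cases handled trivially, and the $U$-update handled by splitting $\neg U' = \neg U \vee \mathbf{m}$ into the induction-hypothesis case and the unsatisfiable-core case. Your explicit justification that the core computation is well-defined (via $F \rightarrow G$ and Line~\ref{alg:SatWin1:check2}) and your direct witness argument using determinism and completeness of $T$ are just a more explicit rendering of the paper's appeal to the one-point rule.
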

\begin{proof}
$U$ is initialized to $\true$, so the invariant holds initially.  
Line~\ref{alg:SatWin1:restart} sets $U=\true$ and thus retains the invariant.  
Line~\ref{alg:SatWin1:ref} also retains the invariant because $F$ only 
gets stricter.  It remains to be shown that Line~\ref{alg:SatWin1:refu} retains 
the invariant.  Let $\mathbf{u}$ be the result of $\propsatmincore$ in 
Line~\ref{alg:SatWin1:refu}.  The update $U := U \wedge \neg \mathbf{u}$ in 
Line~\ref{alg:SatWin1:refu} changes the invariant to 
$\forall \overline{x}, \overline{i} \scope
  \bigl(F(\overline{x}) \wedge  
  (\neg U(\overline{x}, \overline{i}) \vee \mathbf{u})
  \bigr)
  \rightarrow 
  \bigl(
  \exists \overline{c} , \overline{x}' \scope 
  T(\overline{x},\overline{i},\overline{c},\overline{x}') \wedge 
  G(\overline{x}')
  \bigr),
$
which can be written as
$\forall \overline{x}, \overline{i} \scope
  \bigl(
   \bigl(F(\overline{x}) \wedge \neg U(\overline{x}, \overline{i})\bigr)
   \vee
   \bigl(F(\overline{x}) \wedge U(\overline{x}, \overline{i}) \wedge \mathbf{u} 
   \bigr)
  \bigr)
  \rightarrow 
  \bigl(
  \exists \overline{c} , \overline{x}' \scope 
  T(\overline{x},\overline{i},\overline{c},\overline{x}') \wedge 
  G(\overline{x}')
  \bigr).
$
In general, a formula $(A \vee B)\rightarrow C$ holds iff $A \rightarrow C$ and 
$B \rightarrow C$.  By induction, we know that
$ \forall \overline{x}, \overline{i} \scope
  \bigl(F(\overline{x}) \wedge \neg U(\overline{x}, \overline{i})\bigr)
  \rightarrow 
  \bigl(
  \exists \overline{c} , \overline{x}' \scope 
  T(\overline{x},\overline{i},\overline{c},\overline{x}') \wedge 
  G(\overline{x}')
  \bigr)
$ 
holds.  What remains to be shown is that 
$\forall \overline{x}, \overline{i} \scope
  \bigl(
  F(\overline{x}) \wedge U(\overline{x}, \overline{i}) \wedge \mathbf{u}  
 \bigr)
\rightarrow 
  \bigl(
  \exists \overline{c}, \overline{x}' \scope
  T(\overline{x},\overline{i},\overline{c},\overline{x}') \wedge
  G(\overline{x}')
  \bigr)
$ 
also holds.  Since
$\mathbf{u} \wedge \mathbf{c} \wedge 
 F(\overline{x}) \wedge 
 U(\overline{x}, \overline{i}) \wedge 
 T(\overline{x},\overline{i},\overline{c},\overline{x}') \wedge
 \neg G(\overline{x}')$
is unsatisfiable (enforced by Line~\ref{alg:SatWin1:refu}), we have that
$\forall \overline{x}, \overline{i} \scope
\bigl(
F(\overline{x}) \wedge U(\overline{x}, \overline{i}) \wedge \mathbf{u}  
\bigr)
\rightarrow 
\bigl(
\forall \overline{c},\overline{x}' \scope
\neg \mathbf{c} \vee 
\neg T(\overline{x},\overline{i},\overline{c},\overline{x}') \vee
 G(\overline{x}')
\bigr) 
.$
By applying the one-point rule (Eq.~(\ref{eq:op0}) for $\overline{c}$ and
Eq.~(\ref{eq:op1}) for $\overline{x}'$), this can also be 
written as
$\forall \overline{x}, \overline{i} \scope
\bigl(
F(\overline{x}) \wedge U(\overline{x}, \overline{i}) \wedge \mathbf{u} 
\bigr)
\rightarrow 
\bigl(
\exists \overline{c}, \overline{x}' \scope
\mathbf{c} \wedge 
T(\overline{x},\overline{i},\overline{c},\overline{x}') \wedge
G(\overline{x}')
\bigr).$
This formula obviously implies
$\forall \overline{x}, \overline{i} \scope
\bigl(
F(\overline{x}) \wedge U(\overline{x}, \overline{i}) \wedge \mathbf{u}  
\bigr)
\rightarrow 
\bigl(
\exists \overline{c}, \overline{x} \scope
T(\overline{x},\overline{i},\overline{c},\overline{x}') \wedge
G(\overline{x}')
\bigr),$
which was to be shown for Line~\ref{alg:SatWin1:refu} to preserve the invariant.
\qed
\end{proof}

\begin{lemma}\label{thm:SatWin1_fw}
If \textsc{SatWin1} reaches Line~\ref{alg:SatWin1:t0}, $F(\overline{x}) = 
W(\overline{x})$ holds at that point.
\end{lemma}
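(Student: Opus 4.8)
The procedure reaches Line~\ref{alg:SatWin1:t0} only when the solver call in Line~\ref{alg:SatWin1:check} returned $\textsf{sat}=\false$ \emph{and} the flag $\textsf{precise}$ was $\true$. Since $\textsf{precise}=\true$ guarantees $G(\overline{x})=F(\overline{x})$ at this point, the unsatisfiable query is in fact
$$F(\overline{x}) \wedge U(\overline{x},\overline{i}) \wedge T(\overline{x},\overline{i},\overline{c},\overline{x}') \wedge \neg F(\overline{x}').$$
I already know $W\rightarrow F$ from Lemma~\ref{thm:SatWin1_wf}, so to obtain $F=W$ it suffices to establish the converse inclusion $F\rightarrow W$. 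My plan is to show that $F$ is a \emph{winning area} in the sense of Definition~\ref{def:winset}; every state of a winning area is winning (the strategy $T(\overline{x},\overline{i},\overline{c},\overline{x}')\wedge\bigl(F(\overline{x})\rightarrow F(\overline{x}')\bigr)$ wins from it), hence $F\rightarrow W$ follows at once. Two of the three defining properties, namely $I\rightarrow F$ and $F\rightarrow P$, are already available from Lemma~\ref{thm:SatWin1_ifp}, so the whole argument reduces to establishing the third property $F\rightarrow\FS(F)$.

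To derive $F\rightarrow\FS(F)$, I first rewrite the unsatisfiability of the query above as the validity of
$$\forall \overline{x},\overline{i},\overline{c},\overline{x}' \scope \bigl(F(\overline{x})\wedge U(\overline{x},\overline{i})\wedge T(\overline{x},\overline{i},\overline{c},\overline{x}')\bigr)\rightarrow F(\overline{x}').$$
Now I fix an arbitrary state $\mathbf{x}\models F$ and an arbitrary input $\mathbf{i}$ and perform a case split on $U(\mathbf{x},\mathbf{i})$. If $U(\mathbf{x},\mathbf{i})$ holds, then completeness of $T$ (Definition~\ref{def:safety}) supplies, for any fixed control value, a next state $\overline{x}'$ with $T(\mathbf{x},\mathbf{i},\overline{c},\overline{x}')$, and the implication above forces $F(\overline{x}')$; so the system can reach $F$ for this input. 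If instead $\neg U(\mathbf{x},\mathbf{i})$ holds, then the $U$-invariant of Lemma~\ref{thm:SatWin1_u}, instantiated with $G=F$ (legitimate precisely because $\textsf{precise}=\true$), directly yields $\overline{c},\overline{x}'$ with $T(\mathbf{x},\mathbf{i},\overline{c},\overline{x}')\wedge F(\overline{x}')$. In either case, for every input $\mathbf{i}$ there is a control choice leading back into $F$; unfolding $\FS(F)=\forall\overline{i}\scope\exists\overline{c},\overline{x}'\scope T(\overline{x},\overline{i},\overline{c},\overline{x}')\wedge F(\overline{x}')$ confirms $\mathbf{x}\models\FS(F)$, and since $\mathbf{x}$ was arbitrary, $F\rightarrow\FS(F)$.

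With all three properties in hand, $F$ is a winning area, hence $F\rightarrow W$; combined with $W\rightarrow F$ from Lemma~\ref{thm:SatWin1_wf} this gives $F=W$ at Line~\ref{alg:SatWin1:t0}, as claimed. I expect the only subtle point to be the case split, where each branch must invoke the correct hypothesis ($U$ true $\Rightarrow$ use the unsatisfiability together with completeness of $T$; $U$ false $\Rightarrow$ use Lemma~\ref{thm:SatWin1_u}), and, crucially, the reliance on $\textsf{precise}=\true$ so that $G$ may be replaced by $F$ throughout. Without that replacement the query would only exclude transitions into $\neg G$, which is weaker than $\neg F$ and insufficient to establish $F\rightarrow\FS(F)$.
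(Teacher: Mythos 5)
Your proof is correct and follows essentially the same route as the paper's: both derive $F\rightarrow\FS(F)$ from the unsatisfiable query by splitting on $U(\overline{x},\overline{i})$ (using the one-point rule/completeness of $T$ in one branch and Lemma~\ref{thm:SatWin1_u} with $G=F$ in the other), and both conclude $F=W$ by combining $W\rightarrow F$ with the maximality of the winning region. The only difference is cosmetic phrasing of the last step (``$F$ is a winning area, hence $F\rightarrow W$'' versus ``no proper superset of $W$ is safe and invariant'').
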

\begin{proof}
Line~\ref{alg:SatWin1:t0} is only reached when $G=F$ (otherwise \textsf{precise} 
is $\false$) and 
$ F(\overline{x}) \wedge 
   U(\overline{x}, \overline{i}) \wedge 
   T(\overline{x}, \overline{i}, \overline{c}, \overline{x}') \wedge 
   \neg G(\overline{x}')$
is unsatisfiable, which means that
$ \forall \overline{x}, \overline{i} \scope
   \bigl(
   F(\overline{x}) \wedge 
   U(\overline{x}, \overline{i}) 
   \bigr)
   \rightarrow
   \bigl(
   \forall \overline{c} \scope
   \forall \overline{x}' \scope
   \neg T(\overline{x}, \overline{i}, \overline{c}, \overline{x}') \vee 
   F(\overline{x}')
   \bigr)
   $
holds. By applying the one-point rule (\ref{eq:op1}), this can also be 
written 
as
$ \forall \overline{x}, \overline{i} \scope
   \bigl(
   F(\overline{x}) \wedge 
   U(\overline{x}, \overline{i}) 
   \bigr)
   \rightarrow
   \bigl(
   \forall \overline{c} \scope
   \exists \overline{x}' \scope
   T(\overline{x}, \overline{i}, \overline{c}, \overline{x}') \wedge
   F(\overline{x}') \bigr).$
In turn, this implies
$ \forall \overline{x}, \overline{i} \scope
   \bigl(
   F(\overline{x}) \wedge 
   U(\overline{x}, \overline{i}) 
   \bigr)
   \rightarrow
   \bigl(
   \exists \overline{c} \scope
   \exists \overline{x}' \scope
   T(\overline{x}, \overline{i}, \overline{c}, \overline{x}') \wedge
   F(\overline{x}')
   \bigr).$
From Lemma~\ref{thm:SatWin1_u}, we know that
$\forall \overline{x}, \overline{i} \scope
  \bigl(
  F(\overline{x}) \wedge \neg U(\overline{x}, \overline{i})
  \bigr)
  \rightarrow
  \bigl(
  \exists \overline{c} \scope \exists \overline{x}' \scope 
  T(\overline{x},\overline{i},\overline{c},\overline{x}') \wedge 
  F(\overline{x}')
  \bigr).
$
Since $A\wedge B \rightarrow C$ and $A\wedge \neg B \rightarrow C$ together 
imply $A \rightarrow C$, we can conclude that
$\forall \overline{x} \scope
  F(\overline{x})
  \rightarrow
  \forall \overline{i} \scope
  \exists \overline{c} , \overline{x}' \scope 
  T(\overline{x},\overline{i},\overline{c},\overline{x}') \wedge 
  F(\overline{x}')
$
must hold in Line~\ref{alg:SatWin1:t0}.  This means that the returned $F$ 
satisfies $F\rightarrow \FS(F)$.  From $W \rightarrow F \rightarrow P$ 
(Lemma~\ref{thm:SatWin1_wf} and~\ref{thm:SatWin1_ifp}), it follows that $F=W$.  
The reason is that $W$ is the set of \emph{all} states from which the system 
player can enforce the specification, i.e., no proper superset $H$ of $W$ can 
satisfy $H \rightarrow P$ and $H\rightarrow \FS(H)$.
\qed
\end{proof}

\begin{theorem}~\label{thm:sat_correct}
The \textsc{SatWin1} procedure in Algorithm~\ref{alg:SatWin1} returns the 
winning region $W(\overline{x})$ of a given safety specification $\mathcal{S}$, 
or $\false$ if the specification is unrealizable.
\end{theorem}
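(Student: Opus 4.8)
The plan is to combine the four lemmas just established, following the same overall structure as the correctness proof of \textsc{QbfWin} (Theorem~\ref{thm:QbfWin_correct}). By Lemma~\ref{alg:SatWin1_term} the procedure always terminates, so it must exit either by returning $F$ in Line~\ref{alg:SatWin1:t0}, or by returning $\false$ in Line~\ref{alg:SatWin1:cunreal0} or Line~\ref{alg:SatWin1:unreal1}. I would treat these two exit behaviors in turn, aiming to show that the returned value is $W$ exactly when the specification is realizable and $\false$ exactly when it is not.

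First I would dispatch the case of normal termination: if the procedure returns $F$ in Line~\ref{alg:SatWin1:t0}, then Lemma~\ref{thm:SatWin1_fw} directly gives $F = W$ at that point, so the returned value is precisely the winning region. This settles the realizable case, provided we also establish (below) that the procedure does not spuriously return $\false$ when the specification is realizable.

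It then remains to show that the procedure returns $\false$ \emph{iff} the specification is unrealizable, i.e.\ iff $I \not\rightarrow W$. For the direction ($\Leftarrow$), I would argue by contradiction: assuming $I \not\rightarrow W$ but that the procedure does \emph{not} return $\false$, termination forces a return of $F$ in Line~\ref{alg:SatWin1:t0}, whence $F = W$ by Lemma~\ref{thm:SatWin1_fw} and $I \rightarrow F$ by Lemma~\ref{thm:SatWin1_ifp}, contradicting $I \not\rightarrow W$. For the direction ($\Rightarrow$), I would inspect the two places where $\false$ is returned: Line~\ref{alg:SatWin1:cunreal0} fires only when $I \wedge \neg P$ is satisfiable, i.e.\ $I \not\rightarrow P$, which yields $I \not\rightarrow W$ since $W \rightarrow P$; and Line~\ref{alg:SatWin1:unreal1} fires only when some initial state lies in $\mathbf{x}_g$, while the reasoning inside the proof of Lemma~\ref{thm:SatWin1_wf} shows $\mathbf{x}_g \wedge F \rightarrow \FE(\neg W)$, so such a state cannot belong to $W$, again giving $I \not\rightarrow W$.

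I do not expect a genuine obstacle here: the argument is essentially bookkeeping over the four lemmas. The only subtlety worth stating carefully is that the invariant $I \rightarrow F$ of Lemma~\ref{thm:SatWin1_ifp} is maintained precisely because the procedure returns $\false$ in Line~\ref{alg:SatWin1:unreal1} \emph{before} any update would push $F$ below $I$; this is what makes the ($\Leftarrow$) argument clean and is already folded into Lemma~\ref{thm:SatWin1_ifp} together with the termination reasoning of Lemma~\ref{alg:SatWin1_term}.
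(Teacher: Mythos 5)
Your proposal is correct and follows essentially the same route as the paper's proof: termination via Lemma~\ref{alg:SatWin1_term}, $F=W$ at Line~\ref{alg:SatWin1:t0} via Lemma~\ref{thm:SatWin1_fw}, and the invariants of Lemmas~\ref{thm:SatWin1_ifp} and~\ref{thm:SatWin1_wf} to rule out the wrong exit in each case. The only cosmetic difference is that you argue ``returns $\false$ $\Rightarrow$ unrealizable'' directly (re-extracting $\mathbf{x}_g \wedge F \rightarrow \FE(\neg W)$ from inside the proof of Lemma~\ref{thm:SatWin1_wf}), whereas the paper proves the contrapositive using only the stated invariant $W \rightarrow F$; both are equivalent.
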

\begin{proof}
Unrealizability:
If $\mathcal{S}$ is unrealizable, $I\not\rightarrow W$.   \textsc{SatWin1} 
terminates (Lemma~\ref{alg:SatWin1_term}), but cannot terminate in 
Line~\ref{alg:SatWin1:t0} because $F=W$ (Lemma~\ref{thm:SatWin1_fw}) contradicts 
with $I\not\rightarrow W$ (unrealizability) and $I\rightarrow F$ 
(Lemma~\ref{thm:SatWin1_ifp}).  Hence, in case of unrealizability, 
\textsc{SatWin1} must terminate in Line~\ref{alg:SatWin1:unreal0} 
or~\ref{alg:SatWin1:t1} returning $\false$.

Realizability:
\textsc{SatWin1} can only return $\false$ in Line~\ref{alg:SatWin1:unreal0} 
or~\ref{alg:SatWin1:t1} if $F$ is about to be updated such that 
$I\not\rightarrow F$.  From $I\rightarrow W$ (realizability) and
$W\rightarrow F$ (Lemma~\ref{thm:SatWin1_wf}), it follows that $I\rightarrow 
F$, so this can never happen. Yet, Lemma~\ref{alg:SatWin1_term} says that 
\textsc{SatWin1} terminates, so it must reach Line~\ref{alg:SatWin1:t0} 
eventually.  By Lemma~\ref{thm:SatWin1_fw}, this will return the winning region.
\qed
\end{proof}

\subsubsection{Efficient Implementation}

\noindent
This section discusses some important aspects of implementing \textsc{SatWin1} 
efficiently.

\mypara{Incremental solving.}  We propose to use three \acs{SAT} solver 
instances incrementally.  The first one will be called \textsf{solverC} and 
stores 
 $F(\overline{x}) \wedge 
  U(\overline{x}, \overline{i}) \wedge 
  T(\overline{x}, \overline{i}, \overline{c}, \overline{x}') \wedge 
  \neg G(\overline{x}')$.
\textsf{solverC} is used in Line~\ref{alg:SatWin1:check} and 
Line~\ref{alg:SatWin1:refu}, where the conjunction with $\mathbf{c}$ is realized 
with temporarily asserted assumption literals.  Whenever 
Line~\ref{alg:SatWin1:restart} is reached, \textsf{solverC} is reset with the 
new \acs{CNF} encoding of $\neg G(\overline{x}') = \neg F(\overline{x}')$. 
Otherwise, clauses are only added to $F$ or $U$.
The second solver instance, called \textsf{solverG}, stores
 $F(\overline{x}) \wedge 
    T(\overline{x},\overline{i},\overline{c},\overline{x}') \wedge 
    F(\overline{x}')$
and is used for Line~\ref{alg:SatWin1:check2} and Line~\ref{alg:SatWin1:core}. 
Clauses are only added to $F$, so \textsf{solverG} does not have to be reset at 
all. The conjunctions with $\mathbf{i}$ and $\mathbf{x}$, which change from 
iteration to iteration, are again realized by setting assumption 
literals.  The lines~\ref{alg:SatWin1:check2} and~\ref{alg:SatWin1:core} are 
actually combined into one \acs{SAT} solver call that returns either a 
satisfying assignment $\mathbf{c}$ or an unsatisfiable core.  The third solver 
instance stores $I(\overline{x})$ and is used in 
Line~\ref{alg:SatWin1:unreal1}.\footnote{The input format in our implementation 
actually allows for only one initial state, so  
Line~\ref{alg:SatWin1:unreal1} can be realized without calling a 
\acs{SAT} solver.}  The conjunction with $\mathbf{x}_g$ is again realized with 
assumption literals.

\mypara{\ac{CNF} compression.}  Whenever \textsf{solverC} is reset with the 
current \acs{CNF} encoding of $\neg G(\overline{x}') = \neg F(\overline{x}')$ in 
Line~\ref{alg:SatWin1:restart}, we call \textsc{CompressCnf} from 
Algorithm~\ref{alg:CompressCnf} (with literal dropping disabled) in order to 
reduce the size of $F$ beforehand.  This results in a more 
compact \acs{CNF} encoding of $\neg G(\overline{x}')$ when using the method of 
Plaisted and Greenbaum~\cite{PlaistedG86}.

\mypara{Resets of \textsf{solverG}.}  By default, we only add clauses to 
\textsf{solverG}.  However, after some iterations, many of the $F$-clauses added 
to \textsf{solverG} can become redundant because they can be implied by (a 
combination of) other clauses that have been added later.  To prevent the clause 
database of \textsf{solverG} from growing unreasonably, we also reset 
\textsf{solverG} with the compressed $F$ from time to time.  As a heuristic, we 
track the number of $F$-clauses that have been added to \textsf{solverG} so far, 
and compute the difference to the number of clauses in the compressed $F$.  If 
this difference exceeds a certain limit, \textsf{solverG} is reset.  This can 
give a moderate speedup for certain \acs{SAT} solvers and benchmarks.

\subsection{Partial Quantifier Expansion} \label{sec:hw_exp}

The procedure \textsc{QbfWin} in Algorithm~\ref{alg:QbfWin} uses quantified 
formulas to compute counterexamples witnessing that $F\neq W$ and to generalize 
these counterexamples.  In contrast, the procedure \textsc{SatWin1} in 
Algorithm~\ref{alg:SatWin1} avoids the universal quantifiers.  This results in 
less expensive solver calls, but comes at the price of requiring more iterations 
of the outer loop.  In this section, we will discuss a hybrid approach which 
quantifies universally over \emph{some} (but not necessarily all) variables.  
The universal quantification is then eliminated by applying universal expansion 
so that the resulting formulas can be solved with a plain \acs{SAT} solver.  The 
hope is to find a sweet spot where the reduction in the number of iterations is 
more significant than the additional costs per solver call.

\subsubsection{Quantifier Expansion in Counterexample Computation}

The procedure \textsc{QbfWin} in Algorithm~\ref{alg:QbfWin} computes 
counterexamples to $F=W$ by solving the quantified formula
$\exists \overline{x},\overline{i} \scope
      \forall \overline{c} \scope
      \exists \overline{x}' \scope
      F(\overline{x}) \wedge 
      T(\overline{x}, \overline{i}, \overline{c}, \overline{x}') \wedge 
      \neg F(\overline{x}').$
In contrast, the \textsc{SatWin1} procedure from Algorithm~\ref{alg:SatWin1}
avoids the universal quantification of the variables $\overline{c}$ by solving 
the formula
$\exists \overline{x},\overline{i} \scope
  \exists \overline{c} \scope
  \exists \overline{x}' \scope
   F(\overline{x}) \wedge 
   U(\overline{x}, \overline{i}) \wedge 
   T(\overline{x}, \overline{i}, \overline{c}, \overline{x}') \wedge 
   \neg G(\overline{x}'),$
where $G$ is just a copy of $F$ that may not be fully up to date.  The latter 
formula does not necessarily yield a counterexample, but only a candidate.  If 
the candidate turns out to be spurious, it is excluded by refining $U$.  This 
approach can be seen as a ``lazy elimination'' of the universal quantification 
over $\overline{c}$ via $U$.  The disadvantage is that many refinements of $U$ 
may be necessary before the first genuine counterexample is found.  One 
alternative would be to eliminate $\forall \overline{c}$  in
$\exists \overline{x},\overline{i} \scope
  \forall \overline{c} \scope
  \exists \overline{x}' \scope
   F(\overline{x}) \wedge
   T(\overline{x}, \overline{i}, \overline{c}, \overline{x}') \wedge 
   \neg G(\overline{x}')$
eagerly by performing universal expansion as explained in 
Section~\ref{sec:prelim:qbf}.  Yet, this may blow up the formula size by a 
factor of $2^{|\overline{c}|}$ and may thus be infeasible.  Another alternative 
is to partition the variables of $\overline{c}$ into two subsets 
$\overline{c}_1$ and $\overline{c}_2$ and solve
\[\exists \overline{x},\overline{i} \scope
  \exists \overline{c}_1 \scope
  \forall \overline{c}_2 \scope
  \exists \overline{x}' \scope
   F(\overline{x}) \wedge 
   U(\overline{x}, \overline{i}) \wedge 
   T(\overline{x}, \overline{i}, \overline{c}, \overline{x}') \wedge 
   \neg G(\overline{x}')\]
using a \acs{SAT} solver by expanding only over the variables in 
$\overline{c}_2$.  By adjusting the relative size of $\overline{c}_2$, different
trade-offs between decreasing the number of refinements to $U$ and increasing 
the costs per solver call can be achieved.

\subsubsection{Quantifier Expansion in Counterexample Generalization}

The idea is similar to that of the previous subsection.  \textsc{QbfWin}
eliminates literals from a counterexample $\mathbf{x}$ as long as
$\exists \overline{x} \scope
  \forall \overline{i} \scope
  \exists \overline{c},\overline{x}' \scope
  \mathbf{x}_g \wedge 
  F(\overline{x}) \wedge 
  T(\overline{x}, \overline{i}, \overline{c}, \overline{x}') \wedge 
  F(\overline{x}')$
is unsatisfiable.  In contrast, \textsc{SatWin1} avoids the universal 
quantification over $\overline{i}$ by ensuring that
$\exists \overline{x} \scope
  \exists \overline{i} \scope
  \exists \overline{c},\overline{x}' \scope
  \mathbf{x} \wedge \mathbf{i} \wedge
  F(\overline{x}) \wedge
  T(\overline{x}, \overline{i}, \overline{c}, \overline{x}') \wedge 
  F(\overline{x}')$
is unsatisfiable for some concrete $\mathbf{i}$.  The latter check is 
potentially cheaper, but may result in fewer literals being eliminated from
$\mathbf{x}$.  This means that the refinement of $F$ is less substantial, so 
more iterations may be needed.  By partitioning the variables $\overline{i}$
into $\overline{i}_1$ and $\overline{i}_2$ and checking 
$\exists \overline{x} \scope
  \exists \overline{i} \scope
  \mathbf{x} \wedge \mathbf{i} \wedge
  F(\overline{x}) \wedge
  \forall \overline{i}_2 \scope
  \exists \overline{c},\overline{x}' \scope
  T(\overline{x}, \overline{i}, \overline{c}, \overline{x}') \wedge 
  F(\overline{x}')$
for unsatisfiability, different trade-offs between the generalization procedure
of \textsc{QbfWin} and that of \textsc{SatWin1} can be realized.

\subsubsection{Efficient Implementation}  

Universal expansion needs to be implemented carefully in order to avoid an 
unnecessary blow-up of the formula size, and to keep the time for the expansion 
low.  Our experience showed that even small inefficiencies can cost orders of
magnitude in both metrics.

\begin{wrapfigure}[11]{r}{0.23\textwidth}
\vspace{-4mm}
\centering
  \includegraphics[width=0.22\textwidth]{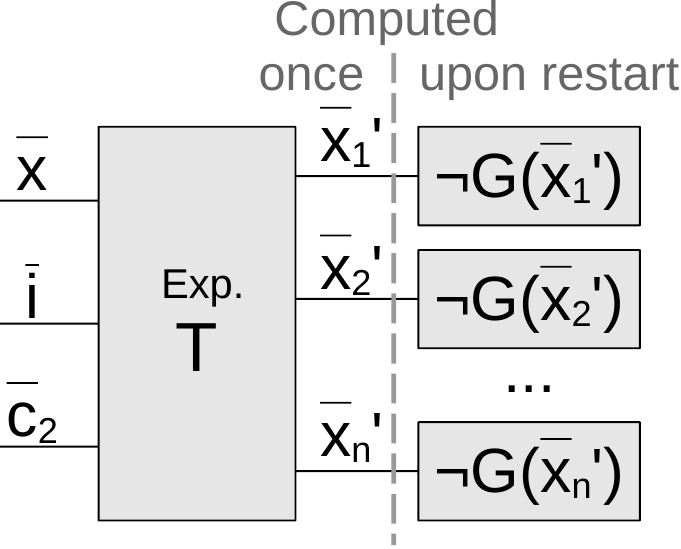}
\caption{Universal expansion for counterexample computation.}
\label{fig:sat_exp}
\end{wrapfigure}
\mypara{Expansion for counterexample computation.}  Since $F(\overline{x})$ and 
$U(\overline{x}, \overline{i})$ are independent of $\overline{c}$ and 
$\overline{x}'$, we apply universal expansion to   
$ \forall \overline{c}_2 \scope
   \exists \overline{x}' \scope
   T(\overline{x}, \overline{i}, \overline{c}, \overline{x}') \wedge 
   \neg G(\overline{x}')$
and conjoin $F(\overline{x})\wedge U(\overline{x}, \overline{i})$ afterwards. 
The transition relation $T$ is always fixed, but $\neg G(\overline{x}')$ changes 
upon every restart of \textsf{solverC} in Line~\ref{alg:SatWin1:restart} of 
\textsc{SatWin1}.  Hence, we expand $T$ only once and store the 
resulting renamings $\overline{x}'_1,\ldots \overline{x}'_n$ of 
$\overline{x}'$.  A copy of $\neg G(\overline{x}')$ is then added for each
renaming $\overline{x}'_i$ when \textsf{solverC} is initialized or restarted.  
This is illustrated in Figure~\ref{fig:sat_exp}.

\mypara{Expansion of $T$.}  In our implementation, $T$ is originally given as a 
circuit of \textsf{AND}-gates (where inputs can be negated).  We perform 
the expansion of $T$ directly on this circuit and only encode the result into 
\acs{CNF}.  This facilitates efficient constant propagation and other 
simplifications.  When expanding a certain $c\in\overline{c}_2$, we only copy 
those \textsf{AND}-gates that have $c$ in their fan-in cone.  Whenever the copy 
of some \textsf{AND}-gate has the same inputs as some existing 
\textsf{AND}-gate, the existing gate is reused.  Finally, the tool 
\abc~\cite{BraytonM10} is called to simplify the expanded circuit.  This 
involves \textsf{fraiging}, which ensures that no two nodes in the circuit can 
represent the same function over the inputs.  Hence, equivalent (copies of) 
next-state signals will be represented by the same variable, which enables a 
more substantial simplification of the $\neg G(\overline{x}')$ copies (see 
Figure~\ref{fig:sat_exp}).  Finally, duplicate renamings of the next-state 
variables are removed.  Since $T$ is expanded only once, these simplifications 
can be afforded.

\mypara{Expansion of $\neg G(\overline{x}')$.}  First, we perform an even 
more aggressive compression of $G(\overline{x}')$ than done by 
\textsc{CompressCnf} in Algorithm~\ref{alg:CompressCnf}.  \textsc{CompressCnf} 
removes a clause $c$ from a \acs{CNF} $A$ if $\bigl(A \setminus \{c\}\bigr) 
\rightarrow c$, i.e., if the clause is implied by other clauses of $A$ already. 
We now remove a clause $c$ from $G(\overline{x}')$ if 
$\bigl(F(\overline{x}) \wedge 
  T(\overline{x}, \overline{i}, \overline{c}, \overline{x}') \wedge
  \bigl(G(\overline{x}')\setminus \{c\}\bigr)\bigr)
  \rightarrow c.$
Hence, the compressed $G(\overline{x}')$ will only be equivalent to the original 
$G(\overline{x}')$ if the current state is in $F$, but this is asserted in 
all \acs{SAT} solver calls of \textsc{SatWin1} anyway.  For every renaming 
$\overline{x}'_i$ of $\overline{x}'$ that has been created during the expansion 
of $T$, we then perform the following steps:  First, $G(\overline{x}'_i)$ is 
computed by applying the renaming.  Second, $G(\overline{x}'_i)$ is simplified 
by removing tautological clauses and performing unit clause propagation.  
Finally, $G(\overline{x}'_i)$ is negated, while auxiliary variables that have 
already been introduced during the negation of other copies of 
$G(\overline{x}')$ are reused.  All these measures contribute towards reducing 
the size of the expansion of $\neg G(\overline{x}')$.

\mypara{Expansion in counterexample generalization.}
This is easier, since no negation of a \acs{CNF} is involved.  Again, $T$ is
expanded and the resulting renamings $\overline{x}'_1,\ldots \overline{x}'_n$ 
of $\overline{x}'$ are stored.  Whenever a clause $\neg \mathbf{x}_g$ is added 
to $F$, we do not only add it to \textsf{solverG} but also add all the renamed 
next-state copies of the clause to \textsf{solverG}.

\mypara{Configuration.}  In our experiments, choosing low numbers for 
$|\overline{c}_2|$ only slowed down the 
\textsc{SatWin1} procedure compared to $|\overline{c}_2|=0$.  High numbers for 
$|\overline{c}_2|$ did bring a speedup, though, with the best results achieved 
for $\overline{c}_2 = \overline{c}$.  Furthermore, we observed that an explosion 
of the formula size can be avoided in most cases by our careful implementation 
of the formula expansion.  Hence, by default, we expand over all variables in 
$\overline{c}$ and only fall back to $\overline{c}_2 = \emptyset$ if some memory 
limit is exceeded.  For counterexample generalization, a speedup could only be 
achieved with low numbers of $|\overline{i}_2|$.  Hence, by default, we only 
expand one input signal.  As a heuristic, we choose the signal that causes the least 
number of gates to be copied when expanding the transition~relation.

\mypara{Discussion.}  Our optimization of partial quantifier expansion can be 
used to realize different trade-offs between the number of \acs{SAT} solver 
calls and their costs in Algorithm~\ref{alg:SatWin1}.  We hoped to find 
a sweet spot between these two cost factors at low expansion rates, but our 
experiments suggest high rates at least for counterexample computation. While 
the basic idea of quantifier expansion is simple, such high expansion rates 
require a careful implementation, like the one discussed in this section, in 
order not to waste computational resources.

\subsection{Reachability Optimizations} \label{sec:hw_reach}

In this section, we present optimizations that exploit (un)reachability 
information when computing a winning region with query learning.  The 
optimizations can be applied to \textsc{QbfWin} (Algorithm~\ref{alg:QbfWin}) and 
to \textsc{SatWin1} (Algorithm~\ref{alg:SatWin1}), both with and without partial 
quantifier elimination.  However, to simplify the presentation, we only explain 
the optimizations for the case of \textsc{QbfWin} in detail.  The application to 
\textsc{SatWin1} works in exactly the same way.

\subsubsection{Optimization \textsf{RG}: Reachability for Counterexample 
Generalization}\label{sec:hw_rg}

Recall that a counterexample $\mathbf{x}\models F\wedge \FE(\neg F)$ in 
\textsc{QbfWin} is a state that is part of the current over-approximation $F$ of 
the winning region, but this state cannot be part of the final winning region. 
The state is represented by a minterm $\mathbf{x}$ over the state variables 
$\overline{x}$.  \textsc{QbfWin} generalizes $\mathbf{x}$ 
into a larger state region $\mathbf{x}_g$ by eliminating literals as long as 
$F\wedge \mathbf{x}_g \rightarrow \FE(\neg F)$ holds, i.e., as long as $F\wedge 
\mathbf{x}_g \wedge \FS(F)$ is unsatisfiable.  The reason is that any state 
$\mathbf{x}_a \models F\wedge \mathbf{x}_g \wedge \FS(F)$ could potentially be 
part of the winning region, and thus must not be removed from $F$.  Yet, as an 
optimization, we can still remove such a state $\mathbf{x}_a$, as long as it is 
guaranteed that $\mathbf{x}_a$ is unreachable from the initial states.  Using 
this insight, we can eliminate literals in a counterexample $\mathbf{x}$ as long 
as $R \wedge F\wedge \mathbf{x}_g \rightarrow \FE(\neg F)$, where 
$R(\overline{x})$ is an over-approximation of the reachable states in 
$\mathcal{S}$.  In \textsc{QbfWin}, this can be realized by conjoining $R$ to 
the \ac{QBF} that is checked in Line~\ref{alg:QbfWin:gen}.  This may result in 
more literals being eliminated during the generalization, which means that $F$ 
is pruned more extensively.  Ultimately, this can reduce the number of 
iterations in \textsc{QbfWin}.

\mypara{Computing reachable states.}
The states that are reachable from the initial states in a specification 
$\mathcal{S}$ can be defined inductively as follows:  All states in 
$I(\overline{x})$ are reachable.  If a state $\mathbf{x}$ is reachable, then all 
states $\mathbf{x}'\models \exists \overline{x},\overline{i},\overline{c} \scope 
\mathbf{x} \wedge T(\overline{x},\overline{i},\overline{c},\overline{x}')$ are 
also reachable.  In the synthesis setting, this definition can even be refined. 
 Any over-approximation $F$ of the winning region is itself an 
over-approximation of the reachable states, not necessarily in the 
specification $\mathcal{S}$, but definitely in the final implementation.  The 
reason is that no realization of $\mathcal{S}$ must ever leave the winning 
region $W$, and thus also not $F$. This insight can be used to compute a 
tighter set of reachable states by considering only transitions that remain in 
$F$.  In principle, the set of reachable states can easily be computed using a 
simple fixed-point algorithm. However, we consider this to be too expensive, 
and instead work with over-approximations of the reachable states.  Such 
over-approximations are also useful in formal verification, and many methods to 
compute them exist~\cite{MoonKSS99}.

\mypara{Our approach.}
We avoid computing an over-approximation of the reachable states explicitly.  
Instead, we use an idea that is inspired by the model checking algorithm 
\icthree~\cite{Bradley11}:  Let $R(\overline{x})$ be an over-approximation of 
the reachable states.  By induction, we know that a state $\mathbf{x}$ is 
definitely unreachable if $I(\overline{x}) \rightarrow \neg \mathbf{x}$ and 
$\neg \mathbf{x} \wedge R(\overline{x}) \wedge 
T(\overline{x},\overline{i},\overline{c},\overline{x}') \rightarrow \neg 
\mathbf{x}'$.  The formula says that if the current state is reachable but 
different from $\mathbf{x}$, then the next state cannot be $\mathbf{x}$ either. 
Hence, if $\mathbf{x}$ is not an initial state, then $\mathbf{x}$ can never 
be visited. The same reasoning applies if $\mathbf{x}$ is an incomplete cube (or 
any other formula) representing a set of states.  In \icthree, $\neg \mathbf{x}$ 
is said to be \emph{inductive relative to} \index{relative inductiveness} the 
current knowledge $R$ about the 
reachable states.  It can thus be used to refine $R$.  

In our synthesis setting, we take the current over-approximation $F$ as an 
over-approximation of the reachable states.  When generalizing a counterexample 
$\mathbf{x}$, literals cannot only be eliminated if $F\wedge \mathbf{x}_g 
\rightarrow \FE(\neg F)$ is preserved, but also if $\neg \mathbf{x}_g$ is 
inductive relative to $F$.  The two criteria can be combined by requiring that
\begin{equation}\label{eq:qbf_rg}
\exists \mrk{\overline{x}^*, \overline{i}^*,\overline{c}^*,} 
          \overline{x} \scope
           \forall \overline{i} \scope
           \exists \overline{c},\overline{x}' \scope
           \mrk{
           \bigl(I(\overline{x}) \vee 
            F(\overline{x}^*) \wedge
            \neg \mathbf{x}_g^* \wedge
            T(\overline{x}^*, \overline{i}^*, \overline{c}^*, \overline{x})
            \bigr)\, \wedge\,
            }
           \mathbf{x}_g \wedge 
           F(\overline{x}) \wedge 
           T(\overline{x}, \overline{i}, \overline{c}, \overline{x}') \wedge 
           F(\overline{x}') 
\end{equation}           
\begin{wrapfigure}[7]{r}{0.21\textwidth}
\vspace{-4mm}
\centering
  \includegraphics[width=0.20\textwidth]{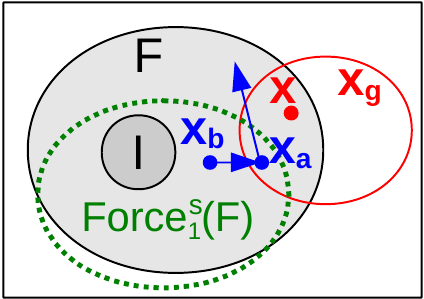}
\end{wrapfigure}
is unsatisfiable.  Only the parts of the formula that are marked in blue are 
new.  The variables $\overline{x}^*$, $\overline{i}^*$ and $\overline{c}^*$ are
previous-state copies of $\overline{x}$, $\overline{i}$ and $\overline{c}$, 
respectively.  The original version of the formula was $\true$ if some state 
$\mathbf{x}_a \models F\wedge \mathbf{x}_g \wedge \FS(F)$ exists. The improved 
formula also requires that $\mathbf{x}_a$ is either an initial state, or has a 
predecessor $\mathbf{x}_b$ in $F \wedge \neg \mathbf{x}_g$.  This is illustrated 
in the figure on the right.  If neither of these two criteria holds, then 
we 
know that $I(\overline{x}) \rightarrow \neg \mathbf{x}_a$ and $\neg \mathbf{x}_a 
\wedge F(\overline{x}) \wedge \mathbf{x}_g \wedge 
T(\overline{x},\overline{i},\overline{c},\overline{x}') \rightarrow \neg 
\mathbf{x}_a'$.  This means that $\neg \mathbf{x}_a$ is inductive relative to $F 
\wedge \neg \mathbf{x}_g$, so $\mathbf{x}_a$ is unreachable and can thus be 
removed even if it could potentially be part of the winning region.  Note that 
we do not require inductiveness relative to $F$ but rather relative to $F \wedge 
\neg \mathbf{x}_g$.  The intuitive reason is that $F$ will be updated to 
$F\wedge \neg \mathbf{x}_g$, so a predecessor $\mathbf{x}_b$ in $F\wedge 
\mathbf{x}_g$ does not count.  The following theorem states that this procedure
cannot prune $F$ too much.

\begin{theorem}\label{thm:gen_reach1}
For a realizable specification $\mathcal{S}$, if Equation~(\ref{eq:qbf_rg})
is unsatisfiable, then $F \wedge \mathbf{x}_g$ cannot contain a state 
$\mathbf{x}_a$ from which (a) the system player can enforce that $F$ is 
visited in one step, and (b) which is reachable in some implementation of 
$\mathcal{S}$.
\end{theorem}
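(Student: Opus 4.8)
The plan is to prove the contrapositive: assuming that some $\mathbf{x}_a \models F \wedge \mathbf{x}_g$ satisfies both (a) $\mathbf{x}_a \models \FS(F)$ and (b) reachability in some realizing implementation, I would exhibit a satisfying assignment of Equation~(\ref{eq:qbf_rg}). First I would read off exactly what a model of~(\ref{eq:qbf_rg}) is. The innermost block $\forall \overline{i} \scope \exists \overline{c}, \overline{x}' \scope T \wedge F(\overline{x}')$, evaluated at the state bound by the outer $\exists \overline{x}$, is by definition $\FS(F)$ at that state. The blue predecessor disjunction is $\true$ precisely when the chosen state either satisfies $I$ or is the $T$-successor of some state in $F \wedge \neg \mathbf{x}_g$. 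Hence Equation~(\ref{eq:qbf_rg}) is satisfiable if and only if there exists a state $\mathbf{s}$ with $\mathbf{s} \models F \wedge \mathbf{x}_g \wedge \FS(F)$ that is either initial or reachable in one step from some state of $F \wedge \neg \mathbf{x}_g$. The goal thus reduces to producing such an $\mathbf{s}$.

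Since $\mathbf{x}_a$ is reachable in some implementation $f$ that realizes $\mathcal{S}$, there is a finite play $\mathbf{x}_0, \mathbf{x}_1, \ldots, \mathbf{x}_n = \mathbf{x}_a$ with $\mathbf{x}_0 \models I$ and consecutive states linked by $T$ under the control values prescribed by $f$. The key auxiliary fact I would establish is that \emph{every} state of this play lies in the winning region $W$. Because $f$ is memoryless and realizes $\mathcal{S}$, from each reachable state $\mathbf{x}_j$ the very same function $f$ keeps all continuations safe against \emph{every} environment input, so $f$ witnesses a winning strategy starting from $\mathbf{x}_j$; therefore $\mathbf{x}_j \models W$. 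Combining this with the invariant $W \rightarrow F$ proven for \textsc{QbfWin} in Theorem~\ref{thm:QbfWin_correct}, the winning-area property $W \rightarrow \FS(W)$ (Definition~\ref{def:winset}, as $W$ is the largest winning area), and monotonicity of $\FS$ (enlarging the target set can only enlarge the set of states that can force a visit to it), I obtain $\mathbf{x}_j \models F \wedge \FS(F)$ for all $j$.

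To locate the witness I would scan the play backward from $\mathbf{x}_a$: let $m$ be the least index such that every state $\mathbf{x}_m, \ldots, \mathbf{x}_n$ satisfies $\mathbf{x}_g$; this suffix is nonempty since $\mathbf{x}_n = \mathbf{x}_a \models \mathbf{x}_g$. By the previous paragraph, $\mathbf{x}_m \models F \wedge \mathbf{x}_g \wedge \FS(F)$. If $m = 0$ then $\mathbf{x}_m \models I$; otherwise minimality forces $\mathbf{x}_{m-1} \models \neg \mathbf{x}_g$, and since $\mathbf{x}_{m-1} \models F$ and $T$ connects $\mathbf{x}_{m-1}$ to $\mathbf{x}_m$ (recall $T$ is deterministic and complete, Definition~\ref{def:safety}), the state $\mathbf{x}_m$ has a predecessor in $F \wedge \neg \mathbf{x}_g$. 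In either case $\mathbf{x}_m$ meets the condition isolated in the first paragraph, so Equation~(\ref{eq:qbf_rg}) is satisfiable, contradicting the hypothesis. I expect the main obstacle to be the auxiliary claim that every reachable state of a realizing implementation lies in $W$: it hinges on reading ``reachable in some implementation of $\mathcal{S}$'' as reachability under a \emph{realizing} $f$, and on the memorylessness of $f$, and it is precisely what lets condition~(b) connect to the region $F$ appearing inside~(\ref{eq:qbf_rg}). Once this fact and the monotonicity of $\FS$ are in place, parsing the quantifier prefix as $\FS(F)$ together with the predecessor disjunction, and the backward scan, are routine.
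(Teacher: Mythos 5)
Your proposal is correct and follows essentially the same route as the paper's proof: both take the play from an initial state to $\mathbf{x}_a$ in a realizing implementation, observe all its states lie in $W$ and hence in $F$, pick the least index from which the suffix stays in $F\wedge\mathbf{x}_g$, and note that this state is initial or has a predecessor in $F\wedge\neg\mathbf{x}_g$, contradicting the unsatisfiability of Equation~(\ref{eq:qbf_rg}). The only cosmetic difference is that you exhibit the witness for satisfiability directly (via $W\rightarrow\FS(W)$ and monotonicity of $\FS$), while the paper phrases the final step as ``$\mathbf{x}_k$ cannot be in $W$'' — the same argument read in the other direction.
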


\noindent
\textsc{Proof.}
By contradiction, assume that there exists such as state $\mathbf{x}_a$.  Any 
implementation of $\mathcal{S}$ must only visit states in $W$.  Hence, for 
$\mathbf{x}_a$ to be reachable in an implementation,  there must exist a 
play $\mathbf{x}_0,\ldots, \mathbf{x}_n, \ldots$ of the game $\mathcal{S}$ such 
that
\begin{compactitem}
\item $\mathbf{x}_0 \models I$ (the play starts in the initial states), 
\item $\mathbf{x}_n = \mathbf{x}_a$ (the play reaches $\mathbf{x}_a$ at 
some step $n$), 
\item $n\geq 1$ (that is, $\mathbf{x}_a$ cannot be initial because this would 
satisfy Equation~(\ref{eq:qbf_rg})), and 
\item $\mathbf{x}_j \models W$ for all $0\leq j\leq n$ (all states in the play 
are in the winning region and thus 
potentially~reachable).  
\end{compactitem}

\begin{wrapfigure}[7]{r}{0.21\textwidth}
\vspace{-4mm}
\centering
  \includegraphics[width=0.20\textwidth]{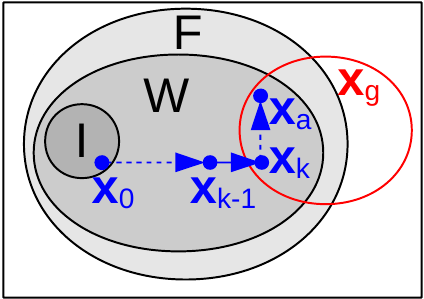}
\end{wrapfigure}
\noindent
Such a play is illustrated on the right. Since 
$\mathbf{x}_a \models F \wedge \mathbf{x}_g$, there must exist a smallest $k\leq 
n$ such that $\mathbf{x}_j \models F\wedge \mathbf{x}_g$ for all $k \leq j \leq 
n$.  Now, $\mathbf{x}_k$ is either initial or has a predecessor 
$\mathbf{x}_{k-1}$ in $F \wedge \neg \mathbf{x}_g$ (because 
$\mathbf{x}_{k-1}\models W, W\rightarrow F$, and $\mathbf{x}_{k-1}\not\models 
F\wedge \mathbf{x}_g$). Thus, $\mathbf{x}_k$ satisfies the new (blue) part of 
Equation~(\ref{eq:qbf_rg}).  Since Equation~(\ref{eq:qbf_rg}) is 
unsatisfiable, the system player cannot enforce that the play traverses from 
$\mathbf{x}_k$ to 
$F$.  Hence, $\mathbf{x}_k$ cannot be part of $W$. This contradiction means that 
such a path of reachable states ending in $\mathbf{x}_a$ cannot exist if 
Equation~(\ref{eq:qbf_rg}) is unsatisfiable.
\qed
Theorem~\ref{thm:gen_reach1} only considers the case of a realizable 
specification.  In case of unrealizability, the correctness argument is even 
simpler:  Optimization \textsf{RG} cannot make \textsc{QbfWin} identify an 
unrealizable specification as realizable because the additional conjuncts in 
Equation~(\ref{eq:qbf_rg}) can only have the effect that \emph{more} states are 
removed from $F$, thus $F$ can only shrink below $I$ faster.  Another important 
remark is that \textsc{QbfWin} no longer computes the winning region when 
optimization \textsf{RG} is enabled, but only a winning area according to 
Definition~\ref{def:winset}.  The reason is that states of $W$ may be missing in 
$F$ if they are unreachable.

\subsubsection{Optimization \textsf{RC}: Reachability for Counterexample 
Computation}\label{sec:hw_rc}

Similar to improving the generalization of counterexamples using unreachability 
information, we can also restrict their computation to potentially reachable 
states.  In addition to $\mathbf{x} \models F \wedge \FE(\neg F)$, we require 
that the counterexample $\mathbf{x}$ is either an initial state, or has a 
predecessor in $F$ that is different from~$\mathbf{x}$.  If neither of these 
two conditions is satisfied, then $\mathbf{x}$ can only be unreachable and, 
thus, does not have to be removed from $F$.

\mypara{Realization.}
In \textsc{QbfWin}, these additional constraints can be imposed by modifying 
the \acs{QBF} query in Line~\ref{alg:QbfWin:check} to
\begin{equation}\label{eq:qbf_rc}
\qbfsatmodel\bigl(
      \exists \mrk{\overline{x}^*,\overline{i}^*,\overline{c}^*,}
              \overline{x},\overline{i} \scope
      \forall \overline{c} \scope
      \exists \overline{x}' \scope \\
      \mrk{
      \bigl(
       I(\overline{x}) \vee 
       \overline{x}^*\neq \overline{x} \wedge
       F(\overline{x}^*) \wedge
       T(\overline{x}^*, \overline{i}^*, \overline{c}^*, \overline{x})       
      \bigr) \wedge
      }
      F(\overline{x}) \wedge 
      T(\overline{x}, \overline{i}, \overline{c}, \overline{x}') \wedge 
      \neg F(\overline{x}')\bigr).
\end{equation}
As before, the new parts are marked in blue, and $\overline{x}^*$, 
$\overline{i}^*$, and $\overline{c}^*$ are the previous-state copies of 
$\overline{x}$, $\overline{i}$, and $\overline{c}$, respectively.  The 
expression $\overline{x}^*\neq \overline{x}$ requires that at least one state 
variable $x\in \overline{x}$ has a different value than its 
previous-state copy.

\mypara{Consequences.}
When executing \textsc{LearnQbf} with optimization \textsf{RC} on a realizable 
specification, the returned formula $F$ may not be a winning area according to 
Definition~\ref{def:winset}: item (3) of may be violated because from some unreachable states 
of $F$, it may be that the system player cannot enforce that $F$ is reached in 
the next step.  Consequently, a system implementation can no longer be computed 
as a Skolem 
function for the variables $\overline{c}$ in the formula
$\forall \overline{x}, \overline{i} \scope
  \exists \overline{c}, \overline{x}' \scope 
  T(\overline{x},\overline{i},\overline{c},\overline{x}') \wedge
  \bigr(F(\overline{x}) \rightarrow F(\overline{x}') \bigl)
$
because this formula no longer holds true.  Still, a system implementation can 
be extracted, e.g., by computing Skolem functions for the $\overline{c}$-signals 
in the negation of Equation~(\ref{eq:qbf_rc}).  

\mypara{Configuration.}
In our experiments, we achieve a significant speedup when applying optimization 
\textsf{RG}, especially with our \acs{SAT} solver based algorithm 
\textsc{SatWin1}.  Optimization \textsf{RC} also gives some speedup for certain 
benchmarks, but does not pay off on average.  Hence, by default, we apply 
optimization \textsf{RG} but disable optimization \textsf{RC}.

\subsection{Template-Based Approach}\label{sec:hw:templ}

In the previous sections, a winning area was computed iteratively by starting 
with some initial approximation and then refining this approximation based on 
counterexamples.  This section presents a completely different approach, where 
we simply assert the constraints that constitute a winning area and compute a 
solution in one go.

\mypara{Basic idea.}
We define a generic template $H(\overline{x},\overline{k})$ \index{template} 
for the winning area 
$F(\overline{x})$ we wish to construct. $H(\overline{x},\overline{k})$ is 
a formula over the state variables $\overline{x}$ and a vector of Boolean 
variables $\overline{k}$, which act as template parameters.  Concrete values 
$\mathbf{k}$ for the parameters $\overline{k}$ instantiate a concrete formula 
$F(\overline{x}) = H(\overline{x},\mathbf{k})$ over the state variables 
$\overline{x}$.  This reduces the search for a propositional formula (the 
winning area) to a search for Boolean template parameter values.  We can now 
compute a winning area according to Definition~\ref{def:winset} with a single 
\acs{QBF} solver call $(sat, \mathbf{k}) := $
\begin{linenomath*}
\begin{align}
\qbfsatmodel\Bigl(
\exists \overline{k} \scope
\forall \overline{x}, \overline{i} \scope
\exists \overline{c}, \overline{x}' \scope
& \bigl(I(\overline{x}) \rightarrow H(\overline{x}, \overline{k})\bigr) 
  \wedge
 \bigl(H(\overline{x}, \overline{k}) \rightarrow P(\overline{x}) \bigr) 
  \wedge %
& \bigl(H(\overline{x}, \overline{k}) \rightarrow 
  \bigl(T(\overline{x}, \overline{i}, \overline{c}, \overline{x}') \wedge 
  H(\overline{x}', \overline{k})\bigr)\bigr)\Bigr)
  \label{eq:templ}
\end{align}
\end{linenomath*}
With the resulting template parameter values $\mathbf{k}$, the induced 
instantiation $F(\overline{x}) = H(\overline{x},\mathbf{k})$ of 
$H(\overline{x},\overline{k})$ is then computed.  

\mypara{Completeness of templates.}  A template $H(\overline{x},\overline{k})$ 
does not necessarily have to be complete in the sense that it can represent 
\emph{every} formula $F(\overline{x})$ over the state variables with some 
choice for the parameters $\overline{k}$.  We rather restrict the expressiveness 
of templates deliberately in order to reduce the search space for the solver.  
The underlying assumption is that many specifications have a winning area that 
can be represented as a ``simple'' formula over the state variables.  
We will use templates that are parameterized in their 
expressive power.  As a general strategy, we will start with a low value for 
some expressiveness parameter $N$, and increase $N$ as long as 
Equation~(\ref{eq:templ}) is unsatisfiable.  Detecting unrealizability is 
difficult with this approach, though.  Only if Equation~(\ref{eq:templ}) is 
unsatisfiable for a template that can represent \emph{every} function 
$F(\overline{x})$ over the state variables, we can conclude that the 
corresponding specification is unrealizable.

\mypara{Concrete realizations.}  While the basic idea of the template-based 
approach is simple, there are many ways to realize it.  One degree of freedom 
lies in the definition of the generic template $H(\overline{x},\overline{k})$ 
and its parameters.  Two concrete suggestions will be made in the following 
subsections.  Another source of freedom lies in the way to solve 
Equation~(\ref{eq:templ}).  An approach using \acs{SAT} solvers instead of a 
single call to a \acs{QBF} solver will be presented in 
Section~\ref{sec:hw:templsat}.

\subsubsection{\acs{CNF} Templates} \label{sec:cnfteml}

\begin{wrapfigure}[14]{r}{0.46\textwidth}
\vspace{-4mm}
\centering
  \includegraphics[width=0.45\textwidth]{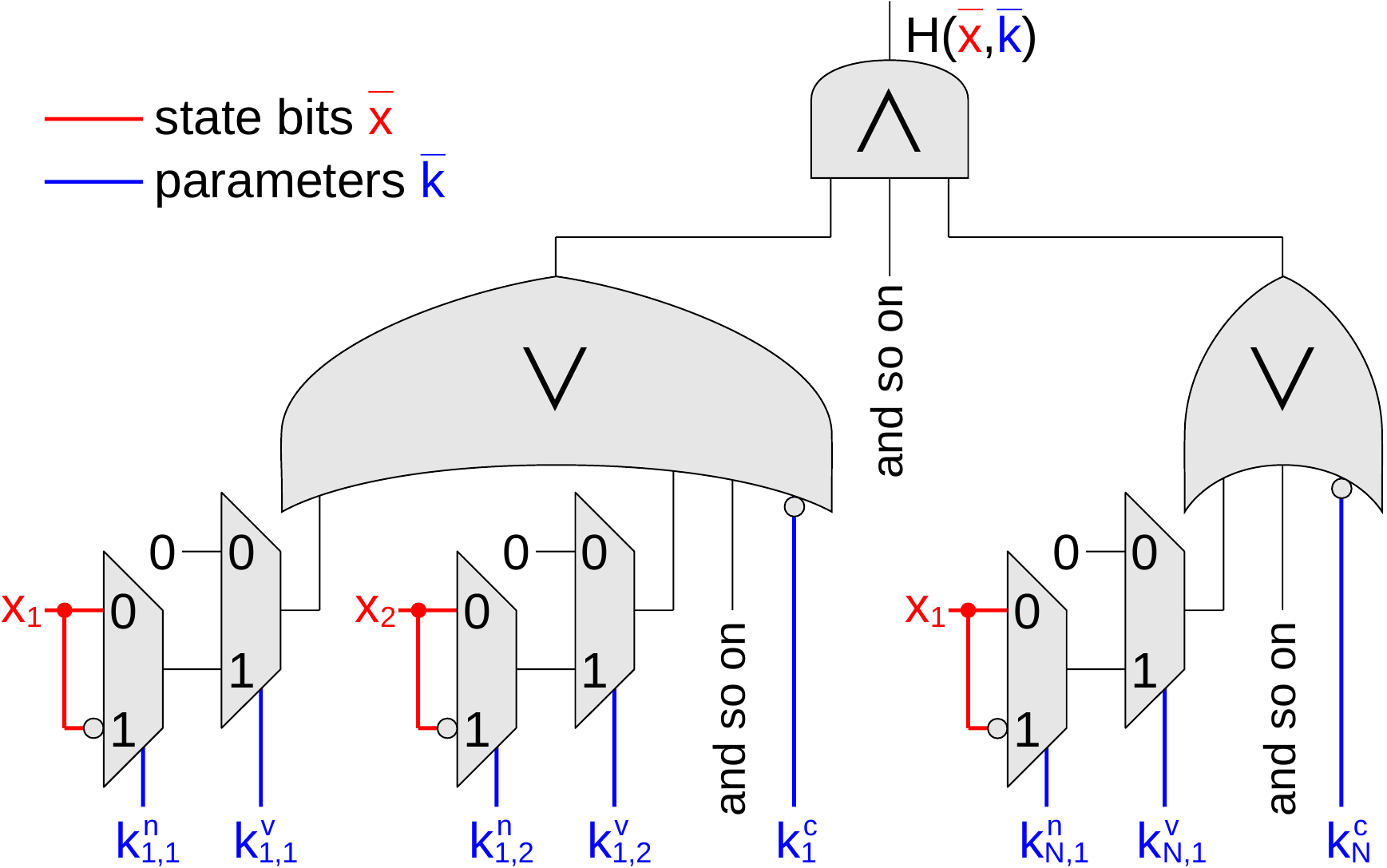}
\caption{Circuit illustration of a generic \acs{CNF} template.}
\label{fig:templ_cnf}
\end{wrapfigure}
Figure~\ref{fig:templ_cnf} shows a circuit that illustrates how the template 
$H(\overline{x}, \overline{k})$ can be defined as a parameterized \acs{CNF} 
formula over the state variables $\overline{x}$.  That is, $F(\overline{x})$ is 
represented as a conjunction of clauses over the state variables.  Template 
parameters $\overline{k}$ define the shape of the clauses.  The trapezoids in 
Figure~\ref{fig:templ_cnf} are multiplexers that select one of the inputs on the 
left depending on the signal value fed in from below.  A \acs{CNF} 
encoding of this circuit such that it can be used in Equation~(\ref{eq:templ}) is 
straightforward~\cite{Tseitin83}.

The construction in Figure~\ref{fig:templ_cnf} works as follows.  First, a 
maximum number $N$ of clauses is fixed.~This number configures the 
expressiveness of the template.  Next, three vectors $\overline{k^c}$, 
$\overline{k^v}$, $\overline{k^n}$ of template parameters are introduced.  
Together, they form $\overline{k} = \overline{k^c} \cup \overline{k^v}\cup 
\overline{k^n}$.  The meaning of the parameters is as~follows.
\begin{compactitem}
\item If parameter $k^c_i$ with $1\leq i\leq N$ is $\true$, then clause $i$ is 
used in $F(\overline{x})$, otherwise not.  This is achieved by making the clause 
$\true$ (and thus irrelevant in the conjunction of clauses) if $k^c_i$ is 
$\false$.
\item If parameter $k^v_{i,j}$ with $1\leq i\leq N$ and $1\leq j\leq 
|\overline{x}|$ is $\true$, then the state variable $x_j\in\overline{x}$ appears 
in clause $i$ of $F(\overline{x})$, otherwise not.  This is realized with a 
multiplexer that sets the corresponding literal in the clause to $\false$ 
(thus making it irrelevant in the disjunction) if $k^v_{i,j}$ is $\false$.
\item If parameter $k^n_{i,j}$ is $\true$, then the state variable 
$x_j$ can appear in clause $i$ only negated, otherwise only unnegated. This is 
realized with a multiplexer that selects between $x_j$ and $\neg x_j$.  If 
$k^v_{i,j}$ is $\false$, then $k^n_{i,j}$ is irrelevant.  
\end{compactitem}
This results in $|\overline{k}|=2\cdot N \cdot |\overline{x}| + N$ template 
parameters.
\begin{example}\label{ex:templ_cnf}
For $\overline{x} = (x_1,x_2,x_3)$ and $N=3$, the \acs{CNF} $(x_1 \vee \neg 
x_2) \wedge (\neg x_3)$ can be realized with
\begin{compactitem}
\item $k^c_1=k^c_2=\true$ and $k^c_3=\false$ (only clause $1$ and $2$ are used),
\item $k^v_{1,1}=k^v_{1,2}=\true$ and $k^v_{1,3}=\false$ (clause $1$ contains 
$x_1$ and $x_2$ but not $x_3$),
\item $k^v_{2,3}=\true$ and $k^v_{2,1}=k^v_{2,2}=\false$ (clause $2$ contains 
$x_3$ but not $x_1$ and not $x_2$),
\item $k^n_{1,1}=\false$ and $k^n_{1,2}=\true$ (clause $1$ contains 
$x_1$ unnegated and $x_2$ negated), and
\item $k^n_{2,3}=\true$ (clause $2$ contains $x_3$ negated).
\end{compactitem}
All other parameters are irrelevant.
\end{example}

Choosing $N$ is delicate.  If $N$ is too low, we will not find a solution, even 
if one exists.  If it is too high, we waste computational resources and may find 
an unnecessarily complex winning region. In our implementation, we solve this 
dilemma by starting with $N=1$ and increasing $N$ by one upon failure until we 
reach $N=4$.  From there, we double $N$ upon failure.  We stop if we get a 
negative answer for $N \ge 2^{|\overline{x}|}$ because any Boolean formula over 
$\overline{x}$ can be represented in a \acs{CNF} with less 
than $2^{|\overline{x}|}$ 
clauses.

\subsubsection{AND-Inverter Graph Templates}

Another option is to define the template $H(\overline{x}, \overline{k})$ as a 
network of AND-gates and inverters, fed by the state variables $\overline{x}$. 
The parameters $\overline{k}$ define the connections between the gates 
and the state variables, as well as the negation of signals.

\begin{wrapfigure}[12]{r}{0.61\textwidth}
\vspace{-4mm}
\centering
  \includegraphics[width=0.6\textwidth]{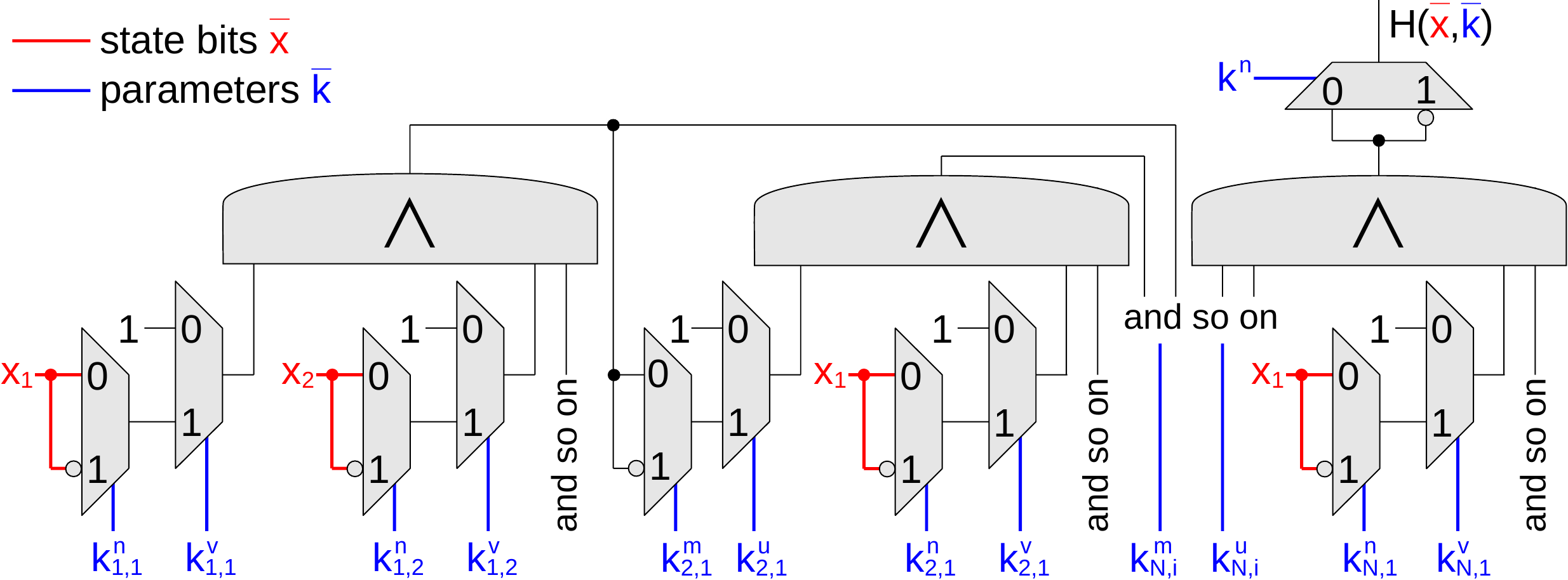}
\caption{Circuit illustration of a generic AND-inverter graph template.}
\label{fig:templ_aig}
\end{wrapfigure}
Figure~\ref{fig:templ_aig} gives a concrete proposal for defining such a 
template.  The template is again illustrated as a circuit, but can easily be 
encoded into \acs{CNF}.  A maximum number $N$ of AND-gates is chosen first.  The 
first gate can have all state variables as input, either negated or unnegated.  
The second gate can also have the output of the first gate as input. The third 
gate can have the output of the first two gates as additional inputs, and so on. 
 The output of the last AND-gate defines $H(\overline{x}, \overline{k})$, again 
with a possible negation.  The template parameters $\overline{k}$ define which 
inputs of a gates are actually used or ignored, and which inputs are used 
negated or unnegated.  We distinguish five groups of parameters.

\begin{compactitem}
\item If parameter $k^v_{i,j}$ with $1\leq i\leq N$ and $1\leq j\leq 
|\overline{x}|$ is $\true$, then $x_j\in\overline{x}$ appears 
as input of gate $i$, otherwise not. 
\item If parameter $k^n_{i,j}$ with $1\leq i\leq N$ and $1\leq j\leq 
|\overline{x}|$ is $\true$, then gate $i$ can only use the negated variable 
$x_j$ as input, otherwise only the unnegated variable.
\item If $k^u_{i,j}$ with $1\leq i\leq N$ and $1\leq j < i$ is 
$\true$, then the output of gate $j$ is an input of gate $i$, otherwise 
not. 
\item If $k^m_{i,j}$ with $1\leq i\leq N$ and $1\leq j < i$ is 
$\true$, then gate $i$ can only use the negated output of gate $j$ as input, 
otherwise only the unnegated output.
\item The single parameter $k^n$ defines if the output of the final gate defines
$H(\overline{x}, \overline{k})$ or $\neg H(\overline{x}, \overline{k})$.
\end{compactitem}
This gives $|\overline{k}| = N \cdot (2 \cdot |\overline{x}|+N-1) + 1$ 
template parameters.

\begin{example}\label{ex:templ_aig}
We continue Example~\ref{ex:templ_cnf}, where $\overline{x} = (x_1,x_2,x_3)$,
$N=3$ and $F(\overline{x}) = (x_1 \vee \neg x_2) \wedge (\neg x_3)$, which can
be rewritten to $\neg (\neg x_1 \wedge x_2) \wedge (\neg x_3)$.  This formula 
can be realized with
\begin{compactitem}
\item $k^v_{2,1}=k^v_{2,2}=\true$ and $k^v_{2,3}=\false$ (gate $2$ uses 
$x_1$ and $x_2$ as input but not $x_3$),
\item $k^n_{2,1}=\true$ and $k^n_{2,2}=\false$ (gate $2$ uses 
$x_1$ negated and $x_2$ unnegated),
\item $k^u_{2,1}=\false$ (gate $2$ ignores the output of gate $1$),
\item $k^v_{3,3}=\true$ and $k^v_{3,1}=k^v_{3,2}=\false$ (gate $3$ uses 
$x_3$ as input but not $x_1$ and not $x_2$),
\item $k^n_{3,3}=\true$ (gate $3$ uses $x_3$ negated),
\item $k^u_{3,2}=k^m_{3,2}=\true$ and $k^u_{3,1}=\false$ (gate $3$ uses the 
negated output of gate $2$ but ignores the output of gate $1$), and
\item $k^n=\false$ (the output $H(\overline{x}, \overline{k})$ is defined by 
the unnegated output of gate $3$).
\end{compactitem}
All other parameters are irrelevant.  In particular, the output of gate $1$ is 
completely ignored.
\end{example}

In our implementation, choosing $N$ works in the same way as for the \acs{CNF} 
template: starting with $N=1$, $N$ is increased by $1$ in case of 
unsatisfiability of Equation~(\ref{eq:templ}) until $N=4$ is reached.  From there, 
$N$ is doubled upon failure.  There is a straightforward way to represent a 
\acs{CNF} with $N$ clauses as a network of $N+1$ AND-gates.  Hence, the 
criterion for detecting unrealizability with \acs{CNF} templates can also be 
applied here: If Equation~(\ref{eq:templ}) is unsatisfiable for $N > 
2^{|\overline{x}|}$, the specification must be unrealizable.

\subsubsection{Implementation with \acs{SAT} Solvers}\label{sec:hw:templsat}

In this section, we present an extension of the \acf{CEGIS} approach that allows 
us to compute satisfying assignments of Equation~(\ref{eq:templ}) with \acs{SAT} 
solvers instead of a \acs{QBF} solver.

\mypara{Basic idea.}
Recall that \ac{CEGIS} (see Section~\ref{sec:prelim:cegis}) is an approach to 
compute satisfying assignments in formulas of the form
$\exists \overline{e} \scope \forall \overline{u} \scope 
  F(\overline{e}, \overline{u})$
by iterative refinements of a solution candidate.  With 
\[M(\overline{k}, \overline{x}, \overline{i}, \overline{c}, \overline{x}') =
\bigl(I(\overline{x}) \rightarrow H(\overline{x}, \overline{k})\bigr) 
  \wedge
\bigl(H(\overline{x}, \overline{k}) \rightarrow P(\overline{x}) \bigr) 
  \wedge
\Bigl(H(\overline{x}, \overline{k}) \rightarrow \notag
  \bigl(T(\overline{x}, \overline{i}, \overline{c}, \overline{x}') \wedge 
  H(\overline{x}', \overline{k})\bigr)\Bigr)
\]
being an abbreviation for the matrix of the \acs{QBF} in 
Equation~(\ref{eq:templ}), our task is now to compute a satisfying assignment for 
the parameters $\overline{k}$ in
$\exists \overline{k} \scope 
 \forall \overline{x}, \overline{i} \scope 
 \exists \overline{c}, \overline{x}' \scope 
  M(\overline{k}, \overline{x}, \overline{i}, \overline{c}, \overline{x}')$.
Hence, there is an additional existential quantifier on the innermost level.
This existential quantifier does not affect the computation of solution 
candidates significantly:  Candidates are satisfying assignments for the 
variables $\overline{k}$ in
$\bigwedge_{(\mathbf{x},\mathbf{i})\in D} 
  \exists \overline{c},\overline{x}' \scope  
  M(\overline{k}, \mathbf{x},\mathbf{i},\overline{c},\overline{x}'),$
where the existential quantification of $\overline{c}$ and $\overline{x}'$ can 
be handled by renaming these variables in every copy of $M$ and then calling a 
\acs{SAT} solver.  The computation of counterexamples, i.e., values for the 
variables $\overline{x}$ and $\overline{i}$, becomes more intricate, though.  
Instead of a satisfying assignment for $\neg F(\mathbf{e}, \overline{u})$, we 
now need to compute an assignment $\mathbf{x},\mathbf{i}$ for the variables 
$\overline{x}, \overline{i}$ in $\neg \exists \overline{c}, \overline{x}' 
\scope M(\mathbf{k}, \overline{x}, \overline{i}, \overline{c}, 
\overline{x}'),$ 
where $\mathbf{k}$ represents fixed values for the variables $\overline{k}$.
The negation turns the existential quantification into a universal one.  The 
resulting quantifier alternation prevents us from computing counterexamples with 
a single call to a \acs{SAT} solver.  A \acs{QBF} solver could be used, but the 
idea of this section is to substitute \acs{QBF} solving with plain \acs{SAT} 
solving.  Hence, we will use an iterative approach that is similar to 
\textsc{SatWin1} in Algorithm~\ref{alg:SatWin1} to compute counterexamples.

\begin{algorithm}[tb]
\caption[\textsc{TemplWinSat}: An algorithm to compute template 
instantiations using \acs{SAT} solvers]
{\textsc{TemplWinSat}: An algorithm to compute template 
instantiations using \acs{SAT} solvers.}
\label{alg:TemplWinSat}
\begin{algorithmic}[1]
\ProcedureRet{TemplWinSat}
             {H(\overline{x}, \overline{k}),
              (\overline{x}, \overline{i}, \overline{c}, I, T, P)}
             {A winning area $F(\overline{x})$ or ``\textsf{fail}''}
  \State $G(\overline{k}, \overline{t}) := \true$ \label{alg:TemplWinSat:init}
  \While{$\true$}
    \If{$\textsf{sat} = \false$ in 
        $(\textsf{sat},\mathbf{k}) := 
            \propsatmodel\bigl(G(\overline{k},\overline{t}) \bigr)$}
        \label{alg:TemplWinSat:comp}
      \State \textbf{return} ``\textsf{fail}''
    \EndIf
    \If{$\textsf{correct} = \true$ in
    $(\textsf{correct},\mathbf{x},\mathbf{i}) := 
           \textsc{Check}\bigl(
                         H(\overline{x}, \mathbf{k}),
                         (\overline{x}, \overline{i}, \overline{c}, I, T, P)
                         \bigr)$
    }
     \label{alg:TemplWinSat:check}
      \State \textbf{return} $H(\overline{x}, \mathbf{k})$
    \EndIf
    \State $\overline{t}_c := \textsf{CreateFreshCopy}(\overline{c})$, \quad
           $\overline{t}_x := \textsf{CreateFreshCopy}(\overline{x}')$
    \State $G(\overline{k}, \overline{t}) := G(\overline{k}, \overline{t}) 
\wedge
\bigl(I(\mathbf{x}) \rightarrow H(\mathbf{x},\overline{k})\bigr) 
\wedge
\bigl(H(\mathbf{x}, \overline{k}) \rightarrow P(\mathbf{x}) \bigr) 
\wedge
\bigl(H(\mathbf{x}, \overline{k}) \rightarrow
  \bigl(T(\mathbf{x}, \mathbf{i}, \overline{t}_c, \overline{t}_x) \wedge 
  H(\overline{t}_x, \overline{k})\bigr)\bigr)$
                               \label{alg:TemplWinSat:refine}
  \EndWhile
\EndProcedure  

\ProcedureRet{Check}
             {F(\overline{x}),
              (\overline{x}, \overline{i}, \overline{c}, I, T, P)}
             {$(\textsf{correct},\mathbf{x},\mathbf{i})$}
  \If{$\mathsf{sat}=\true$ in 
  $(\mathsf{sat},\mathbf{x}) := \propsatmodel\bigl( 
         (I(\overline{x}) \wedge \neg F(\overline{x})) \vee 
         (F(\overline{x}) \wedge \neg P(\overline{x})) \bigr) $
  }\label{alg:TemplWinSat:cfo}
    \State \textbf{return} $(\true, 
                             \mathbf{x}, 
                             \bigwedge_{i\in \overline{i}} \neg i)$
  \EndIf
  \State $U(\overline{x}, \overline{i}) := \true$
  \While{$\true$} 
    \If{$\mathsf{sat} = \false$ in
    $(\mathsf{sat},\mathbf{x},\mathbf{i}) := \propsatmodel\bigl(
            F(\overline{x}) \wedge 
            U(\overline{x}, \overline{i}) \wedge 
            T(\overline{x}, \overline{i}, \overline{c}, \overline{x}') \wedge 
            \neg F(\overline{x}')\bigr)$
    }\label{alg:TemplWinSat:c20}
      \State \textbf{return} $(\true, \true, \true)$
    \EndIf
    \If{$\mathsf{sat}=\false$ in
    $(\mathsf{sat},\mathbf{c}) := \propsatmodel\bigl(
               F(\overline{x}) \wedge 
               \mathbf{x} \wedge \mathbf{i} \wedge 
               T(\overline{x}, \overline{i}, \overline{c}, \overline{x}') 
               \wedge 
               F(\overline{x}')\bigr)$
    }\label{alg:TemplWinSat:c3}
      \State \textbf{return} $(\false, \mathbf{x},  \mathbf{i})$    
    \Else
      \State $U(\overline{x}, \overline{i}) := U(\overline{x}, \overline{i})
             \wedge \neg \propsatmincore\bigl(\mathbf{x} \wedge \mathbf{i},
             \mathbf{c} \wedge 
             F(\overline{x}) \wedge 
             U(\overline{x}, \overline{i}) \wedge 
             T(\overline{x},\overline{i},\overline{c},\overline{x}') \wedge
             \neg F(\overline{x}')\bigr)$ \label{alg:TemplWinSat:c21}
    \EndIf
  \EndWhile
\EndProcedure  
\end{algorithmic}
\end{algorithm}

\mypara{Algorithm.}
The procedure \textsc{TemplWinSat} in Algorithm~\ref{alg:TemplWinSat} takes as 
input a template $H(\overline{x}, \overline{k})$ for 
a winning area as well as a safety specification $\mathcal{S}$. As output, it 
returns either a concrete winning area $F(\overline{x})$ as an instantiation of 
the template $H(\overline{x}, \overline{k})$, or ``\textsf{fail}'' of no 
instantiation of $H(\overline{x}, \overline{k})$ can be a winning area.  The 
structure of the algorithm is the same as for \textsc{CegisSmt} in 
Algorithm~\ref{alg:CegisSmt}:  The formula $G(\overline{k}, \overline{t})$ 
accumulates constraints that the template parameters $\overline{k}$ have to 
satisfy, where $\overline{t}$ is a vector of auxiliary variables.  
Line~\ref{alg:TemplWinSat:comp} computes candidate template parameter values 
$\mathbf{k}$ in form of a satisfying assignment for $G$.  If the formula is 
unsatisfiable, then no template instantiation can be a winning area and the 
procedure returns ``\textsf{fail}''.  If the formula is satisfiable, a candidate 
winning area $F(\overline{x}) = H(\overline{x}, \mathbf{k})$ is computed using 
the parameter values $\mathbf{k}$.  Next, the candidate is checked in 
Line~\ref{alg:TemplWinSat:check}.  This step is different to \textsc{CegisSmt} 
in Algorithm~\ref{alg:CegisSmt} and explained in the next paragraph.  If the 
candidate is correct, it is returned.  Otherwise, the procedure \textsc{Check} 
returns a counterexample in form of a satisfying assignment 
$\mathbf{x},\mathbf{i}$ for the variables $\overline{x}, \overline{i}$.  The 
meaning of this counterexample is that
$
\exists \overline{c}, \overline{x}'\scope
\bigl(I(\mathbf{x}) \rightarrow H(\mathbf{x},\mathbf{k})\bigr) 
\wedge
\bigl(H(\mathbf{x}, \mathbf{k}) \rightarrow P(\mathbf{x}) \bigr) 
\wedge
\bigl(H(\mathbf{x}, \mathbf{k}) \rightarrow
  \bigl(T(\mathbf{x}, \mathbf{i}, \overline{c}, \overline{x}') \wedge 
  H(\mathbf{x}', \mathbf{k})\bigr)\bigr)
$
does \emph{not} hold, thus witnessing that $\mathbf{k}$ cannot be a solution to 
Equation~(\ref{eq:templ}) yet.  To make sure the candidate of the next iteration 
works also for the counterexample $\mathbf{x},\mathbf{i}$, the constraints 
on $\overline{k}$ are refined accordingly in Line~\ref{alg:TemplWinSat:refine}. 
The variables $\overline{c}$ and  $\overline{x}'$ are renamed to fresh 
auxiliary variables in order to account for their existential quantification.

\mypara{Counterexample computation.}
The procedure \textsc{Check} in Algorithm~\ref{alg:TemplWinSat} is a helper 
routine for \textsc{TemplWinSat} that checks if a given candidate 
$F(\overline{x})$ is a winning area.  It returns $\textsf{correct}=\true$ if 
this is the case.  Otherwise, it sets $\textsf{correct}=\false$ and returns a 
counterexample $\mathbf{x},\mathbf{i}$ witnessing the incorrectness. 
Line~\ref{alg:TemplWinSat:cfo} checks if the first two properties in the 
definition of a winning area $F$, namely $I \rightarrow F$ and $F \rightarrow 
P$, are satisfied (see Definition~\ref{def:winset}).  If this is not the case, 
a satisfying assignment $\mathbf{x}$ is returned as a counterexample witnessing 
this defect.  The input vector $\mathbf{i}$ returned as part of the 
counterexample is irrelevant in this case. Otherwise, \textsc{Check} turns to 
verifying the third property of a winning area, namely $F \rightarrow \FS(F)$.  
Here, we search for a counterexample $\mathbf{x},\mathbf{i}$ such that no 
value $\mathbf{c}$ can prevent the system from leaving $F(\overline{x})$ if the 
environment picks input $\mathbf{i}$ from state $\mathbf{x}\models 
F(\overline{x})$.  The same kind of counterexample computation was performed 
already by \textsc{SatWin1} in Algorithm~\ref{alg:SatWin1}, so we simply reuse 
this algorithm here.  The difference is that $F(\overline{x})$ is not refined by 
\textsc{Check}.  Thus, there is no need for lazy updates of $\neg 
F(\overline{x}')$, which renders quite some lines of Algorithm~\ref{alg:SatWin1} 
obsolete.

\mypara{An optimization.}
The check in Line~\ref{alg:TemplWinSat:cfo} of Algorithm~\ref{alg:TemplWinSat} 
can actually be omitted if we ensure that
$\forall \overline{x}, \overline{k} \scope I(\overline{x}) \rightarrow 
H(\overline{x}, \overline{k})$ 
and 
$\forall \overline{x}, \overline{k} \scope H(\overline{x}, 
\overline{k}) \rightarrow P(\overline{x})$ holds by the construction of the 
template
$H(\overline{x}, \overline{k})$.  This can easily be achieved by taking any
template $H'(\overline{x}, \overline{k})$ and defining a new template 
$H(\overline{x}, \overline{k}) = \bigl(H'(\overline{x}, \overline{k}) \wedge 
P(\overline{x})\bigr) \vee I(\overline{x})$, given that $I(\overline{x}) \wedge
\neg P(\overline{x})$ is unsatisfiable (otherwise the specification is 
trivially unrealizable).  We use this optimization in our implementation.

\mypara{Incremental solving.}
Algorithm~\ref{alg:TemplWinSat} is well suited for incremental \acs{SAT} 
solving.  We propose to use three solver instances.  The first one stores $G$ 
and is used for Line~\ref{alg:TemplWinSat:comp}. Constraints are only added to 
$G$ in Line~\ref{alg:TemplWinSat:refine}, so no re-initialization is needed.
The second solver instances stores 
$F(\overline{x}) \wedge 
   U(\overline{x}, \overline{i}) \wedge 
   T(\overline{x}, \overline{i}, \overline{c}, \overline{x}') \wedge 
   \neg F(\overline{x}')$
and is used in Line~\ref{alg:TemplWinSat:c20} and~\ref{alg:TemplWinSat:c21}.
It is (re-)initialized when \textsc{Check} is called.  After that, clauses are 
only added to $U$ in Line~\ref{alg:TemplWinSat:c21}.  Finally, the third solver 
instance stores 
$F(\overline{x}) \wedge
   T(\overline{x}, \overline{i}, \overline{c}, \overline{x}') \wedge 
   F(\overline{x}')$
and is used in Line~\ref{alg:TemplWinSat:c3}.  This instance is also 
(re-)initialized whenever \textsc{Check} is called.  This \acs{CNF} does not 
change at all during the execution of \textsc{Check}.  The conjunctions 
with $\mathbf{x}$, $\mathbf{i}$ and $\mathbf{c}$ are realized with assumption 
literals that are temporarily asserted.

\subsubsection{Discussion}

The template-based approach has a potential for finding simple winning areas 
quickly.  There may exist many winning areas that satisfy the constraints given 
by Definition~\ref{def:winset}.  The algorithms \textsc{SafeWin}, 
\textsc{QbfWin} and \textsc{SatWin1} discussed earlier will always compute the 
largest possible winning area (modulo unreachable states if used with 
optimization \textsf{RG} or \textsf{RC}).  The template-based approach is more 
flexible in this respect.  As an extreme example, suppose that there is only one 
initial state, it is safe, and the system can enforce that the play stays in 
this state. Suppose further that the winning region is complicated.  The 
template-based approach may find $F=I$ quickly, while the other approaches may 
require many iterations to compute the winning region.  

On the other hand, the template-based approach can be expected to scale poorly 
if no simple winning area exists or if the synthesis problem is 
unrealizable.  Starting with a small expressiveness parameter $N$, 
Equation~(\ref{eq:templ}) will be unsatisfiable, so $N$ is increased.  With 
increasing $N$, the search space for the solver increases, which results in 
longer execution times.  For unrealizable specifications, we can only terminate 
once $N > 2^{|\overline{x}|}$ (when using our \acs{CNF} or AND-inverter graph 
templates).  Except for specifications with a very low numbers of state 
variables, a timeout is likely to be hit before this point can be reached.
   
\subsection{Reduction to \acf{EPR}} \label{sec:red_to_epr}

The template-based approach presented in the previous section may work well if a 
simple representation of a winning area exists.  However, one drawback is the 
need to select a template, which is a delicate matter.
It would be more desirable to directly compute a winning area as a Skolem 
function of a quantified formula.  Unfortunately, the definition of a winning 
area (Definition~\ref{def:winset}) not only involves the winning area 
$F(\overline{x})$ but also its next-state copy $F(\overline{x}')$. 
Hence, we have to compute two Skolem functions, and the two functions have to be 
functionally consistent.  This problem cannot be formulated as a \acs{QBF} 
formula with a linear quantifier prefix, but requires more expressive logics.  

\subsubsection{Using Henkin Quantifiers}  
One solution is to use so-called Henkin quantifiers~\cite{Henkin1961}, 
\index{Henkin quantifier} which are 
quantifiers that are only partially ordered.  This partial order can be used to 
restrict variable dependencies.  In particular, a winning area 
$F(\overline{x})$ 
can be computed as a Skolem function for the variable $w$ in 
\[
\begin{array}{l}
\forall \overline{x} \scope
\exists w \scope
\forall \overline{i} \scope
\exists \overline{c} \scope\\
\forall \overline{x}' \scope
\exists w' \scope
\end{array}
\bigl(I(\overline{x}) \rightarrow w\bigr) \wedge
\bigl(w \rightarrow P(\overline{x})\bigr) \wedge
\bigl(w \wedge 
     T(\overline{x},\overline{i},\overline{c},\overline{x}')
     \rightarrow w'
\bigr) \wedge
\bigl((\overline{x} = \overline{x}') \rightarrow (w = w')\bigr).
\]
This formulation ensures that the Skolem function $F(\overline{x})$ for $w$ can 
only depend on $\overline{x}$, and the Skolem function $G(\overline{x}')$ for 
$w'$ can only depend on $\overline{x}'$.  The last constraint enforces 
functional consistency between $F$ and $G$, i.e., $F$ and $G$ are actually the 
same function but applied to different parameters.  The logic of applying Henkin 
quantifiers to propositional formulas is called \ac{DQBF} 
\index{DQBF@\acs{DQBF}} and was first 
described by Peterson and Reif~\cite{PetersonR79}.  Deciding 
whether a \ac{DQBF} formula is satisfiable is NEXPTIME 
complete~\cite{PetersonR79}.  In addition to this high complexity, only a few 
approaches and tools to solve \ac{DQBF} formulas have recently been 
proposed~\cite{FroehlichKB12,FrohlichKBV14}.  For this reason, we did not 
implement a \ac{DQBF}-based solution but we rather use \ac{EPR}, where 
mature solvers are available.

\subsubsection{Using \acf{EPR}} \label{sec:hw:win:eprim}

Recall from Section~\ref{sec:prelim:epr} that \ac{EPR} is the set of first-order 
logic formulas of the form $\exists \overline{x}\scope \forall \overline{y} 
\scope F(\overline{x}, \overline{y})$, where $F$ is a quantifier-free formula in 
\acs{CNF} that must not contain function symbols but can contain predicate 
symbols.  These predicate symbols are implicitly quantified existentially.  We 
seek a winning area $F(\overline{x})$ satisfying the three properties of 
Definition~\ref{def:winset}, which can be combined to
\[
\exists F \scope
\forall \overline{x}, \overline{i} \scope
\exists \overline{c}, \overline{x}' \scope
\bigl(I(\overline{x}) \rightarrow F(\overline{x})\bigr) 
  \wedge
\bigl(F(\overline{x}) \rightarrow P(\overline{x}) \bigr) 
  \wedge
\Bigl(F(\overline{x}) \rightarrow \notag
  \bigl(T(\overline{x}, \overline{i}, \overline{c}, \overline{x}') \wedge 
  F(\overline{x}')\bigr)\Bigr).
\]
In order to transform this constraint into \ac{EPR}, we need to perform several 
steps, which are similar to those by Seidl et al.~\cite{SeidlLB12} when 
transforming \acs{QBF} formulas into \acs{EPR}. 

\mypara{Step 1.}
We replace all the Boolean variables $\overline{x}, \overline{i}, \overline{c}, 
\overline{x}'$ by corresponding first-order domain variables.  Since the 
original variables can only take two different values, we introduce a unary 
predicate $V$ to represent the truth value of a domain variable. We also 
introduce two domain constants $\top$ and $\bot$ to encode $\true$ and $\false$, 
and add the axioms $V(\top)$ and $\neg V(\bot)$ to the final \acs{EPR} formula. 
 
\mypara{Step 2.}
We introduce predicate symbols $I(\overline{x})$, $P(\overline{x})$, 
$T(\overline{x}, \overline{i}, \overline{c}, \overline{x}')$ and 
$F(\overline{x})$ to represent the different parts of the formula.  The 
predicates $I$, $P$ and $T$ are equipped with additional constraints that fully 
define their truth value based on the truth values of the variables on which 
they depend.  The predicate $F$ is left unconstrained because it represents the
winning area we wish to compute.

\mypara{Step 3.}
We eliminate the existential quantification over 
$\overline{c}$ and $\overline{x}'$.   Since $T$ is 
deterministic and complete (Definition~\ref{def:safety}), the
one-point rule (\ref{eq:op1}) can be used to eliminate the 
existential quantification over $\overline{x}'$: 
\[
\exists F \scope
\forall \overline{x}, \overline{i} \scope
\exists \overline{c} \scope \forall \overline{x}' \scope
\bigl(I(\overline{x}) \rightarrow F(\overline{x})\bigr) 
  \wedge
\bigl(F(\overline{x}) \rightarrow P(\overline{x}) \bigr) 
  \wedge
\Bigl(\bigl(F(\overline{x}) \wedge
  T(\overline{x}, \overline{i}, \overline{c}, \overline{x}')\bigr)
  \rightarrow 
  F(\overline{x}')\Bigr).
\]
The existential quantification over $\overline{c}$ is eliminated by 
Skolemization: for every $c_j\in \overline{c}$, we introduce a new 
predicate $C_j(\overline{x}, \overline{i})$.  All occurrences of $V(c_j)$ in the 
definition of $T$ are then replaced by $C_j(\overline{x}, \overline{i})$.  This 
gives a formula of the form
\[
\exists F, C_1,\ldots, C_{|\overline{c}|} \scope
\forall \overline{x}, \overline{i}, \overline{x}' \scope
\bigl(I(\overline{x}) \rightarrow F(\overline{x})\bigr) 
  \wedge
\bigl(F(\overline{x}) \rightarrow P(\overline{x}) \bigr) 
  \wedge
\Bigl(\bigl(F(\overline{x}) \wedge
  T(\overline{x}, \overline{i}, \overline{x}')\bigr)
  \rightarrow 
  F(\overline{x}')\Bigr).
\]

\mypara{Step 4.} The body of the resulting formula needs to be encoded into 
\acs{CNF}.  Since we have a conjunction of implications on the top-level, this 
is mainly a matter of encoding the constraints defining $I$, $P$ and $T$ into 
\acs{CNF}.  Note that the standard Tseitin~\cite{Tseitin83} or 
Plaisted-Greenbaum~\cite{PlaistedG86} transformations introduce new auxiliary 
variables that are quantified existentially on the innermost level.  Since this 
is not allowed in \acs{EPR}, these auxiliary variables need to be eliminated 
again.  Similar to the elimination of the variables $\overline{c}$ in Step 3, we 
do this by introducing new predicates.  To increase efficiency, we do not pass 
all variables of $\overline{x}, \overline{i}, \overline{x}'$ as arguments to 
the new predicates, but rather analyze the variable dependencies structurally 
and pass only the relevant ones.

\mypara{Solving the resulting \acs{EPR} formula.}  We call \iprover on the 
resulting \acs{EPR} formula.  \iprover is an instantiation-based first-order 
theorem prover that can produce implementations for the predicates that 
occur in the formula.  This means that the solver directly returns a winning 
area 
$F(\overline{x})$.  Since we represent the truth values of the 
variables $c_j \in \overline{c}$ with predicates $C_j(\overline{x}, 
\overline{i})$, we 
can also extract an implementation from the solver 
result.\footnote{Because of the poor scalability of the \acs{EPR} approach in 
our experiments, we did not implement a parser for the predicate 
implementations returned by \iprover in our tool yet.}  That is, there is no 
need to apply the circuit construction methods that will be presented in 
Chapter~\ref{sec:hw:circ} when using the \acs{EPR} synthesis approach.

\mypara{Discussion.}  Similar to the template-based approach from 
Section~\ref{sec:hw:templ}, this approach does not compute the winning region 
but some winning area.  It can thus benefit from situations where the winning 
region is complicated but a simple winning area exists.  In contrast to the 
template-based approach, there is no need to guess a template and to 
increase the expressiveness of the template if no solution is found.  The price 
that is payed for this benefit is the higher worst-case complexity for checking 
the satisfiability of the constructed formulas because a more expressive logic 
is used.

\subsection{Parallelization} \label{sec:hw_par}

\begin{wrapfigure}[12]{r}{0.35\textwidth}
\vspace{-4mm}
\centering
  \includegraphics[width=0.34\textwidth]{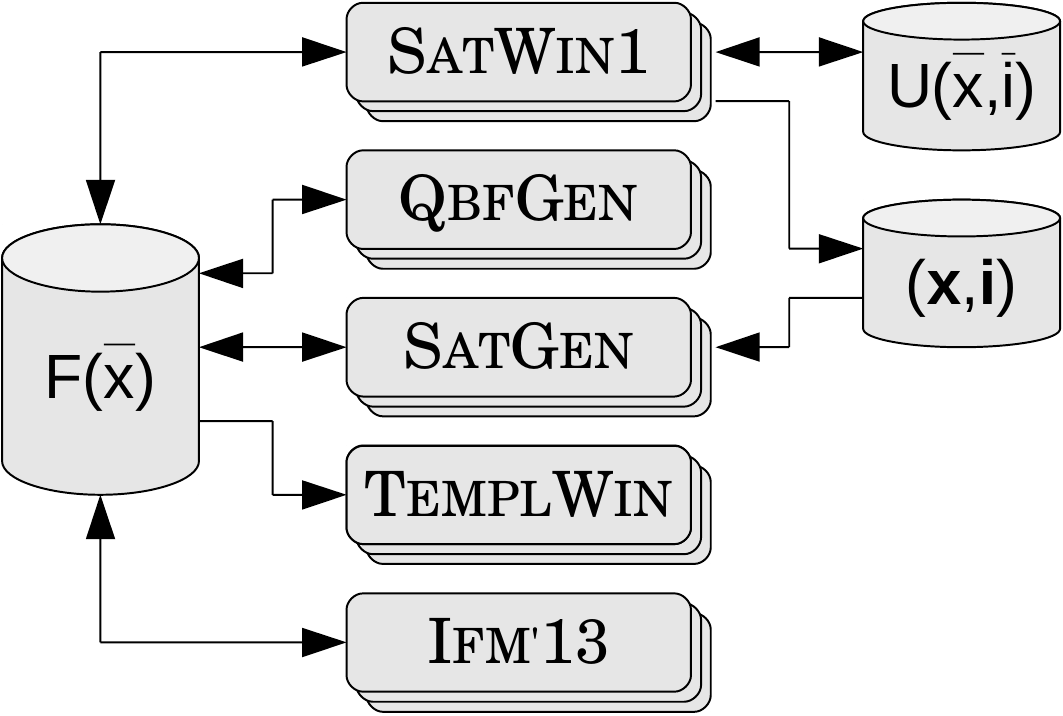}
\caption{Parallelized strategy computation.}
\label{fig:parallel}
\end{wrapfigure}

The various methods for strategy computation presented so far have different 
strengths and weaknesses and, consequently, perform well on different classes of 
benchmarks.  To a smaller extent, different characteristics can also be observed 
within one method when run with different optimizations or solvers.  
In this section, we thus combine different methods and configurations in the 
hope to inherit all their strengths while compensating their weaknesses.  We do 
this in a parallelized way, where individual methods are running in separate 
threads but share discovered information that may be helpful for others.

Figure~\ref{fig:parallel} gives a proposal for combining a promising subset of 
the methods (or fragments thereof).  Arrows denote information that is 
exchanged between threads. 

\mypara{\textsc{SatWin1} threads.}
The \textsc{SatWin1} threads execute the \textsc{SatWin1} procedure from 
Algorithm~\ref{alg:SatWin1} and can be seen as the main workhorse.  Individual 
\textsc{SatWin1} threads can be run with or without optimization \textsf{RG}, 
with or without quantifier expansion, and with different \acs{SAT} solvers.  All 
newly discovered clauses of the winning region $F(\overline{x})$ are put into a 
central database and communicated to the other threads.  Newly discovered 
$U$-clauses are also shared between \textsc{SatWin1} threads.  In order for 
this to work, the \textsc{SatWin1} 
threads need be synchronized regarding their restarts of \textsf{solverC}, i.e., 
they need to work with the same version of $\neg G(\overline{x}')$ at any time. 
If several \textsc{SatWin1} threads are running in a mode where they perform 
universal expansion, the expansion is only done by one thread (while the others 
sleep) in order not to waste resources (like stressing the memory bus 
unnecessarily).

\mypara{\textsc{QbfGen} threads.}  The \textsc{QbfGen} threads take existing 
clauses from $F$ and attempt to generalize them further by eliminating more 
literals.  This is done as in Line~\ref{alg:QbfWin:loop0} to 
Line~\ref{alg:QbfWin:loop1} of the \textsc{QbfWin} procedure in 
Algorithm~\ref{alg:QbfWin} using a \acs{QBF} solver.  If a clause could be 
shortened, the reduced clause is communicated to all other threads.  Individual 
\textsc{QbfGen} threads can be run with or without optimization \textsf{RG}, 
with or without \acs{QBF} preprocessing, and with or without incremental 
\acs{QBF} solving (the combination of incremental solving plus preprocessing is 
not available). 

\mypara{\textsc{SatGen} threads.} These threads take counterexamples 
$(\mathbf{x},\mathbf{i})$, as computed by the \textsc{SatWin1} 
threads, and compute \emph{all} generalizations using a \acs{SAT} solver (as 
illustrated in Figure~\ref{fig:qbf_allgen}).  The resulting $F$-clauses are 
shared.

\mypara{\textsc{TemplWin} threads.} These threads implement the template-based 
method from Section~\ref{sec:hw:templ}, using \acs{CNF} templates of increasing 
size.  The clauses from $F$ are considered as fixed over-approximation of the 
winning area to compute --- the threads only compute additional clauses such 
that a winning area is obtained.  A timeout of $20$ seconds makes the thread try 
again (with a potentially refined set $F$ of fixed clauses) if a solution cannot 
be found quickly.  The short timeout is justified by the observation that the 
template-based approach either finds a solution quickly or not at all.  The 
\acs{QBF}-based implementation and the \acs{SAT}-based implementation of the 
template-based approach are alternated from timeout to timeout.  The 
\textsc{TemplWin} threads are information sinks: the only information 
communicated back to other threads is a request to terminate if a solution has 
been found.

\mypara{\textsc{Ifm'13} threads.}  These threads execute a reimplementation of 
the \acs{SAT}-based synthesis method proposed by Morgenstern et 
al.~\cite{MorgensternGS13}.  This method maintains an over-approximation 
$G(\overline{x})$ of the winning region $W(\overline{x})$ as well as 
over-approximations of sets of states from which the environment can win the 
game in different numbers of steps.  We couple $G(\overline{x})$ with 
$F(\overline{x})$: If new clauses are added to $G(\overline{x})$, then they are 
also added to $F(\overline{x})$ and communicated to the other threads.  If other 
threads discover new $F$-clauses, they are also added to $G$ in the 
\textsc{Ifm'13} threads.

\mypara{Configuration.}  When only one thread is available, we make it execute 
\textsc{SatWin1} with optimization \textsf{RG}, quantifier expansion and 
\minisat as underlying \acs{SAT} solver.  If two threads are available, the 
second one executes \textsc{TemplWin} (with \depqbf, \bloqqer and \minisat).  If 
three threads are available, the third thread runs \textsc{Ifm'13} using 
\minisat.  With four threads, we also use a second instance of \textsc{SatWin1}, 
but with quantifier expansion disabled.  With five threads, we also include a 
\textsc{SatGen} thread, and with six threads we also include a \textsc{QbfGen} 
thread.  

\mypara{Variations.}  The current realization always shares \emph{all} 
discovered clauses that refine the winning region with all other threads.  
Another option is to share only \emph{small} clauses (where the number of 
literals is below some threshold) in order to reduce the communication overhead. 
In general, smaller clauses refine the winning region more substantially than 
larger ones, so this approach would focus on communicating only significant 
findings.  Another promising extension is to include also threads that run 
\acs{BDD}-based algorithms, e.g., a \acs{BDD}-based realization of 
Algorithm~\ref{alg:SafeWin}.  The \acs{BDD}-based threads can directly use 
clauses discovered by other threads to refine the \acs{BDD} that represents the 
winning region.  Communication in the other direction is possible as well: many 
\acs{BDD} libraries provide functions to convert a \acs{BDD} into \acs{CNF}.
While it may be expensive to share all clauses of such a \acs{CNF} translation,
it may still be beneficial to factor out a set of small clauses and communicate 
them.

\mypara{Discussion.}  The main purpose of our parallelization is to combine 
different methods that complement each other.  Exploiting hardware parallelism 
in only a secondary aspect because, due to the high worst-case complexities, 
even a speedup factor of, say, $10$ may have little impact on the ability of 
solving larger benchmark instances.  Furthermore, we do not claim that our 
choice of distributing workload over the threads is in any way optimal.  We 
rather selected the methods to run in individual threads quite greedily, based 
on the performance results when running methods in isolation (see 
Chapter~\ref{sec:hw:exp}) and based on experiments with subsets of the 
benchmarks.  However, there is such a plethora of possibilities for combining 
different methods, fragments thereof, optimizations, heuristics and solver 
configurations that finding particularly good configurations is quite an 
intricate task.  Hence, we rather see the main contribution of our 
parallelization in providing a ``playground'' for combining different approaches 
and configurations.  It demonstrates that a parallelized way of combining 
different \acs{SAT}-based synthesis approaches is easily possible.  This stands 
in contrast to \acs{BDD}-based synthesis algorithms, where a parallelization is 
often much more difficult to achieve.  Our parallelization goes far beyond a 
pure portfolio approach because fine-grained information about refinements of 
the winning region, discovered counterexamples and unsuccessful attempts to 
compute counterexamples is exchanged between the threads as soon as discovered.  
This information can speed up the progress in other threads and thus stimulate 
``cross-fertilization'' effects.

\section{From Strategies to Circuits} \label{sec:hw:circ}

In Chapter~\ref{sec:hw:win}, we presented a number of \acs{SAT}-based methods 
to 
compute a strategy for defining the control signals $\overline{c}$ such that a 
given safety specification is enforced.  Recall that such a 
strategy is a formula  $S(\overline{x},\overline{i},\overline{c},\overline{x}')$ 
such that
$\forall \overline{x}, \overline{i} \scope
          \exists \overline{c}, \overline{x}' \scope
          S(\overline{x},\overline{i},\overline{c},\overline{x}').$
That is, for every state $\mathbf{x}$ and input $\mathbf{i}$, the strategy will 
contain at least one vector of control values $\mathbf{c}$ that is allowed in 
this situation.  In many situations, many control values can be allowed, though. 
The task is now to compute a system implementation in form of a function $f: 
2^{\overline{x}} \times 2^{\overline{i}} \rightarrow 2^{\overline{c}}$ to 
uniquely define the control signals $\overline{c}$ based on the current state 
variables $\overline{x}$ and the uncontrollable inputs $\overline{i}$.  The 
system implementation $f$ is supposed to implement the strategy in the sense 
that
$\forall \overline{x}, \overline{i} \scope
       \exists \overline{x}' \scope
 S\bigl(\overline{x},\overline{i},f(\overline{x},\overline{i}),\overline{x}'
 \bigr)$ 
holds.  That is, for all concrete assignments $\mathbf{x}, \mathbf{i}$, the 
control variable assignment $\mathbf{c} = f(\mathbf{x}, \mathbf{i})$ computed 
by $f$ must be allowed by the strategy $S$.  Finally, this function $f$ needs to 
be implemented as a circuit.  Obviously, we prefer fast algorithms that produce 
small circuits.  In order to achieve this, the freedom in the strategy relation 
$S$ needs to be exploited cleverly.          

A cofactor-based algorithm to solve the problem has already been presented in 
Section~\ref{sec:prelim:sasyalg}.  It can be seen as the ``standard method'' for 
computing an implementation from a strategy, and can easily be implemented using 
\acp{BDD}.  In the following subsections, we will present alternative 
approaches that use \acs{SAT}- or \acs{QBF} solvers instead.  
The presented approaches are not specific to safety specifications.  However,
in many cases, the specific structure of strategies 
$S(\overline{x},\overline{i},\overline{c},\overline{x}') =   
T(\overline{x},\overline{i},\overline{c},\overline{x}') \wedge \bigl(   
W(\overline{x}) \rightarrow W(\overline{x}')\bigr)
$
for safety specifications can be exploited.  We will thus always 
present the general approach first, and then discuss an efficient implementation 
for safety synthesis problems.  As a preprocessing step to all our methods, we 
simplify $W$ by calling \textsc{CompressCnf} (see 
Algorithm~\ref{alg:CompressCnf}) with literal dropping enabled in order to 
remove redundant literals and clauses from $W$.  As a postprocessing step to 
all our methods, we invoke the tool \abc~\cite{BraytonM10} in order to reduce 
the size of the produced circuits.  

\subsection{\acs{QBF} Certification} \label{sec:hw:circ_qbfcert}

A system implementation can be computed as Skolem function for the 
signals $\overline{c}$ in $\forall \overline{x}, 
\overline{i} \scope \exists \overline{c}, \overline{x}' \scope 
S(\overline{x},\overline{i},\overline{o}, \overline{x}')$.
The \qbfcert~\cite{NiemetzPLSB12} framework by Niemetz et al.\ computes such 
Skolem functions for satisfiable \acp{QBF} from proof traces produced by the 
\depqbf~\cite{LonsingB10} solver.  The resulting Skolem functions are produced 
as circuits in \aiger format.  Hence, in our setting, a single call to \qbfcert 
suffices to compute a system implementation in form of a circuit.

\subsubsection{Efficient Implementation for Safety Synthesis Problems} 

While the basic approach is simple, we can still apply some optimizations to 
increase the efficiency for the case of safety synthesis problems.

\noindent
\mypara{\acs{QBF} formulation.} 
Instead of computing a Skolem function for the variables 
$\overline{c}$ in the formula
\begin{equation} 
\forall \overline{x}, \overline{i} \scope 
\exists \overline{c}, \overline{x}'\scope 
T(\overline{x},\overline{i},\overline{c},\overline{x}') \wedge 
\bigl(W(\overline{x}) \rightarrow W(\overline{x}')\bigr)
\label{eq:cert00}
\end{equation} 
we rather compute a Herbrand function in its negation
$ 
\exists \overline{x}, \overline{i} \scope 
\forall \overline{c}, \overline{x}' \scope 
\neg T(\overline{x},\overline{i},\overline{c},\overline{x}') \vee \bigl(   
W(\overline{x}) \wedge \neg W(\overline{x}')\bigr).
$
Because $T$ is both deterministic and complete (Definition~\ref{def:safety}), 
the one-point rule (\ref{eq:op1}) can 
be applied to turn the universal quantification over $\overline{x}'$ into an 
existential quantification:
\begin{equation} 
\exists \overline{x}, \overline{i} \scope 
\forall \overline{c}\scope
\exists \overline{x}' \scope 
T(\overline{x},\overline{i},\overline{c},\overline{x}') \wedge    
W(\overline{x}) \wedge \neg W(\overline{x}').
\label{eq:cert01}
\end{equation} 
Just like most \ac{QBF} solvers, \qbfcert requires a \acs{PCNF} as input. Since 
most of our methods to compute a winning region (or winning area) produce $W$ in 
\acs{CNF}, we only need to transform $T$ and $\neg W(\overline{x}')$ into  
\acs{CNF}.  In contrast, using Equation~(\ref{eq:cert00}) would require an 
additional \acs{CNF} encoding of the implication $W(\overline{x}) 
\rightarrow W(\overline{x}')$.  Another advantage of using 
Equation~(\ref{eq:cert01}) lies in the size of the proofs: since the \acs{QBF} is 
now unsatisfiable, the \qbfcert framework processes a clause resolution proof 
instead of a cube resolution proof.  These clause resolution proofs are often 
smaller.

\begin{wrapfigure}[9]{r}{0.62\textwidth}
\vspace{-8mm}
\centering
\begin{minipage}{0.61\textwidth}
\begin{algorithm}[H]
\caption[\textsc{NegLearn}: Computing a \acs{CNF} representation for the 
negation of a formula $F(\overline{x})$]
{\textsc{NegLearn}: Computing a \acs{CNF} representation for the 
negation of a formula $F(\overline{x})$.}
\label{alg:NegLearn}
\begin{algorithmic}[1]
\ProcedureRet{NegLearn}
             {F(\overline{x})}
             {$\neg F(\overline{x})$ in \acs{CNF}}
  \State $N(\overline{x}) := \true$
  \While{$\mathsf{sat}=\true$ in $(\mathsf{sat},\mathbf{x}):=
  \propsatmodel\bigl(F(\overline{x}) \wedge N(\overline{x})\bigr)$}
    \State $N(\overline{x}) := N(\overline{x}) \wedge \neg 
           \propsatmincore\bigl(\mathbf{x}, \neg F(\overline{x})\bigr)$
  \EndWhile
  \State \textbf{return} $N(\overline{x})$
\EndProcedure
\end{algorithmic}
\end{algorithm}
\end{minipage}
\end{wrapfigure}
\mypara{Negation of $W(\overline{x}')$.} For complex benchmarks, the auxiliary 
files produced by \qbfcert can still grow very large (hundreds of GB). One 
reason is that a straightforward \acs{CNF} encoding of $\neg W(\overline{x}')$ 
requires many auxiliary variables and clauses.  We can reduce the size of the 
auxiliary files (by up to a factor of 30 in our experiments) by computing a 
\acs{CNF} representation of $\neg W(\overline{x}')$ without introducing 
auxiliary variables.  The procedure \textsc{NegLearn} in 
Algorithm~\ref{alg:NegLearn} computes such a negation with query learning.  It 
follows the principle of \textsc{CnfLearn}, shown in 
Algorithm~\ref{alg:CnfLearn}, and uses a \acs{SAT} solver to implement the 
queries:  As long as $N$ is not yet equivalent to $\neg F$, i.e., $F \wedge N$ 
is still satisfiable, \textsc{NegLearn} refines $N$ with a clause that excludes 
the cube $\mathbf{x}$ witnessing this insufficiency.  By taking the 
unsatisfiable core, the clause eliminates also other counterexamples.  Since 
clauses are only added to $N$, \textsc{NegLearn} is well suited for incremental 
SAT solving.

\subsubsection{Discussion} 

\mypara{Dependencies between control signals.}
In contrast to \textsc{CofSynt} from Algorithm~\ref{alg:CofSynt}, the \acs{QBF} 
certification approach computes a circuit for all control signals 
simultaneously.  This can be both an advantage and a disadvantage.  The 
advantage is that dependencies between control signals can potentially be 
handled more effectively.  \textsc{CofSynt} can only take local decisions and 
fixes an implementation for one control signal without considering the 
consequences on other control signals (as long as some solution for the other 
signals still exist).  The \acs{QBF} certification approach is free to make 
global decisions when fixing the individual circuits.  On the other hand, 
considering all control signals simultaneously instead of decomposing the 
problem into smaller subproblems can also be a scalability disadvantage.

\mypara{Dependencies on reasoning engine.}  
The performance of \qbfcert as well as the quality of the resulting circuit 
depend on the ability of \depqbf to find a compact unsatisfiability proof 
quickly.  In this sense, the technique strongly depends on the underlying 
symbolic reasoning engine.  This is similar to \textsc{CofSynt} when implemented 
using \acp{BDD}, where the ability to find a good variable ordering can 
influence the circuit size and the execution time heavily.

\subsection{\acs{QBF}-Based Query Learning} \label{sec:hw:extr:qbflearn}

In this section, we introduce an approach that is also based on \acs{QBF} 
solving, but constructs circuits for one control signal after the other.  In 
this respect, it is more similar to \textsc{CofSynt} presented in 
Algorithm~\ref{alg:CofSynt}.  However, in contrast to \textsc{CofSynt}, we 
rely on query learning to exploit implementation freedom in the strategy in 
order to obtain small circuits.

The query learning algorithms introduced in Section~\ref{sec:prelim:ql} compute 
a certain representation of a given target formula $G(\overline{x})$ precisely.  
That is, the resulting formula $F(\overline{x})$ will be equivalent to the 
target $G(\overline{x})$.  This is achieved by starting with some initial 
approximation for $F$, and refining this approximation based on counterexamples 
witnessing that $F \neq G$.  These counterexamples are also generalized to 
speed up the progress.  The same formula $G$ is used both for computing 
counterexamples and for generalizing them.  However, by using two different 
formulas $G_1$ and $G_2$ in these two phases, we can also compute a function $F$ 
such that $G_1 \rightarrow F \rightarrow G_2$.  This idea can be used to exploit 
freedom in defining $F$, where the freedom is defined by (the difference 
between) $G_1$ and $G_2$.  Note that $F$ is actually an interpolant for $G_1 
\wedge \neg G_2$ (see Section~\ref{sec:prelim:prop}).  Thus, this way of query 
learning with freedom can be seen as a special way to compute interpolants.  
However, depending on the underlying reasoning engine used in query learning, 
the formulas $G_1$ and $G_2$ do not have to be quantifier-free.  Furthermore, by 
choosing an appropriate learning algorithm, we can control the shape of $F$.  
For instance, a \acs{CNF} learning algorithm will produce $F$ in form of a 
\acs{CNF} formula.

In the following, we will present a circuit synthesis algorithm based on 
\acs{CNF} learning using a \acs{QBF} solver.  \acs{CNF} learning is particularly 
suitable in this setting because \acs{QBF} solvers require formulas in 
\acs{PCNF}, so building up the solution in \acs{CNF} reduces the overhead 
(especially in terms of formula size) imposed by \acs{CNF} transformations.  
Solutions with other learning algorithms have been proposed by Ehlers et 
al.~\cite{EhlersKH12}. After introducing the basic algorithm, 
we will also discuss an efficient realization for safety synthesis problems. 

\subsubsection{\acs{QBF}-Based \acs{CNF} Learning}

\begin{wrapfigure}[23]{r}{0.68\textwidth}
\vspace{-8mm}
\centering
\begin{minipage}{0.67\textwidth}
\begin{algorithm}[H]
\caption{\textsc{QbfSynt}: Synthesizing circuits with
\acs{QBF}-based \acs{CNF} learning.}
\label{alg:QbfSynt}
\begin{algorithmic}[1]
\Procedure{QbfSynt}
          {$S(\overline{x},\overline{i},\overline{c},\overline{x}')$}
  \For{$c_j \in \overline{c}$}
    \State $M_1(\overline{x},\overline{i}) :=
      \forall \overline{c},\overline{x}'\scope
      \neg
       S\bigl(\overline{x},\overline{i},
       (c_0,\ldots,c_{j-1},\false,c_{j+1},\ldots,c_n),
       \overline{x}'
       \bigr)$\label{alg:QbfSynt:m1}
    \State $M_0(\overline{x},\overline{i}) :=
      \forall \overline{c},\overline{x}'\scope
      \neg
       S\bigl(\overline{x},\overline{i},
       (c_0,\ldots,c_{j-1},\true,c_{j+1},\ldots,c_n),
       \overline{x}'
       \bigr)$\label{alg:QbfSynt:m0}
    \State $F_j(\overline{x},\overline{i}) := \true$ \label{alg:QbfSynt:init}
    \While{$\mathsf{sat}$ in $(\mathsf{sat},\mathbf{x}, \mathbf{i}) := 
            \qbfsatmodel\bigl(
            \exists \overline{x},\overline{i} \scope
            F_j(\overline{x},\overline{i}) \wedge
            M_0(\overline{x},\overline{i}) \bigr)$}
            \label{alg:QbfSynt:check}
      \State $\mathbf{d}_g := \textsc{QbfGeneralize}\bigl(
                              \mathbf{x} \wedge \mathbf{i},
                              M_1(\overline{x},\overline{i})
                              \bigr)$
      \State $F_j(\overline{x},\overline{i}) := 
              F_j(\overline{x},\overline{i}) \wedge
             \neg \mathbf{d}_g$
    \EndWhile \label{alg:QbfSynt:end}
  \State $\textsc{dumpCircuit}\bigl(c_j, 
          F_j(\overline{x},\overline{i})\bigr)$ 
           \label{alg:QbfSynt:dump}
  \State $S(\overline{x},\overline{i},\overline{c},\overline{x}') := 
          S(\overline{x},\overline{i},\overline{c},\overline{x}') \wedge 
          \bigl(c_j \leftrightarrow F_j(\overline{x},\overline{i})\bigr)$
          \label{alg:QbfSynt:resub}
  \EndFor
\EndProcedure
\ProcedureRetL{QbfGeneralize}
          {\mathbf{d}, M_1(\overline{x},\overline{i})}
          {$\mathbf{d}_g \subseteq \mathbf{d}$ such that $\mathbf{d}_g
            \wedge M_1$ is unsatisfiable}
          {42mm}
    \State $\mathbf{d}_g := \mathbf{d}$
    \For{each literal $l$ in $\mathbf{d}_g$}
      \State $\mathbf{d}_t := \mathbf{d}_g \setminus \{l\}$
      \If{$\neg \qbfsatmodel\bigl(
            \exists \overline{x},\overline{i} \scope
            \mathbf{d}_t \wedge
            M_1(\overline{x},\overline{i})
            \bigr)$}\label{alg:QbfSynt:gen}
        \State $\mathbf{d}_g := \mathbf{d}_t$
      \EndIf
    \EndFor
    \State \textbf{return} $\mathbf{d}_g$
\EndProcedure            
\end{algorithmic}
\end{algorithm}
\end{minipage}
\end{wrapfigure}
\textsc{QbfSynt} in Algorithm~\ref{alg:QbfSynt} presents a 
\acs{CNF} learning algorithm, implemented using a \acs{QBF} solver.  It 
synthesizes a circuit from a given strategy $S(\overline{x}, \overline{i}, 
\overline{c}, \overline{x}')$ while exploiting the freedom in $S$ in order to 
obtain small circuits.  \textsc{QbfSynt} does not return any result but directly 
dumps the produced circuits.  Individual circuits are computed for one $c_j \in 
\overline{c}$ after the other.  In this respect, \textsc{QbfSynt} is similar to 
\textsc{CofSynt} (Algorithm~\ref{alg:CofSynt}) but different from \acs{QBF} 
certification as presented in Section~\ref{sec:hw:circ_qbfcert}. 

\mypara{Definition of $M_1$ and $M_0$.}
Line~\ref{alg:QbfSynt:m1} of \textsc{QbfSynt} computes the formula 
$M_1(\overline{x},\overline{i})$, which describes all 
$(\overline{x},\overline{i})$-assignments for which the current control signal 
$c_j$ must be set to $\true$:  Recall from \textsc{CofSynt}  
(Algorithm~\ref{alg:CofSynt}) that the formula 
$C_0(\overline{x}, \overline{i}) := \exists \overline{x}', 
                    \overline{c} \scope
      S\bigl(\overline{x}, 
       \overline{i},
       (c_0,\ldots,c_{j-1},\false,c_{j+1},\ldots,c_n),
       \overline{x}'
       \bigr)$
characterizes the set of all $(\overline{x},\overline{i})$-assignments for which 
$c_j = \false$ is allowed by the strategy $S$.  Its negation 
$M_1(\overline{x},\overline{i}) = \neg C_0(\overline{x}, \overline{i})$ is thus 
the set of all situations where $c_j = \false$ is not allowed by $S$, i.e., 
where $c_j$ must be set to $\true$.  Analogously, the formula 
$M_0(\overline{x},\overline{i})$ represents the set of all 
$(\overline{x},\overline{i})$-assignments for which $c_j$ must be $\false$.

\mypara{Learning an implementation $F_j$.}  
The lines~\ref{alg:QbfSynt:init} to~\ref{alg:QbfSynt:end} compute a \acs{CNF} 
formula $F_j(\overline{x},\overline{i})$ such that 
$M_1(\overline{x},\overline{i})\rightarrow F_j(\overline{x},\overline{i}) 
\rightarrow \neg M_0(\overline{x},\overline{i})$ using a variant of 
\textsc{CnfLearn} from Algorithm~\ref{alg:CnfLearn}.  The first 
implication $M_1 \rightarrow F_j$ ensures that $F_j$ is $\true$ whenever $c_j$ must 
be $\true$.  The second implication $F_j \rightarrow \neg M_0$ ensures that 
whenever $F_j$ is $\true$, $c_j$ does not have to be $\false$.  Together, these 
two conditions fully describe a proper implementation for $c_j$.  Just like 
\textsc{CnfLearn}, we start with $F_j=\true$ (Line~\ref{alg:QbfSynt:init}).  
Next, Line~\ref{alg:QbfSynt:check} checks if $F_j$ is already correct in the 
sense that $M_1\rightarrow F_j \rightarrow \neg M_0$ holds.  The algorithm 
maintains the invariant $M_1 \rightarrow F_j$, so only $F_j \rightarrow \neg 
M_0$ needs to be checked.  This is done by calling a \acs{QBF} solver to search 
for a satisfying assignment $\mathbf{x},\mathbf{i} \models F_j \wedge M_0$ to 
the variables $\overline{x},\overline{i}$ for which $F_j$ is $\true$ but $c_j$ 
must be $\false$. Note that $M_0$ contains a universal quantification of 
$\overline{c}$ and $\overline{x}'$, so a \acs{SAT} solver cannot be used.  If no such counterexample $\mathbf{x},\mathbf{i}$ exists, the 
\textbf{while}-loop terminates. Otherwise, the counterexample cube $\mathbf{d} = 
\mathbf{x} \wedge \mathbf{i}$ is generalized into a cube $\mathbf{d}_g \subseteq 
\mathbf{d}$ by eliminating literals as long as $\mathbf{d}_g \wedge M_1$ is 
unsatisfiable.  This is done in the subroutine \textsc{QbfGeneralize} 
and ensures that $\mathbf{d}_g$ does not contain any 
$(\overline{x},\overline{i})$-assignments for which $c_j$ must be $\true$, so it 
is safe to update $F_j$ to $F_j\wedge \neg \mathbf{d}_g$ while preserving the 
invariant $M_1 \rightarrow F_j$.  This update eliminates the original 
counterexample $\mathbf{d}$ for which $F_j$ \emph{must} be $\false$.  Due to the 
generalization, other $(\overline{x},\overline{i})$-assignments for which $F_j$ 
\emph{can} be $\false$ are also mapped to $\false$.  Going with ``can be false'' 
rather than ``must be false'' in the generalization phase results in potentially 
smaller clauses being added to $F_j$.  This increases the potential for 
eliminating counterexamples before they are encountered in 
Line~\ref{alg:QbfSynt:check}.  Hence, exploiting the freedom between ``must be 
false'' and ``can be false'' --- as done by \textsc{QbfSynt} --- potentially 
not only results in a more compact \acs{CNF} representation of $F_j$ but 
also in fewer iterations.

\mypara{Circuit construction and resubstitution.}  
The remaining parts of \textsc{QbfSynt} are the same as for \textsc{CofSynt} 
(Algorithm~\ref{alg:CofSynt}): Line~\ref{alg:QbfSynt:dump} dumps the formula 
$F_j(\overline{x}, \overline{i})$ as circuit that defines $c_j$ to be 
$\true$ whenever $F_j(\overline{x}, \overline{i})$ evaluates to true.  This can 
be done by replacing every Boolean operator in $F_j$ with the 
corresponding gate.  We do not attempt to reuse existing gates while dumping 
the circuit, but leave this optimization to \abc~\cite{BraytonM10} in the 
postprocessing step.  Finally, Line~\ref{alg:QbfSynt:resub} refines the strategy 
$S$ with the solution for $c_j$ to propagate consequences of fixing $c_j$ on 
other control signals.

\mypara{Auxiliary variables.}  If the strategy formula $S$ contains auxiliary 
variables, e.g., from Tseitin-transformations~\cite{Tseitin83}, then these 
variables are all handled as if they were part of $\overline{x}$.  The 
resubstitution step in Line~\ref{alg:QbfSynt:resub} may also introduce 
additional auxiliary variables, which are also handled like $\overline{x}$.

\begin{wrapfigure}[8]{r}{0.62\textwidth}
\vspace{-7mm}
\centering 
\subfloat[First counterexample.\label{fig:qbfsynt0}]
  {\includegraphics[width=0.3\textwidth]{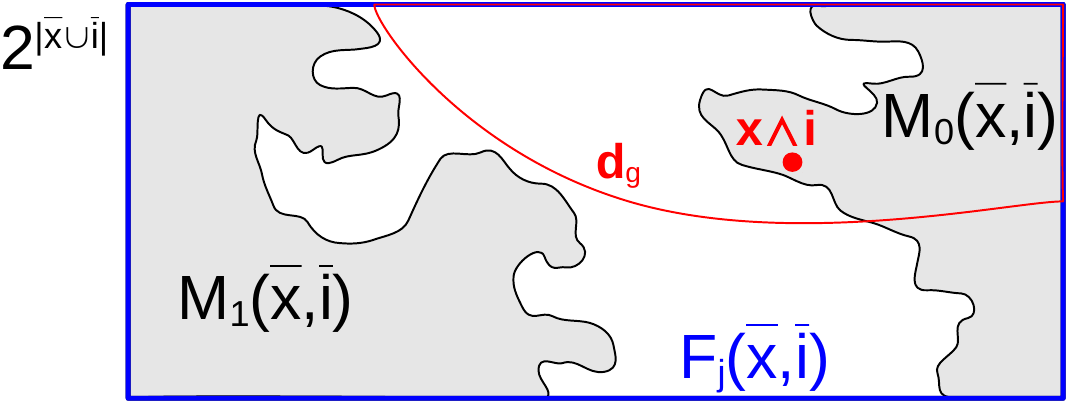}}
\hspace{1mm}
\subfloat[Second counterexample.\label{fig:qbfsynt1}]
  {\includegraphics[width=0.3\textwidth]{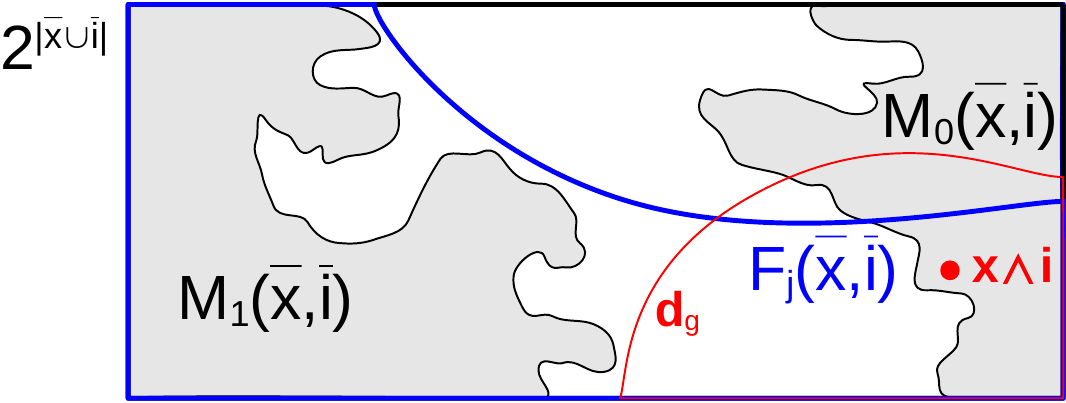}}
\caption[Working principle of \textsc{QbfSynt}]
{Working principle of \textsc{QbfWin}.}
\label{fig:qbfsynt}
\end{wrapfigure}
\mypara{Illustration.}
Figure~\ref{fig:qbfsynt} illustrates the computation of a circuit for one 
control signal $c_j$.  The boxes represent the set $2^{|\overline{x} 
\cup \overline{i}|}$ of all possible assignments to the variables $\overline{x}$ 
and $\overline{i}$.  Figure~\ref{fig:qbfsynt0} depicts the initial situation.  
The region $M_1$ represents the set of all situations where $c_j$ must be 
$\true$, and $M_0$ represents the situations where $c_j$ must be $\false$.  The 
definition of the strategy ensures that these two regions cannot overlap.  The 
current approximation $F_j$ of the solution is depicted in blue.  Initially, 
$F_j=\true$ (Line~\ref{alg:QbfSynt:init} in \textsc{QbfSynt}).  Next, a 
counterexample $\mathbf{x},\mathbf{i} \models F_j \wedge M_0$ is computed 
(Line~\ref{alg:QbfSynt:check}).  It is drawn as a red dot in 
Figure~\ref{fig:qbfsynt0}.  The counterexample cube $\mathbf{x} \wedge 
\mathbf{i}$ is then generalized into a larger region $\mathbf{d}_g$ by 
eliminating literals as long as $\mathbf{d}_g$ does not intersect with $M_1$. 
This is ensured by the check in Line~\ref{alg:QbfSynt:gen} of \textsc{QbfSynt}. 
Next, $F_j$ is refined by subtracting the resulting region $\mathbf{d}_g$. The 
refined formula $F_j$ is shown as a blue outline Figure~\ref{fig:qbfsynt1}.  
Since the first counterexample is no longer contained in $F_j \wedge M_0$, it 
cannot be encountered again.  Instead, the algorithm computes a different 
counterexample, which is generalized in the same way.  This is illustrated in 
Figure~\ref{fig:qbfsynt1}.  After subtracting the second $\mathbf{d}_g$ from 
$F_j$ (which is not shown in Figure~\ref{fig:qbfsynt}),  $F_j$ does not 
intersect with $M_0$ any more.  Hence there are no more situations where $F_j$ 
is $\true$ but must be $\false$.  Since we did not remove any situation that is 
contained in $M_1$ from $F_j$, the final solution satisfies $M_1\rightarrow F_j 
\rightarrow \neg M_0$ and the \textbf{while}-loop in \textsc{QbfSynt} 
terminates.  That is, $F_j$ exploits the freedom between $M_1$ and $M_0$.  
Compared to learning a \acs{CNF} formula for $\neg M_0$ precisely, this 
potentially reduces the number of iterations and the resulting circuit size, 
especially if $\neg M_0$ is complicated.  In Figure~\ref{fig:qbfsynt}, 
this is indicated by $M_0$ being more irregular in shape than $F_j$.

\subsubsection{Efficient Implementation for Safety Synthesis Problems}
\label{sec:hw:extr:qbf_impl}

The procedure \textsc{SafeQbfSynt} in Algorithm~\ref{alg:SafeQbfSynt} presents 
an efficient realization of \textsc{QbfSynt} for the case of safety 
specifications, where the winning strategy $S(\overline{x}, \overline{i}, 
\overline{c}, \overline{x}')$ is defined via a winning region (or a winning 
area) 
$W(\overline{x})$.  To make the \acs{QBF} queries efficient, our aim is to avoid 
disjunctions and negations of subformulas as much as possible, and to reduce 
the amount of universal quantification.

\begin{algorithm}[tb]
\caption[\textsc{SafeQbfSynt}: Synthesizes circuits from winning areas 
with \acs{QBF}-based \acs{CNF} learning]
{\textsc{SafeQbfSynt}: Synthesizes circuits from winning areas 
with \acs{QBF}-based \acs{CNF} learning.}
\label{alg:SafeQbfSynt}
\begin{algorithmic}[1]
\Procedure{SafeQbfSynt}
          {$T(\overline{x},\overline{i},\overline{c},\overline{x}')$,
           $W(\overline{x})$}
  \State $T'(\overline{x},\overline{i},\overline{c},\overline{x}') :=
          T(\overline{x},\overline{i},\overline{c},\overline{x}')$, \quad
         $\overline{c}_b := \overline{c}$, \quad
         $\overline{c}_a := \emptyset$
         \label{alg:SafeQbfSynt:in}
  \For{all $j$ from $1$ to $|\overline{c}|$}
    \State $\overline{c}_b := \overline{c}_b \setminus \{c_j\}$
           \label{alg:SafeQbfSynt:ca}   
    \State $M_1(\overline{x}, \overline{i}) :=
       \forall \overline{c}_b \scope
       \exists \overline{c}_a, \overline{x}' \scope
       T'\bigl(\overline{x},\overline{i},
       \overline{c}_a,\false,\overline{c}_b,
       \overline{x}'
       \bigr)\wedge 
       W(\overline{x}) \wedge
       \neg W(\overline{x}')$
       \label{alg:SafeQbfSynt:m1}
    \State $M_0(\overline{x},
                 \overline{i}) :=
       \forall \overline{c}_b \scope
       \exists \overline{c}_a, \overline{x}' \scope
       T'\bigl(\overline{x},\overline{i},
       \overline{c}_a,\true,\overline{c}_b,
       \overline{x}'
       \bigr) \wedge 
       W(\overline{x}) \wedge
       \neg W(\overline{x}')$
       \label{alg:SafeQbfSynt:m0}        
    \State $F_j(\overline{x},\overline{i}) := \true$
    \While{$\mathsf{sat}$ in $(\mathsf{sat},\mathbf{x}, \mathbf{i}) := 
            \qbfsatmodel\bigl(
            \exists \overline{x},\overline{i} \scope
            F_j(\overline{x},\overline{i}) \wedge
            M_0(\overline{x},
                 \overline{i})\bigr)$}
            \label{alg:SafeQbfSynt:check}
      \State $\mathbf{d}_g := \textsc{QbfGeneralize}\bigl(
                  \mathbf{x} \wedge \mathbf{i},
                  M_1(\overline{x},\overline{i})
                  \bigr)$
      \State $F_j(\overline{x},\overline{i}) := 
              F_j(\overline{x},\overline{i}) \wedge
             \neg \mathbf{d}_g$\label{alg:SafeQbfSynt:ref}
    \EndWhile
  \State $\textsc{dumpCircuit}\bigl(c_j, 
          F_j(\overline{x},\overline{i})\bigr)$ 
           \label{alg:SafeQbfSynt:dump}
  \State $T'(\overline{x},\overline{i},\overline{c},\overline{x}') := 
          T'(\overline{x},\overline{i},\overline{c},\overline{x}') \wedge 
          \bigl(c_j \leftrightarrow F_j(\overline{x},\overline{i})\bigr)$
          \label{alg:SafeQbfSynt:resub}
  \State $\overline{c}_a := \overline{c}_a \cup \{c_j\}$
         \label{alg:SafeQbfSynt:cb}
  \EndFor
\EndProcedure
\ProcedureRet{QbfGeneralize}
          {\mathbf{d}, M_1(\overline{x},\overline{i})}
          {$\mathbf{d}_g \subseteq \mathbf{d}$ such that $\mathbf{d}_g
            \wedge M_1$ is unsatisfiable}
      \State $\mathbf{d}_g := \mathbf{x} \wedge \mathbf{i}$
      \For{each literal $l$ in $\mathbf{d}$}\label{alg:SafeQbfSynt:loop}
        \State $\mathbf{d}_t := \mathbf{d}_g \setminus \{l\}$
        \If{$\neg \qbfsatmodel\bigl(
            \exists \overline{x},\overline{i} \scope
              \mathbf{d}_t \wedge
              M_1(\overline{x},\overline{i})
              \bigr)$}\label{alg:SafeQbfSynt:gen}
          \State $\mathbf{d}_g := \mathbf{d}_t$
        \EndIf
      \EndFor
    \State \textbf{return} $\mathbf{d}_g$
\EndProcedure
\end{algorithmic}
\end{algorithm}

\mypara{Grouping of control variables.}
In every iteration, \textsc{SafeQbfSynt} splits the control variables 
$\overline{c}$ into three groups $\overline{c}_a$, $c_j$, $\overline{c}_b$:  The 
single variable $c_j$ is the one for which a circuit is constructed in the 
current iteration, $\overline{c}_a$ contains all variables for which a circuit 
has already been computed, and $\overline{c}_b$ contains all control variables 
for which a circuit will be computed in some future iteration.  This split is 
performed in the Lines \ref{alg:SafeQbfSynt:in}, \ref{alg:SafeQbfSynt:ca} and 
\ref{alg:SafeQbfSynt:cb}, and will allow us to reduce the amount of universal 
quantification.

\mypara{Definition of $M_1$ and $M_0$.}  With 
$S(\overline{x},\overline{i},\overline{c},\overline{x}') = 
T(\overline{x},\overline{i},\overline{c},\overline{x}') \wedge \bigl(\neg 
W(\overline{x}) \vee W(\overline{x}')\bigr)$, we can apply the following 
transformations to compute a \acs{CNF} for $M_1$ more efficiently.
\begin{linenomath*}
\begin{align*}
M_1(\overline{x},\overline{i}) = &
      \forall \overline{c},\overline{x}'\scope
      \neg
       S\bigl(\overline{x},\overline{i},
       (c_0,\ldots,c_{j-1},\false,c_{j+1},\ldots,c_n),
       \overline{x}'
       \bigr)
\\=&
      \forall \overline{c}_b,\overline{c}_a,\overline{x}'\scope
      \neg \Bigl(
       T(\overline{x},\overline{i},
       \overline{c}_a,\false,\overline{c}_b,
       \overline{x}') \wedge 
       \bigl(\neg W(\overline{x}) \vee W(\overline{x}')\bigr)
       \Bigr)
\\=&
      \forall \overline{c}_b,\overline{c}_a,\overline{x}'\scope
      \Bigl(
       T(\overline{x},\overline{i},
       \overline{c}_a,\false,\overline{c}_b,
       \overline{x}') \rightarrow 
       \bigl(W(\overline{x}) \wedge \neg W(\overline{x}')\bigr)
       \Bigr)
\end{align*}
\end{linenomath*}
\textsc{SafeQbfSynt} keeps a copy $T'$ of the transition relation $T$.  It is 
updated in such a way that the variables $\overline{c}_a, \overline{x}'$ 
are~defined uniquely by $T'$.  For $\overline{x}'$, this holds 
initially. For $\overline{c}_a$, this is ensured by 
Line~\ref{alg:SafeQbfSynt:resub}. Thus, by using $T'$ instead of $T$ and 
applying the one-point rule (\ref{eq:op1}), the universal 
quantification of $\overline{c}_a, \overline{x}'$ can be turned into an 
existential one:
\[
M_1(\overline{x},\overline{i}) = 
      \forall \overline{c}_b\scope
      \exists \overline{c}_a,\overline{x}'\scope
      \Bigl(
       T'(\overline{x},\overline{i},
       \overline{c}_a,\false,\overline{c}_b,
       \overline{x}') \wedge 
       W(\overline{x}) \wedge \neg W(\overline{x}')
       \Bigr)
\]
The computation of $M_0(\overline{x},\overline{i})$ works analogously. As a 
result, only the control signals $\overline{c}_b$, for which no solution has 
been computed yet, are quantified universally in the \acs{QBF} queries of 
Line~\ref{alg:SafeQbfSynt:check} and~\ref{alg:SafeQbfSynt:gen}.  The variable 
vector $\overline{c}_b$ becomes shorter from iteration to iteration, which means 
that the formula gets ``more propositional''.  In the last iteration, a 
\acs{SAT} solver can actually be used instead of a \acs{QBF} solver.

\mypara{\acs{CNF} conversion.}
The \acs{QBF} queries in Line~\ref{alg:SafeQbfSynt:check} 
and~\ref{alg:SafeQbfSynt:gen} contain only conjunctions.  The formula $F_j$ is 
always in \acs{CNF}.  Furthermore, most of our methods to compute a winning 
region or a winning area produce $W(\overline{x})$ in \acs{CNF}.  Hence, we only 
need to compute a \acs{CNF} representation of $T'$ and $\neg W(\overline{x}')$. 
\textsc{NegLearn} (Algorithm~\ref{alg:NegLearn}), which negates a 
formula without introducing auxiliary variables, was beneficial in the \acs{QBF} 
certification approach but does not pay off in the learning-based approach.  
Hence, we apply the method of Plaisted and Greenbaum~\cite{PlaistedG86} to 
compute a \acs{CNF} for $\neg W(\overline{x}')$. 

\mypara{\acs{QBF} preprocessing.}  With our extension of 
\bloqqer~\cite{SeidlK14} to preserve satisfying assignments, \acs{QBF} 
preprocessing can be applied both for counterexample computation and 
generalization.  However, while preprocessing was vital in our methods for 
computing a winning region, it does not give a significant speedup for 
\textsc{SafeQbfSynt} (see Chapter~\ref{sec:hw:exp}).

\mypara{Incremental \acs{QBF} solving.} 
\textsc{SafeQbfSynt} is very well suited for incremental \acs{QBF} solving, 
especially with a solver interface such as the one provided by 
\depqbf~\cite{LonsingE14,LonsingE14b}.  We propose to use two solver instances 
incrementally.  The first instance stores $F_j \wedge M_0$ and is used for 
Line~\ref{alg:SafeQbfSynt:check}.  Since Line~\ref{alg:SafeQbfSynt:ref} only 
adds clauses to $F_j$, this solver instance is only re-initialized when a mayor 
iteration (synthesizing the next $c_j$) is started.  The second solver instance 
stores $M_1$, is used for Line~\ref{alg:SafeQbfSynt:gen}, and is also 
re-initialized when a mayor iteration starts.  Before executing the loop in 
Line~\ref{alg:SafeQbfSynt:loop}, we let the second solver instance compute an 
unsatisfiable core $\mathbf{d}_g$ of $\mathbf{d} = \mathbf{x} \wedge \mathbf{i}$ 
and only reduce this core further in the loop.  The conjunction with 
$\mathbf{d}_t$ is realized with assumption literals.

\subsubsection{Discussion}

\mypara{Greediness.} 
\textsc{QbfSynt} is greedy in exploiting implementation freedom.  When 
synthesizing a circuit for one control signal $c_j$, \textsc{QbfSynt} ensures 
that \emph{some} solution for the remaining control signals still exists.  
However, the algorithm does not specifically attempt to retain implementation 
freedom for the remaining control signals.  This can have the effect that the 
signals synthesized early have a small implementation, which is found after only 
a few refinements.  Yet, for the signals synthesized later, the implementation 
freedom may already be ``exhausted'' and large implementations may be produced 
after many refinements.  Consequently, the performance may also strongly depend 
on the order in which control signals are processed.  This is similar to the 
standard \textsc{CofSynt} procedure, but different from the \acs{QBF} 
certification approach from Section~\ref{sec:hw:circ_qbfcert}, which computes 
circuits for all control signals simultaneously.

\mypara{Independence of symbolic representation.}  In contrast to 
\textsc{CofSynt} and \acs{QBF} certification, the \acs{QBF}-based 
learning approach is rather independent of the symbolic strategy representation 
and the reasoning engine.  Only the concrete counterexamples computed by 
Line~\ref{alg:QbfSynt:check} may differ, and our experience in trying to develop 
heuristics for computing good counterexamples indicates that one counterexample 
is usually just as good as any other.  Consequently, the number of iterations 
and the resulting circuit will be similar, independent of whether the 
strategy formula is encoded efficiently or not.  When implemented using 
\acp{BDD}, the variable ordering has little impact on these metrics too.

\mypara{Circuit depth.}
Another advantage of the \acs{QBF}-based \acs{CNF} learning algorithm presented 
in this section is that the produced circuits have a low depth.  This can be an 
important property because the circuit depth determines the maximum clock 
frequency with which the circuit can be operated.  The formulas $F_j$ defining 
the control signals $c_j$ are computed in \acs{CNF}.  When these formulas are 
transformed into circuits in the straightforward way, this yields circuits with 
a depth of at most $3$: every signal $\overline{x},\overline{i}$ needs to pass 
at most one inverter, one OR-gate and one AND-gate.  Depending on the gates 
available in the standard cell library, it may not be feasible to realize the 
circuit in this straightforward way.  However, experiments~\cite{EhlersKH12} 
with a simplistic standard cell library suggest that the circuit depth is 
usually much lower than when using the standard \textsc{CofSynt} 
procedure with \acp{BDD}.

\subsection{Interpolation} \label{sec:hw:int}

Jiang et al.~\cite{JiangLH09} present an interpolation-based approach to 
synthesize circuits from strategies.  Similar to the cofactor-based approach 
presented in Algorithm~\ref{alg:CofSynt} and the \acs{QBF}-based learning 
approach from Algorithm~\ref{alg:QbfSynt}, it computes circuits for one control 
signal $c_j \in \overline{c}$ after the other. However, in contrast to these 
previous algorithms, the interpolation-based approach avoids quantifier 
alternations by temporarily considering other control signals for which no 
circuits have been computed yet as if they were inputs.  
We will define the approach by Jiang et al.~\cite{JiangLH09} as an algorithm for 
our setting in Section~\ref{sec:hw:int_basic}.  After that, we will present 
optimizations and an efficient realization for safety specifications.  In 
Section~\ref{sec:hw:extr:satlearn}, we will furthermore combine the approach 
with query learning.

\subsubsection{Basic Algorithm}\label{sec:hw:int_basic}

\begin{algorithm}[tb]
\caption[\textsc{InterpolSynt}~\cite{JiangLH09}: Synthesizing circuits from 
strategies using interpolation]
{\textsc{InterpolSynt}~\cite{JiangLH09}: Synthesizing circuits from 
strategies using interpolation.}
\label{alg:InterpolSynt}
\begin{algorithmic}[1]
\Procedure{InterpolSynt}
          {$S(\overline{x},\overline{i},\overline{c},\overline{x}')$}
  \State $\overline{c}_a := \overline{c}$, \quad  
         $\overline{c}_b := \emptyset$
  \For{all $j$ from $|\overline{c}|$ to $1$}
    \State $\overline{c}_a := \overline{c}_a \setminus \{c_j\}$
    \State $M_1(\overline{x},\overline{i},\overline{c}_a) :=
           \bigl(
           \exists \overline{c}_b, \overline{x}' \scope 
           S(\overline{x},
             \overline{i},
             \overline{c}_a,
             \true,
             \overline{c}_b,
             \overline{x}')
           \bigr)
           \wedge
           \bigl(
           \neg \exists \overline{c}_b, \overline{x}' \scope 
           S(\overline{x},
             \overline{i},
             \overline{c}_a,
             \false,
             \overline{c}_b,
             \overline{x}')
           \bigr)$
           \label{alg:InterpolSynt:m1}
    \State $M_0(\overline{x},\overline{i},\overline{c}_a) :=
           \bigl(
           \exists \overline{c}_b, \overline{x}' \scope 
           S(\overline{x},
             \overline{i},
             \overline{c}_a,
             \false,
             \overline{c}_b,
             \overline{x}')
           \bigr)
           \wedge
           \bigl(
           \neg \exists \overline{c}_b, \overline{x}' \scope 
           S(\overline{x},
             \overline{i},
             \overline{c}_a,
             \true,
             \overline{c}_b,
             \overline{x}')
           \bigr)$
           \label{alg:InterpolSynt:m0}
    \State $F_j(\overline{x},\overline{i},\overline{c}_a) := \interpol\bigl(
              M_1(\overline{x},\overline{i},\overline{c}_a),
              M_0(\overline{x},\overline{i},\overline{c}_a)
           \bigr)$\label{alg:InterpolSynt:int}
    \State $\textsc{dumpCircuit}\bigl(c_j, 
            F_j(\overline{x},\overline{i},\overline{c}_a)\bigr)$ 
           \label{alg:InterpolSynt:dump}
    \State $S(\overline{x},\overline{i},\overline{c},\overline{x}') := 
           S(\overline{x},\overline{i},\overline{c},\overline{x}')
           \wedge \bigl(c_j \leftrightarrow 
           F_j(\overline{x},\overline{i},\overline{c}_a)\bigr)$ 
    \label{alg:InterpolSynt:resub}
    \State $\overline{c}_b := \overline{c}_b \cup \{c_j\}$
    \label{alg:InterpolSynt:cb}
  \EndFor
\EndProcedure
\end{algorithmic}
\end{algorithm}

Algorithm~\ref{alg:InterpolSynt} 
illustrates the approach by Jiang et al.~\cite{JiangLH09} in our setting.  
As before, the input is a strategy formula $S(\overline{x}, \overline{i}, 
\overline{c}, \overline{x}')$.  The procedure does not return any result but 
directly dumps the produced circuits defining $\overline{c}$.  

\mypara{Variable dependencies.}
Similar to \textsc{QbfSynt} in Algorithm~\ref{alg:QbfSynt}, the variables 
$\overline{c}=(c_1,\ldots,c_n)$ are split into three groups $\overline{c}_a, 
c_j, \overline{c}_b$.  Here, $c_j$ is the variable for which a circuit is 
computed in the current iteration.  The algorithm starts with the last control 
signal $c_n$ and proceeds with decreasing indices.\footnote{The order is 
actually irrelevant, but fixing some order simplifies the discussion.}
Line~\ref{alg:InterpolSynt:cb} makes sure that the variable vector 
$\overline{c}_b$ contains all control variables for which a circuit has 
been computed in some previous iteration.  Finally, $\overline{c}_a$ contains 
all control variables for which a circuit needs to be computed in one of 
the following iterations.  The variables in $\overline{c}_a$ 
\begin{wrapfigure}[10]{r}{0.51\textwidth}
\vspace{-3mm}
\centering
  \includegraphics[width=0.5\textwidth]{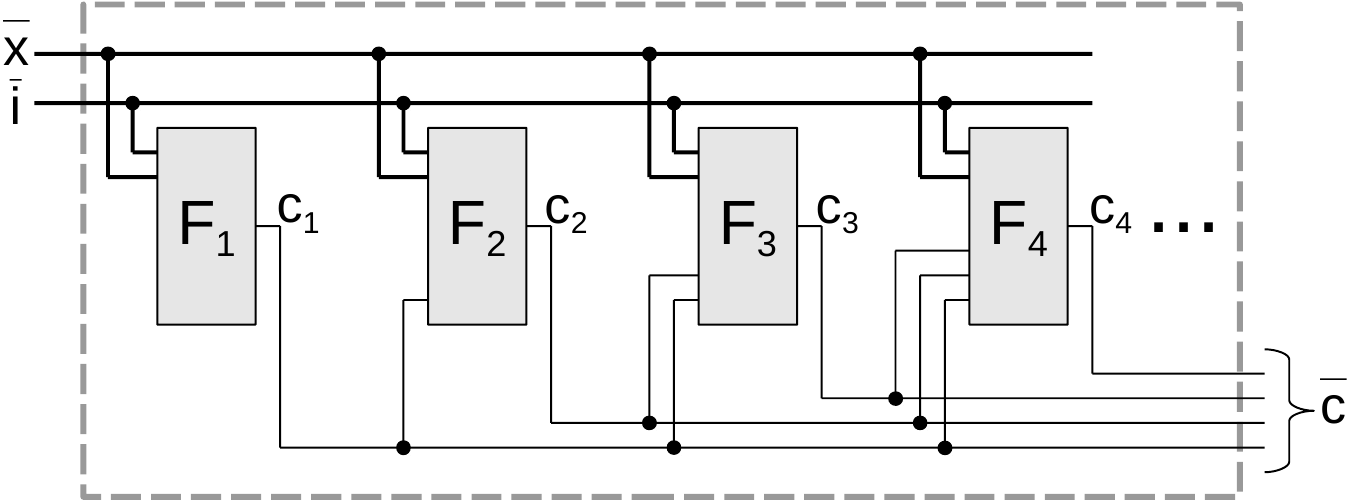}
\caption{Variable dependencies in interpolation-based circuit synthesis.}
\label{fig:interpol}
\end{wrapfigure}
are treated as if 
they were inputs.  That is, the circuit defining $c_j$ may not only reference 
variables from $\overline{x}$ and $\overline{i}$, but also all $c_k$ with $k < 
j$ for which no circuit has been computed yet.  This is illustrated in 
Figure~\ref{fig:interpol}: $c_n$ can also take all signals $c_1,\ldots,c_{n-1}$ 
as input, the circuit for $c_{n-1}$ can also take 
$c_1,\ldots,c_{n-2}$ as input, etc.  Finally, $c_1$ cannot depend on any other 
variables of $\overline{c}$.  This ensures that there are no circular 
dependencies.  Furthermore, when the circuits for all $c_j\in \overline{c}$ are 
built together, the signals $\overline{c}$ effectively depend on $\overline{x}$ 
and $\overline{i}$ only.

\mypara{Definition of $M_1$ and $M_0$.}
Let $\overline{d} = \overline{x} \cup \overline{i} \cup \overline{c}_a$ be the 
vector of all variables on which the current control signal $c_j$ may depend. 
Line~\ref{alg:InterpolSynt:m1} of \textsc{InterpolSynt} computes 
$M_1(\overline{d})$, which characterizes the set of all 
$\overline{d}$-assignments for which $c_j$ must be $\true$.  This is done as 
follows.  The subformula 
$C_1(\overline{d}) = \exists \overline{c}_b, \overline{x}' \scope 
           S(\overline{x},
             \overline{i},
             \overline{c}_a,
             \true,
             \overline{c}_b,
             \overline{x}')$ 
characterizes the set of all $\overline{d}$-assignments for which $c_j=\true$ 
is 
allowed by $S$.  This is essentially the positive cofactor of $S$ regarding 
$c_j$, but the variables $\overline{c}_b, \overline{x}'$ are also quantified 
existentially, which means that their concrete value is irrelevant as long as 
some value exists.  Similarly, the subformula 
$C_0(\overline{d}) = \exists \overline{c}_b, \overline{x}' \scope 
           S(\overline{x},
             \overline{i},
             \overline{c}_a,
             \false,
             \overline{c}_b,
             \overline{x}')$ 
characterizes the set of all $\overline{d}$-assignments for which $c_j=\false$ 
is allowed by $S$.  Hence, $M_1$ represents the set of all 
$\overline{d}$-assignments for which $\true$ is allowed, but $\false$ is not 
allowed.  Analogously, Line~\ref{alg:InterpolSynt:m0} computes the formula 
$M_0$, which characterizes the $\overline{d}$-assignments for which $c_j$ must 
be $\false$.  In principle, $M_1$ and $M_0$ can easily be transformed into a 
propositional \acs{CNF} formula by renaming or expanding the existentially 
quantified variables.  An efficient solution to do so will be presented in 
Section~\ref{sec:hw_interpol_impl}, but for now we focus on understandability 
rather than efficiency.  

\mypara{Differences to \textsc{QbfSynt}.}  
Note that the procedure \textsc{QbfSynt} from Algorithm~\ref{alg:QbfSynt} 
computes $M_1$ and $M_0$ differently in two respects.  First, 
$M_1(\overline{x},\overline{i})$ and $M_0(\overline{x},\overline{i})$ do not 
contain $\overline{c}_a$ as free variables in \textsc{QbfSynt}.  Second, 
$M_1(\overline{x},\overline{i})$ is computed as $\neg 
C_0(\overline{x},\overline{i})$ in \textsc{QbfSynt} instead of 
$C_1(\overline{d}) \wedge \neg C_0(\overline{d})$ (and similar for 
$M_0(\overline{x},\overline{i})$).  The additional conjunction with 
$C_1(\overline{d})$ in \textsc{InterpolSynt} is necessary for the following 
reason.  We have that $\neg C_0(\overline{x},\overline{i}) \rightarrow 
C_1(\overline{x},\overline{i})$ and $\neg C_1(\overline{x},\overline{i}) 
\rightarrow C_0(\overline{x},\overline{i})$ in \textsc{QbfSynt} because 
$\forall \overline{x},\overline{i} \scope 
 \exists \overline{c},\overline{x}'\scope 
S(\overline{x},\overline{i},\overline{c},\overline{x}')$ 
is guaranteed by the strategy.  In other words, for every 
$(\overline{x},\overline{i})$-assignment, any control signal $c_j$ can either be 
$\true$ or $\false$ (or both).  Hence, the additional conjunct 
$C_1(\overline{x},\overline{i})$ would be of no use in 
$M_1(\overline{x},\overline{i}) = \neg C_0(\overline{x},\overline{i})$ as 
defined by \textsc{QbfSynt}, because it is implied anyway.  Yet, $\neg 
C_0(\overline{d}) \rightarrow C_1(\overline{d})$ and $\neg C_1(\overline{d}) 
\rightarrow C_0(\overline{d})$ do \emph{not} hold in \textsc{InterpolSynt}: 
there may be $\overline{d}$-assignment for which neither $c_j=\true$ nor 
$c_j=\false$ is allowed by the strategy.  The reason is that we also consider 
the signals $\overline{c}_a$ as if they were inputs, but
$\forall \overline{x},\overline{i},\overline{c}_a\scope 
 \exists c_j, \overline{c}_b,\overline{x}'\scope 
S(\overline{x},\overline{i},\overline{c}_a,c_j,\overline{c}_b,\overline{x}')$
does not hold in general.  For $\overline{d}$-assignments for which neither 
$c_j=\true$ nor $c_j=\false$ is allowed, the definition of $M_1 
= C_1(\overline{d}) \wedge \neg C_0(\overline{d})$ and $M_0 = C_0(\overline{d}) 
\wedge \neg C_1(\overline{d})$ allows both values for $c_j$. This is justified 
by the fact that the circuits synthesized for $\overline{c}_a$ in subsequent 
iterations will make sure that such $\overline{d}$-assignments will never occur 
as input of the circuit defining $c_j$.  We refer to Jiang et 
al.~\cite{JiangLH09} for details on this technical subtlety.

\mypara{Interpolation.}
The conjunction 
$M_1(\overline{d})\wedge M_0(\overline{d}) = 
C_1(\overline{d}) \wedge \neg C_0(\overline{d}) \wedge 
C_0(\overline{d}) \wedge \neg C_1(\overline{d})
$ 
is trivially unsatisfiable, so an interpolant $F_j(\overline{d})$ can be 
computed in Line~\ref{alg:InterpolSynt:int}.  The properties of an interpolant 
(see Section~\ref{sec:prelim:prop}) ensure that $M_1 \rightarrow F_j 
\rightarrow 
\neg M_0$.  The first implication means that $F_j$ is $\true$ whenever $c_j$ 
must be $\true$.  The second implication means that if $F_j$ is $\true$, then 
$c_j$ does not have to $\false$.  This means that $F_j(\overline{d})$ is a 
proper implementation for $c_j$.  

\mypara{Circuit construction and resubstitution.} 
The remaining steps of the \textsc{InterpolSynt} procedure are the same as for 
\textsf{CofSynt} (Algorithm~\ref{alg:CofSynt}) and \textsf{QbfSynt} 
(Algorithm~\ref{alg:QbfSynt}).  Line~\ref{alg:InterpolSynt:dump} constructs a 
circuit which sets $c_j=\true$ if an only if $F_j(\overline{x}, \overline{i}, 
\overline{c}_a)$ evaluates to $\true$. Finally, 
Line~\ref{alg:InterpolSynt:resub} refines the strategy formula $S$ with the 
concrete implementation for $c_j$.

\mypara{Auxiliary variables.}  If the strategy formula $S$ is defined using 
auxiliary variables, these can all be put into $\overline{c}_b$.  This also 
applies to auxiliary variables that may be introduced in the resubstitution
in Line~\ref{alg:InterpolSynt:resub}.

\mypara{Variations.}  Jiang et al.~\cite{JiangLH09} propose to perform a second 
pass over all control signals, where the circuits for all $c_j$ are recomputed 
using interpolation, while fixing the implementation for the other control 
signals.  This has the potential for producing smaller circuits because the 
recomputed interpolants can now rely on some concrete realization for the other 
control signals.  However, in preliminary experiments for our setting, this 
second pass did not result in considerable circuit size improvements (but rather 
increased the circuit size for many cases).  Since such a second pass also 
increases the computation time, we do not perform it.  Jiang et 
al.~\cite{JiangLH09} also propose a second interpolation-based approach which 
does not treat other control signals as if they were inputs but rather 
quantifies them universally and applies universal expansion to eliminate the 
quantifiers.  However, this can blow up the formula size significantly.  
Preliminary experiments with this second approach were not promising in our 
setting either.

\subsubsection{Dependency Optimization}\label{sec:hw_dep}

For some specifications, the performance of \textsc{InterpolSynt} strongly 
depends on the order in which the control signals 
$\overline{c}=(c_1,\ldots,c_n)$ are processed.  One reason is that this order 
defines which signal $c_j$ may depend on which other signals $c_k$.  The aim of 
the optimization presented in this section is to increase the set of variables 
on which a certain signal $c_j$ can depend.  This increases the freedom for the 
interpolation procedure (the interpolant $F_j$ may still choose the ignore the 
additional signals) and can lead to smaller interpolants and shorter execution 
times.

\mypara{Basic idea.}
The basic idea is as follows.  As illustrated in Figure~\ref{fig:interpol}, the 
interpolant $F_n$ computed first can reference all other control signals 
$c_1,\ldots,c_{n-1}$.  The interpolant $F_{n-1}$ computed in the second 
iteration cannot depend on $c_{n}$, though.  The reason is that $F_n$, which 
defines $c_{n}$, could in turn reference $c_{n-1}$, which would result in a 
circular dependency.  Yet, the concrete interpolant $F_n$ may choose to ignore 
$c_{n-1}$ completely.  In this case, $F_{n-1}$ \emph{can} in fact be allowed to 
reference $c_{n}$.  The reason is that there is no danger to introduce a 
circular dependency --- the result would be the same as if $c_{n}$ and $c_{n-1}$ 
would have been processed by \textsc{InterpolSynt} in reverse order.

\mypara{Realization.}
In the iteration synthesizing a solution for $c_j$, we analyze which other 
signals $c_k$ with $k > j$ do not transitively depend on $c_j$.  This is done on 
a syntactic level by checking if $c_j$ occurs in the fan-in cone of $c_k$ when 
the circuits for $F_{j+1},\ldots, F_{n}$ are combined.  If $c_j$ does not appear 
in the fan-in cone of $c_k$, then $c_k$ is moved (temporarily) from 
$\overline{c}_b$ to $\overline{c}_a$.  Thus, $F_j(\overline{x}, \overline{i}, 
\overline{c}_a)$ can reference $c_k$.

\mypara{Dependencies on auxiliary variables.}
Depending on the realization of $\interpol$, the computed interpolants 
$F_{j}(\overline{x}, \overline{i}, \overline{c}_a)$ may be represented using 
auxiliary variables (e.g., introduced by a 
Tseitin-transformation~\cite{Tseitin83}) that act as abbreviation for some 
subformulas over $\overline{x}$, $\overline{i}$ and $\overline{c}_a$.  As 
mentioned in the previous subsection, all auxiliary variables are put into 
$\overline{c}_b$, so they cannot be referenced by the computed interpolants by 
default.  However, the dependency analysis cannot only be performed for the 
final output of each $F_k$  with $k > j$, but also on their auxiliary variables: 
if the current $c_j$ does not appear in the fan-in cone of some auxiliary 
variable $t$, then $t$ can be moved from $\overline{c}_b$ to $\overline{c}_a$.

\subsubsection{Efficient Implementation for Safety Synthesis Problems} 
\label{sec:hw_interpol_impl}

The procedure \textsc{SafeInterpolSynt} in Algorithm~\ref{alg:SafeInterpolSynt} 
shows an efficient implementation of \textsc{InterpolSynt} if the winning 
strategy is defined via a winning region (or winning area) $W(\overline{x})$ of 
a safety specification.  The dependency optimization is not included for the 
sake of readability.

\begin{algorithm}[tb]
\caption[\textsc{SafeInterpolSynt}: Synthesizing circuits from winning areas 
using interpolation]
{\textsc{SafeInterpolSynt}: Synthesizing circuits from winning areas 
using interpolation.}
\label{alg:SafeInterpolSynt}
\begin{algorithmic}[1]
\Procedure{SafeInterpolSynt}
          {$T(\overline{x},\overline{i},\overline{c},\overline{x}')$,
           $W(\overline{x})$}
  \State $T'(\overline{x},\overline{i},\overline{c},\overline{x}') :=
          T(\overline{x},\overline{i},\overline{c},\overline{x}')$, \quad
         $\overline{c}_a := \overline{c}$, \quad  
         $\overline{c}_b := \emptyset$
  \For{all $j$ from $|\overline{c}|$ to $1$}
    \State $\overline{c}_a := \overline{c}_a \setminus \{c_j\}$
    \State $\overline{c}_{b1}, \overline{c}_{b2}, 
            \overline{c}_{b3}, \overline{c}_{b4} :=
           \textsf{create4FreshCopies}(\overline{c}_{b})$
    \State $\overline{x}_1',\, \overline{x}_2',\, 
            \overline{x}_3',\, \overline{x}_4'\,\, :=
           \textsf{create4FreshCopies}(\overline{x}')$
    \State $M_1'(\overline{x},\overline{i},\overline{c}_a,
                \overline{c}_{b1},\overline{c}_{b2},
                \overline{x}_1',\overline{x}_2') :=
           T'(\overline{x},
             \overline{i},
             \overline{c}_a,
             \true,
             \overline{c}_{b1},
             \overline{x}_1')
           \wedge
           W(\overline{x}_1')
           \wedge
           T'(\overline{x},
             \overline{i},
             \overline{c}_a,
             \false,
             \overline{c}_{b2},
             \overline{x}_2')
           \wedge
           W(\overline{x})
           \wedge
           \neg W(\overline{x}_2')$
           \label{alg:SafeInterpolSynt:m1}
    \State $M_0'(\overline{x},\overline{i},\overline{c}_a,
                \overline{c}_{b3},\overline{c}_{b4},
                \overline{x}_3',\overline{x}_4') :=
           T'(\overline{x},
             \overline{i},
             \overline{c}_a,
             \false,
             \overline{c}_{b3},
             \overline{x}_3')
           \wedge
           W(\overline{x}_3')
           \wedge
           T'(\overline{x},
             \overline{i},
             \overline{c}_a,
             \true,
             \overline{c}_{b4},
             \overline{x}_4')
           \wedge
           W(\overline{x})
           \wedge
           \neg W(\overline{x}_4')$
           \label{alg:SafeInterpolSynt:m0}
    \State $F_j(\overline{x},\overline{i},\overline{c}_a) := \interpol\bigl(
              M_1'(\overline{x},\overline{i},\overline{c}_a,
                  \overline{c}_{b1},\overline{c}_{b2},
                  \overline{x}_1',\overline{x}_2'),
              M_0'(\overline{x},\overline{i},\overline{c}_a,
                  \overline{c}_{b3},\overline{c}_{b4},
                  \overline{x}_3',\overline{x}_4')
           \bigr)$\label{alg:SafeInterpolSynt:int}
    \State $\textsc{dumpCircuit}\bigl(c_j, 
            F_j(\overline{x},\overline{i},\overline{c}_a)\bigr)$ 
           \label{alg:SafeInterpolSynt:dump}
    \State $T'(\overline{x},\overline{i},\overline{c},\overline{x}') := 
            T'(\overline{x},\overline{i},\overline{c},\overline{x}')
            \wedge \bigl(c_j \leftrightarrow 
            F_j(\overline{x},\overline{i},\overline{c}_a)\bigr)$ 
    \label{alg:SafeInterpolSynt:resub}
    \State $\overline{c}_b := \overline{c}_b \cup \{c_j\}$
    \label{alg:SafeInterpolSynt:cb}
  \EndFor
\EndProcedure
\end{algorithmic}
\end{algorithm}

\mypara{Computation of $M_1$ and $M_0$.}  With
$S(\overline{x},\overline{i},\overline{c},\overline{x}') = 
T(\overline{x},\overline{i},\overline{c},\overline{x}') \wedge \bigl(\neg 
W(\overline{x}) \vee W(\overline{x}')\bigr)$, we can apply the following 
transformations to compute a more compact \acs{CNF} for $M_1$
as $M_1(\overline{x},\overline{i},\overline{c}_a) =$
\begin{linenomath*}
\begin{align*}
& \;
           \bigl(
           \exists \overline{c}_b, \overline{x}' \scope 
           S(\overline{x},
             \overline{i},
             \overline{c}_a,
             \true,
             \overline{c}_b,
             \overline{x}')
           \bigr)
           \wedge
           \bigl(
           \neg \exists \overline{c}_b, \overline{x}' \scope 
           S(\overline{x},
             \overline{i},
             \overline{c}_a,
             \false,
             \overline{c}_b,
             \overline{x}')
           \bigr)
\\=&\;
    \Bigl(
    \exists \overline{c}_b, \overline{x}' \scope 
    T(\overline{x},
      \overline{i},
      \overline{c}_a,
      \true,
      \overline{c}_b,
      \overline{x}') 
    \wedge 
    \bigl(\neg W(\overline{x}) \vee W(\overline{x}')\bigr)
    \Bigr)
    \wedge
    \Bigl(
    \neg \exists \overline{c}_b, \overline{x}' \scope 
    T(\overline{x},
      \overline{i},
      \overline{c}_a,
      \false,
      \overline{c}_b,
      \overline{x}') 
    \wedge 
    \bigl(\neg W(\overline{x}) \vee W(\overline{x}')\bigr)
    \Bigr)
\\=&\;
    \Bigl(
    \exists \overline{c}_b, \overline{x}' \scope 
    T(\overline{x},
      \overline{i},
      \overline{c}_a,
      \true,
      \overline{c}_b,
      \overline{x}') 
    \wedge 
    \bigl(\neg W(\overline{x}) \vee W(\overline{x}')\bigr)
    \Bigr)
    \wedge
\Bigl(
    \forall \overline{c}_b, \overline{x}' \scope 
    T(\overline{x},
      \overline{i},
      \overline{c}_a,
      \false,
      \overline{c}_b,
      \overline{x}') 
    \rightarrow 
    \bigl(W(\overline{x}) \wedge \neg W(\overline{x}')\bigr)
    \Bigr)
\end{align*}
\end{linenomath*}
That is, the negation turns the existential quantification over 
$\overline{c}_b, \overline{x}'$ into a universal one.  Yet, just like 
\textsc{SafeQbfSynt} (Algorithm~\ref{alg:SafeQbfSynt}), 
\textsc{SafeInterpolSynt} also keeps a copy $T'$ of the transition relation $T$
that defines all variables in $\overline{c}_b$ and $\overline{x}'$ uniquely 
based on
the other variables.  For the variables $\overline{x}'$, this holds initially. 
For $\overline{c}_b$, this is ensured by Line~\ref{alg:SafeInterpolSynt:resub}.
Thus, by using $T'$ instead of $T$ and by applying the one-point rule 
(\ref{eq:op1}), the universal quantification can be turned into an 
existential one:
\begin{linenomath*}
\begin{align*}
    \Bigl(
    \exists \overline{c}_b, \overline{x}' \scope 
    T'(\overline{x},
      \overline{i},
      \overline{c}_a,
      \true,
      \overline{c}_b,
      \overline{x}') 
    \wedge 
    \bigl(\neg W(\overline{x}) \vee W(\overline{x}')\bigr)
    \Bigr)
    \wedge
\Bigl(
    \exists \overline{c}_b, \overline{x}' \scope 
    T'(\overline{x},
      \overline{i},
      \overline{c}_a,
      \false,
      \overline{c}_b,
      \overline{x}') 
    \wedge 
    W(\overline{x}) \wedge \neg W(\overline{x}')
    \Bigr)
\end{align*}
\end{linenomath*}
By renaming the variables $\overline{c}_b$ and $\overline{x}'$, the two 
subformulas can be merged into one block of quantifiers:
\begin{linenomath*}
\begin{align*}
    \exists \overline{c}_{b1},\overline{c}_{b2},
            \overline{x}_1',\overline{x}_2' \scope %
    T'(\overline{x},
      \overline{i},
      \overline{c}_a,
      \true,
      \overline{c}_{b1},
      \overline{x}_1') 
    \wedge 
    \bigl(\neg W(\overline{x}) \vee W(\overline{x}_1')\bigr)
    \wedge
    T'(\overline{x},
      \overline{i},
      \overline{c}_a,
      \false,
      \overline{c}_{b2},
      \overline{x}_2') 
    \wedge 
    W(\overline{x}) \wedge \neg W(\overline{x}_2')
\end{align*}
\end{linenomath*}
Finally, $\bigl(\neg W(\overline{x}) \vee W(\overline{x}_1')\bigr) \wedge
W(\overline{x})$ can be simplified to $W(\overline{x}) \wedge 
W(\overline{x}_1')$, which is fortunate because negations and disjunctions
are expensive to perform in \acs{CNF}. This gives 
\[M_1(\overline{x},\overline{i},\overline{c}_a) =
    \exists \overline{c}_{b1},\overline{c}_{b2},
            \overline{x}_1',\overline{x}_2' \scope 
    T'(\overline{x},
      \overline{i},
      \overline{c}_a,
      \true,
      \overline{c}_{b1},
      \overline{x}_1') 
    \wedge 
    W(\overline{x}_1')
    \wedge
    T'(\overline{x},
      \overline{i},
      \overline{c}_a,
      \false,
      \overline{c}_{b2},
      \overline{x}_2') 
    \wedge 
    W(\overline{x}) \wedge \neg W(\overline{x}_2'). 
\]
In \textsc{SafeInterpolSynt}, the existential quantification is not applied. 
Instead, the variables $\overline{c}_{b1}$, $\overline{c}_{b2}$, 
$\overline{x}_1'$, $\overline{x}_2'$ occur freely in $M_1'$.  Similarly, other 
fresh copies $\overline{c}_{b3}, \overline{c}_{b4}, \overline{x}_3', 
\overline{x}_4'$ of the same variables occur freely in $M_0'$.  The properties 
of an interpolant (see Section~\ref{sec:prelim:prop}) ensure that $F_j$, 
computed in Line~\ref{alg:SafeInterpolSynt:int}, can only reference the 
variables $\overline{x},\overline{i},\overline{c}_a$ occurring both in $M_1'$ 
and in $M_0'$.  Hence, these free variables cannot be referenced in the 
resulting circuit.

\mypara{\acs{CNF} conversion.}
The formulas in Line~\ref{alg:SafeInterpolSynt:m1} 
and~\ref{alg:SafeInterpolSynt:m0} contain only conjunctions.  Most of our 
methods to compute a winning region or a winning area produce $W(\overline{x})$ 
in \acs{CNF}.  Hence, just as for \acs{QBF} certification and \acs{QBF}-based 
\acs{CNF} learning, we only need to compute a \acs{CNF} 
representation of $T'$ and $\neg W(\overline{x}')$. 

\mypara{Simplification of interpolants.}
The computed interpolants $F_j$ refine $T'$ in 
Line~\ref{alg:SafeInterpolSynt:resub}.  Hence, complicated representations of 
$F_j$ result in more complicated formulas for $T'$, which can increase the time 
for interpolation (and may result in even more 
complicated formulas for the subsequent interpolants).  Besides optimizing the 
final circuit regarding size, we therefore also optimize every single 
interpolant using the tool \abc~\cite{BraytonM10} after it has been computed.

\subsubsection{Discussion} \label{sec:intdisc}

\mypara{Exploiting implementation freedom.}
\textsc{InterpolSynt} is rather conservative in exploiting implementation 
freedom when computing a circuit for some control signal $c_j$:  to the extend 
where this is feasible, the circuit $F_j$ defining $c_j$ will work for 
\emph{any} realization of the control signals $\overline{c}_a$ that have not 
been synthesized yet.  The reason is that the variables of $\overline{c}_a$ are 
handled as if they were inputs.  This stands in contrast to \textsc{QbfSynt}, 
which is more greedy by exploiting implementation freedom as long as \emph{some} 
solution for the other signals still exists.  Both strategies have their 
advantages.  Preserving implementation freedom can result in smaller circuits 
for control signals that are synthesized later.  The greedy strategy can be 
better in preventing that implementation freedom is left unexploited.

\mypara{Dependencies between control signals.}  In contrast to \textsc{CofSynt} 
and \textsc{QbfSynt}, \textsc{InterpolSynt} constructs a circuit in such a way 
that the implementation for one control signal can be reused in the 
definition of others (see Figure~\ref{fig:interpol}).  This can result in a 
smaller total circuit size.  As an extreme example, one control signal $c_j$ 
could be required to be an exact copy of some other control signal $c_k$.  While 
\textsc{InterpolSynt} may find the implementation $c_j=c_k$ quickly, both 
\textsc{CofSynt} and \textsc{QbfSynt} would have to construct the same 
(potentially complicated) circuit based on the variables $\overline{x}$ and 
$\overline{i}$ twice.  The circuit optimization techniques we apply as a 
postprocessing step may optimize one copy away, so the final circuit may 
actually be the same.  Nevertheless, computing the same circuit twice is at 
least a waste of resources.

\mypara{Dependence on the interpolation procedure.}
With \textsc{InterpolSynt}, the size of the resulting circuits strongly depends 
on the ability of the interpolation procedure $\interpol$ to exploit the 
freedom between $M_1$ and $\neg M_0$.  When the interpolant is computed from an 
unsatisfiability proof returned by a \acs{SAT} solver, we must rely on the 
heuristics in the solver to yield a compact proof that can be used to derive a 
simple interpolant, which can then be implemented in a small circuit.  In 
contrast, \textsc{QbfSynt} is more independent of the underlying reasoning 
engine.  The next section will present an approach to reduce this dependency 
of \textsc{InterpolSynt} on the underlying solver.

\subsection{Query Learning Based on \acs{SAT} Solving} 
\label{sec:hw:extr:satlearn}

In this section, we combine query learning with the idea by Jiang et 
al.~\cite{JiangLH09} to temporarily treat control signals as if they were 
inputs.  This eliminates the need for universal quantification and allows us to 
implement the query learning approach from Section~\ref{sec:hw:extr:qbflearn} 
with a \acs{SAT} solver instead of a \acs{QBF} solver.
In the following subsection, we will present a solution based on \acs{CNF} 
learning.  Applying other learning algorithms from~\cite{EhlersKH12} publication
is possible, but imposes more overhead 
for encoding formula parts into \acs{CNF}. After introducing the basic 
algorithm, we will again present an efficient realization for safety synthesis 
problems and discuss the differences to the other algorithms.

\subsubsection{\acs{CNF} Learning Based on \acs{SAT} Solving}

In Section~\ref{sec:hw:extr:qbflearn}, we have discussed that query 
learning can be used as a special interpolation procedure if different formulas 
are used for counterexample computation and generalization.  While 
Section~\ref{sec:hw:extr:qbflearn} used this idea to compute interpolants 
between quantified formulas using a \acs{QBF} solver, we use it here to compute 
interpolants for propositional formulas using \acs{CNF} learning.  

\begin{wrapfigure}[11]{r}{0.60\textwidth}
\vspace{-8mm}
\centering
\begin{minipage}{0.59\textwidth}
\begin{algorithm}[H]
\caption{\textsc{CnfInterpol}: Computing an interpolant using \acs{CNF} 
learning with a \acs{SAT} solver.}
\label{alg:CnfInterpol}
\begin{algorithmic}[1]
\ProcedureRetL{CnfInterpol}{M_1(\overline{d},\overline{t}_1),
                           M_0(\overline{d},\overline{t}_0)}
                       {A \acs{CNF} $F(\overline{d})$ with 
                        $M_1 \rightarrow F \rightarrow \neg M_0$}
                       {2cm}
  \State $F(\overline{d}) := \true$
    \While{$\mathsf{sat}$ in $(\mathsf{sat}, \mathbf{d}) := 
    \propsatmodel\bigl(M_0(\overline{d},\overline{t}_0) \wedge
                       F(\overline{d})\bigr)$}
          \label{alg:CnfInterpol:check}
      \State $F(\overline{d}) := 
              F(\overline{d}) \wedge \neg 
               \propsatmincore\bigl(\mathbf{d},
                               M_1(\overline{d},\overline{t}_1)
               \bigr)$
      \label{alg:CnfInterpol:core}
    \EndWhile
    \State \textbf{return} $F(\overline{d})$
\EndProcedure
\end{algorithmic}
\end{algorithm}
\end{minipage}
\end{wrapfigure}

\mypara{Algorithm.}
We keep the basic structure of the \textsc{InterpolSynt} procedure from 
Algorithm~\ref{alg:InterpolSynt}, but replace the call to $\interpol$ in 
Line~\ref{alg:InterpolSynt:int} by a call to \textsc{CnfInterpol}, which is 
defined in Algorithm~\ref{alg:CnfInterpol}.  The interface of 
\textsc{CnfInterpol} is the same as that of any interpolation procedure: given 
two formulas $M_1(\overline{d},\overline{t}_1)$ and 
$M_0(\overline{d},\overline{t}_0)$ such that $M_1\wedge M_0$ is unsatisfiable, 
it returns a formula $F(\overline{d})$ over 
the shared variables $\overline{d}$ such that $M_1 \rightarrow F \rightarrow 
\neg M_0$.  The implementation of \textsc{CnfInterpol} is simple. It starts with 
the initial approximation $F = \true$ and enforces the invariant $M_1 
\rightarrow F$.  Line~\ref{alg:CnfInterpol:check} checks if $F\rightarrow \neg 
M_0$, which is the case if and only if $F \wedge M_0$ is unsatisfiable.  If so, 
then $M_1 \rightarrow F \rightarrow \neg M_0$ holds, so the loop terminates and 
$F$ is returned as result.  Otherwise a counterexample $\mathbf{d} \models F 
\wedge M_0$ is extracted for which $F$ is $\true$ but must be $\false$.  The 
computation of the unsatisfiable core in Line~\ref{alg:CnfInterpol:core} 
generalizes the cube $\mathbf{d}$ by dropping literals as long as $\mathbf{d}$ 
does not intersect with $M_1$.  Consequently, the update of $F$ in 
Line~\ref{alg:CnfInterpol:core} preserves the invariant $M_1 \rightarrow F$ and 
resolves the counterexample.

\mypara{Exploiting freedom.}
As in \textsc{QbfSynt} (Algorithm~\ref{alg:QbfSynt}) using $M_0$ in 
counterexample computation makes sure that refinements of $F$ are only triggered 
if some $\overline{d}$-assignment $\mathbf{d}$ \emph{must} be mapped to 
$\false$.  Using $M_1$ in counterexample generalization entails that other 
$\overline{d}$-assignments are also mapped to $\false$ as long as they 
\emph{can} be mapped to $\false$.  Using ``can'' instead of ``must'' during 
generalization potentially eliminates more counterexamples before they are 
actually encountered by Line~\ref{alg:CnfInterpol:check}.

\subsubsection{Efficient Implementation for Safety Synthesis Problems}
\label{sec:hw:extrsat_impl}

\noindent
Following the transformations presented for \textsc{SafeInterpolSynt} in 
Section~\ref{sec:hw_interpol_impl}, \textsc{Cnf\-Interpol} is 
called
with
\begin{linenomath*}
\begin{align*}
&M_1(\overline{x},\overline{i},\overline{c}_a,
      \overline{c}_{b1},\overline{c}_{b2},\overline{x}_1',\overline{x}_2') =
    T'(\overline{x},
      \overline{i},
      \overline{c}_a,
      \true,
      \overline{c}_{b1},
      \overline{x}_1') 
    \wedge 
    W(\overline{x}_1')
    \wedge
    T'(\overline{x},
      \overline{i},
      \overline{c}_a,
      \false,
      \overline{c}_{b2},
      \overline{x}_2') 
    \wedge 
    W(\overline{x}) \wedge \neg W(\overline{x}_2') \text{ and}
\\&
M_0(\overline{x},\overline{i},\overline{c}_a,
      \overline{c}_{b3},\overline{c}_{b4},\overline{x}_3',\overline{x}_4') =
    T'(\overline{x},
      \overline{i},
      \overline{c}_a,
      \false,
      \overline{c}_{b3},
      \overline{x}_3') 
    \wedge 
    W(\overline{x}_3')
    \wedge
    T'(\overline{x},
      \overline{i},
      \overline{c}_a,
      \true,
      \overline{c}_{b4},
      \overline{x}_4') 
    \wedge 
    W(\overline{x}) \wedge \neg W(\overline{x}_4').
\end{align*}
\end{linenomath*}
Since \textsc{CnfInterpol} does not perform any negations nor 
disjunctions, only $\neg W(\overline{x}')$ needs to be transformed into 
\acs{CNF}.

\mypara{Dependency optimization.}
We can apply the dependency optimization presented in Section~\ref{sec:hw_dep}. 
However, on top of allowing dependencies on other control signals, we also allow 
dependencies on auxiliary variables that are used for defining the transition 
relation $T'$ as long as this does not result in circular dependencies.

\mypara{Incremental solving.} 
\textsc{CnfInterpol} is well suited for incremental \acs{SAT} solving.  A simple 
solution uses two solver instances, which are initialized whenever 
\textsc{CnfInterpol} is called.  The first solver instance stores $M_0 \wedge F$ 
and is used for Line~\ref{alg:CnfInterpol:check}. The second one stores $M_1$ 
and is used for Line~\ref{alg:CnfInterpol:core}.  A more radical solution uses 
only one solver instance throughout \emph{all} calls to \textsc{CnfInterpol}.  
Note that $M_1$ differs from $M_0$ only by having $c_j$ (in two copies) set to 
different truth constants.  Hence, switching between $M_1$ and $M_0$ can be 
achieved by setting (the two copies of) $c_j$ differently with assumption 
literals.  Furthermore, the clauses of some $F_j$ are all disjoined with some 
fresh activation variable $a_j$ before they are asserted in the solver.  This 
way, $F_j$ can be enabled or disabled by setting the assumption literal $\neg 
a_j$ or $a_j$, respectively.  Finally, $T'$ changes between major iterations of 
\textsc{SafeInterpolSynt} (see Line~\ref{alg:SafeInterpolSynt:resub}).  However, 
additional constraints are only added in this update, so this does not pose any 
challenge for incremental solving.

\mypara{Minimizing the final solution.}  Recall from the interpolation-based 
method from Section~\ref{sec:hw:int} that a second pass over all control signals 
can be performed, in which the circuits for all $c_j$ are recomputed while the 
implementation for the other control signals is fixed.  In principle, this has 
the potential for reducing the circuit size because the recomputed circuits can 
now rely on some concrete realization for the other control signals.  The same 
idea can also be applied in our \acs{SAT} solver based \acs{CNF} learning 
approach.  
However, similar to interpolation, recomputing individual circuits by learning 
them from scratch did not result in circuit size reductions, but more often in 
circuit size increases in our experiments.  Yet, instead of recomputing a 
circuit from scratch, we can also start with the existing solution $F_j$, which 
is given as a \acs{CNF} formula, and simplify it by dropping literals and 
clauses as long as correctness is still preserved.  The idea is similar to 
\textsc{CompressCnf} (Algorithm~\ref{alg:CompressCnf}), but the simplification 
is not equivalence preserving but only correctness preserving.  We propose to 
postprocess all $F_j$ in the order of decreasing $j$.  The reason is that $F_n$ 
was computed first, without any knowledge about the implementation of the other 
$F_j$.  Hence, intuitively, $F_n$ has the greatest potential for simplifications 
relying on the concrete realization of all other $F_j$.  Each $F_j$ satisfies 
$M_1 \rightarrow F_j \rightarrow \neg M_0$ initially, where $M_1$ and $M_0$ are 
now defined using the concrete implementation for the other $F_k$.  We propose 
to simplify each $F_j$ in two phases.  The first phase drops literals from 
clauses of $F_j$ as long as $M_1 \rightarrow F_j$ is preserved (because 
dropping literals can make $F_j$ only stronger).  Similar to 
\textsc{CompressCnf}, this can be realized by computing unsatisfiable cores, 
utilizing incremental \acs{SAT} solving.  The second phase drops clauses from 
$F_j$, starting with the longest ones, as long as $F_j \rightarrow \neg M_0$ is 
preserved (because dropping clauses can make $F_j$ only weaker).  Since we 
only drop literals and clauses from the existing implementations, this 
postprocessing can only make the resulting circuits smaller but never larger.

\subsubsection{Discussion}

The \acs{SAT} solver based \acs{CNF} learning approach is very similar to the 
interpolation-based method from the previous section, and thus inherits most of 
its strength and weaknesses.  However, using the learning algorithm instead of 
interpolation makes the approach less dependent on the underlying solver.  This 
is similar to \textsc{QbfSynt}.  Also similar to \textsc{QbfSynt} is the fact 
that individual circuits are computed as formulas in \acs{CNF}.  However, 
because the individual circuits are cascaded as illustrated in 
Figure~\ref{fig:interpol}, the final circuit depth will in general be higher 
than that of circuits produced by \textsc{QbfSynt}.  Still, the circuit depths 
can be expected to be lower compared to \textsc{InterpolSynt} in most cases.  
The reason is that interpolants derived from an unsatisfiability proof can have 
a depth that is much higher than $3$, and the procedure for building the 
individual circuits together is the same.

\subsection{Parallelization} \label{sec:hw:par_extr}

We have already discussed that different methods for circuit synthesis have 
different characteristics.  The experimental results in 
Chapter~\ref{sec:hw:exp} 
will indicate that this results in different methods and optimizations 
performing well on different classes of benchmarks.  Similar to strategy 
computation (see Section~\ref{sec:hw_par}) we thus propose a parallelization 
that executes different methods and optimizations in different threads.  The aim 
is to combine the strengths and compensate the weaknesses of the individual 
methods.

\mypara{Realization.} In contrast to Section~\ref{sec:hw_par}, our 
parallelization for synthesizing circuits from strategies follows a rather 
simple portfolio approach, where each thread solves the circuit synthesis 
problem without any information from other threads.  The first thread implements 
the \acs{SAT} solver based learning algorithm from 
Section~\ref{sec:hw:extr:satlearn} 
with the dependency optimization from Section~\ref{sec:hw_dep}.  If our 
parallelization is executed with two threads, the second thread performs 
\acs{QBF}-based \acs{CNF} learning (Section~\ref{sec:hw:extr:qbflearn}) with 
incremental \acs{QBF} solving.  If executed with three threads, the third thread 
again performs learning using a \acs{SAT} solver, but without the dependency 
optimization.  

\mypara{Heuristics.}
In order to achieve a good balance between low execution time and small 
circuits, the user can inform our parallelization about a timeout.  A heuristic 
then uses this information to decide whether to perform a minimization of the 
final solution, as explained in Section~\ref{sec:hw:extrsat_impl}, or 
not.\footnote{For the experiments, we used a very conservative heuristic: if the 
remaining time available is more than $10$ times the time used so far for 
computing a circuit from the strategy, then the minimization of the final 
solution will be performed.}  Furthermore, if one thread finishes, it does not 
stop the other threads immediately but only if the user-defined timeout is 
approaching or the ratio between waiting time and working time exceeds a certain 
threshold ($0.25$ in our experiments).  The reason is that, from all threads 
that terminated, we finally select the circuit with the lowest number of gates. 
Hence, even if one thread has already found a solution, waiting for other 
threads to finish their computation can be beneficial for the final circuit 
size.

\mypara{Alternatives.}  As for strategy computation, there is a plethora of 
possibilities to combine different methods while sharing information in a more 
fine-grained way.  Since most of the methods compute circuits for one control 
signal after the other, the final solutions for each control signal can be 
exchanged.  Each thread can then continue with the smallest solution that has 
been found for the respective signal.  Since several methods are based on 
counterexample-guided refinements of solution candidates, the respective threads 
can also exchange counterexamples and the corresponding blocking 
clauses.  Furthermore, it can be beneficial to have different threads 
synthesizing circuits for control signals in different order.  We leave an 
exploration of such fine-grained parallelization approaches for future work.

\section{Experimental Results}\label{sec:hw:exp}

In this section, we will first sketch our implementation of the \acs{SAT}-based 
synthesis algorithms introduced so far.  After that, we will describe benchmarks 
that will be used in our experimental evaluation (Section~\ref{sec:hw:bench}).  
The core of this section is formed by our performance evaluation for computing 
strategies (Section~\ref{sec:hw:ex:strati}) and for constructing circuits from 
strategies (Section~\ref{hw:res:extr}).  The section concludes with a discussion 
of the central results (Section~\ref{sec:hw:exp:concl}).

\subsection{Implementation} \label{sec:dem_impl}

We have implemented the synthesis methods presented in Chapter~\ref{sec:hw:win} 
and Chapter~\ref{sec:hw:circ} in a synthesis tool called \demiurge.  It is 
written in \textsf{C++} and compatible with the rules for the 
\syntcomp~\cite{sttt_syntcomp} synthesis competition.  \demiurge has won 
two gold medals in this synthesis competition: one in 2014 and one in 2015, both 
in the parallel synthesis track.  The input of \demiurge is a safety 
specification in \aiger format.  The 
synthesis result is a circuit in \aiger format as well.  Since the synthesis 
process does not involve any interaction with the user except for setting 
parameters, \demiurge does not come with a \acs{GUI}, but is started from the 
command-line.  So far, our synthesis tool has only been tested on \textsf{Linux} 
operating systems.  \demiurge is freely available under the \textsf{GNU} Lesser 
General Public License version 3, and can be downloaded from
\begin{center}
\url{https://www.iaik.tugraz.at/content/research/opensource/demiurge/}.
\end{center}
All experiments presented in this article have been performed using version 
\texttt{1.2.0}.  The downloadable archive contains all scripts to reproduce the 
experiments, as well as spreadsheets with more detailed data (such as execution 
times for individual steps of the algorithms, numbers of iterations, etc.).

\begin{wrapfigure}[13]{r}{0.51\textwidth}
\centering
  \includegraphics[width=0.5\textwidth]{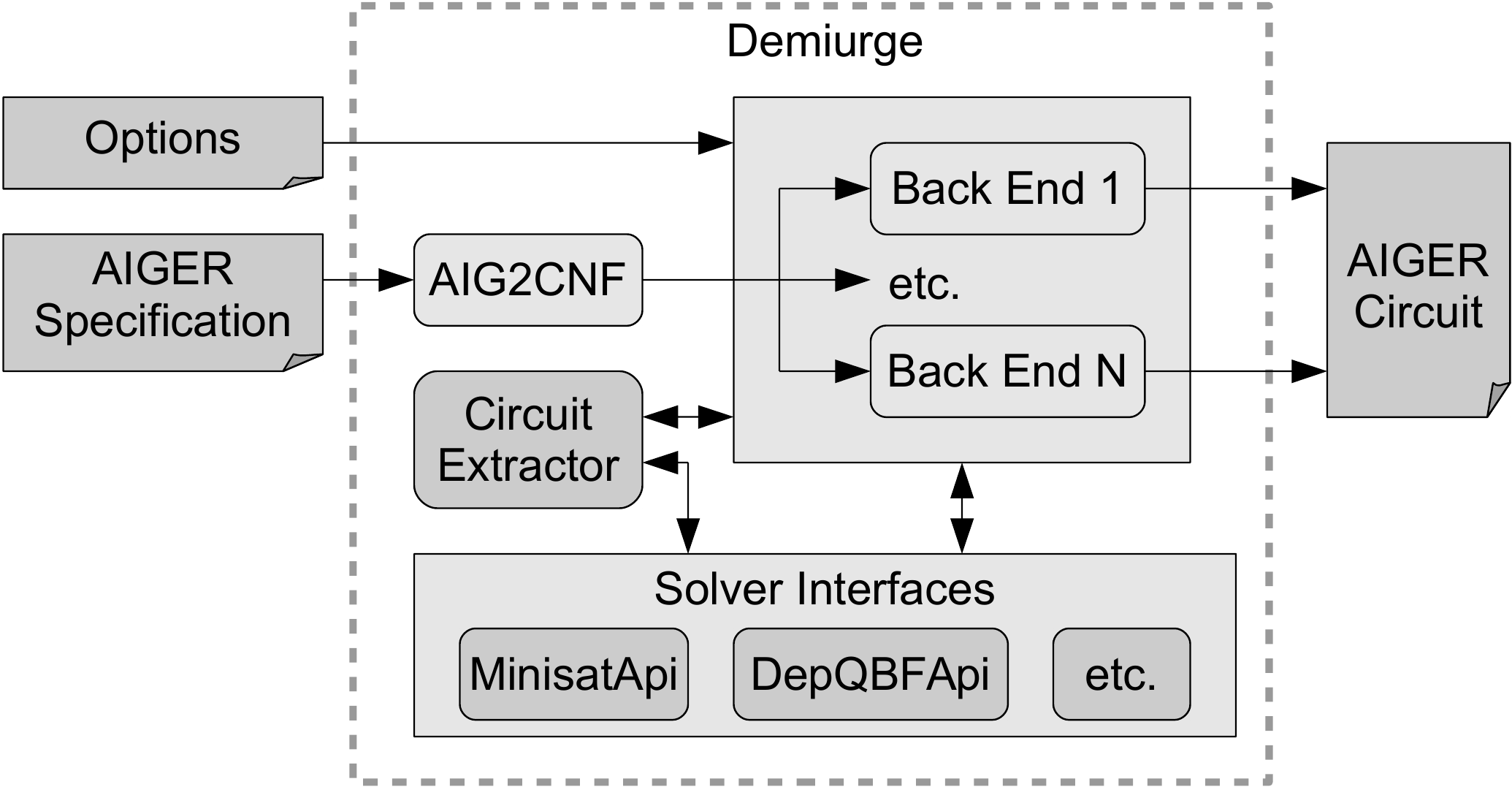}
\caption{Architecture of the \acs{SAT}-based synthesis tool \demiurge.}
\label{fig:impl}
\end{wrapfigure}
\mypara{Architecture.}
The architecture of \demiurge is outlined in Figure~\ref{fig:impl}.  The 
\textsf{AIG2CNF} module parses the specification into \acs{CNF} formulas 
representing the transition relation $T$ and the set of safe states $P$.  Only 
one initial state is allowed in the input format, so the initial states $I$ in 
our definition of a safety specification are represented as a minterm.  Next, 
the back end selected by the user via command-line options is executed.  The 
back ends mostly differ in their method for computing the winning region (or a 
winning area), and can be parameterized with a method for computing the circuit 
from the induced winning strategy.  Furthermore, the back ends can be configured 
with options to enable or disable optimizations or optional steps. The back ends 
can access a number of different solvers via uniform interfaces.  That is, 
multiple \acs{SAT} solvers can be accessed via the same abstract interface, 
which hides the concrete solver from the application.  Various \acs{QBF} solvers 
are accessible via a second interface (which is similar to the interface for 
\acs{SAT} solvers).  The concrete solvers that shall be used are again 
configured via command-line options.  Due to this extensible architecture, 
\demiurge can also be seen as a framework for implementing new synthesis 
algorithms or optimizations with low effort:  A lot of infrastructure such as 
the parser, interfaces to solvers and entire synthesis steps (like computing a 
circuit from a strategy) can be reused.

\noindent
\mypara{External tools.}
In version \texttt{1.2.0}, \demiurge has interfaces to 
\begin{compactitem}
\item the \acs{SAT} solver \minisat~\cite{EenS03} in version \texttt{2.2.0} via 
its \acs{API},
\item the \acs{SAT} solver \picosat~\cite{Biere08} in version \texttt{960} via 
its \acs{API},
\item the \acs{SAT} solver \lingeling~\cite{lingeling} in version \texttt{ayv} 
via its \acs{API},
\item the \acs{QBF} solver \depqbf~\cite{LonsingB10,LonsingE14b} in version 
\texttt{3.04} via its \acs{API}, both with and without preprocessing by 
\bloqqer~\cite{BiereLS11,SeidlK14} 
version \texttt{34},
\item the \acs{QBF} solver \rareqs~\cite{JanotaKMC12} in version \texttt{1.1} 
via a self-made \acs{API},
\item the \acs{QBF} solver \qube~\cite{GiunchigliaNT01} in version \texttt{7.2} 
with communication via files,
\item the tool \abc~\cite{BraytonM10} (commit \texttt{d3db71b}) for optimizing 
\aiger circuits with communication via files, and
\item the first-order theorem prover \iprover~\cite{Korovin08} in version 
\texttt{1.0} with communication via files.
\end{compactitem}

\subsection{Benchmarks} \label{sec:hw:bench}

\begin{table}
\setlength{\tabcolsep}{5.5mm}
\centering
\caption[Summary of benchmark sizes]
{Summary of benchmark sizes. The suffix k multiplies by 1000. The suffix M 
multiplies by one million. 
}
\label{tab:benchsize}
\begin{tabular}{lccccc}
\toprule
Name 
 & parameter range
 & $|\overline{x}|$ 
 & $|\overline{i}|$ 
 & $|\overline{c}|$ 
 & Gates defining $T$\\
\midrule
\add$ko$  
  & $k=2$ to $20$  
  & $2$ 
  & $2\cdot k$ 
  & $k$ 
  & $17$ to $365$\\
\mult$k$  
  & $k=2$ to $16$  
  & $0$ 
  & $2\cdot k$ 
  & $2\cdot k$ 
  & $24$ to $2450$\\
\cnt$ko$  
  & $k=2$ to $30$  
  & $k+1$ 
  & $1$ 
  & $1$ 
  & $11$ to $450$\\
\mv$ko$  
  & $k=2$ to $28$  
  & $k+1$ 
  & $k-1$ 
  & $k-1$ 
  & $10$ to $469$\\
\bs$ko$  
  & $k=8$ to $128$  
  & $k+1$ 
  & $\ld(k)$ 
  & $1$ 
  & $80$ to $3202$\\
\stay$ko$  
  & $k=2$ to $24$  
  & $k+2$ 
  & $k$ 
  & $k+1$ 
  & $17$ to $4104$\\
\amba$kl$  
  & $k=2$ to $10$  
  & $28$ to $76$ 
  & $2\cdot k+3$
  & $8$ to $19$
  & $177$ to $630$\\
\genbuf$kl$  
  & $k=1$ to $16$  
  & $21$ to $73$ 
  & $k+4$
  & $6$ to $24$
  & $134$ to $733$\\
\fa$mnkc$ 
  & special selection 
  & $20$ to $54$ 
  & $10$ to $40$
  & $8$ to $12$
  & $122$ to $594$\\
\mo$klm$ 
  & $k=l=8$ to $128$ 
  & $19$ to $41$ 
  & $12$ to $34$
  & $5$
  & $306$ to $830$\\
\driver$kl$ 
  & $l=5$ to $8$ 
  & $55$ to $326$ 
  & $16$ to $98$
  & $24$ to $82$
  & $435$ to $1942$\\
\demo$kl$ 
  & $k=1$ to $25$ 
  & $12$ to $280$ 
  & $1$ to $4$
  & $1$ to $4$
  & $43$ to $2055$\\
\gb$k$
  & $k=1$ to $4$ 
  & $11$k to $23$k 
  & $4$
  & $4$
  & $867$k to $1.7$M\\
\load$kl$
  & $k=2$ to $3$ 
  & $96$ to $296$ 
  & $3$ to $4$
  & $2$ to $3$
  & $1092$ to $3156$\\
\ltltodba$kl$
  & $k=1$ to $20$ 
  & $44$ to $484$ 
  & $2$ to $7$
  & $1$
  & $194$ to $5482$\\
\ltltodpa$k$
  & $k=1$ to $18$ 
  & $44$ to $340$ 
  & $1$ to $3$
  & $2$ to $4$
  & $191$ to $3866$\\
\bottomrule
\end{tabular}
\end{table}

We used benchmarks from the \syntcompy~\cite{sttt_syntcomp} benchmark set\footnote{We used \emph{all} 
benchmark instances from this set with two exceptions: From the \amba\ and 
\genbuf\ benchmarks, we did not select the unoptimized and the unrealizable 
instances to keep the number of instances manageable and balanced.  Second, we 
also included a \driver\ benchmark that is not contained in the \syntcompy 
benchmark set.} to evaluate the performance of our different methods to compute 
strategies as well as circuits implementing these strategies.  Most of the 
benchmarks are parameterized.  In the following, we briefly summarize their 
main characteristics as far as this is helpful for interpreting the 
performance results.  The size of the benchmarks is summarized in 
Table~\ref{tab:benchsize}. All in all, we included $350$ benchmark instances, of 
which $40$ instances are unrealizable.

The \add$ko$ benchmark specifies a combinational adder for two $k$-bit 
numbers. The parameter $o\in\{\texttt{y},\texttt{n}\}$ indicates if the 
benchmark file has been optimized with \abc~\cite{BraytonM10} for circuit size 
(value $\texttt{y}$) or not (value $\texttt{n}$).  This benchmarks is 
realizable.  
Since it is mostly combinational, it challenges circuit synthesis more than 
strategy computation.  

The \mult$k$ benchmark specifies a combinational multiplier for two $k$-bit 
numbers and is thus similar to \add.

The \cnt$ko$ benchmark specifies a $k$-bit counter that must not reach its 
maximum value.  At value $2^{k-1}-1$, the counter can be reset if the only 
control signal is set to $\true$.  The parameter $o\in\{\texttt{y},\texttt{n}\}$ 
again indicates if the benchmark was optimized. This benchmark is realizable and 
can be challenging for strategy computation because it may require many 
iterations to find the winning region.  It is trivial for circuit synthesis, 
though, because hardwiring the only control signal to $\true$ suffices.

The \mv$ko$ benchmark also contains a $k$-bit counter that must not reach its 
maximum value. However, when the most significant counter bit is set, the 
counter can be reset if the XOR sum of all control signals is $\true$.  Hence, 
there exists an implementation that hardwires all control signals to constant 
values. Again, $o\in\{\texttt{y},\texttt{n}\}$ indicates if the benchmark was 
optimized. The benchmark is realizable and can be challenging for circuit 
synthesis because it contains many interdependent control signals.

The realizable benchmark \bs$ko$ applies a barrel shifter to a $k$-bit register, 
which is initialized to some constant value and must never reach specific 
values. The amount of shifting is defined by uncontrollable inputs, but the 
shifting can be disabled with a control signal.  Barrel shifters can be 
particularly challenging for \acp{BDD}.

The benchmark \stay$ko$ again contains a $k$-bit counter that must not reach its 
maximum value.  Whether the counter is incremented or not depends on complicated 
logic, involving an arithmetic multiplication of the control signals with the 
uncontrollable inputs.  Yet, when setting one specific control signal always to 
$\false$, the specification is always satisfied.  Hence, the crux with this 
benchmark is whether the algorithms can find and exploit this ``backdoor''.

The benchmark \amba$kl$ specifies an arbiter for ARM's AMBA AHB 
bus~\cite{BloemGJPPW07} with $k$ bus masters.  The parameter 
$l\in\{\texttt{b},\texttt{c},\texttt{f}\}$ describes the method that has been 
used for transforming liveness properties in the original formulation of the 
benchmark~\cite{BloemGJPPW07} into safety properties.  We refer to Jacobs et 
al.~\cite{sttt_syntcomp} for a description of these three transformations. All 
benchmark instances are available in an optimized and in an unoptimized form.  
Additionally, all benchmark instances are available in an unrealizable variant.  
However, since the performance difference between all these variants are rather 
small, we only ran our experiments with the realizable and optimized versions.  

The benchmark \genbuf$kl$ specifies a generalized buffer~\cite{BloemGJPPW07} 
connecting $k$ senders to two receivers.  The parameter 
$l\in\{\texttt{b},\texttt{c},\texttt{f}\}$ is the same as for the 
\amba\ benchmarks.  Similar to \amba, we only ran our experiments with the 
realizable and optimized versions in order to reduce the number of instances.

The \fa$mnkc$ benchmark specifies a factory line with $m$ tasks that need to 
be performed by two manipulation arms on a continuous stream of objects. The 
factory belt has $n$ places and rotates every $k$ cycles by one place, thereby 
delivering an object.  The parameter $c$ is a maximum number of errors in the 
setup of the processed objects that needs to be tolerated by the factory line.
Some of the included benchmark instances are unrealizable.

The \mo$klm$ benchmark specifies a robot that has to move in a 
two-dimensional grid of $k \times l$ cells while avoiding collisions with a 
moving obstacle.  By default, the obstacle can only move in every second step. 
However, at most $m$ times, the obstacle can also move in consecutive time 
steps.  For every grid size, our benchmark set contains an unrealizable and a 
realizable instance.

The benchmark \driver$kl$ specifies an IDE hard drive controller based on an 
operating system interface specification~\cite{RyzhykWKLRSV14}.  The parameter 
$k\in\{\texttt{a}, \texttt{b}, \texttt{c}, \texttt{d}\}$ encodes the level of 
manual abstraction that has been applied when translating the benchmark into 
a safety specification.  The value $\texttt{a}$ means that no abstraction has 
been applied, and the value $\texttt{d}$ means that many details have been 
simplified. The parameter $l\in\{5,6,7,8\}$ is a bound on the reaction time. 
The benchmark is only realizable for $l=8$.

The remaining benchmarks are \acs{LTL} formulas that are contained as examples 
in the distribution of the synthesis tool \acaciap~\cite{BohyBFJR12}. They have 
been translated into safety specifications using the approach by Filiot et 
al.~\cite{FiliotJR11}.
The \demo$kl$ benchmarks represent \acs{LTL} formulas that have originally 
been used as benchmarks for the synthesis tool \lily~\cite{JobstmannB06}.  
Here, $k$ is just a running number without any special meaning and $l$ is a 
bound for the liveness-to-safety transformation.  Some of these benchmarks are 
unrealizable.
The benchmark \gb$k$ represents a different formulation of the generalized 
buffer benchmark \genbuf\ for two senders and two receivers.  The parameter $k$ 
is here a bound for the liveness-to-safety transformation.  One of these 
instances is unrealizable.
The benchmark \load$kl$ contains a specification of a load balancing 
system~\cite{Ehlers12} for $k$ clients that has been used as a case study for 
the \unbeast synthesis tool~\cite{Ehlers12}.  The parameter $l$ is again a 
bound for the liveness-to-safety transformation. One of these instances is 
unrealizable. 
Finally, the benchmarks \ltltodba$kl$ and \ltltodpa$k$ from the 
\acaciap~\cite{BohyBFJR12} examples have been translated. Here, $k$ is 
just a running index without any special meaning, and $l$ is again a parameter 
of the translation.  From these benchmarks, some instances are also 
unrealizable.

\subsection{Strategy Computation Results} \label{sec:hw:ex:strati}

In this section, we compare different methods for strategy computation.  Methods 
for computing circuits that implement a given strategy will be evaluated in 
Section~\ref{hw:res:extr}.  First, we will describe the compared methods and 
their configuration.  Section~\ref{sec:hw:exp:big} will then present performance 
results on the average over all our benchmarks.  A more detailed investigation 
for the individual benchmark classes is then performed in 
Section~\ref{sec:hw:exp:class}.  Section~\ref{sec:hw:exp:fu} will finally 
highlight other interesting observations.   All experiments reported in this 
section were performed on an Intel Xeon E5430 CPU with 4 cores running at 
$2.66$\,GHz, and a 64 bit Linux.

\subsubsection{Evaluated Configurations} \label{hw:res:win:conf}

\noindent
Table~\ref{tab:winconfig} summarizes the methods and their configurations we 
compare in this thesis.

\mypara{Baseline.}
\wbdd denotes a \acs{BDD}-based implementation of the standard \textsc{SafeWin} 
procedure presented in Algorithm~\ref{alg:SafeWin}.  It has been implemented by 
students and won a synthesis competition that has been carried out in a lecture. 
It is fairly optimized: it uses dynamic variable reordering, forced reorderings 
at certain points, combined \acs{BDD} operations, and a cache to speed up the 
construction of the transition relation.  See Section~\ref{sec:prelim:bdds} for 
more background.  
\wifm denotes a reimplementation of the approach by Morgenstern et 
al.~\cite{MorgensternGS13}.  It is inspired by the model checking algorithm 
\icthree~\cite{Bradley11} and based on \acs{SAT} solving.  
\abssynthe is a \acs{BDD}-based synthesis tool that uses abstraction and 
refinement\footnote{Abstraction and refinement are applied (roughly) in the 
following way. Only a subset of the state variables are considered. Based on 
this subset, an under-approximation and an over-approximation of the mixed 
preimage operator $\FS$ are defined.  These are used to compute an 
over-approximation $W\!\!\uparrow$ and an under-approximation 
$W\!\!\downarrow$ of the winning region. If the initial state is in 
$W\!\!\downarrow$, the specification is realizable.  If it is not contained in 
$W\!\!\uparrow$, the specification is unrealizable.  Otherwise, the 
abstraction is refined by considering additional state variables.} as well as 
other advanced optimizations~\cite{BrenguierPRS14}. It won the sequential 
synthesis track in the \syntcompy~\cite{sttt_syntcomp} competition.  In version 
2.0 (the version we compare to), \abssynthe has also been extended with an 
approach for compositional synthesis.  \abssynthe can therefore be considered as 
one of the leading state-of-the-art synthesis tools for safety specifications.
Together with \wifm and \wbdd, it serves as a 
baseline for our comparison.  Since this section only evaluates the strategy 
computation, the circuit extraction is disabled in all baseline tools for now.

\mypara{\acs{QBF}-based learning.}
The configurations starting with a \textsf{Q} represent different realizations 
of the \textsc{QbfWin} procedure shown in Algorithm~\ref{alg:QbfWin}.  This 
includes the basic algorithm with \acs{QBF} preprocessing (\wqb) and without 
preprocessing (\wq), a version (\wqgb) using optimization \textsf{RG} (see 
Section~\ref{sec:hw_rg}), and a version (\wqgcb) that also uses optimization 
\textsf{RC} (see Section~\ref{sec:hw_rc}).  Furthermore, we present results for 
an implementation (\wqgab) that computes all counterexample generalizations 
instead of just one (see Section~\ref{sec:qbf_var}), and for one of our three 
approaches (named \wqi) for incremental \acs{QBF} solving (see 
Section~\ref{sec:qbf_impl}).  The results for the other two methods using 
incremental \acs{QBF} solving are similar and can be found in the downloadable 
archive.  The downloadable archive also contains other combinations of the different options
and optimizations (20 in total).

\begin{table}
\centering
\caption[Configurations for computing a winning strategy]
{Configurations for computing a winning strategy.}
\label{tab:winconfig}
\begin{tabular}{lll}
\toprule
Name & Algorithm and Optimizations & Solver \\
\midrule
\wbdd  
& \textsc{SafeWin} (Alg.~\ref{alg:SafeWin})
& \textsf{CuDD} \\
\wifm
& Re-implementation of~\cite{MorgensternGS13}
& \minisat\\
\wabs
& \abssynthe 2.0~\cite{BrenguierPRS14}
& \textsf{CuDD}\\
\wq
& \textsc{QbfWin} (Alg.~\ref{alg:QbfWin})
& \depqbf\\
\wqb  
& \textsc{QbfWin} (Alg.~\ref{alg:QbfWin})
& \depqbf + \bloqqer\\
\wqgb
& \textsc{QbfWin} (Alg.~\ref{alg:QbfWin}) + 
  Opt.~\textsf{RG} (Sect.~\ref{sec:hw_rg})
& \depqbf + \bloqqer\\
\wqgab
& \wqgb + computing all generalizations (Sect.~\ref{sec:qbf_var})
& \depqbf + \bloqqer\\ 
\wqgcb  
& \textsc{QbfWin} (Alg.~\ref{alg:QbfWin}) + 
  Opt.~\textsf{RG} and \textsf{RC} (Sect.~\ref{sec:hw_reach})
& \depqbf + \bloqqer\\
\wqi
& Incremental \textsc{QbfWin} with variable pool (Sect.~\ref{sec:qbf_impl}) 
& Incremental \depqbf\\
\ws
& \textsc{SatWin1} (Alg.~\ref{alg:SatWin1})
& \minisat\\
\wsg 
& \textsc{SatWin1} (Alg.~\ref{alg:SatWin1})
  + Opt.~\textsf{RG} (Sect.~\ref{sec:hw_rg})
& \minisat\\  
\wsgc  
& \textsc{SatWin1} (Alg.~\ref{alg:SatWin1})
  + Opt.~\textsf{RG} and \textsf{RC} (Sect.~\ref{sec:hw_reach})
& \minisat\\  
\wse  
& \textsc{SatWin1} (Alg.~\ref{alg:SatWin1})
  + Expansion (Sect.~\ref{sec:hw_exp})
& \minisat\\  
\wsge 
& \textsc{SatWin1} (Alg.~\ref{alg:SatWin1})
  + Opt.~\textsf{RG}
  + Expansion
& \minisat\\
\wtqc   
& Eq.~(\ref{eq:templ}) + \acs{CNF} Templates (Sect.~\ref{sec:cnfteml})
& \depqbf\\
\wtbc  
& Eq.~(\ref{eq:templ}) + \acs{CNF} Templates (Sect.~\ref{sec:cnfteml})
& \depqbf + \bloqqer\\
\wtsc 
& Eq.~(\ref{eq:templ}) + \acs{CEGIS} (Alg.~\ref{alg:TemplWinSat}) + \acs{CNF} 
Templates
& \minisat\\
\wepr
& Reduction to \acs{EPR} (Sect.~\ref{sec:hw:win:eprim}) & 
\iprover \\
\wptwo
& Parallel (Sect.~\ref{sec:hw_par}) with 2 threads & 
\minisat + \depqbf + \bloqqer \\
\wpthree  
& Parallel (Sect.~\ref{sec:hw_par}) with 3 threads & 
\minisat + \depqbf + \bloqqer \\
\bottomrule
\end{tabular}
\end{table}

\mypara{Learning based on \acs{SAT} solvers.}  All configurations of 
the \acs{SAT} solver based learning procedure \textsc{SatWin1}, presented in 
Algorithm~\ref{alg:SatWin1}, are all named with an \textsf{S} as first letter. 
Our comparison contains a plain implementation (\ws), a variant (\wsg) with 
optimization  \textsf{RG} (see Section~\ref{sec:hw_rg}), and a version (\wsgc) 
that also performs optimization \textsf{RC} (see Section~\ref{sec:hw_rc}).  The 
former two are also combined with our heuristic for performing universal 
expansion (see Section~\ref{sec:hw_exp}), named \wse and \wsge respectively. To 
simply the matters, we only present result using the \acs{SAT} solver \minisat. 
Results using \lingeling and \picosat can be found in the downloadable archive.
Both \lingeling and \picosat can be faster than \minisat for individual 
benchmark instances, but \minisat yields better results on average.

\mypara{Template-based approach.}  All configurations of our template-based 
approach (see Section~\ref{sec:hw:templ}) start with a \textsf{T}.  A 
\acs{QBF}-based implementation with and without \acs{QBF} preprocessing is 
realized in \wtbc and \wtqc, respectively.  \wtsc denotes a \acs{SAT} 
solver based realization using our variant of the \acs{CEGIS} algorithm (see 
Algorithm~\ref{alg:TemplWinSat}).  We only present results using \acs{CNF} 
templates.  Results using AND-inverter graph templates are similar and 
can be found in the archive.

\mypara{Reduction to \acs{EPR}.}  The configuration realizing the approach of 
Section~\ref{sec:hw:win:eprim} is named \wepr.

\mypara{Parallelization.}  The results produced by our parallelization with one 
thread are essentially the same as for \wsge because our parallelization 
executes \wsge when used with one thread.  The additional communication overhead 
is negligible.  The configurations with two and three threads are named \wptwo 
and \wpthree, respectively.  The additional speedup we achieve with more than 
three threads is rather insignificant.  Thus, we do not present any results 
with more threads.

\subsubsection{The Big Picture} \label{sec:hw:exp:big}

We executed the configurations listed in Table~\ref{tab:winconfig} with a 
timeout of $10\,000$ seconds per benchmark instance and a memory limit of $8$ 
GB.  Figure~\ref{fig:win_cactus} gives an overview of the resulting execution 
times in form of a \index{cactus plot} cactus plot.  The horizontal axis 
contains the benchmarks, sorted in the order of increasing execution times 
(individually for each configuration).  The vertical axis shows the 
corresponding execution time on a logarithmic scale. Hence, the lines for the 
individual configurations can only rise, and the steeper a line rises, the worse 
is its scalability.  Another way to read cactus plots is as follows:  For a 
given time limit on the vertical axis, the horizontal axis contains the number 
of benchmarks that can be solved within this time limit.  We omitted some of the 
exotic configurations from Table~\ref{tab:winconfig} (namely \wq, \wqgab, 
\wqgcb, \wsgc, and \wse) to keep the plot readable.  In the following 
paragraphs, we will focus on the most important observations based on 
Figure~\ref{fig:win_cactus}.  A more detailed comparison will be given in the 
next subsections.  

\begin{figure}
\centering
  \includegraphics[width=0.8\textwidth]{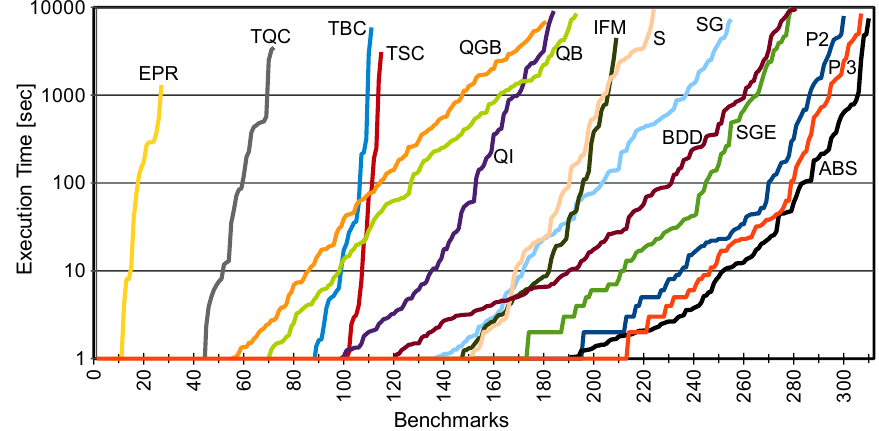}
\caption{A cactus plot summarizing the execution times for computing a winning strategy 
with different methods and configurations.}
\label{fig:win_cactus}
\end{figure}

\mypara{Our reduction to \acs{EPR} does not scale well.} \wepr could only solve 
$27$ instances.  In none of the cases, a timeout was hit.  For all instances 
that could not be solved, \iprover ran out of memory.

\mypara{Our template-based configurations solve only few instances.}  By 
comparing the lines for \wtqc and \wtbc, we can see that \acs{QBF} preprocessing 
improves the scalability of our \acs{QBF}-based realization of the 
template-based approach quite significantly.  Our implementation \wtsc using 
\acs{CEGIS} and \acs{SAT} solving can even solve a few more instances. 
In all three cases, the lines rise very steeply.  Slightly oversimplified, this 
means that the template-based methods either find a solution quickly or not at 
all.  Unfortunately, the latter case happens more often.  In total, \wtsc solves 
only 115 instances, which is low compared to the other methods.  However, the 
solved instances include some that cannot be solved by any other method, so the 
template-based approach can complement other techniques.  We will elaborate on 
this aspect in the next section.  Except for the (very large) \gb~benchmarks, 
the memory limit was never exceeded.

\mypara{Incremental \acs{QBF} solving gives a solid speedup for simple 
benchmark instances.} Compared to \wqb and \wqgb, the realization \wqi using 
incremental \acs{QBF} solving is faster on average by more than one order of 
magnitude for simple benchmark instances.  For example, the $130$ 
simplest  instances for \wqb can all be solved by \wqb in less than $137$ 
seconds each, while the $130$ simplest instances for \wqi can be solved by \wqi 
in less than $6.2$ seconds each.  Yet, for more complex instances, \wqi 
falls behind \wqb. One possible reason is the lack of \acs{QBF} preprocessing in 
\wqi, which appears to be a promising future research direction.

\mypara{\acs{SAT} solvers can outperform \acs{QBF} solvers when learning a 
winning region.}   All our \acs{QBF}-based methods are outperformed 
significantly even by the plain \acs{SAT} solver based implementation \ws.  The 
observation that it can be beneficial to solve \acs{QBF} problems with plain 
\acs{SAT} solving is not new~\cite{MorgensternGS13,JanotaKMC12}, and hence 
not completely surprising.  The plain implementation \ws can already solve more 
instances than our reimplementation \wifm  of the approach by Morgenstern 
et~al.~\cite{MorgensternGS13}.

\mypara{Optimization \textsf{RG} yields a speedup of roughly one order of 
magnitude for method \ws.}  This can be observed, at least for larger benchmark 
instances, when comparing the lines for \wsg and \ws in 
Figure~\ref{fig:win_cactus}.  For example, the $224$ simplest instances for \ws 
can each be solved by \ws in at most $9450$ seconds. On the other hand, \wsg can 
solve its $224$ simplest instances in at most $470$ seconds, which is $20$ times 
shorter.  Interestingly, optimization \textsf{RG} is \emph{not} beneficial when 
applied to our \acs{QBF}-based implementation (compare \wqgb versus \wqb) on the 
average over all our benchmarks, though.  But even in the \acs{QBF} case, it 
still yields a significant speedup for certain benchmark instances.  While 
optimization \textsf{RG} turned out to be very effective,  optimization 
\textsf{RC} does not have a positive effect on average in our experiments: the 
number of solved instances decreases from $255$ to $236$ when switching from 
\wsg to \wsgc (this is not shown in Figure~\ref{fig:win_cactus} but can be seen 
in Table~\ref{tab:winsolved}).  But optimization \textsf{RC} is also beneficial 
for individual benchmark instances and, thus, not useless either.

\mypara{Our heuristic for quantifier expansion gives a speedup of roughly one 
more order of magnitude.}  This can be seen by comparing the line for \wsge with 
that for \wsg.  Nailed down by numbers, \wsg solves its $254$ simplest 
benchmarks in at most $6800$ seconds each, while \wsge solves its $254$ simplest 
benchmarks in at most $268$ seconds, which is $25$ times shorter.  The 
configuration \wsge is already on a par with \wbdd.

\mypara{Our parallelization achieves a speedup of more than one additional order 
of magnitude.} \wsge solves its $279$ simplest benchmarks in at most $9920$ 
seconds each.  \wptwo solves its $279$ simplest benchmarks in 
at most $295$ seconds, which is $33$ times shorter.  \wpthree never requires 
more than $105$ seconds on its $279$ simplest benchmarks, which can even be seen 
as a speedup by a factor of $95$ over \wsge.  Obviously, these speedups 
are not primarily the result of exploiting hardware parallelism.  They rather 
stem from combining different approaches that complement each other.  Although 
\abssynthe uses advanced techniques such as abstraction/refinement, \wpthree 
is not far~behind (\wpthree solves $4$ instances less).

\subsubsection{Performance per Benchmark Class} \label{sec:hw:exp:class}

The previous section discussed the performance of the individual methods and 
configurations on average over all our benchmarks.  In this section, we will 
perform a more fine-grained analysis for the different classes of benchmarks.

Table~\ref{tab:winsolved} lists the number of solved benchmark instances per 
benchmark class.  The 
first line gives the total number of benchmarks in the class.  The 
last column gives the total number of benchmarks for which a winning strategy 
could be computed by the respective method within $10\,000$ seconds and a memory 
limit of $8$ GB.
Recall that a description of the compared methods and their 
configurations can be found in Table~\ref{tab:winconfig}.  Some statistics on 
the benchmarks can be found in Table~\ref{tab:benchsize}.  Table~\ref{tab:winsolved} marks the ``best'' configuration for a 
certain benchmark class in blue:  If several methods solve the same amount of 
instances, we marked the one with the lowest total execution time.  For cases 
where the difference in the total execution time is insignificant, we marked 
several configurations.  If most of the configurations solve all instances of a 
certain benchmark class in an insignificant amount of time, we refrain from 
marking them.  Moreover, we do not include \wabs and the parallelizations in 
this ranking because they combine several techniques.

\begin{table}
\setlength{\tabcolsep}{1.13mm}
\centering
\caption{Computing a winning strategy: solved instances per benchmark class.}
\label{tab:winsolved}
\begin{tabular}{lccccccccccccccccc}
\toprule
         &{\add}
         &{\mult}   
         &{\cnt}    
         &{\mv}     
         &{\bs}     
         &{\stay}   
         &{\amba}   
         &{\genbuf} 
         &{\fa}     
         &{\mo}     
         &{\driver} 
         &{\demo}   
         &{\gb}     
         &{\load}   
         &{\ltltodba}     
         &{\ltltodpa} 
         &{Total} \\
\midrule
Total    &20      &14      &28      &32      &10      &24      &27      &48      &15      &16      &16      &50      &4       &5       &23            &18        &350   \\
\midrule
\wbdd    &\tm{20} &8       &26      &32      &4       &16      &21      &\tm{48} &\tm{12} &\tm{11} &6       &46      &0       &3       &18            &10        &281   \\
\wifm    &8       &7       &16      &32      &10      &10      &3       &7       &2       &2       &\tm{16} &\tm{50} &0       &\tm{5}  &23            &18        &209   \\
\wq      &8       &4       &24      &30      &10      &10      &2       &7       &1       &0       &2       &44      &0       &2       &20            &15        &179   \\
\wqb     &10      &8       &22      &32      &10      &12      &3       &7       &3       &0       &8       &43      &0       &2       &19            &14        &193   \\
\wqgb    &10      &8       &22      &32      &10      &12      &3       &11      &3       &0       &5       &42      &0       &2       &14            &7         &181   \\
\wqgab   &10      &8       &22      &32      &10      &12      &3       &11      &3       &0       &4       &42      &0       &1       &6             &9         &173   \\
\wqgcb   &10      &8       &22      &32      &10      &12      &3       &12      &3       &0       &7       &44      &0       &4       &18            &12        &197   \\
\wqi     &8       &4       &22      &30      &10      &10      &3       &7       &2       &0       &2       &45      &0       &3       &22            &16        &184   \\
\ws      &8       &7       &24      &32      &10      &18      &15      &13      &2       &2       &3       &46      &0       &4       &23            &17        &224   \\
\wsg     &8       &7       &24      &32      &10      &17      &18      &29      &2       &2       &10      &50      &0       &5       &23            &18        &255   \\
\wsgc    &10      &7       &22      &32      &10      &12      &15      &25      &2       &1       &6       &49      &0       &\tm{5}  &23            &17        &236   \\
\wse     &20      &9       &24      &32      &10      &12      &19      &20      &3       &2       &3       &48      &0       &4       &23            &18        &247   \\
\wsge    &20      &9       &24      &32      &10      &12      &\tm{21} &37      &5       &3       &10      &\tm{50} &0       &\tm{5}  &\tm{23}       &\tm{18}   &279   \\
\wtqc    &10      &9       &24      &12      &4       &10      &0       &0       &0       &0       &0       &3       &0       &0       &0             &0         &72    \\
\wtbc    &\tm{20} &\tm{14} &24      &25      &6       &18      &0       &0       &0       &0       &0       &4       &0       &0       &0             &0         &111   \\
\wtsc    &8       &7       &\tm{28} &32      &10      &\tm{24} &0       &0       &0       &0       &0       &6       &0       &0       &0             &0         &115   \\
\wepr    &4       &2       &11      &8       &0       &2       &0       &0       &0       &0       &0       &0       &0       &0       &0             &0         &27    \\
\midrule
\wptwo   &20      &14      &28      &32      &10      &24      &22      &37      &5       &2       &10      &50      &0       &5       &23            &18        &300   \\
\wpthree &20      &14      &28      &32      &10      &24      &23      &37      &5       &2       &16      &50      &0       &5       &23            &18        &307   \\
\wabs    &20      &9       &24      &32      &10      &16      &27      &48      &11      &11      &7       &50      &0       &5       &23            &18        &311   \\
\bottomrule
\end{tabular}
\end{table}

\mypara{\add.}  Neither \wbdd nor the template-based method \wtbc require more 
than $0.2$ seconds for any instance of the \add\ benchmark.  The \acs{SAT} 
solver based learning approaches using universal expansion (\wse, \wsge) solve 
all instances as well, but require up to $42$ seconds.  Without expansion (\ws, 
\wsg, \wsgc), \textsc{SatWin1} requires many iterations to refine $U$ before a 
counterexample is found or to conclude that no counterexample exists (see 
Algorithm~\ref{alg:SatWin1}).  For instance, for \add6\texttt{y}, roughly $4000$ 
counterexample candidates are computed.  This takes only one second.  For 
\add8\texttt{y}, \textsc{SatWin1} already computes $65\,000$ counterexample 
candidates, which takes $90$ seconds.  For \add10\texttt{y} we hit the 
timeout.  In contrast, the \acs{QBF}-based learning methods (with names starting 
with \textsf{Q}) require only two iterations, but cannot solve significantly 
more instances either.  This illustrates that the number of iterations alone is 
often not a good measure for estimating the performance of different algorithms 
relative to each other.

\mypara{\mult.}  The results for this benchmark are similar to \add.  The main 
difference is that the \acs{BDD}-based implementation does not perform well, but
this is not surprising since multipliers are known to be challenging for 
\acsp{BDD} (see Section~\ref{sec:prelim:bdds}).  Even \wabs, which is highly
optimized but also \acs{BDD}-based, cannot solve all \mult~instances.  The 
template-based configuration \wtbc performs best.

\mypara{\cnt.}  When the winning region is computed iteratively for this 
benchmark, this requires many iterations.  More specifically, around $2^{k-1}$ 
refinements of the winning region are required for \cnt$ko$.  For $k=30$, this 
already gives around half a billion iterations.  Even though the time per 
iteration is very low for all configurations, this still results in timeouts for 
large values of $k$.    In contrast, the template-based realizations require 
only one iteration.  In particular, the configuration \wtsc solves all \cnt\ 
instances in less than $8$ seconds.

\mypara{\mv.}  Even though this benchmark has a relatively high number of inputs 
and control signals, most methods can solve all its instances within a fraction 
of a second.  This benchmark will only be challenging for some of our circuit 
computation methods in Section~\ref{hw:res:extr}.

\mypara{\bs.}  This benchmark contains a barrel shifter and is thus challenging
for \wbdd.  Most of the other methods solve all instances within a fraction of
a second.

\mypara{\stay.}  This benchmark contains a counter and a multiplier, and thus
combines the characteristics of \mult\ and \cnt.  Hence, it is not 
surprising that one of the template-based configurations performs best.

\mypara{\amba\ and \genbuf.}  While the previous benchmarks are basically toy 
examples designed to challenge the synthesis methods in different ways, the 
\amba\ and \genbuf\ benchmarks specify realistic 
hardware designs.  \wbdd performs very well on both these benchmarks.  One 
circumstance contributing to this success may be that these benchmarks have been 
translated from input files for the \acs{BDD}-based synthesis tool 
\ratsy~\cite{BloemCGHKRSS10}, where they have been tweaked for efficient 
synthesizability.  Yet, the \acs{SAT}-based learning method \wsge solves the 
same amount of \amba\ instances as \wbdd, an is even slightly faster on the 
solved instances.  For \genbuf, \wbdd is unrivaled in our experiments.

\mypara{\fa\ and \mo.}  None of our \acs{SAT}-based methods can compete with 
\acsp{BDD} on these benchmarks.

\mypara{\driver.}  The \wifm method by Morgenstern et al.~\cite{MorgensternGS13} 
solves all instances of the \driver\ benchmark in a fraction of a second.  This 
is remarkable because with up to $326$ state variables, these benchmarks are 
quite large. The \acs{SAT} solver based learning methods \wsg and \wsge are 
ranked second when run with optimization \textsc{RG}.  Without optimization 
\textsc{RG}, only few instances can be solved.

\mypara{\demo.}  Both \wifm and \wsge can solve all instances in at most $40$ 
seconds.  With up to $280$ state variables, the \demo\ benchmarks contain quite
large instances as well.  The number of inputs is always relatively low, though.

\mypara{\gb.}  These benchmarks are far beyond reach with any of our methods.
Even \wabs fails.

\mypara{\load, \ltltodba\ and \ltltodpa.}  \wsge performs best, solving most of 
these instances in less than a second.  With $138$ seconds, the longest 
execution time with \wsge is also quite low.

\mypara{Conclusions.}  Our \acs{QBF}-based learning algorithms are dominated by 
our \acs{SAT} solver based realizations across all benchmarks classes.  \wepr 
is even dominated by all other configurations.  On the other hand, no single 
methods dominates all the other methods on all benchmark classes.  We thus 
conclude that it is important to have different synthesis approaches available.  
Our experiments suggest that our novel \acs{SAT}-based synthesis methods form an 
important contribution to the portfolio of available methods, complementing 
existing \acs{BDD}-based methods (like \wbdd and \wabs) but also existing 
\acs{SAT}-based methods (like \wifm).

\subsubsection{Further Observations} \label{sec:hw:exp:fu}

\noindent
This section highlights interesting observations that are more specific to 
certain methods.

\begin{figure}
\centering
\begin{minipage}{.49\textwidth}
  \includegraphics[width=\textwidth]{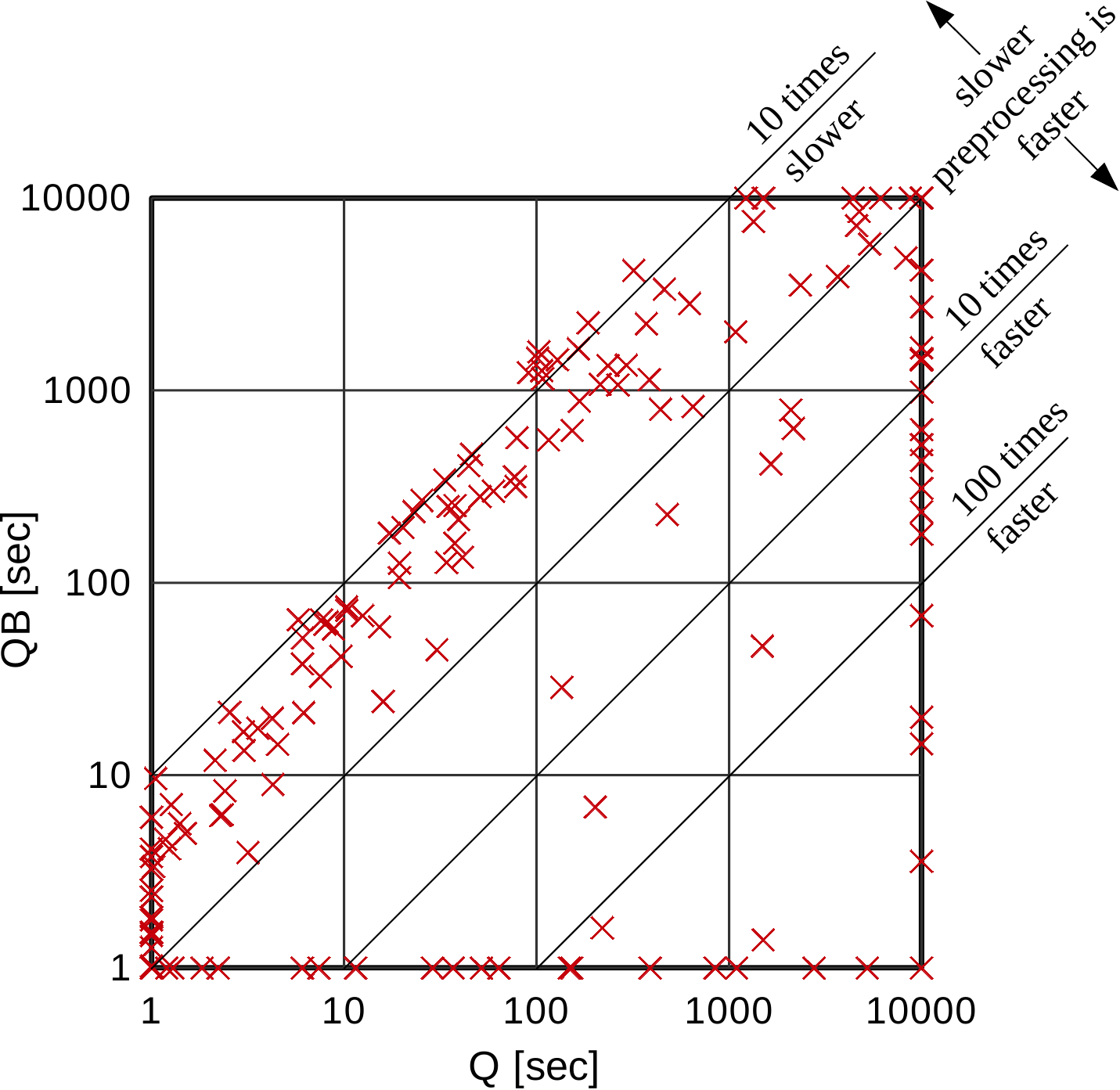}
\caption[The speedup due to \acs{QBF} preprocessing in \acs{QBF}-based learning]
{A scatter plot illustrating the speedup due to \acs{QBF} preprocessing in 
\acs{QBF}-based learning (\wq versus \wqb).}
\label{fig:win_pre_scatter1}
\end{minipage}
\hspace*{1mm}
\begin{minipage}{.49\textwidth}
  \includegraphics[width=\textwidth]{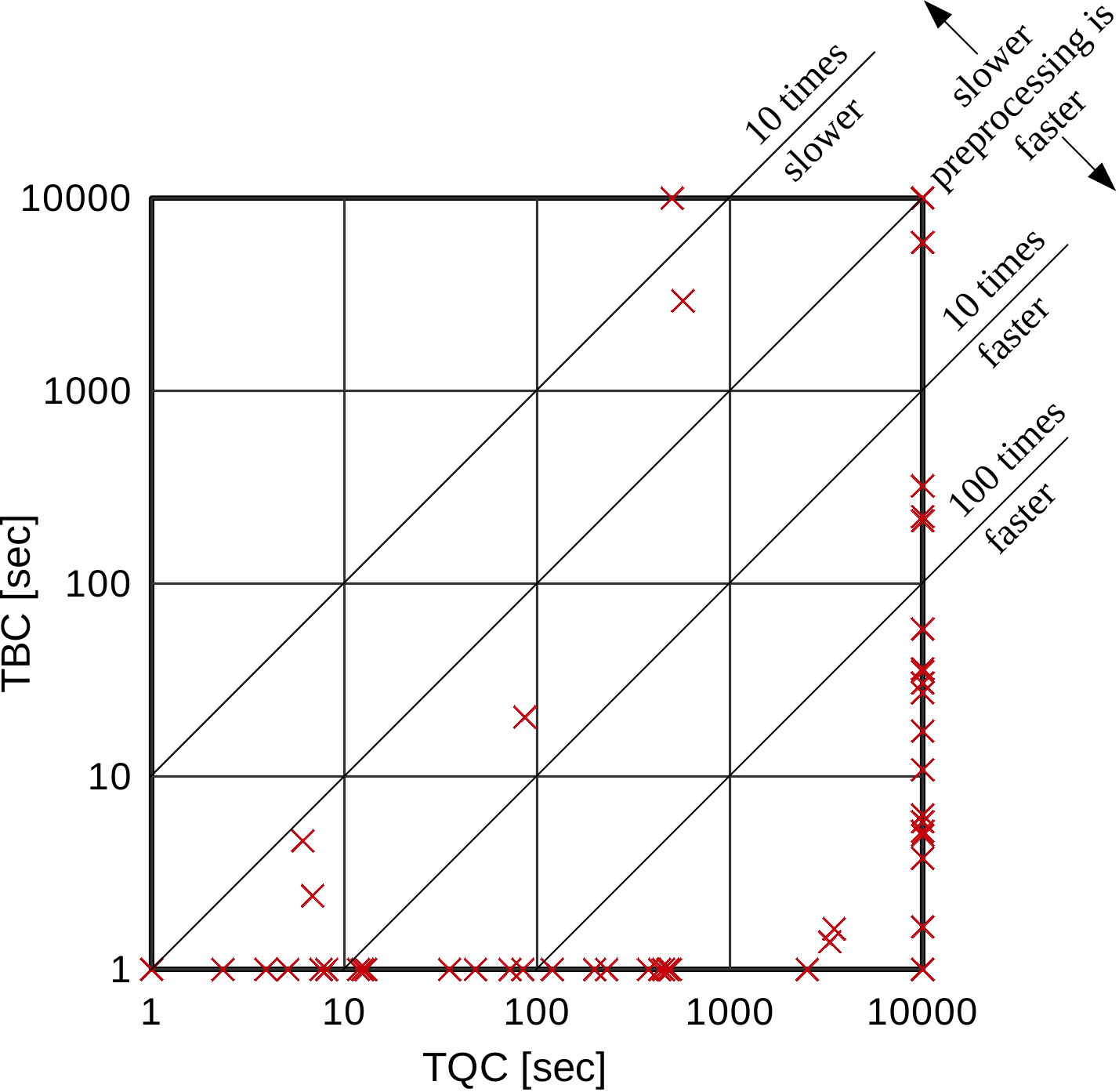}
\caption[The speedup due to \acs{QBF} preprocessing in our template-based 
approach]
{A scatter plot illustrating the speedup due to \acs{QBF} preprocessing in our 
template-based approach (\wtqc versus \wtbc).}
\label{fig:win_pre_scatter2}
\end{minipage}
\end{figure}

\mypara{\acs{QBF} preprocessing is important.} 
Figure~\ref{fig:win_pre_scatter1} compares the execution times with and without 
\acs{QBF} preprocessing for the \acs{QBF}-based learning approach in a 
\index{scatter plot} scatter plot.  Each point in the diagram corresponds to one 
benchmark instance.  The horizontal axis gives the execution time for the 
benchmark without preprocessing, and the vertical axis the corresponding 
execution time with preprocessing.  Hence, all points below the diagonal 
represent a speedup due to preprocessing, and all points above are instances 
with a slowdown.  Note that both axes are scaled logarithmically.  We can see a 
slowdown by up to around one order of magnitude for many instances.  However, 
there are also $20$ points on the x-axis, indicating instances that can be 
solved in less than one second due to preprocessing.  Furthermore, there are 
$19$ points on the right border of the diagram, indicating cases where we had a 
timeout without preprocessing but get a solution when preprocessing is enabled.  
Two instances are even located in the lower right corner, representing an 
improvement from a timeout to less than one second.  The number of solved 
instances increases from $179$ to $193$ due to preprocessing in the 
\acs{QBF}-based learning method (see Table~\ref{tab:winsolved}).  The results 
for the template-based method (\wtqc versus \wtbc) are illustrated in 
Figure~\ref{fig:win_pre_scatter2} and are even more impressive. A noticeable 
slowdown can only be observed for two cases.  There are $44$ points on the 
x-axis, for which preprocessing reduced the execution time to less than one 
second.  For $40$ cases, a timeout is avoided due to preprocessing.  Finally, 
there are $23$ points in the bottom right corner of 
Figure~\ref{fig:win_pre_scatter2}, for which a timeout is turned into a 
successful execution that takes less than one second.

\mypara{Our optimizations for quantifier expansion can avoid a formula size 
explosion in many cases.}  For most of our \acs{SAT}-based methods, the memory 
consumption is rather insignificant.  As an exception, \wsge can consume quite 
some memory due the expansion of universal quantifiers (see 
Section~\ref{sec:hw_exp}).  However, our implementation can also fall back to 
\wsg if some memory limit is exceeded.  In our experiments, this happened only 
for large instances of \mult, \stay, \gb, and \driver.  One reason is our 
careful implementation of the expansion, which aggressively applies 
simplifications to reduce the formula blow-up. As an example, for 
\genbuf15\texttt{b}, a straightforward implementation would produce $652 \cdot 
2^{23} \approx 5 \cdot 10^9$ AND gates to define the expanded transition 
relation.  With $3\cdot 4=12$ byte per AND gate, this gives $56$ GB.  
With our simplification techniques, the expanded transition relation has 
``only'' $1.5$ million AND gates.  In addition, the final 
over-approximation $F$ of the winning region $W$ for \genbuf15\texttt{b} 
contains $707$ clauses and $8517$ literals (after simplification).  After 
negation, this makes $8517$ clauses with $17739$ literals.  With $4$ byte per 
literal, combining $2^{23}$ copies of this \acs{CNF} in a straightforward way 
would require more than $500$ GB of memory.  Due to our simplifications, the 
expanded \acs{CNF} for $\neg F$ has only $8.5$ million literals and \wsge solves 
this benchmark instance without falling back to \wsg.  The maximum memory 
consumption is only $680$ MB.

\mypara{Our parallelization is more than a portfolio approach.}  When executed 
with two threads, our parallelization combines the template-based approach (in a 
mix of \wtbc and \wtsc) with the learning-based approach \wsge.  The two 
approaches do not only run in isolation, but share information: clauses 
discovered by the \wsge-thread are communicated to the template-based thread 
and 
are considered as fixed part of the winning region there (see 
Section~\ref{sec:hw_par}).  This exchange of information can have a positive 
effect. For example, for the \genbuf\ benchmark, the template-based approach 
fails to solve even the simplest instances when applied in isolation.  However, 
in our parallelization, the final winning region for certain instances is 
actually found by the template-based thread.  This includes even very large 
instances such as \genbuf15\texttt{b}.  The speedup of our parallelization 
\wptwo in comparison to \wsge for such instances is rather moderate (e.g. 
$\approx 10\,\%$ for \genbuf15\texttt{b}), but this still illustrates that 
complementary methods can benefit from each other in our parallelization.

\subsection{Circuit Synthesis Results} \label{hw:res:extr}

We now compare our different methods (from Chapter~\ref{sec:hw:circ}) to 
construct a circuit from a given strategy.  Again, we first describe the 
evaluated configurations and the experimental setup.  
Section~\ref{sec:hw:ex:big2} then discusses the results on the average over all 
our benchmarks.  Section~\ref{sec:hw:exextr_class} dives into more details by 
investigating the performance for different benchmark classes.  Other
interesting observations will be highlighted in Section~\ref{sec:hw:exextr_obs}.

\subsubsection{Evaluated Configurations}

Table~\ref{tab:extrconfig} lists the different methods and their configurations
compared in this section. All our \acs{SAT}-based methods use 
the tool \abc~\cite{BraytonM10} 
in a postprocessing step to further reduce the circuit size.\footnote{If the
\aiger circuit has less than $2\cdot 10^5$ AND gates before optimization, then 
we execute the command sequence \texttt{strash; refactor -zl; rewrite -zl;} 
three times, followed by \texttt{dfraig; rewrite -zl; dfraig;}.  Between $2\cdot 
10^5$ and $10^6$ AND gates, we only execute the sequence \texttt{strash; 
refactor -zl; rewrite -zl;} twice. For more than $10^6$ AND gates, we perform 
it only once.}

\mypara{Baseline.} \ebdd denotes a \acs{BDD}-based implementation of the 
standard cofactor-based approach presented in Algorithm~\ref{alg:CofSynt}.  It 
is implemented in the tool that has already been discussed in 
Section~\ref{hw:res:win:conf}, which won a synthesis competition that has been 
carried out in the course of a lecture.  Besides dynamic variable reordering 
(with method SIFT~\cite{Rudell93,cudd}), it also performs a forced reordering 
with a more expensive heuristic (SIFT\_CONV~\cite{Rudell93,cudd}) before 
circuits are extracted from the strategy.  Furthermore, it uses a cache that 
maps \acs{BDD} nodes to corresponding signals in the circuit constructed so far. 
Whenever new circuitry is added, the cache is consulted to reuse existing 
signals.  Consequently, no two signals in the constructed circuit will be 
equivalent.  The configuration \eabs denotes the circuit synthesis step as 
implemented in \abssynthe version 2.0~\cite{BrenguierPRS14}.  The basic 
algorithm is the same as that of \ebdd, but additional optimizations are 
applied.  The \wifm method by Morgenstern et al.~\cite{MorgensternGS13}, which 
has been used as a baseline in Section~\ref{sec:hw:ex:strati}, is not included 
here because it can only compute a winning strategy but not a circuit 
implementing it.

\mypara{\acs{QBF}-based methods.} Our approach using \acs{QBF} certification 
(see Section~\ref{sec:hw:circ_qbfcert}) is named \ecert.  A variant where we 
compute the negation of the winning region using the procedure 
\textsc{NegLearn} (Algorithm~\ref{alg:NegLearn}) is denoted by \ecertn.  Our 
\acs{QBF}-based learning approach from Algorithm~\ref{alg:SafeQbfSynt} is used 
in three configurations: \eq denotes a plain implementation using \depqbf, \eqb 
also uses \acs{QBF} preprocessing by \bloqqer, and \eqi uses the \depqbf solver 
in an incremental fashion (see Section~\ref{sec:hw:extr:qbf_impl}).

\mypara{Interpolation-based method.} An implementation of the 
interpolation-based method from Algorithm~\ref{alg:SafeInterpolSynt} is denoted 
by \ei.  It applies the dependency optimization presented in 
Section~\ref{sec:hw_dep} and uses \mathsat version \texttt{5.2.12} as 
interpolation engine.  \mathsat supports several interpolation methods.  In our 
experiments we use McMillan's system~\cite{McMillan03}.  Results with other 
interpolation methods are rather similar, though.  We also implemented our own 
interpolation engine by processing \picosat proofs in the 
\href{http://fmv.jku.at/tracecheck/}{\textsf{TraceCheck}} 
format.  However, for larger 
benchmark instances, the proof files grew prohibitively large with this 
approach.  Our realization using \mathsat does not have this problem.

\mypara{Learning based on \acs{SAT} solvers.} Configuration \es implements 
the \acs{SAT} solver based learning approach from 
Section~\ref{sec:hw:extr:satlearn} without the dependency optimization 
(Section~\ref{sec:hw_dep}) and without minimizing the final solution 
(Section~\ref{sec:hw:extrsat_impl}).  \esd denotes a similar configuration, but 
with the dependency optimization enabled.  Finally, the \esdm configuration also 
applies a minimization of the final solution by attempting to eliminate literals 
and clauses from the computed solutions (Section~\ref{sec:hw:extrsat_impl}). 
 All these configurations use activation variables to perform incremental 
solving across all calls to \textsc{CnfInterpol} (see 
Section~\ref{sec:hw:extrsat_impl}).  \lingeling is slightly faster on average 
than \minisat and \picosat in all these configuration. Results for other 
configurations ($28$ in total) can be found in the downloadable archive.

\begin{table}
\centering
\caption[Configurations for computing a circuit that implements a given 
strategy]
{Configurations for computing a circuit that implements a given 
strategy.}
\label{tab:extrconfig}
\begin{tabular}{lll}
\toprule
Name & Algorithm and Optimizations & Solver \\
\midrule
\ebdd  
& \textsc{CofSynt} (Alg.~\ref{alg:CofSynt})
& \textsf{CuDD} \\
\wabs
& \abssynthe 2.0~\cite{BrenguierPRS14}
& \textsf{CuDD}\\
\ecert
& QBF Certification (Sect.~\ref{sec:hw:circ_qbfcert})
& \qbfcert\\
\ecertn
& QBF Certification (Sect.~\ref{sec:hw:circ_qbfcert})
  + \textsc{NegLearn} (Alg.~\ref{alg:NegLearn})
& \qbfcert\\
\eq
& \textsc{SafeQbfSynt} (Alg.~\ref{alg:SafeQbfSynt})
& \depqbf\\
\eqb
& \textsc{SafeQbfSynt} (Alg.~\ref{alg:SafeQbfSynt})
& \depqbf + \bloqqer\\
\eqi
& \textsc{SafeQbfSynt} (Alg.~\ref{alg:SafeQbfSynt})
& Incremental \depqbf\\
\ei
& \textsc{SafeInterpolSynt} (Alg.~\ref{alg:SafeInterpolSynt}) + 
  Dep.~Opt.~(Sect.~\ref{sec:hw_dep})
& \mathsat\\
\es
& \textsc{SafeInterpolSynt} + 
  \textsc{CnfInterpol} (Alg.~\ref{alg:CnfInterpol})
& \lingeling\\
\esd
& \textsc{SafeInterpolSynt} + 
  \textsc{CnfInterpol} (Alg.~\ref{alg:CnfInterpol})
  + Dep.~Opt.~(Sect.~\ref{sec:hw_dep})
& \lingeling\\
\esdm
& \textsc{SafeInterpolSynt} + 
  \textsc{CnfInterpol} (Alg.~\ref{alg:CnfInterpol})
& \lingeling\\
& + Dep.~Opt. +
  Minimizing the final solution (Sect.~\ref{sec:hw:extrsat_impl}) & \\
\eptwo
& Parallel (Sect.~\ref{sec:hw:par_extr}) with 2 threads & 
\lingeling + \depqbf \\  
\epthree
& Parallel (Sect.~\ref{sec:hw:par_extr}) with 3 threads & 
\lingeling + \depqbf \\  
\bottomrule
\end{tabular}
\end{table}

\subsubsection{Experimental Setup}\label{sec:hw:extrsetup}

Again, all experiments were performed on an Intel Xeon E5430 CPU with 4 cores 
running at $2.66$\,GHz, using a 64 bit Linux as operating system. A timeout was 
set to $10\,000$ seconds for all circuit synthesis runs.  The available main 
memory was limited to $8$ GB.  The maximum size for auxiliary files to be 
written to the hard disk was set to $20$ GB.  

\mypara{Sanity checks.}
All synthesized circuits were model checked using \icthree~\cite{Bradley11}.  
\icthree never found a counterexample but in some cases hit a timeout.  We thus 
also ran a bounded model checker (\textsf{BLIMC}, which is distributed with 
\lingeling~\cite{lingeling}) to get bounded correctness guarantees for such 
cases.

\mypara{Winning strategies.}
For all our \acs{SAT}-based circuit computation methods, we used the winning 
strategies as computed by configuration \wpthree (see 
Table~\ref{tab:winconfig}).  Preliminary experiments with other strategy 
computation methods suggest that the impact on the performance in circuit 
synthesis is rather small.  One reason is that we simplify the computed winning 
region (or winning area) by calling \textsc{CompressCnf} 
(Algorithm~\ref{alg:CompressCnf}) as a preprocessing step to circuit extraction 
(see Chapter~\ref{sec:hw:circ}).  We thus refrain from running experiments with 
all combinations of our strategy- and circuit computation methods.  
Furthermore, we stored the winning strategies computed by \wpthree into files 
and loaded them for our circuit computation experiments in order to ensure that 
all our methods operate on exactly the same strategy (and to save computational 
resources for recomputing the strategy each time).  For \ebdd and \eabs, we used 
the winning regions as computed by these tools as a starting point for circuit 
synthesis.

\mypara{Benchmarks.}
From the $350$ benchmark instances used to evaluate our strategy computation 
methods (see Section~\ref{sec:hw:bench}), only $267$ instances have been used to 
compare our circuit computation methods. This has two reasons.  First, $40$ 
instances are unrealizable, so they have no winning strategy.  Second, for some 
instances, no winning strategy could be computed (even with a timeout of $10^5$ 
seconds) for at least one of the compared methods.  In order to have a fair 
comparison, we thus used only those benchmarks for which \wpthree, \ebdd and 
\eabs succeeded in computing a winning strategy.  In detail, \wpthree failed to 
compute a winning strategy for $25$ instances.  For $19$ additional instances, 
\ebdd could not find a winning strategy within $10^5$ seconds.  This includes 
\cnt$30$\texttt{n} and \cnt$30$\texttt{y}, for which we estimated \ebdd's 
circuit synthesis time (to be $0.1$ seconds) and the circuit size (to be $32$ 
gates) based on observations from smaller instances.  This leaves $17$ excluded 
instances.  \eabs could not compute a winning strategy for $11$ more benchmarks. 
However, for two \cnt~instances, we could estimate the time and size to $0.1$ 
seconds and $1$ gate based on results for smaller instances.  For eight 
instances of the \stay$ko$ benchmark, we also estimated $0.1$ seconds and $k+1$ 
gates.  What remains is one \driver~instance to exclude.  This results 
in $350-40-25-17-1=267$ instances used for the comparison.  The number of used 
instances per benchmark class can be seen from Table~\ref{tab:extrsolved} (see 
the line labeled ``Total'').  

\mypara{More detailed comparisons.}
The downloadable archive also contains more detailed pairwise comparisons on 
larger subsets of the benchmark instances.  This includes charts to compare our 
\acs{SAT}-based methods with \eabs on all $281$ benchmarks for which both \eabs 
and \wpthree were able to compute a winning strategy.  Charts comparing our 
\acs{SAT}-based methods with \ebdd on all $268$ instance on which both \ebdd and 
\wpthree were able could find a winning region are included as well.  However, 
since the results are almost identical to our three-way comparison, we refrain 
from presenting them in this article.

\subsubsection{The Big Picture} \label{sec:hw:ex:big2}

The Figures~\ref{fig:extr_cactus_time} and \ref{fig:extr_cactus_size} contain 
cactus plots illustrating the execution time and the resulting circuit size for 
the method configurations from Table~\ref{tab:extrconfig}.  Configuration 
\ecert is omitted because both the execution time and the 
circuit size is similar to \ecertn (the difference is mostly in the memory 
consumption).  Configuration \eq performs slightly worse than \eqb and \eqi 
and is also omitted to make the plots more legible.  The following 
paragraphs discuss the most important observations based on these two figures.  
A detailed analysis is done in Section~\ref{sec:hw:exextr_class} 
and~\ref{sec:hw:exextr_obs}.

\begin{figure}
\centering
  \hspace*{3mm}%
  \includegraphics[scale=1.45]{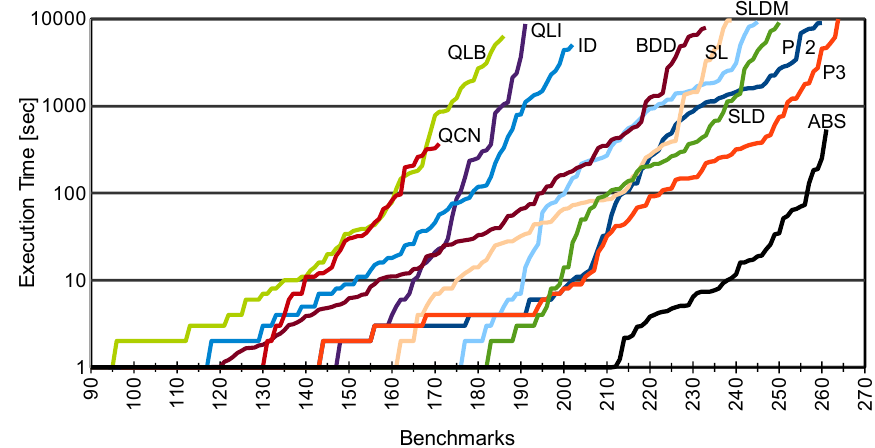}
\caption{A cactus plot summarizing execution times for computing a 
circuit from a strategy.}
\label{fig:extr_cactus_time}
\vspace{5mm}
  \includegraphics[scale=1.45]{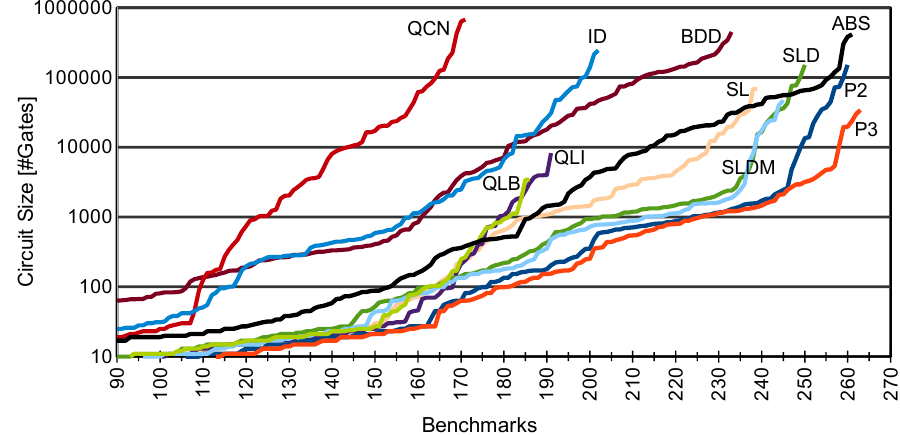}
\caption{A cactus plot summarizing the resulting circuit sizes.}
\label{fig:extr_cactus_size}
\end{figure}

\mypara{\acs{QBF} certification does not perform well.}  From 
Figure~\ref{fig:extr_cactus_size}, we can see that the \acs{QBF} certification 
method \ecertn produces the largest circuits.  Figure~\ref{fig:extr_cactus_time} 
illustrates that \ecertn is on average slightly faster than \eqb, but still 
solves less instances within the given resource limits.  The reason is that 
\ecertn often exceeds the $20$ GB limit for auxiliary files because the proof 
traces produced by \depqbf in the \qbfcert framework can grow very large 
(several hundreds of GB when run without limits).

\mypara{\acs{QBF}-based learning is slow but produces small circuits.}  
Especially when used with incremental \acs{QBF} solving, \acs{QBF}-based 
learning can outperform \acs{QBF} certification both regarding execution time 
and circuit size (compare \eqi versus \ecertn).  Still, in comparison with the 
other methods, \eqi is on average way slower.  Regarding circuit size, \eqi and 
\eqb are on a par and outperform the interpolation-based method \ei as well as 
the \acs{BDD}-based implementation \ebdd by almost one order of magnitude 
on~average.

\mypara{The interpolation-based approach does not outperform \acsp{BDD} in our 
experiments.}  Regarding circuit size, the interpolation-based configuration 
\ei yields similar results as \ebdd on average.  However, \ei is noticeably 
slower than \ebdd, especially for more complex benchmark instances.

\mypara{Our \acs{SAT} solver based learning approach outperforms all our other 
non-parallel methods.}  This holds true for both the execution time and the 
circuit size.  Configuration \esd turns out to be our best non-parallel 
option on the average over all our benchmarks.  It is already faster than \ebdd 
on average by more than one order of magnitude.  For example, \ebdd can solve 
its $233$ simplest benchmark instances in at most $7931$ seconds each.  \esd 
never needs more than $461$ seconds for its $233$ simplest instances.  This is a 
factor of $17$.  With respect to circuit size, the situation is even more 
extreme.  The $233$ smallest circuits produced by \ebdd have at most half a 
million AND gates each.  The $233$ smallest circuits produced by \esd have at 
most $2383$ gates each.  This is smaller by a factor of $188$.  \esdm produces 
even slightly smaller circuits, with an improvement factor of $240$ compared to
\ebdd.  This is not only because \ebdd 
produces large circuits for benchmarks that cannot be solved by \esd.  The most 
extreme instance is \driver\texttt{d8}\footnote{This instance is not included in 
Figure~\ref{fig:extr_cactus_size} because \eabs could not compute a winning 
strategy for this instance.}, for which \ebdd produces a circuit with $3.7$ 
million AND gates, while \esd produces a circuit with only $66$ gates.

\mypara{Our parallelization is competitive with the state of the art.} Our 
parallelization \epthree increases the number of solved instances compared to 
\esd from $250$ to $263$ by combining different methods and optimizations. The 
circuit sizes also decrease slightly, which is partly due to our heuristics 
performing additional circuit minimizations if there is sufficient time left 
(see Section~\ref{sec:hw:par_extr}), and due to selecting the smallest solution 
from all threads.  Our parallelization already solves $2$ instances more than 
the state-of-the-art tool \abssynthe.  When comparing 
\epthree against \abssynthe in Figure~\ref{fig:extr_cactus_time}, we can see 
that \abssynthe solves many instances in less than one second but the execution 
times grow steeper for more difficult instances.  Thus, \abssynthe can be 
superior if the timeout is short, while \epthree shows a more steady pace.  
Regarding circuit size, our parallelization outperforms \abssynthe by more than 
one order of magnitude on average (compare \epthree versus \eabs in 
Figure~\ref{fig:extr_cactus_size}).

\mypara{Execution time and circuit size often correlate.}  With the exception 
of 
the \acs{QBF}-based learning methods, we can observe a correlation between 
execution time and circuit size in our experiments.  Methods that are fast have 
a tendency to also producing small circuits and vice versa.  At the first 
glance, this may be surprising because, intuitively, one could expect that we 
have to find a good trade-off between these performance metrics.  One reason 
for 
the correlation is that most of our methods (all except \ecert and \ecertn) 
compute circuits iteratively for one control signal after the other.  After 
every iteration, the strategy formula is refined with the solutions for the 
control signals that have been synthesized so far.  If these solutions are 
complicated, then this results in more complicated strategy formulas for the 
next iterations, which can increase the computation times.  For the 
learning-based methods, the size of the \acs{CNF} formulas defining the control 
signals directly corresponds to the number of iterations that were needed to 
compute them: Every clause results from a mayor iteration involving a 
counterexample computation.  Every literal in a clause witnesses a failed 
attempt to eliminate this literal with a \acs{SAT}- or \acs{QBF} solver call. A 
correlation between the circuit size and the execution time is thus natural.

\mypara{Computing circuits from strategies is by no means a negligible step in 
the synthesis process.}  Let us compare the total strategy computation time 
against the total circuit computation time for all instances where both steps 
terminate within the timeout of $10\,000$ seconds.  For \wpthree, this 
comparison reveals that $52$ percent of the total synthesis time is spent on 
strategy computation and $48$ percent is consumed by circuit computation.  For 
\wbdd, the distribution is $60\,\%$ to $40\,\%$.  Only for \wabs, the 
distribution is $90\,\%$ to $10\,\%$, which may be due to the 
abstraction/refinement techniques implemented in \wabs.

\subsubsection{Performance per Benchmark Class} \label{sec:hw:exextr_class}

This section analyzes the performance of our methods for circuit synthesis for
the different benchmark classes.  We will see that the configuration \esd is
not always superior.

Table~\ref{tab:extrsolved} lists the number of benchmark instances that could be 
solved per benchmark class by the different configurations.  The first line gives 
the total number of benchmark instances in the respective class.  The last 
column gives the total number of instances for which a circuit could be computed 
by the respective method within the given resource limits ($10\,000$ seconds, 
$8$ GB of main memory, $20$ GB for auxiliary files).
The fastest 
configuration is marked in blue.  If the same amount of instances are solved by 
several configurations, we marked the one with the lowest total execution time. 
In case the total execution time is very similar, we sometimes marked several 
configurations.  For benchmark classes where most of the configurations solve 
all instances, we did not mark any configuration.  Again, we do not include 
\wabs and the parallelizations in this ranking because they combine several 
techniques.

\begin{table}
\setlength{\tabcolsep}{1.32mm}
\centering
\caption{Computing a circuit from a strategy: solved instances per benchmark 
class.}
\label{tab:extrsolved}
\begin{tabular}{lccccccccccccccccc}
\toprule
        &\add    &\mult   &\cnt    &\mv     &\bs     &\stay   &\amba   &\genbuf &\fa     &\mo     &\driver &\demo   &\load   &\ltltodba &\ltltodpa &Total \\
\midrule
Total   &20      &14      &28      &32      &10      &24      &23      &42      &3       &2       &0       &37      &3       &19        &10        &267   \\
\midrule
\ebdd   &\tm{20} &7       &28      &32      &4       &16      &13      &41      &\tm{3}  &\tm{2}  &-       &36      &3       &18        &10        &233   \\
\ecert  &6       &4       &28      &32      &10      &24      &0       &6       &2       &0       &-       &30      &2       &7         &10        &161   \\
\ecertn &6       &4       &28      &32      &10      &24      &3       &10      &3       &1       &-       &31      &2       &9         &8         &171   \\
\eq     &8       &4       &28      &32      &10      &24      &3       &10      &2       &0       &-       &35      &3       &19        &10        &188   \\
\eqb    &8       &3       &28      &32      &10      &24      &3       &9       &2       &0       &-       &35      &3       &19        &10        &186   \\
\eqi    &8       &4       &28      &32      &10      &24      &3       &11      &3       &1       &-       &35      &3       &19        &10        &191   \\
\ei     &20      &5       &28      &28      &10      &24      &4       &12      &2       &1       &-       &36      &3       &\tm{19}   &\tm{10}   &202   \\
\es     &12      &5       &28      &25      &10      &24      &\tm{20} &\tm{41} &\tm{3}  &2       &-       &\tm{37} &\tm{3}  &\tm{19}   &\tm{10}   &239   \\
\esd    &\tm{20} &\tm{14} &28      &22      &10      &24      &17      &41      &3       &2       &-       &37      &\tm{3}  &\tm{19}   &\tm{10}   &250   \\
\esdm   &\tm{20} &\tm{14} &28      &22      &10      &24      &14      &40      &3       &1       &-       &37      &\tm{3}  &\tm{19}   &\tm{10}   &245   \\
\midrule
\eptwo  &20      &14      &28      &32      &10      &24      &17      &41      &3       &2       &-       &37      &3       &19        &10        &260   \\
\epthree&20      &14      &28      &32      &10      &24      &20      &41      &3       &2       &-       &37      &3       &19        &10        &263   \\
\eabs   &20      &8       &28      &32      &10      &24      &23      &42      &3       &2       &-       &37      &3       &19        &10        &261   \\
\bottomrule
\end{tabular}
\end{table}

\mypara{\add.} The configurations \ebdd, \esd and \esdm solve all instances of 
the \add\ benchmark within one second.  The difference in circuit size is 
moderate (at most $171$ gates with \esd and \esdm; at most $416$ gates with 
\ebdd). The interpolation-based method \ei solves all instances as well but 
requires at most $40$ seconds.  The good results of \esd, \esdm and \ei are 
mostly due to our dependency optimization (see Section~\ref{sec:hw_dep} and 
Section~\ref{sec:hw:extrsat_impl}):  Without the dependency optimization, the 
\acs{SAT} solver based learning method solves only $12$ instances (\esd versus 
\es).  

\mypara{\mult.} This benchmark is similar to \add\ in spirit, but the circuit to 
be synthesized is more complex.  \esd and \esdm still perform well, again due to 
the dependency optimization.  However, \ebdd and \ei fall back noticeably.  The 
difference in circuit size also grows more significant: For example, \ebdd 
implements \mult9 with more than $10^5$ gates, while \esd and \esdm require only 
$633$ gates.  One reason is that multipliers cannot be represented by small 
(monolithic) \acsp{BDD} with any variable ordering (see 
Section~\ref{sec:prelim:bdds}).  Since the \ebdd method dumps \acsp{BDD} as a 
network of multiplexers to obtain the resulting circuit, the \acs{BDD} size does 
not only affect the computation time but also the resulting circuit size.  Our 
\acs{SAT} solving based methods \esd and \esdm do not suffer from this issue.  
They even outperform \abssynthe significantly on this benchmark. 

\mypara{\cnt, \bs~ and \stay.}  These benchmarks can be solved by all our 
methods in a few seconds.  Only \ei requires up to $46$ seconds on larger 
instances of \stay. \ebdd performs well on \cnt, but cannot solve all instances 
of \bs\ and \stay.  The latter two benchmarks contain barrel shifters and 
multipliers, which are known to be challenging for \acsp{BDD}.

\mypara{\mv.} The \mv\ benchmark is an interesting case.  Most of our methods 
can solve all instances of this benchmark in less than one second.  However, for 
the interpolation-based method \ei as well as the \acs{SAT} solver based 
learning methods \es, \esd and \esdm, this benchmark is challenging.  All these 
methods 
are based on \textsc{InterpolSynt} (Algorithm~\ref{alg:InterpolSynt}).  The crux 
with the \mv\ benchmark is that the XOR sum of all control signals must be 
$\true$. \textsc{InterpolSynt} starts by building a circuit to fix the value of 
the last control signal based on all other control signals such that this is 
ensured.  Since this circuit needs to react properly to all possible values of 
all other control signals, it can be very large.  In particular, the 
\acs{SAT} solver based learning methods build this circuit in a \acs{CNF} 
representation without introducing auxiliary variables.  A \acs{CNF} 
formula that computes the XOR sum of $n$ variables without introducing new 
auxiliary variables requires $2^{n-1}$ clauses.  For \mv28\texttt{y}, this gives 
$2^{27} \approx 134\cdot 10^6$ clauses.\footnote{For the resubstitution step in 
Line~\ref{alg:InterpolSynt:resub} of Algorithm~\ref{alg:InterpolSynt}, this 
\acs{CNF} also needs to be negated, which can even result in running out of 
memory.}  Only in the last iteration, when the algorithm processes the first 
control signal, \textsc{InterpolSynt} discovers that this signal can actually be 
set to a constant value.  This has the effect that all the computed circuits for 
the other control signals also collapse to constant values.  The root cause 
for this behavior is that \textsc{InterpolSynt} is very conservative with 
exploiting implementation freedom (see Section~\ref{sec:intdisc} for a 
discussion).  In contrast, the \acs{QBF}-based learning algorithm 
\textsc{QbfSynt} (Algorithm~\ref{alg:QbfSynt}) exploits the available freedom 
greedily.  It sets each control signal to a constant value right away, because 
this is sufficient to ensure that a solution for the remaining control signals 
still exists.

\mypara{\amba\ and \genbuf.}  For these benchmarks, the \acs{SAT} solver based 
learning configuration \es performs best.  That is, the dependency optimization 
implemented in \esd and \esdm does not pay off.  \esd, \es and \ebdd can solve 
the same amount of \genbuf\ instances, but \esd is slower than \es by a factor 
of $2$ in total, and \ebdd is even slightly slower than \esd in total.  The sum 
of the circuit sizes for all \genbuf\ instances is $44$ times smaller when using 
\es instead of \ebdd.  For \amba, the factor is $21$ when counting only the 
instances that can be solved by both \es and \ebdd.

\mypara{\fa\ and \mo.}  Both \ebdd and \es can solve all \fa\ instances in less 
than $10$ seconds per instance.  The \mo\ instances are solved by \ebdd in at 
most $160$ seconds per instance.  The second fastest configuration for \mo\ is 
\es, but it requires already $4500$ seconds.

\mypara{\driver.} \wabs cannot compute a winning strategy for any of the 
\driver\ instances, so this benchmark is not included in the comparison of 
Table~\ref{tab:extrsolved}.  Our \acs{SAT} solver based learning methods \es, 
\esd and \esdm can solve all \driver\ instances in less than $10$ seconds.  The 
circuit size with these methods is at most $600$ gates.  \ebdd can only handle 
the smallest \driver\ instance, but takes already half an hour to produce a 
circuit with $3.7$ million gates.  With up to $326$ state variables and $98$ 
inputs, the \driver\ benchmark certainly offers plenty of possibilities for 
building complicated circuitry.  Yet, in contrast to \ebdd, our learning-based 
methods appear to perform well in exploiting the implementation freedom to avoid 
overly complicated solutions.

\mypara{\demo.}  Only \eabs and our \acs{SAT} solver based learning methods 
\es, \esd and \esdm can solve all instances.  The dependency optimization is 
not beneficial: \esd is slower than \es by a factor of $3.2$.

\mypara{\load.}  Again, the \acs{SAT} solver based learning methods \es, \esd 
and \esdm perform best: they solve all instances in at most $2$ seconds.  \ei 
requires up to $18$ seconds.  The fastest \acs{QBF}-based learning method is 
\eqi, requiring up to $87$ seconds for the \load\ instances. \ebdd 
requires up to $10$ minutes.

\mypara{\ltltodba\ and \ltltodpa.}  The configurations \ei, \es, \esd, and 
\esdm require at most $4$ seconds on these benchmarks.  Other configurations 
that can also solve all these benchmarks are slightly slower.

\subsubsection{Further Observations} \label{sec:hw:exextr_obs}

\mypara{The effect of our postprocessing with \abc~\cite{BraytonM10} is rather 
insignificant.}  For \esd, \abc manages to reduce the average circuit size from 
$9500$ gates to around $2700$ gates in our experiments.  However, this average 
is strongly influenced by the \mv\ benchmark, where circuits with up to half a 
million gates are reduced to circuits were all control signals are driven by 
constants. See Section~\ref{sec:hw:exextr_class} for an explanation why this 
happens.  This reduction for the \mv\ benchmark could also be achieved with a 
simple constant propagation.  When omitting the \mv\ benchmark, the average 
circuit size is reduced from $4100$ gates to $2900$ gates, which is a reduction 
by around $30$ percent.  In relation to the circuit size differences between our 
methods, which can be in the range of several orders of magnitude (see 
Figure~\ref{fig:extr_cactus_size}), this is rather insignificant.  On the other 
hand, in the case of \esd, only $0.6$ percent of the total execution time for 
all benchmarks is spent by \abc. By modifying the sequence of minimization 
commands executed by \abc, other trade-offs between the execution time and the 
resulting circuit size improvements are possible.  Yet, our experiments suggest 
that postprocessing cannot easily compensate the large circuit size differences 
between the methods. In other words, exploiting the implementation freedom 
cleverly while computing the circuits appears to be much more effective than 
investing more effort into postprocessing.

\begin{figure}
\centering
\begin{minipage}{.49\textwidth}
  \includegraphics[width=\textwidth]{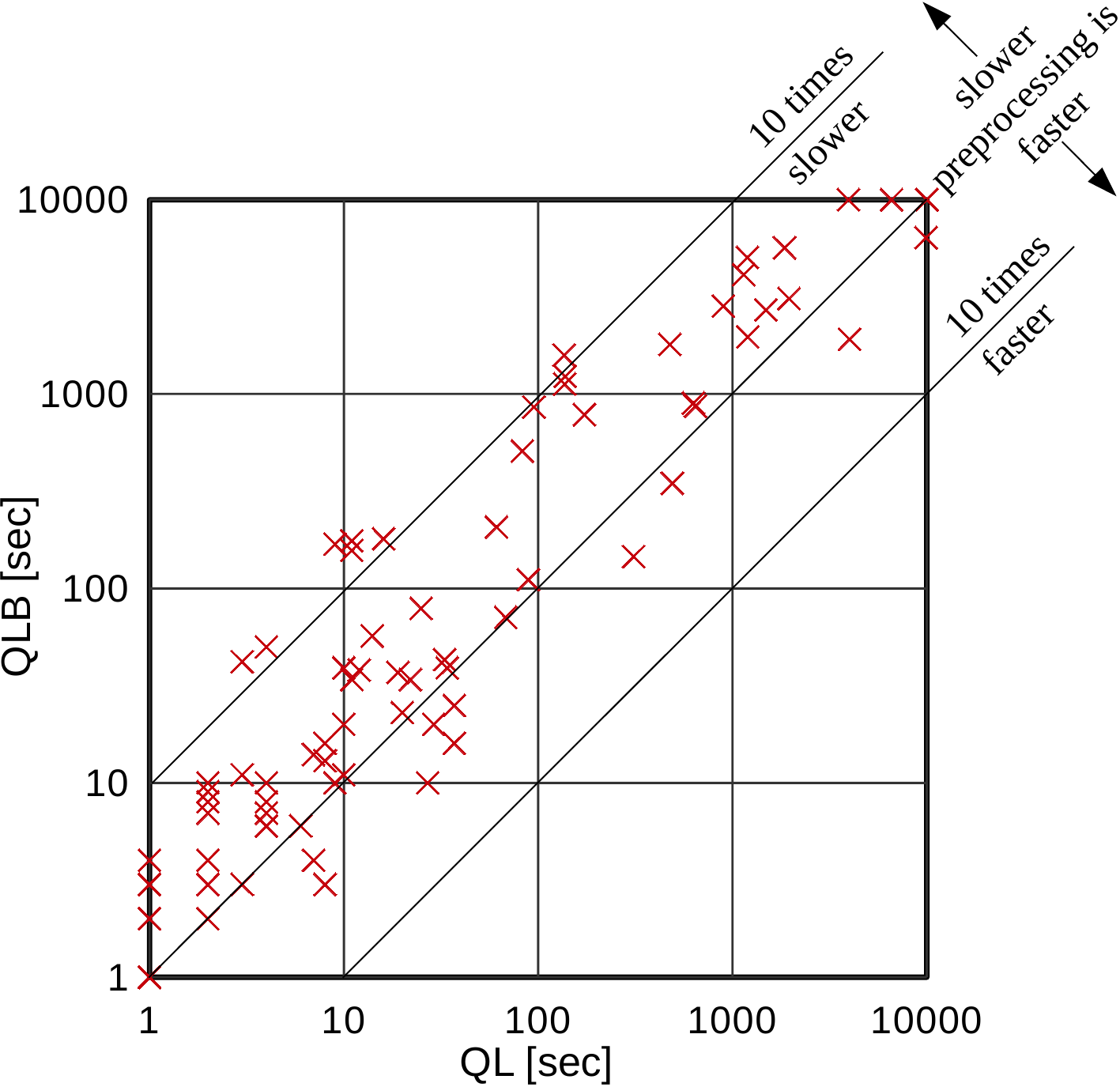}
\caption{The effect of \acs{QBF} preprocessing in circuit synthesis.}
\label{fig:extr_pre_scatter1}
\end{minipage}
\hspace*{1mm}
\begin{minipage}{.49\textwidth}
  \includegraphics[width=\textwidth]{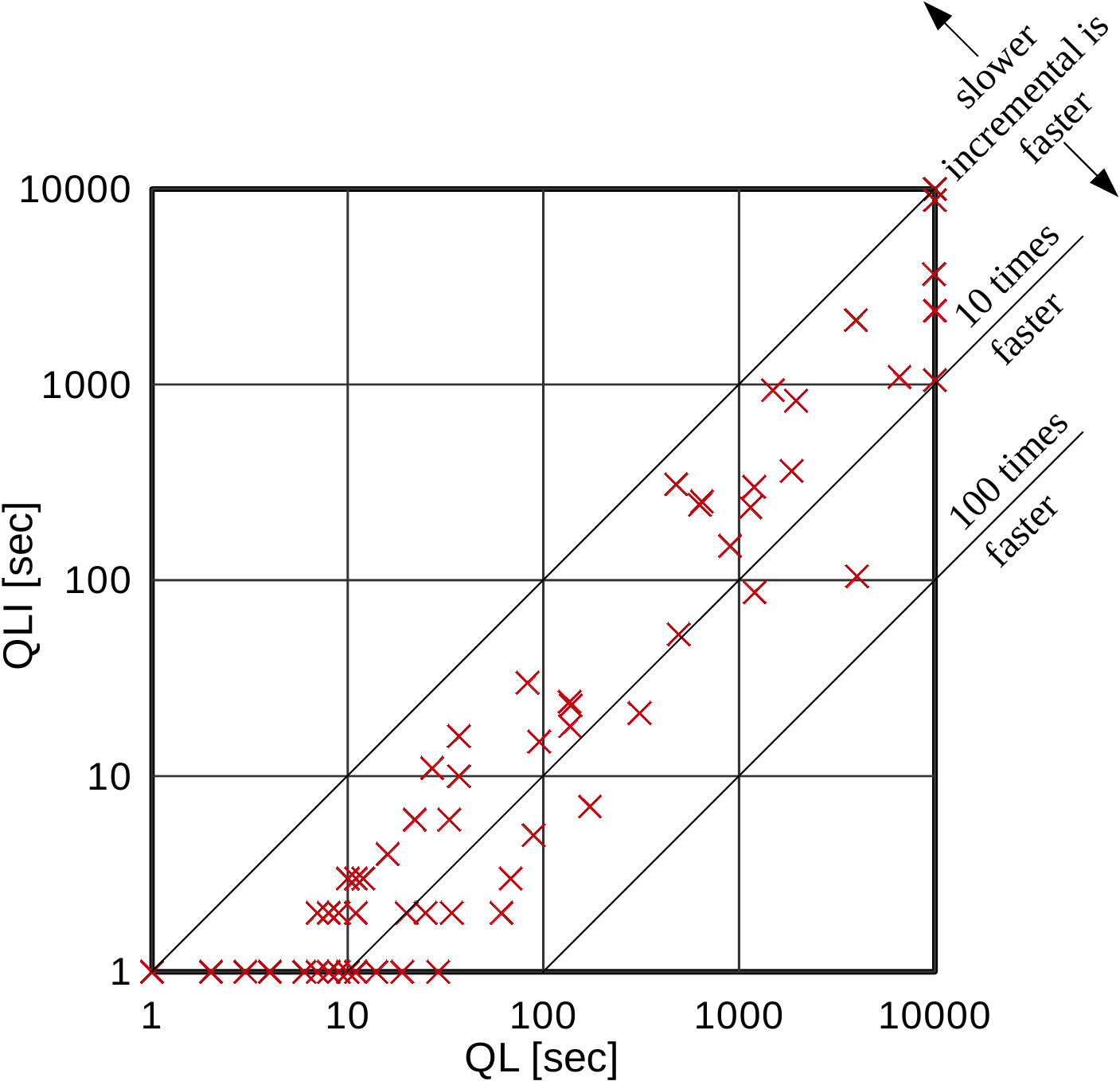}
\caption{The effect of incremental \acs{QBF} solving in circuit synthesis.}
\label{fig:extr_pre_scatter2}
\end{minipage}
\end{figure}

\mypara{Incremental \acs{QBF} solving outperforms \acs{QBF} preprocessing in our 
circuit synthesis experiments.}  Figure~\ref{fig:extr_pre_scatter1} illustrates 
the effect of \acs{QBF} preprocessing in our \acs{QBF}-based learning method for 
circuit synthesis by comparing \eqb against \eq in a scatter plot.  We see a 
negative effect for most instances.  The number of solved instances even 
decreases from $188$ to $186$ (see Table~\ref{tab:extrsolved}).  By trend, 
preprocessing is more beneficial for more complex instances.  Some of the more 
complex instances have been left out in the comparison because either \wbdd or 
\wabs failed to compute a winning region.  If we consider all $285$ instances on 
which \wpthree managed to compute a winning strategy, the number of solved 
instances actually increases from $190$ to $193$ due to \acs{QBF} preprocessing. 
 Hence, preprocessing also has its merits. On the other hand, incremental 
\acs{QBF} solving has an exclusively positive impact in our experiments.  It is 
visualized in Figure~\ref{fig:extr_pre_scatter2}, comparing \eqi against \eq.  
There is not a single instance where incremental \acs{QBF} solving increased the 
computation time.  In $3$ cases, a timeout is avoided.  When counting only the 
instances where \eq terminates successfully, the average execution time is 
reduced from $204$ to $59$ seconds, which is a speedup of factor~$3.5$.

\mypara{Using \textsc{NegLearn} reduces the memory required by \qbfcert.}  As 
already mentioned, \qbfcert can consume quite some memory.  This applies to both 
main memory as well as disk space for auxiliary files.  As a consequence, \ecert 
encounters a timeout for only two benchmark instances.  For the other instances 
that cannot be solved, the reason is in exceeding the memory limit.  When using 
\textsc{NegLearn} (Algorithm~\ref{alg:NegLearn}) in order to compute the 
negation of the winning region without introducing auxiliary variables, the size 
of the auxiliary files is reduced by up to a factor of $30$.  On the other hand, 
for more than $15$ instances, running \textsc{NegLearn} only trades a memory 
issue for a timeout.  Still, the total number of solved instances increases from 
$161$ to $171$ in our \acs{QBF} certification approach (compare \ecert with 
\ecertn in Table~\ref{tab:extrsolved}).  For our other methods, running 
\textsc{NegLearn} does not pay off, though.

\mypara{Minimizing the final solution in \acs{SAT} solver based learning yields 
moderate circuit reductions.}  When counting only the benchmark instances where 
both \esd and \esdm terminate, the average circuit size is reduced by $33\%$ 
(from $1500$ to $1000$ gates) due to the final minimization step discussed in 
Section~\ref{sec:hw:extrsat_impl}.  On the other hand, the average circuit 
computation time increases from $106$ seconds to $311$ seconds, which is almost 
a factor of $3$.  For individual benchmark instances, the cost/benefit ratio 
can be lower, though.  For example, in the case of \driver\texttt{b}8, the 
circuit size is reduced from $594$ gates to $152$ gates in only a few extra 
seconds.
 
\subsection{Discussion} \label{sec:hw:exp:concl}

\aclp{BDD} are increasingly displaced by \acs{SAT}-based methods in the formal 
verification of hardware circuits.  In synthesis, however, outperforming 
\acsp{BDD} is challenging.  

\mypara{Outperforming \acsp{BDD} in strategy computation.} For the strategy 
computation step, our \acs{SAT} solver based learning approach is competitive 
with the \acs{BDD}-based reference implementation in our experiments, but only 
when making clever use of incremental solving, unsatisfiable cores, our 
optimization for exploiting unreachable states, and our heuristic for expanding 
quantifiers.  With our parallelization, we can even solve significantly more 
benchmark instances than the \acs{BDD}-based implementation.  This is achieved 
by complementing the \acs{SAT} solver based learning approach with our 
template-based approach.  
An advantage of \acsp{BDD} is that they can handle both universal and 
existential quantification.  This also holds true for \acs{QBF} solvers.  
However, a \acs{QBF} solver only computes one satisfying assignment, while 
\acsp{BDD} eliminate the quantifiers to represent all satisfying assignments 
simultaneously.  Our \acs{QBF}-based algorithms have to compensate for that with 
more iterations.  The performance of our \acs{QBF}-based algorithms is rather 
limited compared to our \acs{SAT} solver based realizations.  This is somewhat 
surprising because the lack of universal quantification induces even more 
iterations.  However, considering that \acs{QBF} is still a rather young 
research discipline, this situation may change in the future.  A combination of 
incremental \acs{QBF} solving with preprocessing appears to be a particularly 
promising direction.
Our parallelization is on a par with the state-of-the-art synthesis tool 
\abssynthe, which is also \acs{BDD}-based but uses abstraction/refinement and 
other advanced optimizations.  Adopting optimizations like 
abstraction/refinement from \abssynthe appears to be a promising direction for 
future work.

\mypara{Outperforming \acsp{BDD} in circuit computation.} For the second 
synthesis step, where circuits implementing the computed strategies are 
constructed, our satisfiability-based methods are even more beneficial on
average over all our benchmarks.  In particular, our combination of 
interpolation with \acs{SAT} solver based learning outperforms the 
\acs{BDD}-based reference implementation by roughly one order of magnitude on 
average.  Moreover, it produces circuits that are smaller by around two orders 
of magnitude.  One reason is that the learning techniques we apply 
seem to be good at exploiting implementation freedom.  Ehlers et 
al.~\cite{EhlersKH12} showed that learning techniques can also improve the 
circuit sizes when used with \acsp{BDD}, but only at the cost of additional 
computation time.  The experiments in this article suggest that the 
combination of learning algorithms with decision procedures for satisfiability 
is more promising.  Our plain \acs{SAT} solver based learning approach is still 
significantly slower than \abssynthe.  However, our parallelization can already 
solve more instances by combining different \acs{SAT}-based methods. Moreover, 
it produces circuits that are smaller than those from \abssynthe by more than 
one order of magnitude on average in our experiments.

\mypara{Conclusion.} \acsp{BDD} are far from obsolete in the synthesis of 
reactive systems.  Yet, \acs{SAT}-based methods can be 
competitive, and even outperform \acs{BDD}-based implementations on average 
when designed carefully.  We also observed that \acs{SAT}-based methods can 
solve classes of benchmarks that are hard to deal with for \acp{BDD}.  We 
therefore believe that our novel synthesis methods form an 
important contribution to the portfolio of available approaches. 

\section{Related Work}\label{sec:rel}

Related work on which this thesis builds has already been discussed throughout 
the document, and especially in Chapter~\ref{sec:prelim}.  This chapter 
discusses alternative approaches and points out similarities and differences.

\subsection{\acs{SAT}-Based Reactive Synthesis Approaches} \label{sec:rel:sat}

\noindent
Reactive synthesis is a broad research area, but approaches based on decision 
procedures for satisfiability are rare.

\mypara{Incremental induction.} Morgenstern et al.~\cite{MorgensternGS13} 
present a \acs{SAT} solver based synthesis algorithm for safety specifications 
that is inspired by the model checking algorithm \icthree~\cite{Bradley11} and 
its principle of incremental induction.  The basic idea is to lazily compute the 
\emph{rank} of the initial state of the specification, which is the maximum 
number of steps in which the environment can enforce to visit an unsafe state.  
If this rank is found to be finite, the specification is unrealizable.  If it is 
found to be infinite, the specification is realizable.  Hence, strictly 
speaking, the paper only presents a decision procedure for realizability.  
However, computing a winning strategy and a circuit implementing this strategy 
is also possible.  We used a reimplementation of this algorithm as a baseline 
in our experimental evaluation.  It was very fast on certain benchmark 
instances, but outperformed significantly by our new algorithms on average.

\mypara{Property-directed synthesis.} Chiang and Jiang~\cite{ChiangJ15} present 
a similar approach, which is also inspired by \icthree~\cite{Bradley11}.  While 
Morgenstern et al.~\cite{MorgensternGS13} solve the game from the perspective of 
the environment taking the unsafe states as anchor, the approach by Chiang and 
Jiang~\cite{ChiangJ15} takes the perspective of the controller to be 
synthesized.  It uses the initial states as anchor and tries to avoid ending 
up in an unsafe state.  This yields promising results for a (rather small) 
subset of the \syntcompy benchmarks.  An integration into our parallelization 
would be interesting.

\mypara{Strategy computation without preimages.}
Narodytska et al.~\cite{NarodytskaLBRW14} propose an algorithm to compute 
strategies for reachability specifications, where a set of target states needs 
to be visited at least once.  The general idea is to apply a 
counterexample-guided backtracking search in order to find a set of executions 
that is sufficient to reach the target states within some number $n$ of steps.  
This set of executions is then generalized into a winning strategy in the form 
of a tree that defines control actions based on previous inputs.  If no strategy 
is found for a particular bound $n$, then $n$ is increased.  A \acs{SAT} solver 
is used both to compute and to generalize executions.  Hence, in comparison to 
our work, this approach operates on a different specification class 
(reachability rather than safety), and it computes a winning strategy directly 
rather than deriving it from a winning region.

\mypara{Implementing strategy trees.}
E{\'{e}}n et al.~\cite{EenLNR15} complete the work discussed in the previous 
paragraph by proposing a method to compute circuits implementing the obtained 
winning strategies.  Just like one of our methods, it uses interpolation.  
However, since the strategies are represented as trees rather than relations, 
the use of interpolation is quite different compared to our work.

\mypara{\acs{QBF}-based approaches.}
Staber and Bloem~\cite{StaberB07} present a \acs{QBF}-based synthesis method for 
safety specifications.  The general principle of unrolling the transition 
relation has already been discussed along with its drawbacks in 
Section~\ref{sec:hw:qbf:sf} as a motivation for our learning-based algorithms.  
A solution for B\"uchi objectives (where some set of states needs to be visited 
infinitely often) is presented by Staber 
and Bloem~\cite{StaberB07} as well.  Alur et al.~\cite{AlurMN05} propose a 
similar solution for bounded reachability specifications (where a set of target 
states needs to be reached within at most $n$ steps).  That 
paper~\cite{AlurMN05} also proposes an optimization that uses only one copy of 
the transition relation. However, all variables are still copied for all time 
steps and the high number of quantifier alternations (linear in $n$) remains.  
In contrast, our learning-based methods use only one copy of the transition 
relation and two quantifier alternations in all \acs{QBF} solver calls (at the 
cost of a potentially higher number of solver calls).

\mypara{ALLQBF \index{ALLQBF} solving.}
Becker et al.~\cite{BeckerELM12} explain how \acs{QBF} solvers can be used to 
compute not only one but \emph{all} satisfying assignments of a \acs{QBF} in the 
form of a compact (quantifier-free) formula.  Similar to some of our 
satisfiability-based synthesis methods, query learning is used to solve this 
problem.  The paper also points out that such an ALLQBF engine can be used as a 
direct replacement of \acsp{BDD} to compute the winning region of various 
specification classes using fixpoint algorithms.  For instance, 
Algorithm~\ref{alg:SafeWin} can be realized with an ALLQBF engine in order to 
compute the winning region of a safety specification.  While our \acs{QBF}-based 
algorithm \textsc{QbfWin} (Algorithm~\ref{alg:QbfWin}) is similar in spirit, 
there are also some important differences.  We apply query learning directly to 
the specification rather than the preimage computations, which allows for better 
generalizations.  Furthermore, we extend the basic algorithm with additional 
optimizations such as our reachability optimization from 
Section~\ref{sec:hw_reach}.

\mypara{\acs{QBF} as a game.}
Synthesis can be seen as a game between two players: the system controlling the 
outputs and trying to satisfy the specification, and the environment controlling 
the inputs and trying to violate the specification.  Similarly, \acs{QBF} 
solving can also be seen as a game between two players: one player controls the 
existentially quantified variables and tries to satisfy the formula, the other 
player controls the universal variables and tries to falsify the 
formula.  This idea is followed by Janota et al.~\cite{JanotaKMC12} in the 
\acs{QBF} solver \rareqs.  Following the principle of counterexample-guided 
refinement of solution candidates, it uses two competing \acs{SAT} solvers to 
build a \acs{QBF} solver: one \acs{SAT} solver computes candidates in the form 
of assignments to existential variables, the other one refutes them 
with assignments for the universal variables.  We followed the same principle 
when traversing from our \acs{QBF}-based synthesis algorithm to \acs{SAT} solver 
based algorithms (cp.\ Algorithm~\ref{alg:QbfWin} with 
Algorithm~\ref{alg:SatWin1}).  However, we apply the idea on the level of the 
synthesis algorithm rather than for realizing individual \acs{QBF} solver calls. 
This allows for additional optimizations.  Another connection to this work is 
in coming to the same conclusion, namely that solving quantified problems with 
\acs{SAT} solvers instead of \acs{QBF} solvers can be beneficial.

\mypara{\acs{SMT}-based \index{bounded synthesis} bounded synthesis.}
Bounded synthesis~\cite{FinkbeinerS13} by Finkbeiner and Schewe has the 
objective of synthesizing a reactive system from a given 
\acf{LTL}~\cite{Pnueli77} specification.  First, the \acs{LTL} specification 
$\varphi$ is transformed into a (universal co-B\"uchi tree) automaton.  A given 
system implementation satisfies $\varphi$ if there exists a special annotation 
that maps each (automaton state, system state)-pair to a natural number.  The 
idea is now to search for such an annotation and a system implementation 
simultaneously using an \acs{SMT} solver: An upper bound on the system size is 
fixed but the system behavior is left open by using uninterpreted 
functions for the transition relation and the definition of the system outputs.  
Along with the annotations, the \acs{SMT} solver then searches for concrete 
realizations of these uninterpreted functions.  In case of unsatisfiability, 
the bound on the system size is increased until a solution is found.  Although 
this 
synthesis approach is also \acs{SAT}-based, it is quite different from the 
algorithms presented in this thesis.  The basic philosophy of enumerating 
constraints that have to be satisfied by the final solution is similar to our 
template-based approach and our reduction to \acs{EPR}, though.

\mypara{Parameterized synthesis.}  The tool \textsf{PARTY}~\cite{KhalimovJB13} 
uses \acs{SMT}-based bounded synthesis to solve the parameterized synthesis 
\index{parameterized synthesis} problem~\cite{JacobsB14}, which asks to 
synthesize systems with a parametric number of isomorphic components.  The 
approach is based on so-called \index{cutoff} \emph{cutoffs}~\cite{EmersonN03}, 
saying that the verification of parametric systems with an arbitrary number of 
isomorphic components can be reduced to the verification of systems with a fixed 
size (the cutoff size) if the specification has a certain structure.

\mypara{Controller synthesis using uninterpreted functions.}  Hofferek et 
al.~\cite{HofferekB11, HofferekGKJB13, Hofferek14} present an approach to 
synthesize controllers for aspects that are hard to engineer in concurrent 
systems.  A sequential reference implementation acts as a specification.  
Uninterpreted functions are used to abstract complex datapath elements.  
Interpolation over \acs{SMT} formulas is used as the core technology for 
computing a controller implementation.  This includes a method to compute 
multiple interpolants from a single unsatisfiability 
proof~\cite{HofferekGKJB13}.  The approach is implemented in the tool 
\textsf{Suraq}~\cite{HofferekG14}.  While there are similarities with our 
interpolation-based algorithms, we apply interpolation on the propositional 
level, we do not use abstraction using uninterpreted functions, and we compute 
one interpolant after the other.  These differences appear to be interesting 
directions for future work, though.

\subsection{Other Reactive Synthesis Approaches and Tools} \label{sec:rel:ohw}

\acsp{BDD} can be considered as the dominant data structure for symbolic 
synthesis algorithms.  However, there are also other alternatives.

\mypara{Antichains.}  Given a set of partially ordered elements, an 
\emph{antichain} \index{antichain} is a subset of elements that are all pairwise 
incomparable.  Just like \acsp{BDD}, antichains can be used as compact 
representations of large state sets: for a given partial order among states, 
an antichain represents the set of all states that are less than or equal to one 
antichain element with respect to the partial order. Besides decision procedures 
for satisfiability, antichains provide another successful alternative to 
\acsp{BDD} in synthesis~\cite{FiliotJR11,RaskinCDH07,BerwangerCWDH10}.  The 
following paragraphs describe such approaches in more detail.

\mypara{Antichains for \acs{LTL} synthesis.}  Filiot et al.~\cite{FiliotJR11} 
present a synthesis approach for \acs{LTL} specifications that uses antichains 
as data structure.  It translates the specification into a (universal co-\buchi 
word) automaton and enforces that the rejecting states of the automaton are 
visited at most $n$ times.  This effectively gives a safety game and is thus 
similar to bounded synthesis~\cite{FinkbeinerS13} as discussed earlier.  The 
approach has been implemented in the tool \textsf{Acacia+}~\cite{BohyBFJR12}.  
While the similarities to our work are small, the procedure of reducing 
\acs{LTL} specifications to safety games can be used to apply our 
\acs{SAT}-based synthesis methods also to \acs{LTL} specifications.  In fact, 
this approach was followed in the \syntcomp competition to translate \acs{LTL} 
benchmarks into safety specifications automatically~\cite{sttt_syntcomp}.

\mypara{Antichains for synthesis with imperfect information.}  In certain 
settings, the system to be synthesized may not be able to observe all internals 
of other components.  Synthesis algorithms for imperfect information address 
this issue.  Raskin et al.~\cite{RaskinCDH07} present algorithms to determine 
the realizability of such synthesis problems using antichains. Berwanger et 
al.~\cite{BerwangerCWDH10} extend this work by proposing a method to also 
extract winning strategies for parity games with imperfect information. This 
approach has been implemented in the tool 
\textsf{Alpaga}~\cite{BerwangerCWDH09}.  As an optimization, this tool uses 
\acsp{BDD} to represent antichains in such a way that efficient quantification 
is possible.

\mypara{Explicit representations.}  The tool \lily~\cite{JobstmannB06} 
synthesizes reactive systems from \acs{LTL} specifications by a serious of 
automata transformations that are based on work by Kupferman and 
Vardi~\cite{KupfermanV05}.\footnote{Similar to the antichain-based approach by 
Filiot et al.~\cite{FiliotJR11} and the bounded synthesis approach by Finkbeiner 
and Schewe~\cite{FinkbeinerS13}, the \acs{LTL} specification is translated into 
a universal universal co-\buchi tree automaton first.  Following an approach by 
Kupferman and Vardi~\cite{KupfermanV05}, this automaton is then translated into 
an alternating weak tree automaton and further on to a nondeterministic \buchi 
tree automaton.}  A witness to the non-emptiness of the final (nondeterministic 
\buchi tree) automaton constitutes an implementation of the original 
specification.   Jobstmann and Bloem~\cite{JobstmannB06} present a  multitude of 
optimizations to improve the performance of this approach.  \lily implements 
them on top of \textsf{Wring}~\cite{SomenziB00}.  \lily does not represent 
automata symbolically but operates on explicit representations.  The 
similarities to our \acs{SAT}-based synthesis algorithms are thus rather small.

\mypara{\acs{BDD}-based tools}.  
We only give a brief and incomplete overview of \acs{BDD}-based synthesis tools 
and approaches. \textsf{Anzu}~\cite{JobstmannGWB07} is a \acs{BDD}-based 
synthesis tool for \acs{GR(1)} 
specifications~\cite{BloemJPPS12}.  It has later been reimplemented in 
\ratsy~\cite{BloemCGHKRSS10}.  The same synthesis algorithm is also implemented 
in the \acs{BDD}-based tools
\href{https://github.com/LTLMoP/slugs}{\textsf{slugs}}, 
\href{http://slivingston.github.io/gr1c/}{\textsf{gr1c}}, and 
\href{http://es.fbk.eu/technologies/nugat-game-solver/}{\textsf{NuGAT}}, which is a game solver built on top of the model checker 
\textsf{NuSMV}~\cite{CimattiCGGPRST02}.  \textsf{Unbeast}~\cite{Ehlers12} is a 
tool for synthesis from \acs{LTL} specifications that also builds on the 
principle of bounded synthesis~\cite{FinkbeinerS13}.  The reduction from 
\acs{LTL} to safety games is similar to that by Filiot et al.~\cite{FiliotJR11} 
but the resulting safety game is solved using \acsp{BDD} instead of antichains.
Except for our own submission \demiurge, all tools that competed in the 
\syntcompy competition~\cite{sttt_syntcomp} are 
\acs{BDD}-based.  This includes \abssynthe~\cite{BrenguierPRS14}, which has been 
used as a baseline for comparison in our experimental results, \textsf{Basil} by 
R\"udiger Ehlers, \textsf{realizer} by Leander Tentrup, and the 
\href{https://github.com/adamwalker/syntcomp/}{\textsf{Simple \acs{BDD} Solver}}
by Leonid Ryzhyk and Adam Walker.

\section{Conclusions}
\label{sec:conc}

Chapter~\ref{sec:hw:win} and~\ref{sec:hw:circ} already discussed the strengths 
and weaknesses of the different algorithms and optimizations while they were 
presented.  Moreover, Section~\ref{sec:hw:exp:concl} summarized the most 
important conclusions that can be drawn from our experiments.  In this section, 
we will not repeat this discussion but rather focus on the most important 
conclusions from a high-level point of view.  This will also form the basis to 
our suggestions for future work. 

\mypara{Exploiting solver features.}  In contrast to verification, decision 
procedures that can only give a yes/no answer are of no use in synthesis.  
Fortunately, many decision procedures for satisfiability are based on the search 
for satisfying structures.  These artifacts can in turn be used to build an 
implementation for a given specification in synthesis.  Modern \acs{SAT}-, 
\acs{QBF}- and \acs{SMT} solvers offer additional features that can be exploited 
in synthesis as well. This includes the computation of unsatisfiable cores, 
which can be used to generalize discovered facts.  Another example is 
incremental solving, which can be used to answer sequences of similar queries 
much more efficiently. Our synthesis algorithms utilize such solver features by 
design, which turned out to be crucial for being competitive with \acsp{BDD}.

\mypara{Counterexample-guided refinement.}   The algorithmic principle of 
refining solution candidates iteratively based on counterexamples turned out to 
be a good match with decision procedures for satisfiability. We used this 
concept in two flavors: query learning and \acf{CEGIS}.  Our extension of 
\acs{CEGIS} outperformed \acs{QBF} solving in our template-based approach. 
Overall, query learning combined with \acs{SAT} solving proved to be our best 
approach in our synthesis experiments.  This applies both to the first step of 
computing a winning strategy as well as to the second step of constructing a 
circuit.  In the second step, query learning also produced circuits that were 
smaller by more than one order of magnitude on average compared to other 
techniques such as interpolation, \acs{QBF} certification, or the 
\acs{BDD}-based cofactor approach.  This suggests that query learning performs 
well at exploiting available implementation freedom.  

\mypara{Handling quantifiers.}  The game-based approach to synthesis 
inherently involves dealing with both universal and existential quantifiers.  
The support for both quantifiers is also among the reasons for the sustained 
success of \acsp{BDD} in reactive synthesis. When switching from \acsp{BDD} to 
decision procedures for satisfiability, one could thus expect that \acs{QBF} 
solvers are the most suitable choice.  Yet, in our experiments, our algorithms 
using plain \acs{SAT} solving outperformed the \acs{QBF}-based algorithms 
significantly, even though (often far) more solver calls are necessary to 
compensate for the lack of universal quantifiers.  Our heuristic for quantifier 
expansion reduces this amount of iterations at the cost of larger formulas for 
the \acs{SAT} solver, which gives a speedup of one more order of magnitude. 
 This suggests that the current state in \acs{QBF} solving is still lacking 
behind its potential, at least for the specific kinds of \acs{QBF} problems we 
encounter in our synthesis algorithms.  However, considering that \acs{QBF} is 
still a rather young research discipline compared to \acs{SAT}, this situation 
may change in the future.  

\mypara{More expressive logics.}  The scalability of our approach based on 
reduction to \acs{EPR}, which is a more expressive logic, is even worse than 
when using \acs{QBF} in our experiments.  Together with the statement from the 
previous paragraph, this suggests that breaking the synthesis problem into 
simple solver queries in a lean logic is a better strategy than delegating 
bigger chunks of the problem to the underlying solver.

\mypara{Parallelizability.}  Since our satisfiability-based methods for reactive 
synthesis mostly break the synthesis problem down to many small solver queries 
that do not crucially depend on each other, they are also well suited for 
fine-grained application-level parallelization.  This stands in contrast to 
symbolic algorithms realized with \acsp{BDD}, which are often intrinsically hard 
to parallelize~\cite{EzekielLS11}.  We presented parallelizations that do not 
only exploit hardware parallelism but also combine different (variants of) 
algorithms in different threads.  This way, we achieved average speedups of 
around one order of magnitude with only three threads.

\mypara{Outperforming BDDs.}  Due to our heuristics and optimizations, careful 
utilization of solver features, and our parallelization, our 
satisfiability-based methods managed to outperform a \acs{BDD}-based synthesis 
tool by more than one order of magnitude regarding execution time, and even two 
orders of magnitude regarding circuit size on average in our experiments.  Our 
parallelization is even competitive with \abssynthe, a highly optimized 
state-of-the-art tool implementing advanced optimizations such as 
abstraction/refinement.  
These results confirm that decision procedures for the satisfiability of 
formulas can indeed be used to build scalable synthesis algorithms.

\mypara{There is no silver bullet.}  Despite the excellent performance results 
we achieved on average in our experiments, we observed that different techniques 
perform well on different classes of benchmarks.  We thus see our main 
contribution in extending the portfolio of available synthesis approaches with 
new algorithms that complement existing techniques. 

\mypara{Safety specifications.}  Our reactive synthesis algorithms 
operate on safety specifications.  Many of the benchmarks used in our 
experimental evaluation originally contained liveness properties that have been 
translated to safety specifications by imposing fixed bounds on the reaction 
time.  While choosing low bounds for the reaction time (such that the 
specification is still realizable) can have the advantage of producing systems 
that react faster, the translation may have a negative performance impact 
compared to handling liveness properties directly in the synthesis algorithm.

\section{Future Work} \label{sec:future}

Our suggestions for future work in satisfiability-based reactive synthesis 
range from improvements in the underlying reasoning engines up to extensions for 
different classes of specifications.

\mypara{\acs{QBF} preprocessing.}  
While our \acs{QBF}-based synthesis algorithms were not among the best solutions 
in our experiments, we still observed that using incremental \acs{QBF} solving 
and \acs{QBF} preprocessing both can have a very positive performance impact.  
Researching ways to combine these techniques therefore seems to be a 
particularly promising direction to support the success of \acs{QBF} in 
synthesis.  
Furthermore, in our circuit computation method based on \acs{QBF} certification, 
preprocessing could not be applied because existing tools only preserve 
satisfying assignments for existentially quantified variables~\cite{SeidlK14}, 
but are in general not certificate preserving.  Research on such 
certificate-preserving preprocessing solutions could thus boost the performance 
of \acs{QBF} certification (not only) in synthesis.

\mypara{Solver parameters.}
We used all solvers with default parameters in our experiments.  It is 
not unlikely that a solid speedup can be achieved by tuning solver parameters to 
the specific kinds of decision problems encountered in our algorithms.  For 
instance, our algorithms based on \acs{SAT} solving usually 
make huge amounts of rather simple queries.  Yet, the default parameters of the 
\acs{SAT} solvers may be tuned to more complex instances from \acs{SAT} 
competitions.

\mypara{Other logics.}
Our approach based on reduction to \acs{EPR} did not perform well in our 
experiments.  For this reason, we did not explore the alternative of using 
\acs{DQBF} instead.  Yet, recent progress~\cite{FroehlichKB12, FrohlichKBV14, 
BalabanovCJ14, GitinaWRSSB15} in theory and tools for \acs{DQBF} makes this 
approach interesting as well.

\mypara{Computing multiple interpolants.}  Some of our methods to compute 
circuits from given strategies are based on interpolation.  As mentioned in 
Section~\ref{sec:rel:sat}, it would be interesting to also implement the 
approach by Hofferek et al.~\cite{HofferekGKJB13} for computing multiple 
interpolants from a single proof.

\mypara{Reachability optimization.}  Our reachability optimization is rather 
simplistic and still has a very positive performance impact.  Other variants may 
thus yield even bigger speedups.  In particular, our reachability optimization 
avoids the explicit computation of an over-approximation of the reachable 
states.  Exploring this option based on existing work in 
verification~\cite{MoonKSS99} can be worthwhile.

\mypara{Parallelization.}  
Our parallelized synthesis method demonstrates that parallelization is easily 
possible and beneficial for our \acs{SAT}-based synthesis algorithms.  
However, it is in no way optimal.  First, there is a plethora of possibilities 
to combine different algorithms, optimizations and solver configurations in 
different threads.  Second, there are numerous ways for exchanging information 
between threads.  A thorough exploration of possibilities is still to be done.

\mypara{\aiger as symbolic data structure.} Another alternative to \acsp{BDD} is 
to use \aiger circuits as a data structure for formulas.  Boolean connectives 
($\wedge, \vee, \rightarrow, \ldots$) are easy to realize by 
adding gates accordingly.  Universal and existential quantification can be 
realized by expansion.  Circuit simplification techniques as implemented in 
\abc~\cite{BraytonM10} can be applied to reduce the size of the symbolic 
representation after applying operations (similar to variable reordering in 
\acsp{BDD}). A \acs{SAT} solver can be used for equivalence or inclusion checks. 
In contrast to \acsp{BDD}, such a symbolic representation is not canonical.  It 
may thus be more compact in cases where \acsp{BDD} explode in size (see 
Section~\ref{sec:prelim:bdds}).

\mypara{Specification preprocessing.}  Inspired by the formidable performance 
impact of preprocessing in \acs{QBF} solving, research on preprocessing 
techniques for specifications in synthesis can be another angle from which the 
scalability issue can be tackled.  For specifications defined as \aiger 
circuits, one first idea would be to develop heuristics for identifying 
auxiliary variables (outputs of AND-inverter gates defining the transition 
relation) that can be controlled fully and independently by either the system or 
the environment.   As a simple example, some auxiliary variable $t$ may be 
defined as a function over some vector $\overline{i}_t \subseteq \overline{i}$ 
of uncontrollable inputs, and the inputs $\overline{i}_t$ are used nowhere else. 
Such auxiliary variables can be replaced by new controllable or uncontrollable 
inputs, and their respective cone of influence can be removed.  Another idea is 
to detect monotonic dependencies of the error output on inputs or latches and to 
replace them with constants. Existing techniques for circuit simplification can 
also be applied, of course.

\mypara{Other specifications.}  Our satisfiability-based synthesis algorithms 
operate on safety specifications.  A natural point for 
future work is thus to extend them to other types of specifications.  
Interesting cases would include reachability specifications (some states must be 
visited at least once), B\"uchi specifications (some states must be visited 
infinitely often), or even \acs{GR(1)}~\cite{BloemJPPS12}.  Our methods to 
compute a circuit from a given strategy are rather agnostic against the 
specification from which the strategy has been constructed.  Here, future work 
would mostly be in working out an efficient implementation.  For the computation 
of strategies, the situation is different though.  Learning-based algorithms are 
not difficult to define for other specification formats in principle.  If and 
how they can be applied \emph{efficiently} remains to be explored, though.

\section*{Acknowledgements}

We thank Aaron R. Bradley for fruitful discussions about using IC3-concepts in 
synthesis, Andreas Morgenstern for his support in 
re-implementing~\cite{MorgensternGS13} and translating benchmarks, Bettina 
K\"onighofer for providing benchmarks, and Fabian 
Tschiatschek and Mario Werner for their BDD-based synthesis tool.

%

%
%
%

%
%

%
%
%
%
\bibliographystyle{elsarticle-num} 
\bibliography{refs-num}

\end{document}